\documentclass[english,11pt]{article}
\usepackage[T1]{fontenc}
\usepackage[utf8]{inputenc}
\usepackage[english]{babel}
\usepackage[pdftex]{graphicx}
\usepackage[width=\textwidth]{caption}
\usepackage{amsmath,amsopn,amssymb,bm, dsfont,mathrsfs,mathtools}
\usepackage[margin=1in]{geometry}
\usepackage{array,placeins,flafter, longtable}
\usepackage{verbatim}
\usepackage{color}
\usepackage{natbib}
\usepackage{lmodern}
\usepackage[hidelinks]{hyperref}
\usepackage{booktabs, threeparttable}
\usepackage{xr}
\usepackage{adjustbox}
\usepackage{booktabs}
\usepackage{float}
\usepackage{enumitem}
\usepackage{subcaption}

\usepackage{pdflscape} 

	\usepackage{natbib}
	\usepackage{longtable}

	\usepackage[framemethod=tikz]{mdframed}
	\usepackage{todonotes}
	\presetkeys{todonotes}{color=purple!10,inline}{} 

	\newcommand{\rf}[1]{\comment{Reference: \url{#1}}}
	\newcommand{\ind}[1]{\mathds{1}\left\{#1\right\}}

\allowdisplaybreaks

\newcommand{\convD}{\stackrel{d}{\longrightarrow}}
\newcommand{\convP}{\stackrel{P}{\longrightarrow}}

\newcommand{\abs}[1]{\left\vert#1\right\vert}

\newcommand{\Gn}{\mathbb{G}_{n_1}}

\newcommand{\Hn}{\mathbb{H}_{n_3}}

\newcommand{\Hninv}{\mathbb H_{n_3}^{-1}}

\newcommand{\Vn}{\mathbb{V}_{n_2}}

\newcommand{\I}{\text{I}}
\newcommand{\N}{\mathbb N}

\newcommand{\R}{\mathbb R}

\newcommand{\eps}{\varepsilon}

\newtheorem{assumption}{Assumption}
\newtheorem{lemma}{Lemma}

\newtheorem{theorem}{Theorem}
\newtheorem{proposition}{Proposition}

\newtheorem{assump}{Assumption}

\makeatletter
\renewcommand{\section}{\@startsection{section}{2}{0mm}{-1.5\baselineskip}{1\baselineskip}{\normalfont\large\bfseries}}
\renewcommand{\subsection}{\@startsection{subsection}{2}{0mm}{-1.2\baselineskip}{1\baselineskip}{\normalfont\normalsize\bfseries}}
\renewcommand{\subsubsection}{\@startsection{subsubsection}{3}{0mm}{-0.8\baselineskip}{0.4\baselineskip}{\normalfont\normalsize\itshape}}
\makeatother

\begin{document}

\title{Asymptotic Properties of Empirical Quantile-Based Estimators\thanks{Julien Chhor and Martin Mugnier gratefully acknowledge financial support from the Agence Nationale de la Recherche under the grant ANR-17-EURE-0010 (Investissements d'Avenir program). We would like to thank Abhimanyu Gupta, Mauricio Olivares, Johan Segers, Ingrid Van Keilegom, as well as seminar participants at KU Leuven, TSE, Universit\"at Bonn, University College London,  University of Essex, and Universit\"at Z\"urich for helpful comments. Timoth\'ee Bacchi and Virgile Lacombe have provided excellent research assistance.} \\  }

\author{Julien Chhor\thanks{Toulouse School of Economics, University of Toulouse Capitole, Toulouse, France, julien.chhor@tse-fr.eu.} \and Xavier D'Haultf\oe{}uille\thanks{CREST-ENSAE, xavier.dhaultfoeuille@ensae.fr.} \and J\'er\'emy L'Hour\thanks{CFM \& CREST-ENSAE, jeremy.l.hour@ensae.fr.} \and Martin Mugnier\thanks{Paris School of Economics, martin.mugnier@psemail.eu.}}

\maketitle

\begin{abstract}
\noindent 
We consider inference for parameters of the form $\theta_0 = E[F_Y^{-1}\circ F_Z(X)]$ for some variables $X$, $Y$ and $Z$. Such parameters appear, in particular, in the ``changes-in-changes'' model of \cite{AtheyImbens2006}. We first establish that $\widehat{\theta}$, a plug-in estimator of $\theta_0$, is root-$n$ consistent and asymptotically normal under weaker conditions than those previously available, allowing in particular for unbounded variables. Next, we propose a new estimator of the asymptotic variance of $\widehat{\theta}$ and show its consistency, also allowing for unbounded variables. Monte Carlo simulations suggest that the conditions for root-$n$ consistency and asymptotic normality are, in some sense, minimal. These simulations highlight that our variance estimator also leads to more accurate inference than some alternative approaches.
\end{abstract}

\noindent\textbf{JEL Classification:} C14, C21, C23.

\medskip

\noindent\textbf{Keywords:} changes-in-changes, asymptotic inference, panel data.

\newpage
\setcounter{page}{1}
\addtolength{\baselineskip}{0.5\baselineskip}

\section{Introduction}

Quantile-quantile transforms, namely objects of the kind $F^{-1}\circ G$ where $F$ and $G$ are two cumulative distribution functions, appear commonly in economics. In particular, they have been used to recover distributions of unobserved potential outcomes. 
Prominent examples include the ``changes-in-changes'' (CIC) causal inference model developed by \cite{AtheyImbens2006}, and nonparametric instrumental variable quantile regression, see in particular \cite{vuong2017counterfactual} and \cite{wuthrich2020comparison}. In this setup, average treatment effects involve estimands of the form $\theta_0 = E[F_Y^{-1}\circ F_Z(X)]$ for some variables $X$, $Y$ and $Z$. The aim of this paper is to study inference for such parameters.  

\medskip
\cite{AtheyImbens2006} show that under suitable conditions, a plug-in estimator $\widehat{\theta}$ of $\theta_0$ is asymptotically normal, and establish the consistency of an estimator of its asymptotic variance. However, their results rely on strong assumptions. 
Specifically, they assume that the three variables (a) have bounded support, (b) each admit a continuously differentiable density, and (c) that these densities are bounded from above and below on their support. 
Such assumptions are overly restrictive for many variables of interest including, for instance,  wages, prices or profits. 

\medskip
The goal of this paper is to obtain similar results under substantially weaker conditions. 
This is important for establishing that such methods remain applicable to key economic variables that may not satisfy Assumptions (a)--(c) above. 
To this end, we first establish asymptotic normality of $\widehat{\theta}$. 
The main difficulty is that standard tools are no longer applicable under these weaker conditions.  When variables are bounded and their densities are bounded from below, the functional $(F_Y,F_X,F_Z)\mapsto \int F_Y^{-1}\circ F_Z dF_X$ is Hadamard differentiable \citep[see the proof of Theorem 7 in][]{CdCXdH2017}. However, Hadamard differentiability fails otherwise, already because $F\mapsto \int gdF$ is not  continuous with respect to the supremum norm when $g$ is unbounded. Similarly, we cannot directly exploit results on L-statistics \citep[see, e.g.][Chapter 19]{ShorackWellner1986}, as these correspond to the simpler case for which both $F_X$ and $F_Z$ are known. Instead, we rely on several results on weighted and unweighted empirical and quantile processes, see in particular Chapter 2, Section 7 and Chapter 11 in \cite{ShorackWellner1986} and \cite{csorgo1986}. We also exploit auxiliary results, including (i) the fact that order statistics of uniforms and uniforms spacing follow beta distributions; (ii) known bounds for the mean absolute deviation of beta distributions.

\medskip
Another contribution of this paper is to establish that in a sense that we make precise below, some of the conditions we impose for root-$n$ consistency and asymptotic normality are necessary as well. 
This implies fundamental constraints on the scope of methods relying on quantile-quantile transforms, at least if inference is based upon asymptotic normality. In the changes-in-changes model, for instance, this means that for the average treatment effect to be root-$n$ consistent and asymptotically normal, the distributions of the pre-treatment period outcome of the control and treatment groups must exhibit sufficiently similar tail behavior. 
These conditions are stronger than what is needed for root-$n$ consistency and asymptotic normality of quantile treatment effects. Intuitively, this is because unlike (non-extremal) quantile treatment effects, the average treatment effect depends on the tails of potential outcomes, whose corresponding quantiles are less precisely estimated. 

\medskip
Our second main contribution is to propose a new estimator of $\sigma^2$, the asymptotic variance of $\widehat{\theta}$. The plug-in estimator of \cite{AtheyImbens2006} includes a density term in its denominator. As a result, the consistency of this estimator becomes unclear when the density takes arbitrarily small values. A possible solution would be to trim the estimator, but this would introduce additional tuning parameters. 

\medskip
Instead, we consider an alternative estimator $\widehat{\sigma}^2$ based on a new expression of the asymptotic variance, still involving a density $f_U$ (that of $U:=F_Z(X)$) but without any denominator term. We consider a kernel density estimator of $f_U(u)$ with a varying bandwidth proportional to $u(1-u)$. Thus, the bandwidth shrinks as $u\to 0$ or $u\to 1$, a feature that is key to handling a possible explosion of $f_U(u)$ as $u\to 0$ or $u\to 1$. We show consistency of $\widehat{\sigma}^2$ under a slight strengthening of the conditions we impose for asymptotic normality. Notably, our proof does not require uniform or even pointwise consistency of our kernel-density estimator. Our estimator may be of interest for estimating functionals of probability densities, beyond the particular functional we consider here. 

\medskip
Finally, we investigate the finite-sample behavior of $\widehat{\theta}$ and inference based on asymptotic normality and $\widehat{\sigma}$ through Monte Carlo simulations. Our results suggest in particular that when our conditions for asymptotic normality hold, our inference method is already accurate with sample sizes around 100. Our estimator $\widehat{\sigma}$ also seems to perform better than that originally proposed by \cite{AtheyImbens2006}. Finally, when our conditions for asymptotic normality are violated, the distribution of $\widehat{\theta}$ does not appear to be normal, and none of the inference methods we consider, including the bootstrap, performs well.

\paragraph{Related literature.} 
In their seminal work, \cite{AtheyImbens2006} derive the asymptotic normality of $\widehat{\theta}$ and propose a consistent variance estimator. 
As discussed above, our main contribution is to extend these results to allow for unbounded variables. We also delineate some restrictions that the variable distributions should satisfy for the asymptotic variance to exist. 

\medskip
\cite{sun2025debiased} establish asymptotic normality of a debiased and semiparametrically efficient changes-in-changes estimator that flexibly accommodates continuous covariates. While they accommodate covariates, their result holds under a high-level condition (see their Assumption~4(a)).  Our paper shows that establishing weak and low-level conditions for asymptotic normality is nontrivial, even in the absence of covariates. 

\medskip
In a concurrent and independent line of research, \cite{beare2026convergencedistributionppprocess} establish the convergence in distribution in $L^1([0,1])$ of a process of the form $\widehat{F} \circ \widehat{G}^{-1}$ under similar conditions to those assumed in the present paper. However, we consider here the asymptotic normality of a quantity of a different kind, namely $\int_0^1 [\widehat{F}_Y^{-1}\circ \widehat{F}_Z] d\widehat{F}_X$, which does not seem to be a direct consequence of~\cite{beare2026convergencedistributionppprocess}, as it involves 
an additional  source of randomness via $\widehat{F}_X$.

\medskip
Our estimator of the asymptotic variance relies on nonparametric kernel density estimates with a varying bandwidth. Such estimators have been studied in mathematical statistics \citep[e.g.,][]{Jones1990, TerrellScott1992,chhor2025local}, but we use this technique in a different context, where the focus is not the density itself but a functional of it. Our  contribution is to show that such estimators lead to consistent estimation of the functional of interest under mild smoothness conditions, allowing in particular for the density to diverge at the boundaries. This is achieved by letting the bandwidth shrink appropriately near the boundary of the support. 

\paragraph{Notation.}
For any increasing function $F$ on the real line, we denote by $F^{-1}$ its left-continuous generalized inverse, $F^{-1}(q) = \inf\{x\in \mathbb R:F(x)\geq q\}$ for $q \in (0,1]$. In particular, for any real-valued random variable $W$ with cumulative distribution function (cdf) $F_W$, $F_W^{-1}$ is the corresponding quantile function. We denote by $\widehat{F}_W$ and $\widehat F_W^{-1}$ the corresponding empirical cdf and quantile function, obtained from a sample $(W_i)_{i=1,\dotsc, n}$. For any  $(x,y) \in \R^2$, we denote by $x \land y$ and $x \lor y$ the minimum and maximum of $x$ and $y$, respectively. We let ${\rm B}(\cdot,\cdot)$  denote the beta function, i.e., for all $x,y>0$,  ${\rm B}(x,y) = \int_0^1t^{x-1}(1-t)^{y-1}dt$. We let Beta$(\alpha,\beta)$ denote a random variable with the beta distribution with parameters $(\alpha,\beta)\in(0,\infty)^2$. 

\paragraph{Organization.} Section~\ref{sec:inf_obs_ranks} provides the asymptotic normality result of $\widehat{\theta}$, the plug-in estimator of $\theta_0$, introduces our estimator $\widehat{\sigma}^2$ of the corresponding asymptotic variance and shows its consistency. Section \ref{sec:mc_sim} studies the finite-sample behavior of $\widehat{\theta}$ and compares our   estimator $\widehat{\sigma}^2$ with alternative ones.  All the proofs are in the appendix, while the supplementary appendix gathers additional lemmas.

\section{Theory}\label{sec:inf_obs_ranks}

\subsection{Asymptotic normality of the plug-in estimator}

As mentioned above, we seek to estimate $\theta_0=\int_0^1 F_Y^{-1}(u) dF_U(u)$, where $U$ is unobserved but satisfying $U=F_Z(X)$, whereas $X$ and $Z$ are observed; note that $\theta_0$ is well-defined under Assumption \ref{ass:smoothness1} below (see Lemma~\ref{lem:int_FY} in Appendix~\ref{sec:tech_lemmas}). We observe three samples, $(Y_i)_{i=1,\ldots,n_1}$,  $(X_i)_{i=1,\ldots,n_2}$ and $(Z_i)_{i=1,\ldots,n_3}$. We consider the following plug-in estimator of $\theta_0$:
\begin{align*}
    \widehat{\theta} := \frac{1}{n_2}\sum_{i=1}^{n_2} \widehat{F}_Y^{-1}(\widehat{F}_Z(X_i)), 
\end{align*}
where $\widehat F_Y^{-1}$ is extended to $[0,1]$ by defining $\widehat F_Y^{-1}(0)=Y_{(1)}$. 

\medskip
We prove below that $\widehat{\theta}$ is asymptotically normal under the following conditions. Hereafter, we let $N := \min(n_1, n_2, n_3)$.

\begin{assumption}[Sampling] 
\label{ass:sampling_est}
~
\begin{enumerate}[label=(\roman*)]
   \item \label{ass:sampling_est1} $(Y_i)_{i=1,\ldots,n_1}$, $(X_i)_{i=1,\ldots,n_2}$ and $(Z_i)_{i=1,\ldots,n_3}$ are three samples of i.i.d. variables with respective cdfs $F_Y$, $F_X$ and $F_Z$.  
   \item \label{ass:sampling_est2} $(Y_i)_{i=1,\ldots,n_1}$, $(X_i)_{i=1,\ldots,n_2}$, and $(Z_i)_{i=1,\ldots,n_3}$ are mutually independent.
   \item \label{ass:sampling_est3} 
   For each $k \in \{1,2,3\}$, there exists $\lambda_k\in[0,1]$ such that $N/n_k\to \lambda_k$ as $N\to \infty$. 
\end{enumerate}
\end{assumption}

\begin{assumption}[Smoothness]\label{ass:smoothness1}
~
\begin{enumerate}[label=(\roman*)]
    \item  \label{ass:smoothness11} $F_Z$ is absolutely continuous with respect to the Lebesgue measure with density $f_Z$ supported on $[\underline{z},\overline z]$ with $-\infty\le \underline z<\bar z\le \infty$.
    \item \label{ass:smoothness12} $F_Y$ is continuous 
    and there exist $d_1,d_2 >0$ and $C_Y >0$ such that for all $t\in(0,1)$:
\begin{equation}
    |F_Y^{-1}(t)|  \le  C_Y t^{-d_1} (1-t)^{-d_2}.
\label{eq:cond_quant_Y}
\end{equation}
    \item \label{ass:smoothness13} $F_U$ is absolutely continuous with respect to the Lebesgue measure, with continuous density $f_U$ supported on a subset of $[0,1]$. There exist $b_1,b_2>0$ and $C_U >0$ such that for all $u \in (0,1)$:
\begin{equation}
    f_U(u) \le  C_U u^{-b_1} (1-u)^{-b_2}.
\label{eq:controle_fU}
\end{equation}
    \item \label{ass:smoothness14} $b_1+d_1<1/2$ and $b_2+d_2<1/2$.
\end{enumerate}
\end{assumption}

While Assumption~\ref{ass:sampling_est}\ref{ass:sampling_est2} may be too stringent to cover panel-data versions of \cite{AtheyImbens2006}'s model, it is plausible in the context of independent repeated cross-sections; we discuss the panel case in Section \ref{sub:samp_split} below. Assumption~\ref{ass:smoothness1} imposes restrictions on the distributions of $X$, $Y$ and $Z$. First, their cdf must be continuous. Second, $F_Y$ and $f_U$ must satisfy tail restrictions. In particular, \eqref{eq:cond_quant_Y} holds under the following moment condition on $Y$:
\begin{lemma}[Lower-Level Conditions on $Y$] 
\label{lemma:MomentY}
Assume $E[\vert Y \vert^p] < \infty$ for $p > 1$, then \eqref{eq:cond_quant_Y} holds with $d_1=d_2=1/p$.
\end{lemma}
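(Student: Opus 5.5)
The plan is to use Markov's inequality on the tails of $Y$ and then invert the resulting cdf bounds into quantile bounds. Let $M_p := E[\vert Y\vert^p]<\infty$. Markov's inequality gives, for every $y>0$,
\begin{equation*}
P(Y\le -y)\le M_p y^{-p} \quad\text{and}\quad P(Y>y)\le M_p y^{-p}.
\end{equation*}
We then translate these into bounds on $F_Y^{-1}$ near $0$ and near $1$, and handle the middle region with a constant.

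\textbf{Lower tail.} Fix $t\in(0,1)$ and suppose $F_Y^{-1}(t)<0$. Write $y:=-F_Y^{-1}(t)>0$. For any $0<y'<y$, the point $-y'$ exceeds $F_Y^{-1}(t)$, so the definition of the generalized inverse gives $F_Y(-y')\ge t$. This yields
\begin{equation*}
t\le P(Y\le -y')\le M_p y'^{-p}.
\end{equation*}
Letting $y'\uparrow y$ gives $y\le M_p^{1/p}t^{-1/p}$.

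\textbf{Upper tail.} Suppose instead $F_Y^{-1}(t)=y>0$. For any $0<y'<y$, we have $F_Y(y')<t$ by the definition of the infimum. Hence
\begin{equation*}
1-t< P(Y>y')\le M_p y'^{-p},
\end{equation*}
so $y\le M_p^{1/p}(1-t)^{-1/p}$.

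\textbf{Combining.} In both cases,
\begin{equation*}
|F_Y^{-1}(t)|\le M_p^{1/p}\max\{t^{-1/p},(1-t)^{-1/p}\}.
\end{equation*}
Each term in the maximum is bounded by $t^{-1/p}(1-t)^{-1/p}$, because both $t^{-1/p}\ge 1$ and $(1-t)^{-1/p}\ge 1$ on $(0,1)$. Therefore \eqref{eq:cond_quant_Y} holds with $C_Y=M_p^{1/p}$ and $d_1=d_2=1/p$. If $M_p=0$, then $Y=0$ almost surely and any $C_Y>0$ works.

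The only delicate point is the generalized-inverse bookkeeping, namely strict versus non-strict inequalities at atoms. This is handled by the approximation $y'\uparrow y$ above; continuity of $F_Y$ is not even needed. The condition $p>1$ is used only so that $d_1=d_2=1/p<1$, which matches the setting of Assumption~\ref{ass:smoothness1}.
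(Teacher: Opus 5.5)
Your proof is correct. It follows the same basic idea as the paper (a tail bound on $F_Y$, then inverted into a quantile bound), but the execution differs in a useful way.

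The paper argues asymptotically. From $E[|Y|^p]<\infty$ it deduces $t^pP(Y>t)\to 0$ and $t^pF_Y(-t)\to 0$, so it gets $t^p(1-F_Y(t))\le C$ and $t^pF_Y(-t)\le C$ only for $|t|$ beyond some thresholds $t_1,t_2$. It inverts these into quantile bounds for $u\ge F_Y(t_1)$ and $u\le F_Y(t_2)$. It then handles the middle range $[F_Y(t_2),F_Y(t_1)]$ by noting that $|F_Y^{-1}(u)|[u(1-u)]^{1/p}$ is bounded there.

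You instead use Markov's inequality, which holds for every $y>0$. A single inversion therefore covers all of $(0,1)$, no middle-region argument is needed, and you get the explicit constant $C_Y=(E[|Y|^p])^{1/p}$.

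Your approximation $y'\uparrow y$ also makes the inversion valid when $F_Y$ has atoms. The paper's step from $t^p(1-F_Y(t))\le C$ to $|F_Y^{-1}(u)|^p(1-u)\le C$ implicitly uses $F_Y(F_Y^{-1}(u))=u$, i.e.\ continuity of $F_Y$. That is harmless there, since continuity is part of Assumption~\ref{ass:smoothness1}\ref{ass:smoothness12}, but your version does not need it.

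The paper's route could in principle give the slightly sharper tail behavior $|F_Y^{-1}(u)|=o\big((1-u)^{-1/p}\big)$, but the paper only uses the resulting boundedness. So nothing is lost by your more elementary argument.
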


Given that $U=F_Z(X)$, and assuming that $F_X$ is differentiable, we have $f_U(u)=f_X(F_Z^{-1}(u))/f_Z(F_Z^{-1}(u))$. Hence, \eqref{eq:controle_fU} (together with the constraints on $(b_1,b_2)$ implied by Assumption~\ref{ass:smoothness1}\ref{ass:smoothness14}) imposes that the tails of $X$ cannot be much heavier than those of $Z$. In the context of the changes-in-changes model, $X$ and $Z$ correspond to the pre-treatment period outcome of the control and treatment group, respectively. Thus, \eqref{eq:controle_fU} limits how  different the distributions of the outcome in the two groups can be. To illustrate this, assume that $X  \sim Z/c$ for some $c>0$ and $f_Z(z)\asymp K \exp(-L|z|^\alpha)$ for some $K, L,\alpha>0$ as $z\to-\infty$ (the same reasoning applies if we consider $z\to\infty$). Then, we show in Appendix \ref{app:proof_ineq_scale} that \eqref{eq:controle_fU} implies
\begin{equation}
c \ge (1-b_1)^{1/\alpha}.
    \label{eq:restriction_c}
\end{equation}
Similarly, if the densities of $X$ and $Z$ have power-law tails $|x|^{-c\alpha-1}$ and $|x|^{-\alpha-1}$, respectively, for some $c, \alpha>0$, one can show that \eqref{eq:controle_fU} implies $c>1-b_1$. 

\medskip
Finally, Assumption~\ref{ass:smoothness1}\ref{ass:smoothness14} implies a trade-off on the tails of $Y$ and $U$: the fatter the tails on $Y$, the lighter those on $U$ should be.

\begin{theorem} \label{thm:an}
If Assumptions \ref{ass:sampling_est} and \ref{ass:smoothness1}  hold and $\min(\lambda_1,\lambda_3)>0$, then, as $N\to\infty$,
\begin{equation*}
    \sqrt{N}(\widehat{\theta}-\theta_0) \convD \mathcal{N}(0,\sigma^2),
\end{equation*}
where $\sigma^2  = \left[\lambda_1 + \lambda_3 \right]E[\eta^2] + \lambda_2 E[\eps^2]$, 
with $\eta := - \int_0^{1}  [\mathds{1}\{F_Y(Y)\leq t\} - t]f_U(t)\,dF_Y^{-1}(t)$ and $\eps:= -\int_0^1 [\mathds{1}\{U\leq t\} - F_U(t)]\, dF_Y^{-1}(t)$.
\end{theorem}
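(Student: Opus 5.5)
We will decompose $\widehat\theta-\theta_0$ into three asymptotically independent linear pieces plus remainders, driven by the three samples. Set $\widehat U_i:=\widehat F_Z(X_i)$, $U_i:=F_Z(X_i)$, and write $\widehat F_U$ for the empirical cdf of $(U_i)_{i\le n_2}$. Let $\widehat G(t):=\widehat F_Z(F_Z^{-1}(t))$; this is the empirical cdf of the uniform sample $F_Z(Z_j)$, and $\widehat U_i=\widehat G(U_i)$ by Assumption~\ref{ass:smoothness1}\ref{ass:smoothness11}. By Assumption~\ref{ass:smoothness1}\ref{ass:smoothness12} we can write $Y_i=F_Y^{-1}(V_i)$ with $V_i$ uniform, so $\widehat F_Y^{-1}=F_Y^{-1}\circ\widehat H^{-1}$, where $\widehat H$ is the uniform empirical cdf of the $V_i$. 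We then split
\[
\widehat\theta-\theta_0=\underbrace{\int (\widehat F_Y^{-1}-F_Y^{-1})\circ\widehat G\, d\widehat F_U}_{A}+\underbrace{\int \big(F_Y^{-1}\circ\widehat G-F_Y^{-1}\big)d\widehat F_U}_{B}+\underbrace{\int F_Y^{-1}\,d(\widehat F_U-F_U)}_{C}.
\]
Integrating $C$ by parts gives $C=-\int_0^1(\widehat F_U-F_U)\,dF_Y^{-1}=n_2^{-1}\sum_i\eps_i$. The boundary terms vanish because $F_Y^{-1}(t)(\widehat F_U-F_U)(t)\to0$ at $0$ and $1$, using \eqref{eq:cond_quant_Y}, \eqref{eq:controle_fU}, and $F_U(t)\lesssim t^{1-b_1}$ with $d_1<1-b_1$. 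Moreover $E[\eps^2]<\infty$ follows from $E\eps^2=\iint (F_U(s\wedge t)-F_U(s)F_U(t))\,dF_Y^{-1}(s)\,dF_Y^{-1}(t)$ together with the same tail bounds and $b_1+d_1<1/2$. For $A$ and $B$, the targets are the linearizations
\[
A\approx -\int_0^1(\widehat H-I)(t)f_U(t)\,dF_Y^{-1}(t),\qquad B\approx \int_0^1(\widehat G-I)(t)f_U(t)\,dF_Y^{-1}(t).
\]
These come respectively from $\widehat H^{-1}-I\approx-(\widehat H-I)$ composed with $F_Y^{-1}$ and integrated against $f_U$, and from a first-order expansion of $F_Y^{-1}$ along $\widehat G$ integrated against $f_U$. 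The two leading terms are i.i.d. averages of $\eta$-type variables from the $Y$ and $Z$ samples, and $\eta$ and its $Z$-analogue have the same law. Since the three samples are independent, a Lindeberg CLT for each average combined with Assumption~\ref{ass:sampling_est}\ref{ass:sampling_est3} gives the variance $(\lambda_1+\lambda_3)E\eta^2+\lambda_2E\eps^2$ after scaling by $\sqrt N$. Finiteness of $E\eta^2$ uses $\int\!\int (s\wedge t-st)f_U(s)f_U(t)\,dF_Y^{-1}(s)\,dF_Y^{-1}(t)<\infty$, which holds because $t^{1/2}\cdot t^{-b_1}\cdot t^{-d_1-1}$ is integrable near $0$ exactly when $b_1+d_1<1/2$, and symmetrically near $1$.

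The main work, and the main obstacle, is showing that the remainders are $o_P(N^{-1/2})$ without Hadamard differentiability, since $F_Y^{-1}$ is unbounded and $f_U$ may explode at the endpoints. The plan is to cut $[0,1]$ into a middle region $[\delta,1-\delta]$ and tails.

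\emph{Middle region.} On $[\delta,1-\delta]$ everything is bounded and we use the uniform Bahadur–Kiefer representation and the weak convergence of the uniform quantile process (Shorack–Wellner, Ch.~3 and Ch.~11). For $B$ this requires a smoothness argument: we will write $F_Y^{-1}(\widehat G(u))-F_Y^{-1}(u)=\int_u^{\widehat G(u)}dF_Y^{-1}$ and swap integrals (Fubini), which avoids needing a density for $Y$. We will also replace $d\widehat F_U$ by $dF_U$ using independence of the $X$ sample, at the cost of a conditional-variance term that is $o_P(N^{-1/2})$. The $A$ term is handled the same way.

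\emph{Tails.} On $[0,\delta]$ we will use weighted empirical-process bounds, namely $\sup_t|\widehat G(t)-t|/(t(1-t))^{1/2-\nu}=O_P(n^{-1/2})$ for small $\nu>0$, together with Shorack–Wellner bounds $\widehat G(t)\ge t/M$ with high probability on $[1/n,1]$, which keep $F_Y^{-1}\circ\widehat G$ comparable to $F_Y^{-1}$. For the extreme region $t\lesssim 1/N$, including the endpoint convention $\widehat F_Y^{-1}(0)=Y_{(1)}$, we will bound expectations directly. Here we use that uniform order statistics and spacings are beta distributed, and known mean-absolute-deviation bounds for beta laws, to show that $E|Y_{(1)}|$ and the contribution of the smallest ranks are $o(N^{-1/2})$ times the relevant $U$ mass. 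The condition $b_1+d_1<1/2$ again gives a tail integral of order $\delta^{1/2-b_1-d_1}$, which vanishes as $\delta\to0$, uniformly in $N$.

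Letting $N\to\infty$ and then $\delta\to0$ completes the argument. I expect the tail control of $B$ to be the delicate step, since there $F_Y^{-1}$ is evaluated at a random argument near $0$ and integrated against the possibly exploding $f_U$.
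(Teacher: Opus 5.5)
Your plan is a viable but genuinely different organization of the argument. The paper splits $\widehat\theta-\theta_0=T_1+T_2+T_3$: it first estimates $F_Y$ at the true $F_U$, then replaces $F_U$ by $\widehat F_U$ with $\Gn^{-1}$ still inside, then inserts $\Hn$. It rewrites each piece by change-of-variables identities as an integral against $dF_Y^{-1}$ of $F_U$- or $\widehat F_U$-differences. It controls remainders by dominated convergence with explicit envelopes such as $x^{1-b_1}$ rather than a fixed-$\delta$ split. It also keeps the $Z$-part in quantile form throughout, reducing it to the L-statistic $\sum_i a_{in_3}(\zeta_{(i)}-i/(n_3+1))$ and applying Hecker's criterion.

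Your ordering buys two things. First, $C$ is an exact i.i.d.\ mean, whereas the paper needs $R_5$ for its $X$-term. Second, linearizing $B$ through $\widehat G-I$ gives an i.i.d.\ average of copies of $-\eta$, so a plain CLT replaces Hecker and the equality of the $Y$- and $Z$-contributions to $\sigma^2$ is transparent.

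It also has two costs. First, Fubini applied to $B$ actually produces $F_U(t)-F_U(\widehat G^{-1}(t))$, i.e.\ the $Z$-quantile process. You therefore need an extra Vervaat/Bahadur--Kiefer conversion, which the paper avoids. Second, and more important, all three-sample interactions land in $A$. Your $\widehat G^{-1}\circ\widehat H$ is the paper's $\Hn^{-1}\circ\Gn$, and removing $\widehat G$ from $A$ amounts to comparing $F_U(\widehat G^{-1}(\widehat H(t)))-F_U(\widehat G^{-1}(t))$ with $F_U(\widehat H(t))-F_U(t)$ near $t=0$, under an envelope integrable against $dF_Y^{-1}$. That is the counterpart of the paper's $R_6$, $R_7$ and its lemma on $\Hn^{-1}\circ\Gn$. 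It needs two-sample beta/spacing bounds beyond the one-sample weighted bounds you list. So ``$A$ is handled the same way'' understates the work, and the delicate tail is that of $A$, not $B$.

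Several points need repair.

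\textbf{Boundary observations.} As written, $A$ and $B$ need not be finite. If $U_i<\zeta_{(1)}$ then $\widehat U_i=0$, so when $Y$ is unbounded below, $B$ contains $F_Y^{-1}(0)-F_Y^{-1}(U_i)=-\infty$ and $A$ contains $Y_{(1)}-F_Y^{-1}(0)=+\infty$; the same happens at $1$. Since $n_3\asymp N\le n_2$, this event has probability bounded away from zero, e.g.\ when $f_U$ is bounded below near $0$. You must peel off the observations with $\widehat U_i\in\{0,1\}$ before splitting, as the paper does with $\mathcal I_0,\mathcal I_1$. Their contribution $n_2^{-1}\sum_{i:\,U_i<\zeta_{(1)}}[Y_{(1)}-F_Y^{-1}(U_i)]$ is $O_P(N^{b_1+d_1-1})$, because $E|Y_{(1)}|=O(n_1^{d_1})$ and $E[F_U(\zeta_{(1)})]\lesssim n_3^{b_1-1}$.

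\textbf{Where $\lambda_1>0$ enters.} Your sketch never invokes $\lambda_1>0$, but this boundary bound is where it is needed. If $n_1\gg N$, the term $n_2^{-1}\#\{i:\widehat U_i=0\}\,|Y_{(1)}|$ need not be $o_P(N^{-1/2})$.

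\textbf{Linear bound.} For the same boundary reason, $\widehat G(t)\ge t/M$ holds with high probability on $[\zeta_{(1)},1]$, not on $[1/n,1]$.

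\textbf{Integrals against $dF_Y^{-1}$.} The hypothesis only bounds $|F_Y^{-1}(t)|\le C_Yt^{-d_1}(1-t)^{-d_2}$, not its derivative, and $F_Y^{-1}$ may jump. Integrals such as $\int_0^\delta t^{1/2-\nu-b_1}\,dF_Y^{-1}(t)$ must therefore be controlled by integration by parts, as in the paper's lemma on $\int x^{a_1}(1-x)^{a_2}\,dF_Y^{-1}(x)$. Treating $dF_Y^{-1}$ as $t^{-d_1-1}dt$ is not justified, though your conclusions there are unaffected.
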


The proof of Theorem \ref{thm:an} is long and technical.  The main difficulty lies in showing that various remainder terms are, indeed, negligible. To this end, we exploit several empirical process results, such as the convergence of the supremum of the weighted empirical quantile process \citep[see in particular Corollary 4.3.1 in][]{csorgo1986}. We also establish several results on quantile-quantile transforms that may be of independent interest, see in particular Lemma~\ref{lem:HnGn}. Our proof also relies on the fact that order statistics of uniform distributions and uniform spacings follow beta distributions, allowing us to leverage  properties of such distributions. Finally, we handle the L-statistic term by relying on the characterization in \cite{hecker1976}, as standard results on L-statistics, such as those in Chapter 19 of \cite{ShorackWellner1986}, do not apply here.

\medskip
The terms $\lambda_1 E[\eta^2]$, $\lambda_2E[\eps^2]$ and $\lambda_3 E[\eta^2]$ in the asymptotic variance correspond to the contributions of the three samples. Specifically, $\lambda_1 E[\eta^2]$ corresponds to the contribution of the estimation of the cdf of $Y$. The term $\lambda_2 E[\eps^2]$ is due to the fact that even if $F_Z$ were known, we would still estimate the cdf of $U$ using the sample $(F_Z(X_i))_{i=1,\ldots,n_2}$. The third term arises due to the estimation of $F_Z$. Perhaps surprisingly, it turns out that if $n_1=n_3$, so that $\lambda_1=\lambda_3$, this contribution is equal to that of the estimation of the cdf of $Y$.\footnote{Though this is not apparent in the expressions of \cite{AtheyImbens2006}, some algebra show that their first and second variance terms $V^p$ and $V^q$ are in fact equal.} Even if the case $\lambda_3=0$ is not covered by the theorem, the proof of Theorem \ref{thm:an} shows that if $U$ is observed (namely, if $F_Z$ is known), the estimator is still asymptotically normal with the same variance as above but with $\lambda_3$ set to 0.

\medskip
To what extent is Assumption \ref{ass:smoothness1} necessary for the result? We argue that, in some sense, Assumption \ref{ass:smoothness1}\ref{ass:smoothness14} is sharp. To see this, assume that $F_Y^{-1}$ is differentiable and 
\begin{align}
f_U(u)F_Y^{-1}{}'(u) \ge \underline{C} u^{-b_1-d_1-1}(1-u)^{-b_2-d_2-1}, \label{eq:ineg_f_U_FY}
\end{align}
for some $\underline{C}>0$. This inequality implies that \eqref{eq:cond_quant_Y} and \eqref{eq:controle_fU} are essentially sharp. The following proposition establishes that if this is the case, then Assumption \ref{ass:smoothness1}\ref{ass:smoothness14} is necessary for $E[\eps^2+\eta^2]<\infty$ to hold. Hence, the restrictions on $X$, $Y$ and $Z$ mentioned above (and in particular that the distributions of $X$ and $Z$ must be sufficiently similar) are, to some extent, required to ensure $E[\eps^2+\eta^2]<\infty$, and are not due to limitations in the proof of Theorem \ref{thm:an}.

\begin{proposition}
    Suppose that $F_Y^{-1}$ is differentiable, \eqref{eq:ineg_f_U_FY} holds, $\eps$ and $\eta$ are well-defined and $E[\eps^2+\eta^2]<\infty$. Then $b_k+d_k<1/2$ for $k=1,2$.
\label{prop:necessary_cond}
\end{proposition}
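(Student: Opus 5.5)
The plan is to use only $\eta$ and show that $E[\eta^2]=\infty$ whenever $b_1+d_1\ge 1/2$. The case $k=2$ is symmetric, working near $t=1$ instead of $t=0$. Set $g(t):=f_U(t)F_Y^{-1}{}'(t)$, so $\eta=-\int_0^1[\mathds{1}\{V\le t\}-t]g(t)\,dt$ with $V:=F_Y(Y)$. By \eqref{eq:ineg_f_U_FY}, $g>0$ on $(0,1)$, hence $F_Y^{-1}$ is strictly increasing and $F_Y$ is continuous. Therefore $V\sim$ Uniform$(0,1)$ and $F_Y^{-1}(V)=Y$ a.s. Since $E[\mathds{1}\{V\le s\}\mathds{1}\{V\le t\}]-st=s\wedge t-st$, a formal Fubini computation gives
\begin{equation*}
E[\eta^2]=\int_0^1\!\!\int_0^1 (s\wedge t-st)\,g(s)g(t)\,ds\,dt .
\end{equation*}
Both the kernel and $g$ are nonnegative, so the right-hand side is a well-defined element of $[0,\infty]$. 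Restricting to $(0,1/2)^2$ and using $s\wedge t-st=(s\wedge t)(1-s\vee t)\ge \tfrac12 (s\wedge t)$ there, together with \eqref{eq:ineg_f_U_FY} and $(1-u)^{-b_2-d_2-1}\ge 1$, we get with $a:=b_1+d_1$
\begin{equation*}
E[\eta^2]\ \ge\ \frac{\underline C^2}{2}\int_0^{1/2}\!\!\int_0^{1/2}(s\wedge t)\,s^{-a-1}t^{-a-1}\,ds\,dt
= \underline C^2\int_0^{1/2} t^{-a-1}\int_0^t s^{-a}\,ds\,dt .
\end{equation*}
If $a\ge 1$ the inner integral is already infinite. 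If $a<1$ it equals $t^{1-a}/(1-a)$, and the outer integral becomes $\int_0^{1/2}t^{-2a}dt/(1-a)$, which is finite iff $a<1/2$. So $a\ge 1/2$ forces $E[\eta^2]=\infty$, contradicting $E[\eps^2+\eta^2]<\infty$.

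The main obstacle is justifying the double-integral formula for $E[\eta^2]$. The integrand $[\mathds{1}\{V\le s\}-s][\mathds{1}\{V\le t\}-t]g(s)g(t)$ is not sign-definite, and we only know that $\eta$ exists a.s. (as an improper integral) and that $E[\eta^2]<\infty$. I would proceed by truncation. Let $\eta_\delta:=-\int_\delta^{1-\delta}[\mathds{1}\{V\le t\}-t]g(t)\,dt$. This is a bounded random variable, because $g$ is continuous on $[\delta,1-\delta]$ (if only local integrability is available, that suffices). Hence Fubini applies and gives $E[\eta_\delta^2]=\iint_{[\delta,1-\delta]^2}(s\wedge t-st)g(s)g(t)\,ds\,dt$, which increases to the full double integral as $\delta\downarrow 0$ by monotone convergence. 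It then suffices to bound $E[\eta_\delta^2]$ uniformly in $\delta$ by a multiple of $E[\eta^2]$ plus a constant.

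To get this bound, I would compute $\eta-\eta_\delta$ explicitly. For $V\in(\delta,1-\delta)$, the tail pieces are deterministic given $V$: they equal $\int_0^\delta t\,g\,dt-\int_{1-\delta}^1(1-t)g\,dt$. These constants are finite, because well-definedness of $\eta$ for such $V$ requires exactly their convergence. Hence, on the event $V\in(\delta,1-\delta)$, $\eta_\delta=\eta-c_\delta$ with $c_\delta\to 0$. On the complementary event, $\eta_\delta$ equals either $-\int_\delta^{1-\delta}(1-t)g(t)\,dt$ or $\int_\delta^{1-\delta}t\,g(t)\,dt$, each of absolute value at most $\int_\delta^{1-\delta}g(t)\,dt$, and this event has probability $2\delta$. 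The remaining estimate, which I expect to be the fiddliest step, is to show that $\delta\bigl(\int_\delta^{1-\delta}g\bigr)^2$ stays bounded, or is itself controlled by $E[\eta^2]$ through the same decomposition applied at $V$ near $\delta$.

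If this becomes awkward, a fallback is a variance lower bound via a test function. By Cauchy--Schwarz, $E[\eta^2]\ge \mathrm{Cov}(\eta,h(V))^2/\mathrm{Var}(h(V))$ for bounded $h$, for instance $h=\mathds{1}\{V\le s\}$ or a smooth weight concentrated near $0$. Computing $\mathrm{Cov}(\eta,h(V))$ by Fubini requires only a single integral against $g$, which is easier to justify and should still produce the divergence at $a\ge1/2$.
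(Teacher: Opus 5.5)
There is a genuine gap, and it sits exactly at the step you flag as the fiddliest. Your reduction needs the reverse inequality $\sup_{\delta}E[\eta_\delta^2]\le K\,E[\eta^2]+K'$. Pointwise convergence $\eta_\delta\to\eta$ together with Fatou only gives $E[\eta^2]\le\liminf_\delta E[\eta_\delta^2]$, which is the wrong direction.

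On $\{V\le\delta\}$ you have $\eta_\delta=-\int_\delta^{1-\delta}(1-t)g(t)\,dt$. Your first proposed way to handle this event is to show that $\delta\bigl(\int_\delta^{1-\delta}g\bigr)^2$ stays bounded, and this cannot be established from the hypotheses. Indeed, \eqref{eq:ineg_f_U_FY} gives $\int_\delta^{1/2}g\ge(\underline C/a)(\delta^{-a}-2^a)$. So this quantity is at least of order $\delta^{1-2a}$ and diverges when $a>1/2$. Since no upper bound on $g$ is assumed, it can only be controlled through $E[\eta^2]$, and your second option leaves that unspecified.

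Your fallback has a related problem at the boundary. Rename your test function $\phi$ to avoid a clash with $h$ below, and take $\phi(V)=\mathds{1}\{V\le s\}$. Then $\mathrm{Cov}(\eta,\phi(V))=-\int_0^1(s\wedge t-st)g(t)\,dt$, and when $g\asymp t^{-a-1}$ near $0$ the ratio $\mathrm{Cov}^2/\mathrm{Var}$ is only of order $s^{1-2a}$. This diverges for $a>1/2$ but stays bounded at $a=1/2$, where $E[\eta^2]$ diverges only logarithmically. So the case $a=1/2$ is missed.

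The missing idea is that $\eta$ is a deterministic, monotone function of $V$. Write $\eta=h(V)$ with $h(\xi):=\int_0^1(t-\mathds{1}\{\xi\le t\})g(t)\,dt$; then $h(\xi)=h(\xi')-\int_\xi^{\xi'}g(t)\,dt$ for $\xi<\xi'$.

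This observation would repair your argument. For small $\delta$ you have $h(\delta)<0$, so $|\eta|\ge|h(\delta)|$ on $\{V\le\delta\}$. There also $\eta_\delta=h(\delta)-c_\delta$ with $c_\delta:=\int_0^\delta tg-\int_{1-\delta}^1(1-t)g$. Hence $E[\eta_\delta^2\mathds{1}\{V\le\delta\}]\le 2E[\eta^2]+2\delta c_\delta^2$.

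But the same observation makes the double integral superfluous, and this is the paper's proof. For $\xi<1/2$, \eqref{eq:ineg_f_U_FY} yields $h(\xi)\le h(1/2)+\underline C\,2^a/a-(\underline C/a)\xi^{-a}$. So $h(\xi)^2\gtrsim\xi^{-2a}$ near $0$, and $E[\eta^2]=\int_0^1h(\xi)^2\,d\xi=\infty$ whenever $a\ge 1/2$. No Fubini step is needed, and the case $k=2$ is symmetric.
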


In a simpler setup than ours, \cite{mason1992necessary} show the stronger result that under mild regularity conditions, L-statistics are root-$n$ consistent and asymptotically normal if and only if an integral similar to $E[\eta^2]$ is finite: see the condition $\sigma^2(0)<\infty$ in their Theorem 1.1. 
We could thus expect that here as well, $\widehat{\theta}$ is root-$n$ consistent and asymptotically normal if and only if $E[\eps^2+\eta^2]<\infty$; our simulations below provide further support for this conjecture. 

\subsection{Consistent estimation of the asymptotic variance}\label{sec:var_est_obs_ranks}

Recall that $\sigma^2=(\lambda_1+\lambda_3)E[\eta^2] + \lambda_2E[\eps^2]$, with $\eta := - \int_0^{1}  [\mathds{1}\{F_Y(Y)\leq t\} - t]f_U(t)\,dF_Y^{-1}(t)$ and $\eps:= -\int_0^1 [\mathds{1}\{U\leq t\} - F_U(t)]\, dF_Y^{-1}(t)$. Note that 
\begin{align*}
\eps &=  -\left(F_Y^{-1}(U) -\int F_Y^{-1}dF_U\right) = \theta_0 - F_Y^{-1}(U).
\end{align*}
Then, let $\widehat U_i:=\widehat F_Z(X_i)$ and let $\widehat \eps_i=\widehat \theta - \widehat F_Y^{-1}(\widehat U_i)$. We can simply estimate $E[\eps^2]$ by  the sample average of $(\widehat \eps_i^2)_{i=1,...,n_2}$. 

\medskip
The estimation of $E[\eta^2]$ is more challenging. A natural idea would be to consider a plug-in estimator based on the definition of $\eta$. Let us assume, as we do in Assumption \ref{ass:smoothness2} below, that $F_Y^{-1}$ is differentiable. Let also $P(y):=E[(U-\ind{F_Y(y)\le U})/f_Y(F_Y^{-1}(U)]$, so that $\eta=P(Y)$. Then, following \cite{AtheyImbens2006}, we could estimate $E[\eta^2]$ by the sample average of $(\widehat{P}^2(Y_i))_{i=1,...,n_1}$, with 
\begin{equation}
\widehat{P}(y):=\frac{1}{n_2}\sum_{i=1}^{n_2} \frac{\widehat U_i-\ind{\widehat F_Y(y)\le \widehat U_i}}{\widehat{f}_Y(\widehat F_Y^{-1}(\widehat{U}_i))}.
    \label{eq:def_P_AI}
\end{equation}
However, the inverse-density weighting appearing in \eqref{eq:def_P_AI} makes it difficult to establish the consistency of this estimator, at least under the weak conditions we impose on the distributions of $Y$ and $U$. We circumvent this difficulty by employing another estimator, based on the following lemma. 

\begin{lemma}\label{lem:change_var_sig1}
Suppose that $F_Y$ is continuous and $\int_0^1 [t(1-t)]^{1/2} f_U(t)\,dF_Y^{-1}(t)<\infty$. Then, 
$\eta := - \int_0^{1}  [\mathds{1}\{F_Y(Y)\leq t\} - t]f_U(t)\,dF_Y^{-1}(t)$ is well-defined almost surely and satisfies $E[\eta^2]<\infty$. Moreover, it holds that
    \begin{equation*}
    E[\eta^2]=\int_{\R^2} f_U(F_Y(y)) \; f_U(F_Y(y')) \;[F_Y(y)\wedge F_Y(y')] \;[\bar F_Y(y)\land \bar F_Y(y')]dydy'.
\end{equation*}
where $\bar F_Y=1-F_Y$.
\end{lemma}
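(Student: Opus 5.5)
Since $F_Y$ is continuous, $V:=F_Y(Y)$ is uniform on $(0,1)$. Write $d\mu(t):=f_U(t)\,dF_Y^{-1}(t)$, a nonnegative measure on $(0,1)$ ($F_Y^{-1}$ is nondecreasing). Then $\eta=-\int_0^1[\ind{V\le t}-t]\,d\mu(t)$. I would first show that $\eta$ is a.s. well-defined by establishing $E\int_0^1 |\ind{V\le t}-t|\,d\mu(t)<\infty$. By Tonelli this equals $\int_0^1 E|\ind{V\le t}-t|\,d\mu(t)=\int_0^1 2t(1-t)\,d\mu(t)$. This is bounded by $2\int_0^1 [t(1-t)]^{1/2}d\mu(t)<\infty$, since $t(1-t)\le [t(1-t)]^{1/2}$. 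Hence the integrand is $\mu$-integrable for a.e. realization of $V$, and $\eta$ is well-defined almost surely.

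Next I would compute $E[\eta^2]$ by writing $\eta^2$ as a double integral:
\[
\eta^2=\int_0^1\int_0^1 [\ind{V\le t}-t][\ind{V\le s}-s]\,d\mu(t)\,d\mu(s).
\]
To apply Fubini I need
\[
\int\int E\bigl|[\ind{V\le t}-t][\ind{V\le s}-s]\bigr|\,d\mu\,d\mu<\infty .
\]
By Cauchy--Schwarz, $E|\cdot|\le [t(1-t)]^{1/2}[s(1-s)]^{1/2}$, so the double integral is at most $\bigl(\int_0^1 [t(1-t)]^{1/2}d\mu(t)\bigr)^2<\infty$. This is exactly where the hypothesis is used. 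It also yields $E[\eta^2]<\infty$. Fubini then gives
\[
E[\eta^2]=\int\int \mathrm{Cov}(\ind{V\le t},\ind{V\le s})\,d\mu(t)\,d\mu(s)=\int\int (t\wedge s)(1-t\vee s)\,f_U(t)f_U(s)\,dF_Y^{-1}(t)\,dF_Y^{-1}(s).
\]

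Finally I would change variables $t=F_Y(y)$, $s=F_Y(y')$. The justification is that for a nonnegative Borel $g$ on $(0,1)$, $\int_0^1 g(t)\,dF_Y^{-1}(t)=\int_{\R} g(F_Y(y))\,dy$, with the integral over $y$ effectively restricted to the interior of the support. To prove this I would verify it for indicators $g=\ind{(a,b]}$ and extend by a monotone class argument. For indicators, the right side is the Lebesgue measure of $\{y: a<F_Y(y)\le b\}$. Because $F_Y$ is continuous, this set is an interval with endpoints $F_Y^{-1}(a)$ and $F_Y^{-1}(b)$, up to endpoints and flat pieces, and $F_Y^{-1}(b)-F_Y^{-1}(a)$ is the left side. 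Flat stretches of $F_Y$, where $F_Y(y)\in\{0,1\}$ or $F_Y$ is constant on an interval, need care: on the region where $F_Y\in\{0,1\}$ the integrand $(F_Y(y)\wedge F_Y(y'))(\bar F_Y(y)\wedge\bar F_Y(y'))$ vanishes. Interior flat pieces of $F_Y$ correspond exactly to jumps of $F_Y^{-1}$, so they are consistent on both sides; for this I would treat $dF_Y^{-1}$ as the Lebesgue--Stieltjes measure of the left-continuous $F_Y^{-1}$. Applying the identity in each coordinate (Tonelli on nonnegative integrands) and using $1-(t\vee s)=\bar F_Y(y)\wedge \bar F_Y(y')$ gives the stated formula.

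I expect this last step to be the main obstacle. It requires a rigorous change-of-variables identity between $dF_Y^{-1}$ and $dy$ when $F_Y$ may have flat parts and unbounded support. The first two steps are routine once the weight $[t(1-t)]^{1/2}$ is recognized as the $L^2$ norm of $\ind{V\le t}-t$.
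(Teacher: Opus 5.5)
Your proof is correct and follows the paper's route: uniformity of $V=F_Y(Y)$, Cauchy--Schwarz to justify Fubini on the double integral, the covariance $(t\wedge s)(1-t\vee s)$, and then the substitution $t=F_Y(y)$, which the paper states without the justification you sketch. The only difference is the a.s.\ well-definedness step: the paper uses the pointwise bound $|\ind{V\le t}-t|\le t(1-t)/(V\wedge(1-V))$ for $V\in(0,1)$, whereas you use the Tonelli computation $E|\ind{V\le t}-t|=2t(1-t)$, and both arguments work.
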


Lemma~\ref{lem:change_var_sig1} shows that $E[\eta^2]$ can be expressed as an integral that does not involve any inverse-density weighting. We develop a plug-in estimator for $E[\eta^2]$ based on this integral.  We consider sample-splitting, as it allows us to bound the variance of the  estimator in our consistency proof, though the simulations below suggest that this is in fact unnecessary. To simplify notation, assume that $n_1$, $n_2$ and $n_3$ are multiples of $2$. Let $\widehat F_Z^{(1)}$, $\widehat F_Z^{(2)}$ denote two sample-splitting estimators of $F_Z$:
\begin{align*}
    \widehat F_Z^{(1)}(z)&=\frac{2}{n_3}\sum_{i=1}^{n_3/2}\ind{Z_i\leq z},  
    \\
    \widehat F_Z^{(2)}(z)&=\frac{2}{n_3}\sum_{i=n_3/2+1}^{n_3}\ind{Z_i\leq z}.
\end{align*}
Let $\widehat F_Y^{(1)},\widehat F_Y^{(2)}$ denote two sample-splitting estimators of $F_Y$  defined analogously using the sample $(Y_i)_{i=1,\ldots,n_1}$. Let also $\widehat f_U^{(1)}$ and $\widehat f_U^{(2)}$ denote two sample-splitting kernel density estimators of $f_U$, namely for all $u\in(0,1)$,
\begin{align}
    \widehat f_U^{(1)}(u)&=\frac{1}{n_2h_{n_2,u}}\sum_{i=1}^{n_2/2}\ind{\abs{\widehat U_i^{(1)}-u}\leq h_{n_2,u}}, \label{eq_def_hat_f_U_est} \\
    \widehat f_U^{(2)}(u)& =\frac{1}{n_2h_{n_2,u}}\sum_{i=n_2/2+1}^{n_2}\ind{\abs{\widehat U_i^{(2)}-u}\leq h_{n_2,u}}, \notag
\end{align}
where $h_{n_2,u}:=\eps_{n_2}u(1-u)$ for some positive deterministic sequence  
$(\eps_{n})_{n\ge 1}$ satisfying the following conditions:

\begin{assumption}[Bandwidth conditions]
For all $n\ge 1$, $\eps_{n}\leq 1/2$ and as $n\to\infty$, $\eps_{n}\to0$,  $n\eps_{n}\to\infty$, and $(\log(n)/\sqrt{n})^{1/2-b_j-d_j}=o(\eps_{n})$ for $j\in\{1,2\}$. 
\label{ass:bandwidth}
\end{assumption}

We suggest choosing $\eps_{n_2}:=1/\log(n_2)$, which satisfies these restrictions for any $b_1,b_2,d_1,d_2$ that verify the assumptions of Theorems~\ref{thm:var_est} below. For all $s,t\in[0,1]$, define $w(s,t)=(s\wedge t)(1-s\lor t) = (s \land t) (\bar s \land \bar t)$ where $\bar x = 1-x$ for any $x \in \R$. We let
\[
\widehat E\left[\eta^2\right] := \int_{\R^2}\widehat f_U^{(1)}\!\left(\widehat F_Y^{(1)}(y)\right)\,\widehat f_U^{(2)}\!\left(\widehat F_Y^{(2)}(y')\right)\,w\!\left(\widehat F_Y^{(1)}(y), \widehat F_Y^{(2)}(y')\right)dydy'.
\]
Two remarks are in order. First, one could instead combine the subsamples of $U$ and $Y$ differently, replacing, for instance, $f_U^{(1)}\!\left(\widehat F_Y^{(1)}(y)\right)$ by $f_U^{(1)}\!\left(\widehat F_Y^{(2)}(y)\right)$, and then average the two estimators. Second,  when a uniform kernel is used in $\widehat f_U^{(1)}$ and $\widehat f_U^{(2)}$,  $E\left[\eta^2\right]$ is a double integral of a step function that vanishes outside the compact interval $[Y_{(1)},Y_{(n_1)}]$ and has jumps at the $Y_{(i)}$. Hence, its  computation is straightforward.

\medskip
Finally, given that $\sigma^2=(\lambda_1+\lambda_3)E[\eta^2] + \lambda_2E[\eps^2]$, we estimate $\sigma^2$ by
\begin{align*}
    \widehat \sigma^2&:=\frac{N(n_1+n_3)}{n_1n_3}\int_{\R^2}\widehat f_U^{(1)}\!\left(\widehat F_Y^{(1)}(y)\right)\,\widehat f_U^{(2)}\!\left(\widehat F_Y^{(2)}(y')\right)\,w\!\left(\widehat F_Y^{(1)}(y), \widehat F_Y^{(2)}(y')\right)dydy' + \frac{N}{n_2^2} \sum_{i=1}^{n_2}\widehat \eps_i^2.
\end{align*}

We show in Theorem \ref{thm:var_est} below that $\widehat \sigma^2$ is consistent under Assumptions \ref{ass:sampling_est}, \ref{ass:bandwidth} and the following strengthening of Assumption~\ref{ass:smoothness1}:

\begin{assumption}[Smoothness]\label{ass:smoothness2}
~
\begin{enumerate}[label=(\roman*)]
    \item  \label{ass:smoothness21} $F_Z$ is absolutely continuous with respect to the Lebesgue measure with density $f_Z$ supported on $[\underline{z},\overline z]$ with $-\infty\le \underline z<\bar z\le \infty$.

    \item \label{ass:smoothness22} The support of $Y$ is $[\underline{y},\overline y]$ for some $-\infty\le \underline{y}<\bar y\le \infty$. Moreover, $F_Y^{-1}$ is differentiable on ($\underline{y},\overline{y})$ and there exists $c_Y>0$ such that for all $t\in(0,1)$:
    \begin{equation}
    \big(F_Y^{-1}\big){}'(t) \leq c_Yt^{-(1+d_1)}(1-t)^{-(1+d_2)}.
    \label{eq:cond_quant_Y2}
    \end{equation}
    \item \label{ass:smoothness23} The mapping $g:(0,1)^2\to\R_+$ defined as
    \[
        g(s,t):=(s\wedge t)^{2b_1}(\bar s \land \bar t)^{2b_2}f_U(s)f_U(t), \quad \forall (s,t)\in(0,1)^2,
    \]
    is $\beta$-H\"older for some $\beta\in(0,1]$, i.e., there exists $c_U>0$ such that
    \[
        \abs{g(s,t)-g(s',t')}\leq c_U\left[\abs{s-s'}^\beta+\abs{t-t'}^\beta\right] \quad \forall s,s',t,t'\in(0,1).
    \]
    \item \label{ass:smoothness24} $(b_1\lor b_2) +( d_1\lor d_2)<1/2$.
\end{enumerate}
\end{assumption}

Condition~\ref{ass:smoothness21} is the same as in  Assumption~\ref{ass:smoothness1}, while Condition \ref{ass:smoothness24} is a slight strenghthening of Assumption~\ref{ass:smoothness1}\ref{ass:smoothness14}. Condition~\ref{ass:smoothness22} is similar to, but stronger than, Assumption~\ref{ass:smoothness1}\ref{ass:smoothness12}. Condition~\ref{ass:smoothness23} is also a strengthening of Assumption~\ref{ass:smoothness1}(iii). To see this, let $\varphi(s):=s^{b_1} (1-s)^{b_2} f_U(s)$ and note that under Condition~\ref{ass:smoothness23}, we have, 
    \begin{align*}
        2c_U |s-t|^\beta \geq \left|g(s,s)  - g(t,t)\right| = |\varphi^2(s) - \varphi^2(t)| = |\varphi(s) - \varphi(t)| |\varphi(s) + \varphi(t)| \geq |\varphi(s) - \varphi(t)|^2,
    \end{align*}
    where the last inequality follows since $\varphi$ is nonnegative on $[0,1]$. Hence, $|\varphi(s) - \varphi(t)| \le (2c_U)^{1/2} |s-s'|^{\beta/2}$, which implies that $\varphi$ is $\beta/2$-Hölder, and thus bounded, on $[0,1]$. Hence, Assumption~\ref{ass:smoothness1}\ref{ass:smoothness13} holds under Assumption \ref{ass:smoothness2}\ref{ass:smoothness23}. 
    We now state our second main theorem regarding the consistency of our asymptotic variance estimator.

\begin{theorem}\label{thm:var_est}
If Assumptions \ref{ass:sampling_est}, \ref{ass:bandwidth} and \ref{ass:smoothness2} hold and $\min(\lambda_1,\lambda_2,\lambda_3)>0$, then, as $N\to\infty$,
$$\widehat \sigma^2 \convP \sigma^2.$$
\end{theorem}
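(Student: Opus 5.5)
Since $N(n_1+n_3)/(n_1n_3)=N/n_1+N/n_3\to\lambda_1+\lambda_3$ and $N/n_2\to\lambda_2$ by Assumption~\ref{ass:sampling_est}\ref{ass:sampling_est3}, Slutsky's lemma reduces the proof to two statements: $\frac{1}{n_2}\sum_i \widehat\eps_i^2\convP E[\eps^2]$ and $\widehat E[\eta^2]\convP E[\eta^2]$.

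\textbf{The $\eps$ part.} We have $\widehat\eps_i=\widehat\theta-\widehat F_Y^{-1}(\widehat U_i)$, and Theorem~\ref{thm:an} applies because Assumption~\ref{ass:smoothness2} implies Assumption~\ref{ass:smoothness1}, so $\widehat\theta\convP\theta_0$. It therefore suffices to show
\[
\frac{1}{n_2}\sum_i\bigl[\widehat F_Y^{-1}(\widehat U_i)-F_Y^{-1}(U_i)\bigr]^2\convP0 .
\]
The law of large numbers then gives $\frac1{n_2}\sum_i (\theta_0-F_Y^{-1}(U_i))^2\to E[\eps^2]$, and the cross terms are handled by Cauchy--Schwarz. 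We split $(0,1)$ into a central region $[\delta,1-\delta]$ and the tails.
\begin{itemize}[label={}]
\item On the central region, uniform consistency of $\widehat F_Z$ (Glivenko--Cantelli) and of $\widehat F_Y^{-1}$ on compacts, together with continuity of $F_Y^{-1}$, gives the convergence.
\item In the tails, we bound $|\widehat F_Y^{-1}(t)|$ by a constant times $t^{-d_1}$, with high probability uniformly over $t\ge 1/(n_1+1)$. This uses the weighted quantile process bounds already used in the proof of Theorem~\ref{thm:an} (e.g.\ Corollary 4.3.1 of \cite{csorgo1986}), or Lemma~\ref{lem:HnGn}.
\item We also bound $\widehat U_i/U_i$ away from zero with high probability (Shorack--Wellner linear bounds for the uniform empirical cdf).
\item The tail contribution is then at most a constant times $E[U^{-2d_1}\ind{U<\delta}]$. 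This is small because $2d_1+b_1<1$, so $\int_0^\delta u^{-2d_1-b_1}du\to0$ as $\delta\to0$.
\end{itemize}

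\textbf{The $\eta$ part.} Write $\widehat E[\eta^2]$ using the change of variables $y=\widehat F_Y^{(1)-1}$-type substitutions: since $\widehat F_Y^{(k)}$ is a step function, $dy$ becomes $d\widehat F_Y^{(k)-1}(s)$. This gives
\[
\widehat E[\eta^2]=\int\!\!\int \widehat f^{(1)}_U(s)\widehat f^{(2)}_U(t)\,w(s,t)\,d\widehat Q_1(s)\,d\widehat Q_2(t),
\]
with $\widehat Q_k$ the empirical quantile functions. By Lemma~\ref{lem:change_var_sig1}, the target has the analogous form with $f_U$ and $F_Y^{-1}$.

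\textbf{Decomposition.} Condition on the $Y$ and $Z$ samples. By sample splitting, $\widehat f_U^{(1)}$ and $\widehat f_U^{(2)}$ are then independent, so
\[
E[\widehat E[\eta^2]\mid Y,Z]=\int\!\!\int \bar f^{(1)}(s)\,\bar f^{(2)}(t)\,w\,d\widehat Q_1\,d\widehat Q_2,
\]
where $\bar f^{(k)}(u)=P(|\widehat U^{(k)}-u|\le h_u\mid Z)/(2h_u)$ is a smoothed version of the density of $\widehat F_Z^{(k)}(X)$. We then show three things.
\begin{enumerate}[label=(\alph*)]
\item \emph{Variance.} The conditional variance vanishes. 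Independence makes the second moment factor into single-sample variances, each of order $1/(n_2h_u)$. Integrability is obtained by weighting with $w(s,t)\le (s\wedge t)^{1/2}(\bar s\wedge\bar t)^{1/2}$ and using $n\eps_n\to\infty$.
\item \emph{Smoothing and $\widehat F_Z$ error.} Replace $\bar f^{(k)}$ by $f_U$. The H\"older condition on $g$ gives a bias of order $\eps_n^\beta$ times the envelope $s^{-b_1}\bar s^{-b_2}$. The deviation of $\widehat F_Z^{(k)}(X)$ from $U$ is of order $\sqrt{u(1-u)\log n/n}$ uniformly, which is $o(h_u)$ except near the boundaries. The condition $(\log n/\sqrt n)^{1/2-b_j-d_j}=o(\eps_n)$ is exactly what makes the boundary layer $u\lesssim(\log n/\sqrt n)$ negligible in the weighted integral.
\item \emph{Quantile error.} Replace $d\widehat Q_k$ by $dF_Y^{-1}$. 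Writing the integral against $w\,f_Uf_U$ and integrating by parts turns this into a weighted quantile-process term. That term is controlled via \eqref{eq:cond_quant_Y2} and the weighted quantile process convergence, with the weight integrable because $(b_1\vee b_2)+(d_1\vee d_2)<1/2$.
\end{enumerate}

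\textbf{Main obstacle.} I expect the hardest step to be (b)--(c) near $s,t\to0,1$. There $f_U$ may explode, the bandwidth vanishes, and neither the kernel estimator nor $\widehat Q_k$ is uniformly consistent. My first attempt would be to truncate to $[\delta_n,1-\delta_n]^2$ with $\delta_n$ slightly above $\log n/\sqrt n$. On the truncated region I would use the uniform bounds above. On the complement I would bound the expectation crudely: dominate $\widehat f_U$ in mean by the envelope $C u^{-b_1}(1-u)^{-b_2}$, and $d\widehat Q$ in mean via beta-distributed spacings. This should yield a contribution of order $\delta_n^{1/2-b-d}\to0$.
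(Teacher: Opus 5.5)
Your architecture is close in spirit to the paper's: the Slutsky reduction, conditioning on the $Y$ and $Z$ samples so the two split-sample kernel estimators are independent, separate variance and smoothing/$\widehat F_Z$ steps, and a boundary layer of width about $\log n/\sqrt n$ made negligible by Assumption~\ref{ass:bandwidth}. However, step (c) fails as stated.

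Integrating by parts in $\int\!\!\int w\,(f_U\otimes f_U)\,d\bigl(\widehat Q_1\otimes\widehat Q_2-F_Y^{-1}\otimes F_Y^{-1}\bigr)$ moves the differential onto $\Phi(s,t)=w(s,t)f_U(s)f_U(t)$. Bounding the resulting $\int(\widehat Q_k-F_Y^{-1})\,d_s\Phi$ by a weighted quantile-process norm then requires $\int q\,(F_Y^{-1})'\,|d_s\Phi|<\infty$ for some weight $q$, that is, a weighted bounded-variation property of $f_U$. Assumption~\ref{ass:smoothness2} gives nothing of the sort. There $f_U$ is only continuous and $\varphi(s)=s^{b_1}\bar s^{b_2}f_U(s)$ is only $\beta/2$-H\"older, which allows densities of unbounded variation (perturb a smooth density by a small Weierstrass-type function).

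The H\"older condition on $g$ exists precisely so that no integration by parts is needed. The paper stays on the $y$-scale with the fixed measure $dy\,dy'$ and writes $f_U(s)f_U(t)w(s,t)=g(s,t)\Omega(s,t)$ with $\Omega(s,t)=(s\wedge t)^{1-2b_1}(\bar s\wedge\bar t)^{1-2b_2}$. It then bounds $|g(\widetilde F_Y^{(1)}(y),\widetilde F_Y^{(2)}(y'))-g(F_Y(y),F_Y(y'))|\le c_U(|\widetilde F_Y^{(1)}(y)-F_Y(y)|^\beta+|\widetilde F_Y^{(2)}(y')-F_Y(y')|^\beta)$, and controls the change in $\Omega$ through binomial moments of $\widehat F_Y^{(j)}(y)-F_Y(y)$. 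Because $\widetilde F_Y^{(j)}(y)$ is the mean-value point produced by the kernel, smoothing bias and quantile error are handled together. You can either adopt this, or first replace $\varphi$ by a uniformly $\kappa$-close smooth function (the error is at most $\kappa\,s^{-b_1}\bar s^{-b_2}$ and contributes $O_P(\kappa)$ by the envelope bounds) and integrate by parts only for the smooth part.

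Three smaller points also need repair.
\begin{enumerate}
\item In the $\eps$ part, $\widehat U_i/U_i$ is not bounded away from zero. If $X_i<Z_{(1)}$ then $\widehat U_i=0$ and $\widehat F_Y^{-1}(0)=Y_{(1)}$ (symmetrically at the top). The Shorack--Wellner linear bound $\sup_{x\ge\zeta_{(1)}}x/\Hn(x)=O_P(1)$ only covers $U_i\ge\zeta_{(1)}$. The other indices contribute $O_P(n^{2d_1+b_1-1})=o_P(1)$, since $E[Y_{(1)}^2]=O(n_1^{2d_1})$ and $P(U<\zeta_{(1)})\lesssim n_3^{b_1-1}$. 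With this fix, your truncation argument is a valid and more elementary alternative to the paper's route via $\max_i|Y_i|$ and the terms $T_2,T_3$ of the proof of Theorem~\ref{thm:an}.
\item In (a), the variance of the double integral involves $\mathrm{cov}(\widehat f_U^{(1)}(s),\widehat f_U^{(1)}(s'))$ at distinct points, and the weight $(s\wedge t)^{1/2}(\bar s\wedge\bar t)^{1/2}$ is too weak. Even exploiting that these covariances are nonpositive unless $|s-s'|\lesssim h_s$, a heuristic count gives a bound of order $n^{3b_1+4d_1-1}$, which diverges for, e.g., $b_1=0.05$, $d_1=0.4$. You need the full weight $(s\wedge t)(\bar s\wedge\bar t)$ together with second-moment bounds for spacings of the $Y$ order statistics, as in the paper.
\item The feasible $\widehat f_U$ is not dominated in mean by $Cu^{-b_1}(1-u)^{-b_2}$ near the boundary. $\widehat U^{(1)}$ lives on the lattice $\{2j/n_3\}$. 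At $u=2k/n_3$ with $k\lesssim 1/\eps_{n_2}$ (a point $\widehat F_Y^{(1)}$ visits when $n_1=n_3$), the window contains only that atom, so the mean can exceed the envelope by a factor of order $1/(\eps_{n_2}k)$. Your boundary estimate must carry this extra $1/\eps_{n_2}$ rather than giving the bare $\delta_n^{1/2-b-d}$. This is one reason a lower bound on $\eps_n$ such as Assumption~\ref{ass:bandwidth} is needed.
\end{enumerate}
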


\subsection{Panel data applications}\label{sub:samp_split}

While Assumption \ref{ass:sampling_est} may be reasonable in the repeated cross sections setting of \cite{AtheyImbens2006}'s model, it does not cover panel data applications where $Y$ and $Z$ ($Y_{00}$ and $Y_{01}$ in \cite{AtheyImbens2006}) are observed on the same units and are thus possibly correlated. We adapt Assumption~\ref{ass:sampling_est} as follows. Hereafter, we let $N:=\min(n_1,n_2)$.

\begin{assumption}[Panel data]\label{ass:sampling_panel}
    ~
\begin{enumerate}[label=(\roman*)]
    \item \label{ass:sampling_panel1} $(Y_i,Z_i)_{i=1,\ldots,n_1}$ and $(X_i)_{i=1,\ldots,n_2}$ are two samples of i.i.d.~variables with respective cdfs $F_{Y,Z}$ (with marginals $F_Y$ and $F_Z$) and $F_X$.
    \item \label{ass:sampling_panel2} $(Y_i,Z_i)_{i=1,\ldots, n_1}$ and $(X_{i})_{i=1,\ldots, n_2}$ are mutually independent. 
    \item \label{ass:sampling_panel3} For each $k \in \{1,2\}$, there exists $\lambda_k\in[0,1]$, such that $N/n_k\to \lambda_k$ as $N\to \infty$. 
\end{enumerate}
\end{assumption}

To handle such cases, we introduce the following sample-splitting estimator of $\theta_0$:
\[
\widetilde\theta:=\frac{1}{2}\left(\widehat\theta^{(1)}+\widehat\theta^{(2)}\right),
\]
where, assuming that $n_1$ and $n_2$ are multiples of $2$ to simplify notation,
\begin{align*}
    \widehat\theta^{(1)}&:=\frac{1}{n_2/2}\sum_{i=1}^{n_2/2}{\widehat F_Y^{(1)}{}}^{-1}\left(\widehat F_Z^{(2)}(X_i)\right), \\
    \widehat\theta^{(1)}&:=\frac{1}{n_2/2}\sum_{i=n_2/2+1}^{n_2}{\widehat F_Y^{(2)}{}}^{-1}\left(\widehat F_Z^{(1)}(X_i)\right).
\end{align*}

\begin{theorem}\label{thm:an_panel}
If Assumptions~\ref{ass:smoothness1} and \ref{ass:sampling_panel} hold and $\min(\lambda_1,\lambda_2)>0$, then, as $N\to\infty$, we have
\begin{equation*}
    \sqrt{N}(\widetilde{\theta}-\theta_0) \convD \mathcal{N}(0,\widetilde \sigma^2),
\end{equation*}
where $\widetilde \sigma^2$ is the quantity $\sigma^2$ defined in Theorem~\ref{thm:an} with $\lambda_3=\lambda_1$.

\end{theorem}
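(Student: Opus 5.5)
We would prove Theorem~\ref{thm:an_panel} by reducing each half-estimator to the independent-samples situation of Theorem~\ref{thm:an}, and then combining the two halves through their asymptotically linear representations. The key observation is that the pairing $(\widehat F_Y^{(1)},\widehat F_Z^{(2)})$ uses $Y$'s from the first half of the panel and $Z$'s from the second half. Likewise $(\widehat F_Y^{(2)},\widehat F_Z^{(1)})$ uses $Y$'s from the second half and $Z$'s from the first. Moreover, the $X$'s used in $\widehat\theta^{(1)}$ and $\widehat\theta^{(2)}$ are disjoint subsamples, independent of everything else. Hence, conditionally on the labels, each $\widehat\theta^{(k)}$ is exactly the estimator $\widehat\theta$ of Theorem~\ref{thm:an}, computed from three mutually independent i.i.d.\ samples of sizes $n_1/2$, $n_2/2$, $n_1/2$. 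Assumption~\ref{ass:sampling_est} therefore holds for each half, with $\lambda_3=\lambda_1$ and $N$ replaced by $N/2$.

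Rather than invoking only the conclusion of Theorem~\ref{thm:an}, we would use the asymptotic linear expansion established in its proof. Write $I_1,I_2$ for the two halves of the panel indices and $J_1,J_2$ for the two halves of the $X$ indices. The expansion gives
\[
\widehat\theta^{(1)}-\theta_0=\frac{2}{n_1}\sum_{i\in I_1}\eta(Y_i)+\frac{2}{n_1}\sum_{i\in I_2}\zeta(Z_i)+\frac{2}{n_2}\sum_{i\in J_1}\eps_i+o_P(N^{-1/2}).
\]
Here $\eta(Y)$ is the influence term of $\eta$, $\zeta(Z)$ is the analogous term coming from the estimation of $F_Z$ (with $E[\zeta^2]=E[\eta^2]$), and $\eps_i=\theta_0-F_Y^{-1}(U_i)$. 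The symmetric expansion holds for $\widehat\theta^{(2)}$ with $I_1$ and $I_2$ swapped and $J_1$ replaced by $J_2$. If the proof of Theorem~\ref{thm:an} is organized as weak convergence rather than linearization, we would extract this representation from its decomposition: the leading terms are the empirical-process terms, and all remainders were shown to be $o_P(N^{-1/2})$. Those remainder bounds are unaffected by relabeling sample sizes.

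Averaging the two expansions gives
\[
\widetilde\theta-\theta_0=\frac{1}{n_1}\sum_{i=1}^{n_1}\bigl[\eta(Y_i)+\zeta(Z_i)\bigr]+\frac{1}{n_2}\sum_{i=1}^{n_2}\eps_i+o_P(N^{-1/2}).
\]
This is a sum of two independent i.i.d.\ averages, so the Lindeberg CLT applies, using the finite second moments from Lemma~\ref{lem:change_var_sig1}. The asymptotic variance is
\[
\lambda_1E[(\eta(Y)+\zeta(Z))^2]+\lambda_2E[\eps^2].
\]

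The main obstacle is the cross term $2\lambda_1E[\eta(Y)\zeta(Z)]$, which need not vanish because $Y$ and $Z$ are dependent within the panel. Obtaining $\widetilde\sigma^2$, which equals $\sigma^2$ with $\lambda_3=\lambda_1$, i.e.\ $2\lambda_1E[\eta^2]+\lambda_2E[\eps^2]$, requires this cross term to be zero. Sample splitting does not kill it automatically, since the pooled linear term pairs $Y_i$ with $Z_i$ from the same unit.

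I would first check whether the cross-fitting genuinely decorrelates the two terms. In $\widehat\theta^{(1)}$, the $Y$-influence terms sum over $I_1$ and the $Z$-influence terms over $I_2$, and the reverse holds in $\widehat\theta^{(2)}$. So one could try to keep the two half-estimators' expansions separate, but their sum still recombines $\eta(Y_i)$ and $\zeta(Z_i)$ for the same $i$. If no cancellation appears, the honest conclusion is that the stated variance holds when $E[\eta(Y)\zeta(Z)]=0$. In that case I would verify this identity directly from the structure of the changes-in-changes model, or else report the variance with the extra covariance term. This is the step I expect to require the most care.
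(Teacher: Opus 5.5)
Your diagnosis is correct, and the cross term is a genuine obstruction: the statement fails as written.

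\textbf{What goes wrong in the paper's proof.} The paper asserts that, by sample splitting, $\widehat\theta^{(1)}$ and $\widehat\theta^{(2)}$ are independent. It then applies Theorem~\ref{thm:an} to each half (this is your reduction, and it is fine) and averages two independent $\mathcal N(0,2\widetilde\sigma^2)$ limits. But $\widehat\theta^{(1)}$ is a function of $(Y_{I_1},Z_{I_2},X_{J_1})$ and $\widehat\theta^{(2)}$ of $(Y_{I_2},Z_{I_1},X_{J_2})$. They are therefore dependent whenever $Y_i$ and $Z_i$ are, and your pooled expansion is exactly the bookkeeping of that dependence.

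\textbf{The cross term.} Write $\eta_Z:=-\int_0^1[\ind{F_Z(Z)\le t}-t]f_U(t)\,dF_Y^{-1}(t)$. The $Z$-term is $\zeta(Z)=-\eta_Z$, since $\sqrt{n_3}(\Hn^{-1}-\I)\approx-\sqrt{n_3}(\Hn-\I)$. The cross term is therefore $-2\lambda_1E[\eta\,\eta_Z]$, and nothing in Assumptions~\ref{ass:smoothness1} and~\ref{ass:sampling_panel} makes it vanish.

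\textbf{Your fallback goes the wrong way.} You suggest deriving $E[\eta\,\eta_Z]=0$ from the changes-in-changes structure. Under rank invariance, which is the canonical panel changes-in-changes model with a time-invariant unobservable, $F_Y(Y_i)=F_Z(Z_i)$. Then $\eta_Z=\eta$ and the true variance is $\lambda_2E[\eps^2]<\widetilde\sigma^2$.

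A concrete check: take $Y_i=Z_i\sim$ Uniform$(0,1)$, $X\sim$ Uniform$(0,1)$, and let $\widehat F^{(j)}$ be the empirical cdf of the $j$-th half of the panel. To first order, $(\widehat F^{(1)})^{-1}\circ\widehat F^{(2)}$ and $(\widehat F^{(2)})^{-1}\circ\widehat F^{(1)}$ equal $\I+(\widehat F^{(2)}-\widehat F^{(1)})$ and $\I-(\widehat F^{(2)}-\widehat F^{(1)})$, respectively. These fluctuations cancel in $\widetilde\theta$, so $\widetilde\theta=\bar X+o_P(N^{-1/2})$. Its asymptotic variance is $\lambda_2/12$, not $(2\lambda_1+\lambda_2)/12$. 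With countermonotone ranks, $E[\eta\,\eta_Z]<0$ and $\widetilde\sigma^2$ understates the variance. The theorem thus needs an extra condition such as $E[\eta\,\eta_Z]=0$, for example $Y$ independent of $Z$.

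\textbf{Your route gives the corrected result, with one missing step.} Your approach is the right one for
$\sqrt N(\widetilde\theta-\theta_0)\convD\mathcal N\bigl(0,\lambda_1E[(\eta-\eta_Z)^2]+\lambda_2E[\eps^2]\bigr)$,
and the cross-fitting still matters there. Each half satisfies Assumption~\ref{ass:sampling_est}, so every remainder bound from the proof of Theorem~\ref{thm:an} applies verbatim. The $o_P(N^{-1/2})$ terms add regardless of the dependence between halves.

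The missing step concerns the $Z$-contribution. The proof of Theorem~\ref{thm:an} never writes it as an i.i.d.\ sum. It stops at $\sqrt{\lambda_3n_3}J_2$, an L-statistic in the uniform quantile process, and obtains only its marginal limit via Hecker's criterion. Because your $Y$- and $Z$-terms are dependent, you need their joint limit. Hence you need the representation $\sqrt{n_3}J_2=-n_3^{-1/2}\sum_j\eta_{Z,j}+o_P(1)$, where $\eta_{Z,j}$ is $\eta_Z$ evaluated at $Z_j$.

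This follows from a weighted approximation of the uniform empirical and quantile processes of the $\zeta_j$ by a single Brownian bridge, with opposite signs, on $[1/n_3,1-1/n_3]$. Integrate this approximation against $d\Lambda$ using $\int_0^1[t(1-t)]^{1/2-\nu}\,d\Lambda(t)<\infty$ for small $\nu>0$. Then add the tail and centering bounds already used for $R_4$, $R_8$ and $R_9$.
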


The proof follows directly from Theorem~\ref{thm:an}: by sample splitting, $\widehat\theta^{(1)}$ and $\widehat\theta^{(2)}$ are independent and by Theorem~\ref{thm:an}, as $N\to\infty$, we have
\[
 \sqrt{N}(\widehat{\theta}^{(j)}-\theta_0) \convD \mathcal{N}(0,2\widetilde \sigma^2), \quad j=1,2.
\]

Similarly, a consistent estimator of the asymptotic variance $\widetilde\sigma^2$ can be obtained by considering a sample-splitting estimator of $\widetilde \sigma^2$ based on four splits of the sample $(Y_i,Z_i)_{i=1,\ldots,n_1}$ to ensure that $\widehat f_U^{(1)}$, $\widehat f_U^{(2)}$, $\widehat F_Y^{(1)}$ and $\widehat F_Y^{(2)}$, which appear in $\widehat E[\eta^2]$, are independent. To simplify notation, assume that $n_1$ is a multiple of $4$. Let
\begin{align*}
    \widehat {\widetilde{\sigma}}^2
    &:=\frac{2N}{n_1}\int_{\R^2}\check f_U^{(1)}\!\left(\check F_Y^{(1)}(y)\right)\,\check f_U^{(2)}\!\left(\check F_Y^{(2)}(y')\right)\,w\!\left(\check F_Y^{(1)}(y), \check F_Y^{(2)}(y')\right)dydy' + \frac{N}{n_2^2} \sum_{i=1}^{n_2} \frac{\left(\check \eps_i^{(1)}\right)^2+\left(\check \eps_i^{(2)}\right)^2}{2},
\end{align*}
where, for $j\in\{1,2\}$, $\check \eps_i^{(j)}=\widetilde \theta - {\widehat F_Y^{(j)}{}}^{-1}\left(\widehat F_Z^{(3-j)}(X_i)\right)$ and
\begin{align*}
    \check f_U^{(1)}(u)&=\frac{1}{n_2h_{n_2,u}}\sum_{i=1}^{n_2/2}\ind{\abs{\check F_Z^{(1)}(X_i)-u}\leq h_{n_2,u}},  \\
    \check f_U^{(2)}(u)& =\frac{1}{n_2h_{n_2,u}}\sum_{i=n_2/2+1}^{n_2}\ind{\abs{\check F_Z^{(2)}(X_i)-u}\leq h_{n_2,u}},  \\
    \check F_Z^{(1)}(z)&=\frac{4}{n_1}\sum_{i=1}^{n_1/4}\ind{Z_i\leq z},  
    \\
    \check F_Z^{(2)}(z)&=\frac{4}{n_1}\sum_{i=n_1/4+1}^{n_1/2}\ind{Z_i\leq z}, \\
    \check F_Y^{(1)}(y)&=\frac{4}{n_1}\sum_{i=n_1/2+1}^{3n_1/4}\ind{Y_i\leq y},  
    \\
    \check F_Y^{(2)}(y)&=\frac{4}{n_1}\sum_{i=3n_1/4+1}^{n_1}\ind{Y_i\leq y}.
\end{align*}

\begin{theorem}\label{thm:var_est_panel}
    If Assumptions~\ref{ass:smoothness2} and \ref{ass:sampling_panel} hold and $\min(\lambda_1,\lambda_2)>0$, then, as $N\to\infty$, 
    $$\widehat{\widetilde{\sigma}}^2 \convP \widetilde \sigma^2.$$
\end{theorem}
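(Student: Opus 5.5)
The plan is to reduce Theorem~\ref{thm:var_est_panel} to Theorem~\ref{thm:var_est}, mirroring how Theorem~\ref{thm:an_panel} follows from Theorem~\ref{thm:an}. I will treat the two components of $\widehat{\widetilde\sigma}^2$ separately. They are the $\eta$-part, $\frac{2N}{n_1}\widetilde E[\eta^2]$, where $\widetilde E[\eta^2]$ denotes the double integral built from the checked estimators, and the $\eps$-part. Since $\widetilde\sigma^2=2\lambda_1E[\eta^2]+\lambda_2E[\eps^2]$ and $2N/n_1\to2\lambda_1$, it suffices to show $\widetilde E[\eta^2]\convP E[\eta^2]$ and $\frac{N}{n_2^2}\sum_i\frac{(\check\eps_i^{(1)})^2+(\check\eps_i^{(2)})^2}{2}\convP \lambda_2E[\eps^2]$.

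\textbf{The $\eta$-part.} The key observation is that $\check F_Z^{(1)},\check F_Z^{(2)},\check F_Y^{(1)},\check F_Y^{(2)}$ are built from four disjoint blocks of the i.i.d.\ sample $(Y_i,Z_i)_{i\le n_1}$. Consequently $(Z_i)_{i\le n_1/2}$, $(Y_i)_{i>n_1/2}$ and $(X_i)_{i\le n_2}$ are mutually independent i.i.d.\ samples with marginals $F_Z$, $F_Y$, $F_X$, even though $Y$ and $Z$ are dependent within a unit. Thus $\widetilde E[\eta^2]$ has exactly the structure of $\widehat E[\eta^2]$ in Section~\ref{sec:var_est_obs_ranks}, with $n_3$ and $n_1$ replaced by $n_1/2$ each. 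The only differences are that the $Z$-halves are used to form $\widehat U_i^{(1)},\widehat U_i^{(2)}$, and the $Y$-halves are used for $\widehat F_Y^{(1)},\widehat F_Y^{(2)}$. The sampling conditions of Assumption~\ref{ass:sampling_est} hold with $N'=\min(n_1/2,n_2)$ and all limits $\lambda_k'$ positive. Assumptions~\ref{ass:bandwidth} and \ref{ass:smoothness2} are unchanged, since the bandwidth depends only on $n_2$. Hence the step of the proof of Theorem~\ref{thm:var_est} establishing $\widehat E[\eta^2]\convP E[\eta^2]$ applies verbatim. One point needs checking: that proof uses only the marginal laws and the independence across the samples entering $\widehat E[\eta^2]$. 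It should not use, for instance, the fact that the same $Y$-sample enters $\widehat\theta$. I expect this to hold, because $\widehat E[\eta^2]$ does not involve $\widehat\theta$.

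\textbf{The $\eps$-part.} For each $j$, the pair $\widehat F_Y^{(j)}$ and $\widehat F_Z^{(3-j)}$ uses disjoint halves of the panel, so these estimators are independent. Together with $(X_i)$, they therefore form three mutually independent samples as in Assumption~\ref{ass:sampling_est}. By Theorem~\ref{thm:an_panel}, $\widetilde\theta\convP\theta_0$. The proof of Theorem~\ref{thm:var_est} shows that $\frac1{n_2}\sum_i(\widehat\theta-\widehat F_Y^{-1}(\widehat F_Z(X_i)))^2\convP E[\eps^2]$, presumably via $\frac1{n_2}\sum_i(\widehat F_Y^{-1}(\widehat U_i)-F_Y^{-1}(U_i))^2\convP0$ together with the law of large numbers. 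Applying that argument to each $j$ gives
\[
\frac1{n_2}\sum_i\bigl(\check\eps_i^{(j)}\bigr)^2\convP E[\eps^2].
\]
Multiplying by $N/n_2\to\lambda_2$ and averaging over $j$ yields the claim. Note that $\check\eps_i^{(j)}$ uses the average $\widetilde\theta$ rather than $\widehat\theta^{(j)}$; this is harmless because both are consistent.

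\textbf{Main obstacle.} The main difficulty is making sure that the relevant intermediate statements in the proof of Theorem~\ref{thm:var_est} are genuinely stated for generic independent samples. This concerns the Beta/spacing bounds for $\widehat F_Z$ and the kernel estimators, as well as the quantile-process bound for $\widehat F_Y^{-1}$. If instead they are tied to the specific estimator $\widehat\sigma^2$, I would reformulate them as lemmas about an arbitrary triple of independent i.i.d.\ samples whose sizes are of order $N$. The panel theorem then follows by the block-independence argument above, with no new analytic estimates required.
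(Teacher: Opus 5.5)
Your reduction is correct and is the paper's own argument. The paper's proof is the single remark that the result follows from Theorem~\ref{thm:var_est} through the independence created by sample splitting, and you fill it in correctly. The four disjoint blocks make the $\eta$-part distributionally identical to the cross-sectional $\widehat E[\eta^2]$ with $n_1$ and $n_3$ both replaced by $n_1/2$. For each $j$, the pair $\widehat F_Y^{(j)},\widehat F_Z^{(3-j)}$ uses disjoint units, which handles the $\eps$-part.

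One small caution: obtain $\widetilde\theta\convP\theta_0$ from Theorem~\ref{thm:an} applied to each $\widehat\theta^{(j)}$ separately, not from Theorem~\ref{thm:an_panel}. The reason is that $\widehat\theta^{(1)}$ uses $(Y_i)_{i\le n_1/2}$ while $\widehat\theta^{(2)}$ uses $(Z_i)_{i\le n_1/2}$, so the two are not independent when $Y_i$ and $Z_i$ are dependent, contrary to what the proof of Theorem~\ref{thm:an_panel} asserts.
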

Theorem~\ref{thm:var_est_panel} follows directly by the independence induced by sample-splitting together with Theorem~\ref{thm:var_est}. 

\section{Monte Carlo simulations}\label{sec:mc_sim}

In this section, we investigate the finite sample properties of asymptotic confidence intervals based on Theorems~\ref{thm:an}--\ref{thm:var_est}. We consider a data generating process that provides a tight control on our assumptions. The random variables $Y_1,\ldots,Y_{n_1}$ are independently and identically distributed (i.i.d.) such that $Y_i=F_Y^{-1}(W_i)$ with $W_i\sim$Uniform$(0,1)$ and
\[
F_Y^{-1}(t)=-t^{-d_1}+(1-t)^{-d_2}, \quad \forall t\in(0,1).
\]
We also assume that $Z_1,\ldots,Z_{n_3}$ are i.i.d. with distribution $\mathcal N(0,1)$, whose cdf is denoted as $\Phi$, and $X_1,\ldots,X_{n_2}$ are i.i.d.~such that $X_i=\Phi^{-1}(V_i)$ with $V_i\sim{\rm Beta}(1-b_1,1-b_2)$.
All the random variables are mutually independent, $U_i\sim$Beta$(1-b_1,1-b_2)$, and 
    \[
    \theta_0 = \frac{{\rm B}(1-b_1,1-b_2-d_2)-{\rm B}(1-b_1-d_1,1-b_2)}{{\rm B}(1-b_1,1-b_2)}.
    \]
We consider $b_1=d_1=0$ and 
$$(b_2,d_2)\in\{(0.05,0.05),(0.20, 0.05), (0.05, 0.20), (0.20, 0.20), (0.30, 0.30), (0.40, 0.40)\}.$$
The sample size $N=n_1=n_2=n_3$ varies in $\{100, 500,1,000, 10,000\}$. The number of replications is $10,000$. 

\medskip
We first study the behavior of $\widehat{\theta}$ depending on $(b_2,d_2)$. Recall that by Theorem \ref{thm:an} and since $b_1=d_1=0$, $\widehat{\theta}$ is root-$N$ consistent if $b_2+d_2<0.5$. On the other hand, our results do not cover the cases $b_2=d_2=0.3$ and $b_2=d_2=0.4$. Figure \ref{fig:iqr_convergence} displays the log of the interquartile range  of $\widehat{\theta}$, denoted as IQR$(\widehat{\theta}) := \text{quantile}_{\hat \theta}(0.75) - \text{quantile}_{\hat \theta}(0.25)$, as a function of $\log(N)$ for the different values $(b_2,d_2)$. We also plot straight lines with slope $-1/2$ starting from the initial point corresponding to $\log(N)=\log(100)$. Deviations from these straight lines indicate discrepancies from root-$N$ convergence. It appears that such deviations are moderate for $b_2+d_2< 0.5$, but are large otherwise, with slopes smaller than $-1/2$. 

\begin{figure}[H]
    \begin{center}
    \includegraphics[width=0.75\textwidth]{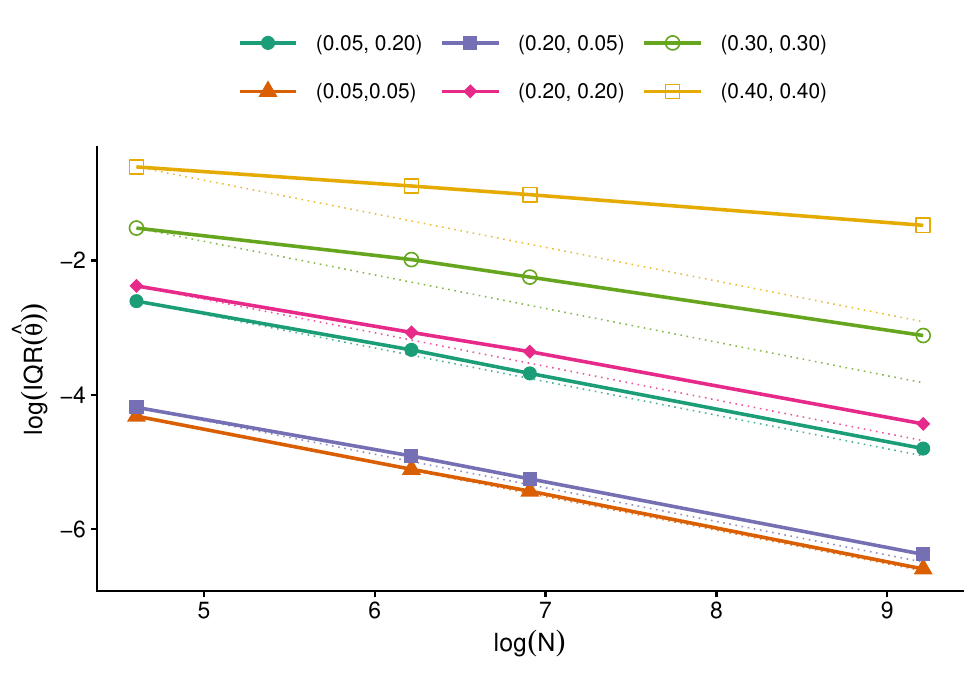}
    \end{center}
    {\small ~\\[-0.5cm] Notes: For each color, the dotted line is the straight line with slope $-1/2$ passing through the initial point on the corresponding solid line at $\log(N)=\log(100)$. Here, we take $b_1 = d_1 = 0$.}
    \caption{Log of IQR$(\widehat{\theta})$ as a function of $\log(N)$ for various $(b_2,d_2)$.}
\label{fig:iqr_convergence}
\end{figure}

Next, we investigate in Figure \ref{fig:true_dist_theta_hat} how close the distribution of  $\sqrt{N}(\widehat{\theta}-\theta_0)/\sigma$ is from a standard normal distribution. We consider both $(b_2,d_2)=(0.2,0.2)$ and $(b_2,d_2)=(0.3,0.3)$. Because $E[\eta^2+\eps^2]=\infty$ here when $b_2+d_2>0.5$, we redefine, with a slight abuse of notation, $\sigma$ as IQR$(\sqrt{N}(\widehat{\theta}-\theta_0))/1.349$, so that this is object is well-defined even with $(b_2,d_2)=(0.3,0.3)$. Note also that if $b_N(\widehat{\theta}-\theta_0)\convD \mathcal{N}(0,\widetilde{\sigma}^2)$ for some diverging sequence $(b_N)_N$ and some $\widetilde{\sigma}^2>0$, $\sqrt{N}(\widehat{\theta}-\theta_0)/\sigma$ would still tend to a standard normal distribution.  

\medskip
Again, we observe a close match between the distribution of  $\sqrt{N}(\widehat{\theta}-\theta_0)/\sigma$ and that of a standard normal distribution when $(b_2,d_2)=(0.2,0.2)$. When $(b_2,d_2)=(0.3,0.3)$, on the other hand, the discrepancy between the two distributions remains important even with $N=10,000$, the distribution of  $\sqrt{N}(\widehat{\theta}-\theta_0)/\sigma$ being substantially left-skewed.

\begin{figure}[H]
  \begin{subfigure}[b]{0.47\textwidth}
    \centering
    \includegraphics[width=\linewidth]{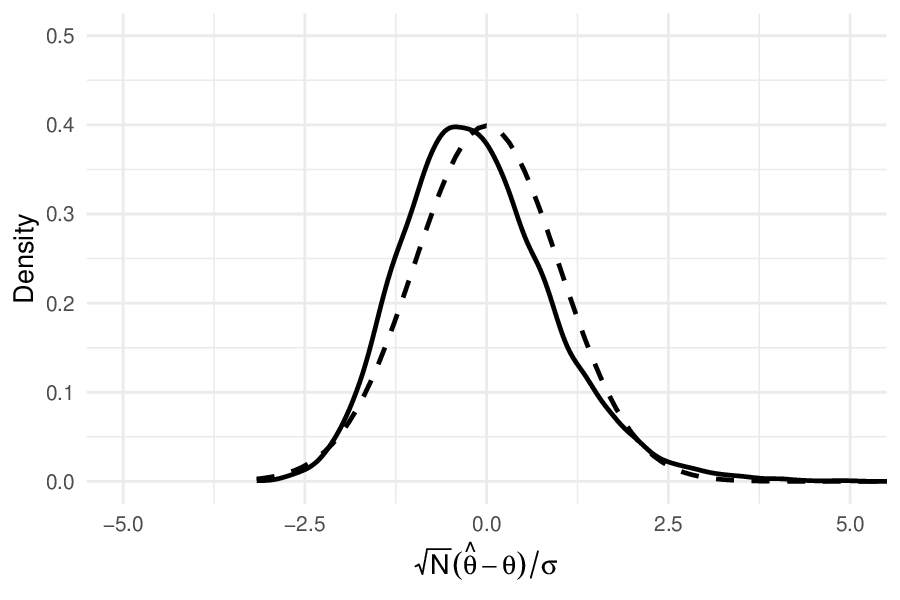}
    \caption{$(b_2,d_2)=(0.20,0.20), \, N=1,000$.}
  \end{subfigure}
  \hspace{0.3cm}
  \begin{subfigure}[b]{0.47\textwidth}
    \centering
    \includegraphics[width=\linewidth]{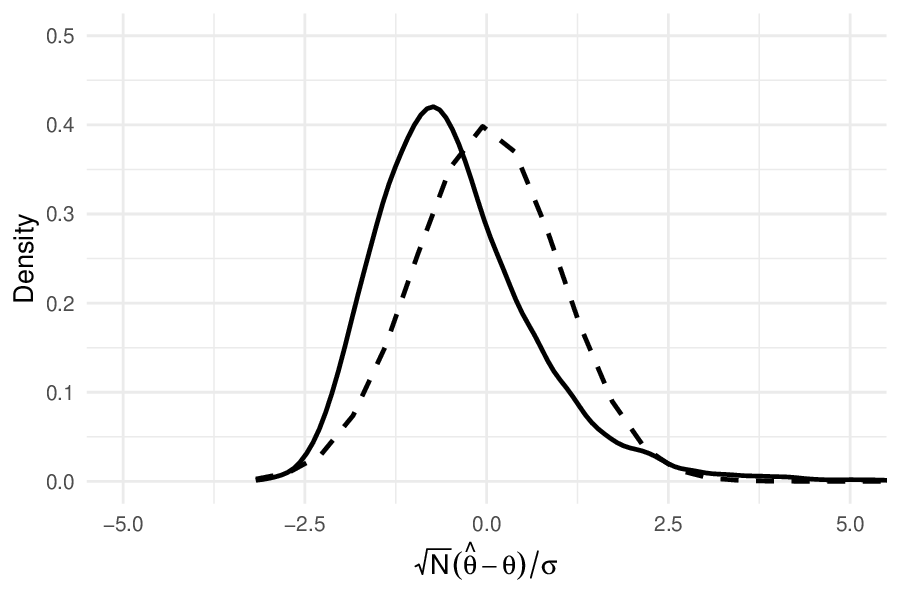}
    \caption{$(b_2,d_2)=(0.30,0.30), \, N=1,000$.}
  \end{subfigure}
\begin{subfigure}[b]{0.47\textwidth}
    \centering
    \includegraphics[width=\linewidth]{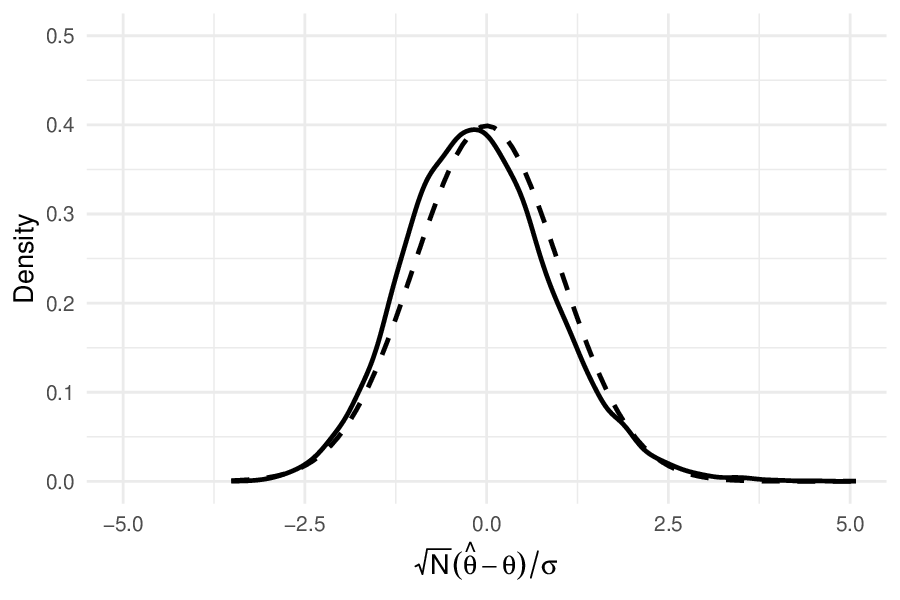}
    \caption{$(b_2,d_2)=(0.20,0.20), \, N=10,000$.}
  \end{subfigure}
  \hspace{0.3cm}
  \begin{subfigure}[b]{0.47\textwidth}
    \centering
    \includegraphics[width=\linewidth]{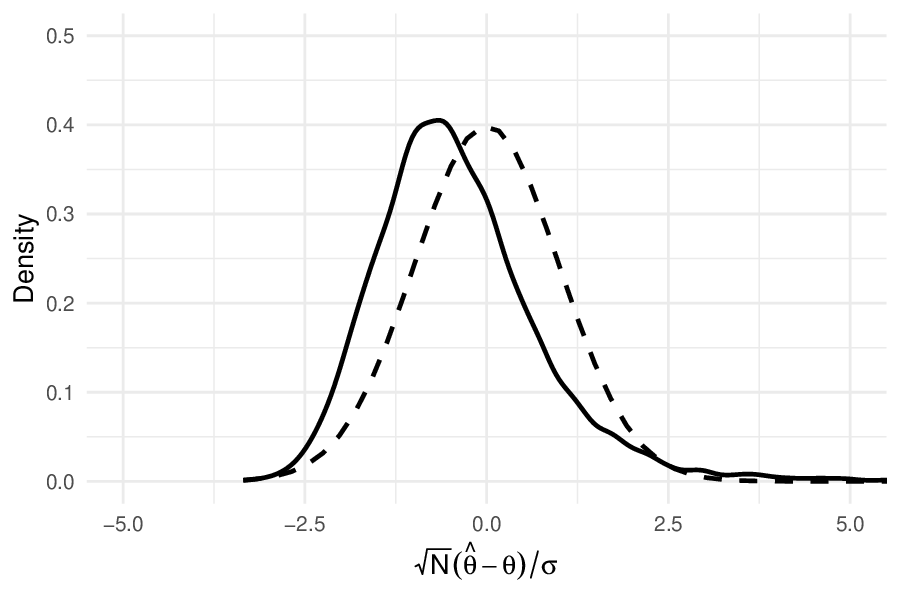}
    \caption{$(b_2,d_2)=(0.30,0.30), \, N=10,000$.}
  \end{subfigure}
  {\small ~\\[0.3cm] Notes: Here only, $\sigma=$IQR$(\sqrt{N}(\widehat{\theta}-\theta_0))/1.349$.}
  \caption{Empirical distribution of $\sqrt{N}(\widehat{\theta}-\theta_0)/\sigma$ (solid) v.s.~standard normal density (dotted).}
  \label{fig:true_dist_theta_hat}
\end{figure}

We now turn to inference. We consider six confidence intervals. The first five are based on asymptotic normality and different variance estimators, whereas the last relies on the bootstrap distribution. The first variance estimator (Split column) is $\widehat\sigma^2$, which is consistent for $\sigma^2$ under the assumptions of Theorem~\ref{thm:var_est}. As suggested in Section~\ref{sec:inf_obs_ranks}, we let $\eps_{n_2}=1/\log(n_2)$. The second variance estimator (No Split column) is based on a variant of $\widehat \sigma^2$ without sample splitting. The third estimator (Unif column) is based on a variant of $\widehat \sigma^2$ without sample splitting and where $h_{n_2,u}:=\eps_{n_2}$.  The fourth variance estimator (AI column) is that of \cite{AtheyImbens2006}. It estimates $E[\eta^2]$ by the sample average of $(\widehat{P}^2(Y_i))_{i=1,...,n_1}$, with $\widehat{P}$ defined in \eqref{eq:def_P_AI}. As in the simulations of \cite{AtheyImbens2006}, the estimator of $f_Y$ appearing in $\widehat{P}$ is a kernel (Epanechnikov)  estimator, with bandwidth equal $h_{n_1}=1.06n_1^{-1/5}/\widehat{\text{sd}}_Y$, where $\widehat{\text{sd}}_Y$ denotes the empirical standard deviation of $Y$. The fifth variance estimator (BSE) is based on the bootstrap (with $1,000$ random draws). The last column (BPC) reports the [0.025,0.975] percentile bootstrap confidence interval, based on 1,000 bootstrap samples.

\medskip
Table~\ref{tab:CI_len_est} reports the coverage rates and average lengths of the six confidence intervals. It shows that when $b_2+d_2<1/2$, all confidence intervals have coverage rates close to their nominal level as the sample size increases. The two confidence intervals whose coverage rate is closest to 95\% are those based on the percentile bootstrap, and ours without sample-splitting. For $N\in\{500,1,000,10,000\}$, their coverage rates is always between 0.93 and 0.96. Even with $N=100$, their coverage is always greater than or equal to 0.89. This suggests that sample-splitting is not needed for consistency of the variance estimator, and that in fact it may slightly worsen its finite sample properties. The confidence interval based on asymptotic normality and bootstrap standard errors also performs well. Table~\ref{tab:CI_len_est} also suggests that the AI estimator may be consistent, though the coverage of the corresponding confidence interval is systematically slightly below that of the other confidence intervals, except that based on our asymptotic variance estimator but using a constant bandwidth. The coverage of this latter confidence interval does not improve much with $N$ when $b_2=d_2=0.2$ and remains around 0.85. The results on this confidence interval underline the importance of allowing for a varying bandwidth when estimating the density of $U$.

\medskip
The cases $b_2=d_2=0.3$ and $b_2=d_2=0.4$ are in line with Figures \ref{fig:iqr_convergence} and \ref{fig:true_dist_theta_hat}. In such cases, the coverage rates are well below 0.95, and it is unclear whether the coverage of one of the six confidence intervals converges to this level. Specifically, the last four confidence intervals, including those based on the bootstrap, do not display any improvement when $b_2=d_2=0.40$. The coverage of our confidence interval does improve, but still only reaches 0.63 for $N=10,000$. Coverage is better for $b_2=d_2=0.30$, but even in this case distortion remains significant for $N=10,000$. Again, this suggests that our results are sharp at least in terms of the conditions on $(b_2,d_2)$.

\begin{table}[H]
\centering
\begin{threeparttable}
\resizebox{\textwidth}{!}{%
\begin{tabular}{lcccccc|cccccc}
\toprule
$(b_2,d_2)$ &
Split & No split & Unif & AI & BSE & BPC &
Split & No split & Unif & AI & BSE & BPC \\
\midrule
& \multicolumn{6}{c|}{$N=100$} &
\multicolumn{6}{c}{$N=500$} \\
\cmidrule(lr){2-7}\cmidrule(lr){8-13}
$(0.05, 0.05)$
& 0.92 & 0.93 & 0.90 & 0.92 & 0.93 & 0.94
& 0.94 & 0.95 & 0.92 & 0.94 & 0.95 & 0.95 \\
& [0.04] & [0.04] & [0.03] & [0.04] & [0.04] & [0.04]
& [0.02] & [0.02] & [0.02] & [0.02] & [0.02] & [0.02] \\
\addlinespace
$(0.20, 0.05)$
& 0.90 & 0.92 & 0.87 & 0.89 & 0.91 & 0.93
& 0.94 & 0.95 & 0.89 & 0.92 & 0.94 & 0.94 \\
& [0.04] & [0.05] & [0.04] & [0.04] & [0.05] & [0.05]
& [0.02] & [0.02] & [0.02] & [0.02] & [0.02] & [0.02] \\
\addlinespace
$(0.05, 0.20)$
& 0.90 & 0.92 & 0.88 & 0.90 & 0.92 & 0.94
& 0.94 & 0.95 & 0.90 & 0.92 & 0.94 & 0.95 \\
& [0.22] & [0.27] & [0.19] & [0.20] & [0.23] & [0.23]
& [0.11] & [0.13] & [0.09] & [0.10] & [0.11] & [0.11] \\
\addlinespace
$(0.20, 0.20)$
& 0.86 & 0.89 & 0.82 & 0.84 & 0.88 & 0.91
& 0.92 & 0.93 & 0.84 & 0.89 & 0.92 & 0.93 \\
& [0.27] & [0.36] & [0.22] & [0.23] & [0.28] & [0.28]
& [0.15] & [0.18] & [0.11] & [0.12] & [0.14] & [0.14] \\
\addlinespace
$(0.30, 0.30)$
& 0.74 & 0.79 & 0.66 & 0.69 & 0.77 & 0.82
& 0.82 & 0.85 & 0.65 & 0.73 & 0.82 & 0.85 \\
& [0.65] & [0.94] & [0.48] & [0.51] & [0.69] & [0.67]
& [0.45] & [0.64] & [0.26] & [0.30] & [0.41] & [0.40] \\
\addlinespace
$(0.40, 0.40)$
& 0.38 & 0.49 & 0.28 & 0.30 & 0.42 & 0.47
& 0.46 & 0.55 & 0.23 & 0.28 & 0.44 & 0.49 \\
& [1.51] & [2.55] & [1.05] & [1.11] & [1.73] & [1.63]
& [1.31] & [2.34] & [0.62] & [0.73] & [1.28] & [1.21] \\
\addlinespace
\midrule
& \multicolumn{6}{c|}{$N=1,000$} &
\multicolumn{6}{c}{$N=10,000$} \\
\cmidrule(lr){2-7}\cmidrule(lr){8-13}
$(0.05, 0.05)$
& 0.95 & 0.95 & 0.92 & 0.94 & 0.95 & 0.95
& 0.95 & 0.95 & 0.93 & 0.95 & 0.95 & 0.95 \\
& [0.01] & [0.01] & [0.01] & [0.01] & [0.01] & [0.01]
& [0.00] & [0.00] & [0.00] & [0.00] & [0.00] & [0.00] \\
\addlinespace
$(0.20, 0.05)$
& 0.95 & 0.96 & 0.90 & 0.93 & 0.95 & 0.95
& 0.96 & 0.96 & 0.90 & 0.94 & 0.95 & 0.95 \\
& [0.02] & [0.02] & [0.01] & [0.01] & [0.02] & [0.02]
& [0.01] & [0.01] & [0.00] & [0.00] & [0.00] & [0.00] \\
\addlinespace
$(0.05, 0.20)$
& 0.95 & 0.95 & 0.90 & 0.93 & 0.95 & 0.95
& 0.96 & 0.96 & 0.91 & 0.94 & 0.95 & 0.95 \\
& [0.08] & [0.09] & [0.06] & [0.07] & [0.08] & [0.08]
& [0.03] & [0.03] & [0.02] & [0.02] & [0.02] & [0.02] \\
\addlinespace
$(0.20, 0.20)$
& 0.93 & 0.94 & 0.84 & 0.89 & 0.93 & 0.94
& 0.96 & 0.96 & 0.85 & 0.92 & 0.94 & 0.94 \\
& [0.11] & [0.13] & [0.08] & [0.09] & [0.10] & [0.10]
& [0.04] & [0.05] & [0.03] & [0.03] & [0.03] & [0.03] \\
\addlinespace
$(0.30, 0.30)$
& 0.85 & 0.87 & 0.65 & 0.74 & 0.83 & 0.86
& 0.90 & 0.90 & 0.59 & 0.74 & 0.86 & 0.88 \\
& [0.37] & [0.52] & [0.19] & [0.23] & [0.32] & [0.31]
& [0.18] & [0.26] & [0.07] & [0.09] & [0.13] & [0.13] \\
\addlinespace
$(0.40, 0.40)$
& 0.49 & 0.58 & 0.21 & 0.26 & 0.44 & 0.49
& 0.57 & 0.63 & 0.14 & 0.19 & 0.45 & 0.49 \\
& [1.22] & [2.23] & [0.49] & [0.60] & [1.11] & [1.05]
& [0.97] & [1.76] & [0.21] & [0.30] & [0.70] & [0.66] \\
\bottomrule
\multicolumn{13}{p{520pt}}{{\footnotesize Notes: Average lengths into brackets. Results based on $10,000$ simulations. Split: as.~var.~estimator proposed in Section~\ref{sec:var_est_obs_ranks}. No split: same as before, but without sample splitting. Unif: same as no split but with a constant bandwidth. AI: Athey and Imbens estimator. BSE: bootstrap estimator. BPC: percentile bootstrap confidence interval, based on 1,000 bootstrap samples.}}
\end{tabular}
}
\caption{Coverage and average length of nominal $95\%$ confidence intervals.}
\label{tab:CI_len_est} 
\end{threeparttable}
\end{table}

\newpage
\bibliography{biblio}

\appendix

\section{Proofs of the main results}\label{sec:apdxB}

For any real number $x$ of function $f$, we let $\bar x:=1-x$ and $\bar f:=1-f$. We use ``$\lesssim$'' to indicate an inequality up to universal constant. In most cases below, this means a constant independent of $x$ and $n$. The floor, ceiling and identity functions are denoted by $\lfloor.\rfloor$, $\lceil.\rceil$ and $\I(.)$, respectively. 

\subsection{Proof of Lemma~\ref{lemma:MomentY}}

For any $y \in \R$, let $S_Y(y) := P(Y>y)$.
Observe that $E[|Y|]<\infty$ implies $tS_{Y}(t)\to 0$ and $tF_{Y}(-t)\to 0$ as $t\to \infty$. Thus $E[|Y|^p]<\infty$ implies $t^pS_{Y}(t)\to 0$ and $t^pF_{Y}(-t)\to 0$ as $t\to \infty$. The convergence to 0 of $t^pS_{Y}(t)$ implies that there exist $C>0$ and $t_1$ such that for all $t\geq t_1$, 
$$|t|^p (1-F_Y(t)) \leq C.$$
This implies that for all $u\geq F_Y(t_1)$, $|F_Y^{-1}(u)|^p (1-u) \leq C$ or, equivalently, 
$$|F_Y^{-1}(u)|\leq C^{1/p}(1-u)^{-1/p}.$$
Hence, there exists $C_1>0$ such that for all $u\geq F_Y(t_1)$,
$$|F_Y^{-1}(u)|\leq C_1[u(1-u)]^{-1/p}.$$
Using $t^pF_{Y}(-t)\to 0$ and a similar reasoning, there exist $C_2>0$ and $t_2\leq t_1$ such that for all $u\leq F_Y(t_2)$, $|F_Y^{-1}(u)|\leq C_2[u(1-u)]^{-1/p}$. The result follows since $|F_Y^{-1}(u)[u(1-u)]^{1/p}|$ is bounded on $[F_Y(t_2),F_Y(t_1)]$. \hfill$\square$

\subsection{Proof of Inequality \eqref{eq:restriction_c}}
\label{app:proof_ineq_scale}

Since $X  \sim Z/c$,  we have $f_X(z)=cf_Z(cz)$ and thus
\begin{equation}
f_U(u)=\frac{cf_Z(cF_Z^{-1}(u))}{f_Z(F_Z^{-1}(u))}.
    \label{eq:expr_fU_scale_model}
\end{equation}
Below, $C$ denotes a constant whose value changes from one line to another. By a standard Laplace tail estimate $\int_{-\infty}^x e^{-t} t^\gamma dt \asymp e^{-x} x^\gamma$ when $x \to -\infty$, we have
\begin{align}
    F_Z(z) &\asymp \int_{-\infty}^z e^{-L|z|^\alpha} dz = \frac{1}{\alpha} \int_{-\infty}^{z^\alpha} e^{-Lt} \, t^{1/\alpha-1} dt \asymp e^{-L|z^\alpha|} z^{1-\alpha} \nonumber\\
    &\asymp f_Z(z)/|z|^{\alpha-1}.
    \label{eq:equiv_fZ}
\end{align}
Hence, as $u\to 0$,
\begin{equation}
  f_Z \circ F_Z^{-1}(u)\asymp u |F_Z^{-1}(u)|^{\alpha-1}.  
    \label{eq:equiv_fZ2}
\end{equation}
Moreover, using again $F_Z(z)\asymp f_Z(z)/[L\alpha|z|^{\alpha-1}]$, we have
$$F_Z[cz] \asymp \frac{f_Z(cz)}{|z|^{\alpha-1}} \asymp   \frac{f_Z(z)^{c^{\alpha}}}{|z|^{\alpha-1}}.$$
Hence, 
\begin{equation}
F_Z[cF_Z^{-1}(u)] \asymp  \frac{f_Z\circ F_Z^{-1}(u)^{c^{\alpha}}}{|F_Z^{-1}(u)|^{\alpha-1}}.
    \label{eq:equiv_composee}
\end{equation}
Since, by \eqref{eq:equiv_fZ}, $f_Z[cF_Z^{-1}(u)] \asymp F_Z[cF_Z^{-1}(u)] |F_Z^{-1}(u)|^{\alpha-1}$, we obtain, using \eqref{eq:expr_fU_scale_model}, \eqref{eq:equiv_fZ2} and \eqref{eq:equiv_composee},
\begin{equation}
f_U(u)\asymp f_Z\circ F_Z^{-1}(u)^{c^{\alpha}-1} \asymp u^{c^{\alpha}-1} |F_Z^{-1}(u)|^{(\alpha-1)(c^{\alpha}-1)}.
    \label{eq:equiv_fU}
\end{equation}
Since $F_Z(z) \lesssim \exp(-K|z|^\alpha)$, $|F_Z^{-1}(u)| \lesssim |\ln(u)|^{1/\alpha}$. Then, \eqref{eq:controle_fU} and \eqref{eq:equiv_fU} imply $c^{\alpha}-1 \ge -b_1$, the inequality being strict if $(\alpha-1)(c^{\alpha}-1)>0$. Inequality \eqref{eq:restriction_c} follows.

\subsection{Proof of Theorem \ref{thm:an}}
Let $Y_{(1)}<\ldots<Y_{(n_1)}$ and $Z_{(1)}<\ldots<Z_{(n_3)}$ denote the order statistics associated with the samples $(Y_i)_{i=1,\ldots,n_1}$ and $(Z_i)_{i=1,\ldots,n_3}$. Let $\xi_{(i)}:=F_Y(Y_{(i)})\sim$Uniform(0,1) and $\zeta_{(i)}:=F_Z(Z_{(i)})\sim$Uniform(0,1). Let 
\begin{align*}
    \Gn(t)&:=\frac{1}{n_1}\sum_{i=1}^{n_1}\ind{\xi_{(i)}\leq t}, \\
    \Hn(t)&:=\frac{1}{n_3}\sum_{i=1}^{n_3}\ind{\zeta_{(i)}\leq t}.
\end{align*}
Notice that 
\[
    \widehat\theta = \int_0^1 F_Y^{-1} \circ \Gn^{-1} \circ \Hn \, d\widehat{F}_U,
\]
where all integrals are defined in the Lebesgue-Stieltjes sense. We decompose the difference $\widehat{\theta}-\theta_0$ into three parts that we study independently:
\begin{align*}
    \widehat{\theta} -  \theta_0 & = \underbrace{\int_0^1 F_Y^{-1} \circ \Gn^{-1} dF_U - \int_0^1 F_Y^{-1} dF_U}_{=:T_1} + \underbrace{\int_0^1 F_Y^{-1} \circ \Gn^{-1}  \, d\widehat{F}_U - \int_0^1 F_Y^{-1} \circ \Gn^{-1} dF_U}_{=:T_2} \\[5pt]
    & \quad + \underbrace{\int_0^1 F_Y^{-1} \circ \Gn^{-1} \circ \Hn \, d\widehat{F}_U - \int_0^1 F_Y^{-1} \circ \Gn^{-1}d \widehat{F}_U}_{=: T_3}.
\end{align*}
The proof proceeds in four steps. In the first step, we prove that $\sqrt{N}T_1$ is linear up to a negligible remainder term. In the second step, we prove the same result for $\sqrt{N}T_2$. In the third step, we show that $\sqrt{N}T_3$ can be expressed as a L-statistics plus some remainder. The fourth step concludes.

\paragraph{First step: linearization of $\sqrt{N}T_1$.}

We have
\begin{align}
    \int_0^1 F_Y^{-1} \circ \Gn^{-1} dF_U &=\sum_{i=0}^{n_1-1}F_Y^{-1}\left(\xi_{(i+1)}\right)F_U\left(\left[\frac{i}{n_1},\frac{i+1}{n_1}\right)\right) \notag \\[5pt]
    &=\int_0^1  F_Y^{-1}dF_U\circ\Gn. \label{eq:change_of_measure}
\end{align}
Hence,
\begin{align*}
    \sqrt{N} T_1 &= \sqrt{N}\left(\int_0^1  F_Y^{-1}dF_U\circ\Gn - \int_0^1  F_Y^{-1}dF_U\right)\\[5pt]
    &= \sqrt{N}\left(\int_{\xi_{(1)}}^{\xi_{(n_1)}} F_Y^{-1}d\big[F_U\circ\Gn - F_U\big] - \int_0^{\xi_{(1)}} F_Y^{-1}dF_U - \int_{\xi_{(n_1)}}^1 F_Y^{-1}dF_U  \right),
\end{align*}
where the second equality follows from $\Gn$ being constant on $[0,\xi_{(1)}]$ and $[\xi_{(n_1)},1]$. Integrating the first integral by part, we obtain
\begin{align*}
   \sqrt{N} T_1 &=  -\sqrt{N}\int_{\xi_{(1)}}^{\xi_{(n_1)}}\left[F_U\circ\Gn - F_U\right]d F_Y^{-1} +R_1+R_2,
\end{align*}
where 
\begin{align*}
    R_1&=\sqrt{N}F_Y^{-1}\big(\xi_{(n_1)}\big)\left[F_U(1)-F_U\big(\xi_{(n_1)}\big)\right]-F_Y^{-1}\big(\xi_{(1)}\big)\left[F_U\big(1/n_1\big)-F_U\big(\xi_{(1)}\big)\right], \\
    R_2&= \int_0^{\xi_{(1)}} F_Y^{-1}dF_U + \int_{\xi_{(n_1)}}^1 F_Y^{-1}dF_U.
\end{align*}
Lemma~\ref{lem:11} in Supplementary Appendix Section~\ref{sec:S1} establishes that $R_1$ and $R_2$ converge to zero in probability. This implies
\begin{equation}\label{eq:T1n1_rem1}
    \sqrt{N} T_1 =  -\sqrt{N}\int_{\xi_{(1)}}^{\xi_{(n_1)}}\left[F_U\circ\Gn - F_U\right]d F_Y^{-1}+o_P(1).
\end{equation}

\medskip
Next, we further decompose \eqref{eq:T1n1_rem1} as
$$\sqrt{N} T_1 = - \sqrt{N}\int_{\xi_{(1)}}^{\xi_{(n_1)}}\left[\Gn -\I\right] d\Lambda + R_3+o_P(1),$$
where $\Lambda$ is the measure defined by $d\Lambda/dF_Y^{-1}=f_U$ and
$$R_3 := \sqrt N \left(\int_{\xi_{(1)}}^{\xi_{(n_1)}} \left[\Gn - \I\right]f_Ud F_Y^{-1} - \int_{\xi_{(1)}}^{\xi_{(n_1)}}\left[F_U\circ\Gn - F_U\right]d F_Y^{-1}\right).$$
Lemma~\ref{lem:11} in Supplementary Appendix Section~\ref{sec:S1} establishes that $R_3$ converges to zero in probability. This implies
\begin{equation*}
\sqrt{N} T_1 = - \sqrt{N}\int_{\xi_{(1)}}^{\xi_{(n_1)}}\left[\Gn - \I\right] d\Lambda+o_P(1).
\end{equation*}

\medskip
Let 
$$R_4:=\sqrt{N}\left(\int_{\xi_{(1)}}^{\xi_{(n_1)}}\left[\Gn - \I\right] d\Lambda - \int_{0}^{1}\left[\Gn-\I\right] d\Lambda \right).$$
Lemma~\ref{lem:11} in Supplementary Appendix Section~\ref{sec:S1} establishes that $R_4$ converges to zero in probability. Given that $\frac{1}{\sqrt{n_1}} \sum_{i=1}^{n_1} \eta_i = O_P(1)$ (which is shown in the fourth step), this implies
\begin{equation*}
    \sqrt{N} T_1 = \frac{\lambda_1}{\sqrt{n_1}} \sum_{i=1}^{n_1} \eta_i + o_P(1), 
\end{equation*}
with $\eta_i := - \int_0^{1}  [\mathds{1}\{F_Y(Y_i)\leq t\} - t]d\Lambda(t)$. 

\paragraph{Second step: linearization of $\sqrt{N}T_2$.} 
Similarly, we have
\begin{align*}
    \int_0^1 F_Y^{-1} \circ \Gn^{-1} d\widehat F_U &=\sum_{i=0}^{n_1-1}F_Y^{-1}\left(\xi_{(i+1)}\right)\widehat F_U\circ\Gn\left(\left[\xi_{(i)},\xi_{(i+1)}\right)\right) \\
    &=\int_0^1  F_Y^{-1}d\widehat F_U\circ\Gn.
\end{align*}
Hence,
\begin{align*}
  \sqrt{N} T_2 &= \sqrt{N}\int_0^1 F_Y^{-1} d \left[\widehat{F}_{U} \circ \Gn - F_{U} \circ \Gn\right].
\end{align*}
An integration by part yields
\begin{align}
  \sqrt{N} T_2 &= \sqrt{N}\left[F_Y^{-1}(t) \left( \widehat{F}_U\big(\Gn(t)\big) - F_U\big(\Gn(t)\big) \right) \right]_0^1 - \sqrt{N}\int_0^1 \left[\widehat{F}_{U} \circ \Gn - F_{U} \circ \Gn\right] d F_Y^{-1} \notag \\
  &=  - \sqrt{N}\int_0^1 \left[\widehat{F}_{U} \circ \Gn - F_{U} \circ \Gn\right] d F_Y^{-1},\label{eq:rewriting_T2}
\end{align}
since for $t \in [0,\xi_{(1)})$, $\Gn(t)=0$ and $\widehat{F}_U(0) = F_U(0)=0$ because $(U_i)_{i=1,\ldots,n_2}$ is an i.i.d.~sample of random variables absolutely continuous with respect to the Lebesgue measure on $[0,1]$. Symmetrically, for $t \in (\xi_{(n_1)},1]$, $\Gn(t)=1$ and $\widehat{F}_U(1) = F_U(1)=1$. We now prove that 
 \begin{equation}
 - \sqrt{N}\int_0^1 \left[\widehat{F}_{U} \circ \Gn - F_{U} \circ \Gn\right] d F_Y^{-1} =  - \sqrt{N}\int_0^1 \left[\widehat{F}_{U} - F_{U}\right]dF_Y^{-1} + o_P(1). \label{eq:reste}
\end{equation}
Let $\Vn = \sqrt{n_2}(\widehat{F}_U \circ F_U^{-1}-\I)$ denote the empirical process associated with the uniform variables $(F_U(U_i))_{i=1,\ldots,n_2}$ and define
$$R_5 := \int_0^1 \left(\Vn \circ F_U \circ \Gn - \Vn\circ F_U\right) dF_Y^{-1}.$$
To show Equation \eqref{eq:reste}, it suffices to show that $R_5=o_P(1)$. Lemma~\ref{lem:12} in Supplementary Appendix Section~\ref{sec:S1} establishes the stronger result that $E[|R_5|]\to 0$. Hence, \eqref{eq:reste} holds. 

Combined with \eqref{eq:rewriting_T2}, and given that $\frac{1}{\sqrt{n_2}} \sum_{i=1}^{n_2} \eps_i = O_P(1)$ (which is shown in the fourth step), this implies
\begin{equation}
    \sqrt{N} T_2 = \frac{\lambda_2}{\sqrt{n_2}} \sum_{i=1}^{n_2} \eps_i +o_P(1), \label{eq:eps}
\end{equation}
with $\eps_i = -\int_0^1 [\mathds{1}\{U_i\leq t\} - F_U(t)] dF_Y^{-1}(t)$.

\paragraph{Third step: $\sqrt{N}T_3$ is a L-Statistics plus some remainder terms.} We prove the result in two sub-steps. We first show that 
\begin{equation}
   \sqrt{N}T_3  = - \sqrt{N} \int_{\xi_{(1)}}^{\xi_{(n_1)}}\left[\widehat{F}_U \circ \Hn^{-1} \circ \Gn - \widehat{F}_U \circ \Gn \right] dF_Y^{-1} +o_P(1). \label{eq:new_exp}
\end{equation}
Second, we show that 
\begin{equation}
\sqrt{N}T_3  = - \sqrt{N} \underbrace{\int_{\xi_{(1)}}^{\xi_{(n_1)}} \left[F_U \circ \Hn^{-1}\circ \Gn - F_U \circ  \Gn\right] dF_Y^{-1}}_{=:J_1} + o_P(1).
\label{eq:T3_J2}
\end{equation}
Let us then write $-\sqrt{N}J_1=\sqrt{N}J_2+R_{6}+R_{7}+R_{8}+R_{9}$, with:
\begin{align}
    J_2 & := - \int_{1/n_3}^{1-1/n_3}\left[\Hn^{-1}(x) - E[\Hn^{-1}(x)] \right]f_U(x)d F_Y^{-1}(x), \\[5pt]
    R_6 & := -\sqrt{N}\left(J_1 - \int_{\xi_{(1)}}^{\xi_{(n_1)}}\left[\Hn^{-1} \circ \Gn - \Gn \right]f_U \, dF_Y^{-1}\right), \label{eq:def_R1n} \\[5pt]
    R_7 & := -\sqrt N \left( \int_{\xi_{(1)}}^{\xi_{(n_1)}}\left[\Hn^{-1} \circ \Gn - \Gn \right]f_U dF_Y^{-1} - \int_{\xi_{(1)}}^{\xi_{(n_1)}}\left[\Hn^{-1} - \I \right]f_U d F_Y^{-1}\right),  \label{eq:def_R2n} \\[5pt]
    R_8 & :=  \sqrt{N}\int_{\xi_{(1)}}^{\xi_{(n_1)}}\left[x - E[\Hn^{-1}(x)]\right]f_U(x)d F_Y^{-1}(x),  \label{eq:def_R3n} \\[5pt]
    R_9 & :=  \sqrt{N} \left(\int_{1/n_3}^{1-1/n_3}\left[\Hn^{-1}(x) - E[\Hn^{-1}(x)] \right]f_U(x)d F_Y^{-1}(x) \right. \notag \\[5pt]
    & \left. \quad - \int_{\xi_{(1)}}^{\xi_{(n_1)}}\left[\Hn^{-1}(x) - E[\Hn^{-1}(x)]\right]f_U(x)d F_Y^{-1}(x) \right)  \label{eq:def_R4n}.
    \end{align}
Lemma~\ref{lem:13} in Supplementary Appendix Section~\ref{sec:S1} establishes that each of the four terms $R_{6}$--$R_{9}$ tends to 0 in probability.

{\bf First sub-step: Equation \eqref{eq:new_exp} holds.}
Let $\mathcal I_{0}:=[0,\Hn^{-1}(n_1^{-1})]$ and  $\mathcal I_{1}:=[\zeta_{(n_3)}, 1]$. Let $\mathbb G_{n_1,+}^{-1}$ denote the right-continuous generalized inverse of $\Gn$:
$$
\mathbb G_{n_1,+}^{-1}(t):=\sup\{x\in[0,1]:\Gn(x)\leq y\}, \quad \forall t\in(0,1).
$$
We recall that $$T_3 = \int_0^1 F_Y^{-1} \circ \Gn^{-1} \circ \Hn \, d\widehat{F}_U - \int_0^1 F_Y^{-1} \circ \Gn^{-1}d \widehat{F}_U.$$
By splitting the first integral in $\sqrt{N}T_3$ and applying Lemma~\ref{lem_Gn+_equiv_Gn}, we obtain
\begin{align}
   & \sqrt{N}T_3 \notag \\
   &\quad = \sqrt{N} \Big( \int_{(\mathcal I_0 \cup \mathcal I_1)^c} F_Y^{-1} \circ \left(\Hn^{-1}\circ \Gn\right)^{-1} d\widehat F_U  - \int_0^1  F_Y^{-1} \circ \Gn^{-1} d\widehat{F}_U  + \int_{\mathcal I_0 \cup \mathcal I_1} F_Y^{-1} \circ  \Gn^{-1}\circ\Hn d\widehat{F}_U  \notag \\[5pt]
   &\quad\quad  + \int_{(\mathcal I_0 \cup \mathcal I_1)^c}\left[ F_Y^{-1}\circ \Gn^{-1}\circ \Hn - F_Y^{-1} \circ  \mathbb G_{n_1,+}^{-1}  \circ \Hn\right]d\widehat{F}_U  \notag \\[5pt]
    & \quad\quad  + \int_{(\mathcal I_0 \cup \mathcal I_1)^c}\left[F_Y^{-1} \circ  \mathbb G_{n_1,+}^{-1}  \circ \Hn - F_Y^{-1} \circ  \left(\Hn^{-1}\circ \Gn\right)^{-1}\right]d\widehat{F}_U \Big) \notag \\[5pt]
 &\quad = \sqrt{N} \bigg( \int_{(\mathcal I_0 \cup \mathcal I_1)^c} F_Y^{-1} \circ \left(\Hn^{-1}\circ \Gn\right)^{-1} d\widehat F_U  - \int_0^1  F_Y^{-1} \circ \Gn^{-1} d\widehat{F}_U  \notag \\ 
 & \qquad + \int_{\mathcal I_0 \cup \mathcal I_1} F_Y^{-1} \circ  \left(\Hn^{-1}\circ \Gn\right)^{-1}d\widehat{F}_U \bigg)  +o_P(1). 
 \end{align}
By \eqref{eq:change_of_measure} and applying Lemma~\ref{lem:stieltjes_subs} with $a=\xi_{(1)}$, $b=\xi_{(n_1)}$, $f=F_Y^{-1}$ restricted to $[a,b]$, $N=\widehat F_U$, and $M=\Hn^{-1}\circ\Gn$, we obtain
 \begin{align}
     \sqrt{N}T_3&  = \sqrt{N}\Bigg(\int_{\xi_{(1)}}^{\xi_{(n_1)}} F_Y^{-1}  d\left[\widehat{F}_U \circ \Hn^{-1} \circ \Gn \right] - \int_{0}^{1} F_Y^{-1}  d \left[ \widehat{F}_U \circ \Gn \right]   \notag \\[5pt] 
      & \quad + \int_{\mathcal I_0 \cup \mathcal I_1} F_Y^{-1} \circ \Gn^{-1} \circ \Hn  d\widehat{F}_U \Bigg) + o_P(1) \notag \\[5pt]
      & = \sqrt{N}\int_{\xi_{(1)}}^{\xi_{(n_1)}} F_Y^{-1}  d\left[\widehat{F}_U \circ \Hn^{-1} \circ \Gn - \widehat{F}_U \circ \Gn \right] \notag \\[5pt] 
      &\qquad + \sqrt{N}\int_{\mathcal I_0 \cup \mathcal I_1} F_Y^{-1} \circ \Gn^{-1} \circ \Hn  d\widehat{F}_U +o_P(1), \label{eq:thm2_start}
 \end{align}
where we used that $\widehat{F}_U \circ \Gn$ is constant on the two segments $[0, \xi_{(1)}]$ and $[\xi_{(n_1)},1]$ to obtain the last equality. Note that
\begin{align*}
    & \sqrt{N}\left[F_Y^{-1}(t) \left(\widehat{F}_U \circ \Hn^{-1} \circ \Gn(t) - \widehat{F}_U \circ \Gn(t) \right) \right]_{t=\xi_{(1)}}^{t=\xi_{(n_1)}} \\[5pt]
    & \quad = \sqrt{N}\left[F_Y^{-1}(\xi_{(n_1)}) \left(\widehat{F}_U(\zeta_{(n_3)}) - 1 \right) -  F_Y^{-1}(\xi_{(1)})\left( \widehat F_U(\Hn^{-1}(n_1^{-1}))- \widehat{F}_U(n_1^{-1})\right) \right].
\end{align*}
Thus, an integration by part of the first term in \eqref{eq:thm2_start} yields 
\begin{align*}
   & \sqrt{N}T_{3N} \\
   & = - \sqrt{N} \int_{\xi_{(1)}}^{\xi_{(n_1)}}\left[\widehat{F}_U \circ \Hn^{-1} \circ \Gn - \widehat{F}_U \circ \Gn \right] dF_Y^{-1} \\[5pt]
   &\quad + \underbrace{\sqrt{N}F_Y^{-1}\big(\xi_{(n_1)}\big) \left(\widehat{F}_U\big(\zeta_{(n_3)}\big) - 1 \right)}_{=:A} - \underbrace{F_Y^{-1}(\xi_{(1)})\widehat F_U\big(\Hn^{-1}(n_1^{-1})\big)}_{=:B} - \underbrace{\sqrt{N}F_Y^{-1}\big(\xi_{(n_1)}\big) \widehat{F}_U\big(n_1^{-1}\big)}_{=:C}  \\[5pt]
   &\quad +\underbrace{\sqrt{N} \int_{\mathcal I_0 \cup \mathcal I_1} F_Y^{-1} \circ \Gn^{-1} \circ \Hn  d\widehat{F}_U}_{=:D}+o_P(1).
\end{align*}
To show \eqref{eq:new_exp}, it suffices to show that $A,B,C,D$ all converge to zero in probability. We show below the stronger result that each term converges in $L_1$. Under Assumption~\ref{ass:sampling_est} and Assumption~\ref{ass:smoothness1}\ref{ass:smoothness12}, we have
\begin{align*}
E\left[\abs{A}\right] &  = E\left[\abs{\sqrt{N}F_Y^{-1}\big(\xi_{(n_1)}\big)(1-\widehat F_U\big(\zeta_{(n_3)}\big)}\right] \\[5pt]
& = \sqrt{N}E\left[\abs{F_Y^{-1}\big(\xi_{(n_1)}\big)}\right]E\left[1-F_U\big(\zeta_{(n_3)}\big)\right]\\[5pt]
&\lesssim n_1^{1/2+d_2}E\left[\int_{\zeta_{(n_3)}}^{1}u^{-b_1}(1-u)^{-b_2}du\right] \\[5pt]
&\lesssim n_1^{1/2+d_2}E\left[(1-\zeta_{n_3})^{1-b_2}\right] \\[5pt]
&\leq n_1^{1/2+d_2}\left(E\left[1-\zeta_{n_3}\right]\right)^{1-b_2} \\[5pt]
&\lesssim n_1^{b_2+d_2-1/2} \\
&=o(1),
\end{align*}
where the first inequality follows from Lemma~\ref{lem:extrem_moments1} and Assumption~\ref{ass:smoothness1}\ref{ass:smoothness13}, and the last inequality follows from Jensen's inequality and $E\left[\zeta_{n_3}\right]=n_3/(n_3+1)$. Next, we obtain similarly
\begin{align*}
E\left[\abs{B}\right] &  = E\left[\abs{\sqrt{N}F_Y^{-1}\big(\xi_{(1)}\big)\widehat F_U\big(\Hn^{-1}(n_1^{-1})\big)}\right] \\[5pt]
& = \sqrt{N}E\left[\abs{F_Y^{-1}\big(\xi_{(1)}\big)}\right]E\left[F_U\big(\Hn^{-1}(n_1^{-1})\big)\right]\\[5pt]
&\lesssim n_1^{1/2+d_1}\left(E\left[\Hn^{-1}\left(\frac{1}{n_1}\right)\right]\right)^{1-b_1} \\[5pt]
&\lesssim n_1^{1/2+d_1}\left(\frac{\lceil n_3/n_1\rceil}{n_3+1}\right)^{1-b_1} \\[5pt]
&\lesssim n_1^{b_1+d_1-1/2} \\[5pt]
&=o(1),
\end{align*}
using Assumption~\ref{ass:sampling_est}\ref{ass:sampling_est3} and $\lambda_1>0$. Next,
\begin{align*}
E\left[\abs{C}\right] &  = E\left[\abs{\sqrt{N}F_Y^{-1}\big(\xi_{(n_1)}\big)\widehat{F}_U\big(n_1^{-1}\big)}\right] \\
&\quad  =\sqrt{N}E\left[\abs{F_Y^{-1}\big (\xi_{(1)}\big)}\right]F_U\big (n_1^{-1}\big )\\[5pt]
& \quad \lesssim n_1^{\frac12+d_1}\left(\frac{1}{n_1}\right)^{1-b_1} \\
&\quad =o(1),
\end{align*}
where the inequality follows from Lemma~\ref{lem:extrem_moments1} and Assumption~\ref{ass:smoothness1}\ref{ass:smoothness13}. Moreover, we have
\begin{align*}
   & \left\vert \sqrt{N} \int_{\mathcal I_0 \cup \mathcal I_1} F_Y^{-1} \circ \Gn^{-1} \circ \Hn  d\widehat{F}_U\right\vert \\[5pt]
   & \quad \leq \sqrt{N}\widehat{F}_U(\Hn^{-1}(n_1^{-1}))\abs{F_Y^{-1}(\xi_{(1)})} + \sqrt{N}\left[1 - \widehat{F}_U\big(\zeta_{(n_3)}\big) \right]\abs{F_Y^{-1}\big(\xi_{(n_1)}\big)} \\[5pt]
   & \quad = \abs{A}+\abs{B} \\[5pt]
   & \quad = o_P(1).
\end{align*}
Conclude that \eqref{eq:new_exp} holds.

{\bf Second sub-step: Equation \eqref{eq:T3_J2} holds.}

From~\eqref{eq:new_exp}, we have
\begin{align*}
\sqrt{N} T_3 \; = &  - \sqrt{N} \underbrace{\int_{\xi_{(1)}}^{\xi_{(n_1)}} \left[F_U \circ \Hn^{-1}\circ \Gn - F_U \circ  \Gn\right]  dF_Y^{-1}}_{=J_1} \\[5pt]
& - \sqrt{N} \int_{\xi_{(1)}}^{\xi_{(n_1)}} \left[\widehat{F}_U \circ \Hn^{-1}\circ \Gn - F_U \circ \Hn^{-1}\circ \Gn \right]  dF_Y^{-1} \\[5pt]
& - \sqrt{N} \int_{\xi_{(1)}}^{\xi_{(n_1)}} \left[F_U \circ  \Gn - \widehat{F}_U \circ \Gn \right]  dF_Y^{-1}.
\end{align*}
We show below that
\begin{equation}\label{eq:reste_j1n}
\sqrt{N} \int_{\xi_{(1)}}^{\xi_{(n_1)}} \left[\widehat{F}_U \circ \Hn^{-1}\circ \Gn - F_U \circ \Hn^{-1}\circ \Gn \right] dF_Y^{-1}= \sqrt{N}\int_{\xi_{(1)}}^{\xi_{(n_1)}} \left[\widehat F_U  - F_U \right]  dF_Y^{-1} + o_P(1).
\end{equation}
Once combined with 
\[
\sqrt{N}\int_{\xi_{(1)}}^{\xi_{(n_1)}} \left[\widehat F_U  - F_U \right]  dF_Y^{-1}=\sqrt{N}\int_0^1 \left[\widehat F_U  - F_U \right]  dF_Y^{-1}+o_P(1),
\]
which follows from \eqref{eq:tail_up}--\eqref{eq:tail_low}, an integration by parts, Cauchy--Schwarz inequality and Lemmas~\ref{lem:extrem_moments1}--\ref{lem:extrem_moments2}, with 
\[
- \sqrt{N} \int_{\xi_{(1)}}^{\xi_{(n_1)}} \left[F_U \circ  \Gn - \widehat{F}_U \circ \Gn \right]  dF_Y^{-1}=- \sqrt{N} \int_0^1 \left[F_U \circ  \Gn - \widehat{F}_U \circ \Gn \right]  dF_Y^{-1},
\]
and with 
\eqref{eq:reste}, this proves \eqref{eq:T3_J2}.

\medskip
To prove \eqref{eq:reste_j1n}, we closely follow the proof of \eqref{eq:reste}. Recall that $\Vn = \sqrt{n_2}(\widehat{F}_U \circ F_U^{-1}-\I)$, and let $\mathds 1_{\mathcal A_{n_1}(x)}:=\mathds 1\{\xi_{(1)}< x< \xi_{(n_1)}\}$,
$$R_{10}:= \int_0^1 \mathds 1_{\mathcal A_{n_1}}\left(\Vn \circ F_U \circ \Hn^{-1} \circ \Gn - \Vn\circ F_U\right)  dF_Y^{-1},$$
and $ I_N(x):=(x,\Hn^{-1} \circ\Gn(x)]$ if $\Hn^{-1} \circ\Gn(x)>x$, $I_N(x):=[\Hn^{-1} \circ\Gn(x),x)$ if $\Hn^{-1} \circ\Gn(x)<x$ and $\emptyset$ otherwise. We prove that $E[|R_{10}|]\to 0$. 
Proceeding as in the derivation of \eqref{eq:ineg2_Rn}, but conditioning on $(\xi_i,\zeta_i)_i$ rather than $(\xi_i)_i$ only, we get
$$E\left[|R_{10}|\right] \leq \int_0^1 E\left[\mathds 1_{\mathcal A_{n_1}(x)}|F_U(x) -F_U(\Hn^{-1}\circ \Gn(x))|\right]^{1/2}  dF_Y^{-1}(x).$$
Because $\Gn(x)\convP x$ and $\mathds 1_{\mathcal A_{n_1}(x)}\convP 1$, by uniform convergence of $\Hn^{-1}$ towards $\I$ and the continuous mapping theorem, $\mathds 1_{\mathcal A_{n_1}(x)}|F_U(\Hn^{-1}\circ \Gn(x))-F_U(x)|\convP 0$ for all $x\in[0,1]$. Moreover, $\mathds 1_{\mathcal A_{n_1}(x)}|F_U(\Hn^{-1}\circ \Gn(x))-F_U(x)|\leq 1$. Hence, for all $x\in[0,1]$,
$$E\left[\mathds 1_{\mathcal A_{n_1}(x)}|F_U(x) -F_U(\Hn^{-1}\circ \Gn(x))|\right] \to 0.$$
Next, we show $E[|R_{10}|]\to 0$ by proving (focusing on a neighborhood of $0$ without loss of generality) 
\begin{equation}
E\left[\mathds 1_{\mathcal A_{n_1}(x)}|F_U(x) -F_U(\Hn^{-1}\circ \Gn(x))|\right]\lesssim x^{1-b_1}   
    \label{eq:for_DCT_thm2}
\end{equation}
and applying the dominated convergence theorem. As in the previous steps, we apply Lemma \ref{lem:useful_lem} with $Q_n(x):=\Hn^{-1}\circ \Gn(x)$ and $B_n(x)=\mathds 1_{\mathcal A_{n_1}(x)}$. The two conditions of this lemma are checked in Lemma \ref{lem:HnGn}. Hence, \eqref{eq:for_DCT_thm2}, and thus \eqref{eq:T3_J2}, hold.

\paragraph{Fourth step: conclusion.}
By the previous steps, we have
$$\sqrt{N}\left(\widehat \theta -  \theta_0\right) = \sqrt{\frac{\lambda_1}{n_1}}\sum_{i=1}^{n_1} \eta_i + \sqrt{\frac{\lambda_2}{n_2}}\sum_{i=1}^{n_2} \eps_i + \sqrt{\lambda_3n_3}J_2 +o_P(1).$$
By definition of $\eta_i$ and $\eps_i$, we have $E[\eta_i]=E[\eps_i]=0$ and 
\begin{align*}
    E[\eta_i^2] &= \int_0^1\int_0^1 (s \wedge t - st) f_U(s) f_U(t) dF_Y^{-1}(s)dF_Y^{-1}(t), \\
    E[\eps_i^2]& = \int_0^1\int_0^1 (F_U(s \wedge t) - F_U(s)F_U(t)) dF_Y^{-1}(s) dF_Y^{-1}(t),
\end{align*}
which are both finite under Assumption~\ref{ass:smoothness1}\ref{ass:smoothness14} and by Lemma~1 in \cite{ShorackWellner1986}. Moreover, under Assumption~\ref{ass:sampling_est}\ref{ass:sampling_est2}, $(\eta_i)_{i=1,\ldots,n_1}$ and $(\eps_i)_{i=1,\ldots,n_2}$ are independent. Hence the first two terms on the right-hand side are asymptotically normal. Lemma~\ref{lem:14} in Supplementary Appendix Section~\ref{sec:S1} establishes that $\sqrt{N}J_2$ tends to a normal distribution. Moreover, by Assumption \ref{ass:sampling_est}\ref{ass:sampling_est2}, $J_2$ is independent of the $(\eta_i, \eps_i)_{i\geq 1}$. Therefore, the vector $\left(\sum_{i=1}^{n_1}(\eta_i)/\sqrt{n_1} + \sum_{i=1}^{n_2}(\eps_i)/\sqrt{n_2}, \sqrt{n_3}J_{2n_3}\right)$ converges jointly in distribution to two independent normal variables distributions. The result follows.  \hfill$\square$

\subsection{Proof of Proposition \ref{prop:necessary_cond}}

We focus on the condition $E[\eta^2]<\infty$, as it will be sufficient to conclude. Let $q(u):=f_U(u)F_Y^{-1}{}'(u)$ and $h(\xi):= \int_0^1(t-\ind{\xi\le t})q(t)dt$, so that $E[\eta^2]=\int_0^1 h^2(\xi)d\xi$. Remark that
$$h(\xi) = h(1/2) + \int_{1/2}^\xi q(t)dt.$$
Assume that $\xi < 1/2$. Then, using \eqref{eq:ineg_f_U_FY}, we obtain
$$-\int_{1/2}^\xi q(t)dt \ge C_1 \xi^{-(b_1+d_1)} + C_2$$
for some constants $C_1 >0$ and $C_2>0$. Hence,
$$E[\eta^2] = \int_0^1 h^2(\xi)d\xi \ge C'_1 \int_0^{1/2} \xi^{-2(b_1+d_1)}d\xi +C_2.$$
Then, $E[\eta^2] <\infty$ implies that $b_1+d_1<1/2$. A similar reasoning for $\xi>1/2$ yields $b_2+d_2<1/2$.

\subsection{Proof of Lemma~\ref{lem:change_var_sig1}}

Let $V:=F_Y(Y)$. Because $F_Y$ is continuous, $V$ is uniform. Now, remark that as soon as $V\in (0,1)$
$$|\ind{V\le t} - t|\le \frac{t(1-t)}{V\wedge (1-V)}.$$
This implies that as soon as $V\in (0,1)$, 
$$\int_0^1 |\ind{V\le t} - t|f_U(t)dF_Y^{-1}(t) \le \frac{\int_0^1 t(1-t)f_U(t)dF_Y^{-1}(t)}{V\wedge (1-V)}<\infty.$$
Hence, $\eta$ is well-defined almost surely. Next, by Fubini's theorem, 
$$\eta^2 = \int_{[0,1]^2} (\ind{V\le u} - u)(\ind{V\le u'} - u')f_U(u)f_U(u')dF^{-1}_Y(u)dF^{-1}_Y(u').$$   
Moreover, by Cauchy--Schwarz inequality,
$$E\left[|(\ind{V\le u} - u)(\ind{V\le u'} - u')|\right]\le [u(1-u)v(1-v)]^{1/2}.$$
As a result,
\begin{align*}
  & \int_{[0,1]^2} E\left[\left|(\ind{V\le u} - u)(\ind{V\le u'} -u')\right|\right]f_U(u)f_U(u') dF^{-1}_Y(u) dF^{-1}_Y(u') \\
  \le & \left(\int_{[0,1]} [u(1-u)]^{1/2}f_U(u)dF_Y^{-1}(u)\right)^2<\infty.  
\end{align*}
Then, by Fubini's theorem,
\begin{align*}
  & \int_{[0,1]^2}  E\left[(\ind{V\le u} - u)(\ind{V\le u'} - u')\right]f_U(u)f_U(u')dF^{-1}_Y(u)dF^{-1}_Y(u') \\
  = & E\left[\int_{[0,1]^2}  (\ind{V\le u} - u)(\ind{V\le u'} - u')f_U(u)f_U(u')dF^{-1}_Y(u) dF^{-1}_Y(u')\right]=E[\eta^2].
\end{align*}
Thus, $E[\eta^2]<\infty$ and since $E\left[(\ind{V\le u} - u)(\ind{V\le u'} - u')\right]=(u\wedge u')(1- u\vee u')$, 
$$E[\eta^2] = \int_{[0,1]^2}(u\wedge u')(1- u\vee u') f_U(u)f_U(u')dF^{-1}_Y(u)dF^{-1}_Y(u').$$
The result follows by the change of variable $y=F_Y^{-1}(u)$ and $y'=F_Y^{-1}(u')$. \hfill$\square$

\subsection{Proof of Theorem~\ref{thm:var_est}}

First, we use repeatedly the fact that by Assumption \ref{ass:bandwidth}, $n_2^{(b_1+ 2d_1)\lor (b_2+2d_2)-1}=o(\eps_{n_2})$ as $n_2\to\infty$. Now, let $\widetilde f_U^{(1)}$ and $\widetilde f_U^{(2)}$ denote two infeasible sample-split kernel density estimators of $f_U$: for all $u\in(0,1)$,
\begin{align}
    \widetilde f_U^{(1)}(u)&=\frac{1}{n_2h_{n_2,u}}\sum_{i=1}^{n_2/2}\ind{\abs{U_i-u}\leq h_{n_2,u}}, \label{eq_def_hat_f_U} \\
    \widetilde f_U^{(2)}(u)& =\frac{1}{n_2h_{n_2,u}}\sum_{i=n_2/2+1}^{n_2}\ind{\abs{U_i-u}\leq h_{n_2,u}}. \notag
\end{align}
We also define the following infeasible estimator of $\sigma^2$:
\begin{equation}
    \widetilde \sigma^2:=\frac{N}{n_1}\int_{\R^2}\widetilde f_U^{(1)}\!\left(\widehat F_Y^{(1)}(y)\right)\,\widetilde f_U^{(2)}\!\left(\widehat F_Y^{(2)}(y')\right)\,w\!\left(\widehat F_Y^{(1)}(y), \widehat F_Y^{(2)}(y')\right)dydy' + \frac{N}{n_2^2} \sum_{i=1}^{n_2}\widehat \eps_i^2.
\end{equation}
The proof has two steps:
\begin{enumerate}
    \item We show that $\widetilde \sigma^2-\sigma^2=o_P(1)$.
    \item We show that $\widehat\sigma^2-\widetilde \sigma^2=o_P(1)$.
\end{enumerate}
\paragraph{First step: $\widetilde \sigma^2-\sigma^2=o_P(1)$.}
Consider the decomposition
\begin{align*}
     \widetilde\sigma^2-\sigma^2&= R_1+R_2,
\end{align*}
where
\begin{align*}
     R_1 & := \frac{N(n_1+n_3)}{n_1n_3}\int_{\R^2}\widetilde f_U^{(1)}\!\left(\widehat F_Y^{(1)}(y)\right)\widetilde f_U^{(2)}\!\left(\widehat F_Y^{(2)}(y')\right) w(\widehat F_Y^{(1)}(y), \widehat F_Y^{(2)}(y'))dydy'\\
     & \quad - (\lambda_1+\lambda_3)\int_{\R^2}f_U(F_Y(y)) \, f_U(F_Y(y')) \, w(F_Y(y), F_Y(y'))dydy', \\[5pt]
      R_2 & :=  \frac{N}{n_2^2} \sum_{i=1}^{n_2}\widehat F_Y^{-1}(U_i)^2 - \lambda_2E\left[F_Y^{-1}(U_1)^2 \right]
\end{align*}
and $w(s,t) = (s \land t) (\bar s \land \bar t)$ for any $s,t \in [0,1]$. 
For any two functions $u,v : [0,1]^2 \to (0,\infty)$, any two functions $a,b : [0,1] \to \R$ and any function $F: \R \to [0,1]$, we define
\begin{align*}
& \left\langle u, v\right\rangle_{F_1\otimes F_2} \, = \, \int_{\R^2} u\big(F_1(y),F_2(y')\big) v\big(F_1(y),F_2(y')\big) dy dy'\\[5pt]
&a \otimes b(y,y') = a(y) b(y'), \quad \forall y,y' \in [0,1]\\[5pt]
&\,a^{\,\otimes \,2} \, = \, a\otimes a.
\end{align*}
We can now rewrite the quantity $R_1$ as 
\begin{align*}
 R_1 & := \frac{N(n_1+n_3)}{n_1n_3} \left\langle \widetilde f_U^{(1)} \otimes \widetilde f_U^{(2)}, w\right\rangle_{\widehat F_{Y}^{(1)}\otimes\widehat F_{Y}^{(2)}} - (\lambda_1+\lambda_3) \left\langle f_U^{\,\otimes \,2}, w\right\rangle_{F_Y^{\,\otimes \,2}}.
\end{align*}
Define
\begin{align*}
     \widetilde R_1 & := \left\langle \widetilde f_U^{(1)} \otimes \widetilde f_U^{(2)}, w\right\rangle_{\widehat F_{Y}^{(1)}\otimes\widehat F_{Y}^{(2)}} - \left\langle f_U^{\,\otimes \,2}, w\right\rangle_{F_Y^{\,\otimes \,2}}\\[5pt]
    \widetilde R_2 & := \frac{1}{n_2} \sum_{i=1}^{n_2}\widehat F_Y^{-1}(U_i)^2 - E\left[F_Y^{-1}(U_1)^2 \right].
\end{align*}
It is sufficient to show that $\widetilde R_1+\widetilde R_2=o_P(1)$. In Step 1 below, we show that $\widetilde R_1= o_P(1)$. In Step 2, we show that $\widetilde R_2= o_P(1)$.

\medskip
\noindent {\bf Step 1: $\widetilde R_1= o_P(1)$.}
Let 
\begin{align*}
    I_{1}&:= E\left[\left\langle \widetilde f_U^{(1)} \otimes \widetilde f_U^{(2)}, w\right\rangle_{\widehat F_{Y}^{(1)}\otimes\widehat F_{Y}^{(2)}}\right] - \left\langle f_U^{\,\otimes \,2}, w\right\rangle_{F_Y^{\,\otimes \,2}}, \\[10pt]
    I_{2}&:=\left\langle \widetilde f_U^{(1)} \otimes \widetilde f_U^{(2)}, w\right\rangle_{\widehat F_{Y}^{(1)}\otimes\widehat F_{Y}^{(2)}}-E\left[\left\langle \widetilde f_U^{(1)} \otimes \widetilde f_U^{(2)}, w\right\rangle_{\widehat F_{Y}^{(1)}\otimes\widehat F_{Y}^{(2)}}\right],
\end{align*}
where the expectation is easily shown to be finite (see further steps). 
Note that $\widetilde R_{1}=I_{1}+I_{2}$ and that $E[\vert \widetilde R_{1}\vert ^2]=I_{1}^2+E[I_{2}^2]$ is a bias-variance decomposition. We will show the stronger result that $E[\vert \widetilde R_{1}\vert^2]\to0$. 

\medskip
\noindent {\bf Step 1.A (Vanishing bias): $I_1=o(1)$.}  For any $t \in \R$, we define  $\Delta_{2h_{n_2,t}} F_U(t):= F_U(t+h_{n_2,t})- F_U(t-h_{n_2,t})$. Since $\widetilde f_U^{(1)}(t)$, $\widetilde f_U^{(2)}(t)$, and $(Y_i)_{i=1}^{n_1}$ are mutually independent and $E\left[\widetilde f_U^{(1)}(t)\right]=E\left[\widetilde f_U^{(2)}(t)\right] = \Delta_{2h_{n_2,t}} F_U(t)/(2h_{n_2,t})$ for any $t \in \R$, the law of total expectation yields
\begin{align*}
	& E \left[\left\langle \widetilde f_U^{(1)} \otimes \widetilde f_U^{(2)}, w\right\rangle_{\widehat F_{Y}^{(1)}\otimes\widehat F_{Y}^{(2)}}\right]\\[10pt]
   =& E\left[\int_{\R^2}\,\widetilde f_U^{(1)}\!\left(\widehat F_Y^{(1)}(y)\right)\,\widetilde f_U^{(2)}\!\left(\widehat F_Y^{(2)}(y')\right) w\!\left(\widehat F_Y^{(1)}(y), \widehat F_Y^{(2)}(y')\right)dydy'\right]\\[6pt]
   =&\int_{\R^2}E\left[\frac{\Delta_{2h_{n_2, \widehat F_Y^{(1)}(y)}}F_U(\widehat F_Y^{(1)}(y))}{2h_{n_2,\widehat F_Y^{(1)}(y)}}\frac{\Delta_{2h_{n_2,\widehat F_Y^{(2)}(y')}}F_U(\widehat F_Y^{(2)}(y'))}{2h_{n_2,\widehat F_Y^{(2)}(y')}}w\!\left(\widehat F_Y^{(1)}(y), \widehat F_Y^{(2)}(y')\right)\right]dydy'.
\end{align*}
For any $y \in \R$, the mean value theorem ensures that there exist
\begin{align*}
    \widetilde F_{Y}^{(1)}(y) &\in \big(\widehat F_Y^{(1)}(y) \, -\,h_{n_2,\widehat F_Y^{(1)}(y)}, \,\widehat F_Y^{(1)}(y) \,+ \,h_{n_2,\widehat F_Y^{(1)}(y)} \big)\\
    \widetilde F_{Y}^{(2)}(y') &\in \big(\widehat F_Y^{(2)}(y')-h_{n_2,\widehat F_Y^{(2)}(y')},\widehat F_Y^{(2)}(y')+h_{n_2,\widehat F_Y^{(2)}(y')} \big)
\end{align*}
 satisfying 
\begin{align*}
\frac{\Delta_{2h_{n_2, \widehat F_Y^{(j)}(y)}}F_U(\widehat F_Y^{(j)}(y))}{2h_{n_2,\widehat F_Y^{(j)}(y)}} = f_U\!\left(\widetilde F_Y^{(j)}(y)\right), \quad j\in\{1,2\}.
\end{align*} 
 Therefore,
 \begin{align}
 E \left[\left\langle \widetilde f_U^{(1)} \otimes \widetilde f_U^{(2)}, w\right\rangle_{\widehat F_{Y}^{(1)}\otimes\widehat F_{Y}^{(2)}}\right] 
   =&\int_{\R^2}E\Big[f_U\!\left(\widetilde F_Y^{(1)}(y)\right)f_U\!\left(\widetilde F_Y^{(2)}(y')\right)w\!\left(\widehat F_Y^{(1)}(y), \widehat F_Y^{(2)}(y')\right)\Big]dydy'.\label{eq_hat_to_tilde}
\end{align}
Next, note that
\[
\left|\widetilde F_Y^{(j)}(y)-\widehat F_Y^{(j)}(y)\right| \leq h_{n_2,\widehat F_Y^{(j)}(y)}=\eps_{n_2}\widehat F_Y^{(j)}(y)\bar{\widehat F}_Y^{(j)}(y)\leq\eps_{n_2}\widehat F_Y^{(j)}(y).
\]
This implies
\[
\left|\widetilde F_Y^{(j)}(y)-\widehat F_Y^{(j)}(y)\right|\leq \eps_{n_2}\left|\widetilde F_Y^{(j)}(y)-\widehat F_Y^{(j)}(y)\right|+\eps_{n_2}\widetilde F_Y^{(j)}(y),
\]
which, since $\eps_{n_2}\leq1/2$, further implies
\begin{equation}\label{eq:control_tilde_hat}
  \left|\widetilde F_Y^{(j)}(y)-\widehat F_Y^{(j)}(y)\right| \leq  \frac{\eps_{n_2}}{1-\eps_{n_2}}\widetilde F_Y^{(j)}(y)\leq 2\eps_{n_2}\widetilde F_Y^{(j)}(y).
\end{equation}
We conclude that $\left|\widetilde F_Y^{(j)}(y)-\widehat F_Y^{(j)}(y)\right| \leq 2\eps_{n_2}(\widetilde F_Y^{(j)}(y)\wedge \widehat F_Y^{(j)}(y))$. 
An analogous reasoning yields $\big|\bar{\widetilde F}_Y^{(j)}(y)-\bar{\widehat F}_Y^{(j)}(y)\big| \leq 2\eps_{n_2}(\bar{\widetilde F}_Y^{(j)}(y)\wedge\bar{\widehat F}_Y^{(j)}(y))$. By Lemma~\ref{lem:useful_ineq}, we then have
\begin{align}
    w\left(\widehat F_Y^{(1)}(y), \widehat F_Y^{(2)}(y')\right) &= \left(\widehat F_Y^{(1)}(y) \land \widehat F_Y^{(2)}(y')\right) \left(\bar{\widehat F}_Y^{(1)}(y) \land \bar{\widehat F}_Y^{(2)}(y')\right) \nonumber\\
    & \leq (1+2\varepsilon_{n_2})\left(\widetilde F_Y^{(1)}(y) \land \widetilde F_Y^{(2)}(y')\right) (1+2\varepsilon_{n_2}) \left(\bar{\widetilde F}_Y^{(1)}(y) \land \bar{\widetilde F}_Y^{(2)}(y')\right) \nonumber\\
    & = \big(1+O(\varepsilon_{n_2})\big) \left(\widetilde F_Y^{(1)}(y) \land \widetilde F_Y^{(2)}(y')\right) \left(\bar{\widetilde F}_Y^{(1)}(y) \land \bar{\widetilde F}_Y^{(2)}(y')\right).\label{eq_upper_bound_w}
\end{align}
Combining~\eqref{eq_hat_to_tilde} and~\eqref{eq_upper_bound_w}, we obtain
\begin{align*}
   & E\left[\left\langle \widetilde f_U^{(1)} \otimes \widetilde f_U^{(2)}, w\right\rangle_{\widehat F_{Y}^{(1)}\otimes\widehat F_{Y}^{(2)}}\right]\\
   =&(1+O(\eps_{n_2}))\int_{\R^2}E\left[\, f_U\!\left(\widetilde F_Y^{(1)}(y)\right)\, f_U\!\left(\widetilde F_Y^{(2)}(y')\right) w\!\left(\widetilde F_Y^{(1)}(y), \widetilde F_Y^{(2)}(y')\right)\right]dydy'\\
   =& (1+O(\eps_{n_2}))E\left[\left\langle f_U^{\,\otimes \,2}, w\right\rangle_{\widetilde F_Y^{(1)}\otimes \widetilde F_Y^{(2)}}\right].
\end{align*}
Plugging this expression into the definition of $I_{1} = E\left[\left\langle \widetilde f_U^{(1)} \otimes \widetilde f_U^{(2)}, w\right\rangle_{\widehat F_{Y}^{(1)}\otimes\widehat F_{Y}^{(2)}}\right] - \left\langle f_U^{\,\otimes \,2}, w\right\rangle_{F_Y^{\,\otimes \,2}}$, we obtain
\begin{align}
I_{1} \leq \left|E\left[\left\langle f_U^{\,\otimes \,2}, w\right\rangle_{\widetilde F_Y^{(1)}\otimes \widetilde F_Y^{(2)}}\right] - \left\langle f_U^{\,\otimes \,2}, w\right\rangle_{F_Y^{\,\otimes \,2}}\right| + O\left(\varepsilon_{n_2}E\left[\left\langle f_U^{\,\otimes \,2}, w\right\rangle_{\widetilde F_Y^{(1)}\otimes \widetilde F_Y^{(2)}}\right]\right).\label{eq:decomposition_var_I1N}
\end{align}
We first prove that the second term tends to zero. To do so, we show that $E\left[\left\langle f_U^{\,\otimes \,2}, w\right\rangle_{\widetilde F_Y^{(1)}\otimes \widetilde F_Y^{(2)}}\right]$ is finite. We recall the definition of the function $g(s,t) = (s \land t)^{2b_1}(\bar s \land \bar t)^{2b_2}f_U(s) f_U(t)$, and, in the sequel, we write 
$$\Omega(s,t) = (s \land t)^{1-2b_1} (\bar s \land \bar t)^{1-2b_2}.$$ By Assumption~\ref{ass:smoothness1}\ref{ass:smoothness13}, there exists a positive constant $C_U$ such that $g \leq C_U^2$ over $(0,1)^2$. Since $\widetilde F_Y^{(j)}(y)-\widehat F_Y^{(j)}(y)\leq \widehat F_Y^{(j)}(y)\bar{\widehat F}_Y^{(j)}(y)/2\leq \widehat F_Y^{(j)}(y)$ by definition of $h_{n_2,\widehat F_Y^{(j)}(y)}$, and, similarly, $\bar{\widetilde F}_Y^{(j)}(y)-\bar{\widehat F}_Y^{(j)}(y)\leq \widehat F_Y^{(j)}(y)\bar{\widehat F}_Y^{(j)}(y)/2\leq \bar{\widehat F}_Y^{(j)}(y)$, it follows that
\begin{align*}
    & \quad  E\left[f_U\!\left(\widetilde F_Y^{(1)}(y)\right)\, f_U\!\left(\widetilde F_Y^{(2)}(y')\right) w\!\left(\widetilde F_Y^{(1)}(y), \widetilde F_Y^{(2)}(y')\right)\right] \\
    & = E\left[g(\widetilde F_Y^{(1)}(y), \widetilde F_Y^{(2)}(y')) \, \Omega\!\left(\widetilde F_Y^{(1)}(y), \widetilde F_Y^{(2)}(y')\right)\right] \\
    & \leq C_U^2E\left[\left(\widetilde F_Y^{(1)}(y)\wedge \widetilde F_Y^{(2)}(y')\right)^{1-2b_1}\left(\bar{\widetilde F}^{(1)}_Y(y)\wedge \bar{\widetilde F}^{(2)}_Y(y')\right)^{1-2b_2}\right] \\
    & \leq 2^{2(1-b_1-b_2)}C_U^2E\left[\left(\widehat F_Y^{(1)}(y)\wedge \widehat F_Y^{(2)}(y')\right)^{1-2b_1}\left(\bar{\widehat F}_Y^{(1)}(y)\wedge \bar{\widehat F}_Y^{(2)}(y')\right)^{1-2b_2}\right],
\end{align*}
which implies that $E\left[\left\langle f_U^{\,\otimes \,2}, w\right\rangle_{\widetilde F_Y^{(1)}\otimes \widetilde F_Y^{(2)}}\right]$ is finite by Lemma~\ref{lem:int_FY_FY'} since $1-2b_j > 2 d_j$ for $j \in \{1,2\}$ by Assumption~\ref{ass:smoothness1}\ref{ass:smoothness14}. 
It follows that 
\begin{equation*}
    \lim_{N\to\infty}O\left(\varepsilon_{n_2}E\left[\left\langle f_U^{\,\otimes \,2}, w\right\rangle_{\widetilde F_Y^{(1)}\otimes \widetilde F_Y^{(2)}}\right]\right)=0.
\end{equation*}
Next, we turn to the first term in \eqref{eq:decomposition_var_I1N}. 
It immediately follows that 
\begin{align*}
    \left\langle f_U^{\,\otimes \,2}, w\right\rangle_{\widetilde F_Y^{(1)}\otimes \widetilde F_Y^{(2)}} &= \int_{\R^2} \left(g \circ \widetilde F_Y^{(1)}\otimes\widetilde F_Y^{(2)}\right)(y,y') \left( \Omega \circ \widetilde F_Y^{(1)}\otimes\widetilde F_Y^{(2)}\right)(y,y') dy dy'\\[5pt]
    & =: \int_{\R^2} \left(g \circ \widetilde F_Y^{(1)}\otimes\widetilde F_Y^{(2)}\right) \left( \Omega \circ \widetilde F_Y^{(1)}\otimes\widetilde F_Y^{(2)}\right)
\end{align*} 
and similarly
\begin{align*}
    \left\langle f_U^{\,\otimes \,2}, w\right\rangle_{F_Y^{\,\otimes \,2}} & = \int_{\R^2} \left(g \circ F_Y^{\,\otimes \,2}\right) \left( \Omega \circ F_Y^{\,\otimes \,2}\right). 
\end{align*} 
By the triangle inequality and Assumption~\ref{ass:smoothness2}\ref{ass:smoothness23}, the first term in~\eqref{eq:decomposition_var_I1N} can be upper bounded as 
\begin{align*}
& \quad \left|\left\langle f_U^{\,\otimes \,2}, w\right\rangle_{\widetilde F_Y^{(1)}\otimes \widetilde F_Y^{(2)}} - \left\langle f_U^{\,\otimes \,2}, w\right\rangle_{ F_Y}\right|  \\[5pt]
& = \left|\int_{\R^2} \left(g \circ \widetilde F_Y^{(1)}\otimes\widetilde F_Y^{(2)}\right) \left( \Omega \circ \widetilde F_Y^{(1)}\otimes\widetilde F_Y^{(2)}\right)  - \int_{\R^2} \left(g \circ  F_Y^{\,\otimes \,2}\right) \left( \Omega \circ  F_Y^{\,\otimes \,2}\right) \right|\\[10pt]
&\leq 
\int_{\R^2} \Big|\!\left( g \circ \widetilde F_Y^{(1)}\otimes\widetilde F_Y^{(2)}\right) - \left(g \circ  F_Y^{\,\otimes \,2}\right)\!\Big| \!\left(\Omega \circ \widetilde F_Y^{(1)}\otimes\widetilde F_Y^{(2)}\right) \\[5pt]
&\quad + \int_{\R^2} \left(g \circ  F_Y^{\,\otimes \,2}\right) \! \Big|\!\left( \Omega \circ \widetilde F_Y^{(1)}\otimes\widetilde F_Y^{(2)}\right) - \left( \Omega \circ  F_Y^{\,\otimes \,2}\right)\!\Big|\\[10pt]
& \leq 
 c_U \int_{\R^2}\left(\left\vert \widetilde F_Y^{(1)}(y)-F_Y(y)\right\vert^\beta+\left\vert\widetilde F_Y^{(2)}(y')-F_Y(y')\right\vert^\beta\right)\Omega\left(\widetilde F_Y^{(1)}(y), \widetilde F_Y^{(2)}(y')\right)dydy'\\[5pt]
& \quad + 
\int_{\R^2} \left(g \circ  F_Y^{\,\otimes \,2}\right) \left|\left( \Omega \circ \widehat F_Y^{(1)}\otimes\widehat F_Y^{(2)}\right) - \left( \Omega \circ  F_Y^{\,\otimes \,2}\right)\right|\\[5pt]
& \quad + \int_{\R^2} \left(g \circ  F_Y^{\,\otimes \,2}\right) \left|\left( \Omega \circ \widetilde F_Y^{(1)}\otimes\widetilde F_Y^{(2)}\right) - \left( \Omega \circ \widehat F_Y^{(1)}\otimes\widehat F_Y^{(2)}\right)\right|. 
\end{align*} 
Given equation~\eqref{eq:control_tilde_hat}, Lemma~\ref{lem:useful_ineq} implies that
\begin{align*}
    \left|\left( \Omega \circ \widetilde F_Y^{(1)}\otimes\widetilde F_Y^{(2)}\right) - \left( \Omega \circ  \widehat F_Y^{(1)}\otimes\widehat F_Y^{(2)}\right)\right| = O(\varepsilon_{n_2}) \cdot \left(\Omega \circ \widehat F_Y^{(1)}\otimes\widehat F_Y^{(2)}\right).
\end{align*}
Therefore, by symmetry of $\Omega$, we obtain
\begin{align*}
& \quad E\left|\left\langle f_U^{\,\otimes \,2}, w\right\rangle_{\widetilde F_Y^{(1)}\otimes \widetilde F_Y^{(2)}} - \left\langle f_U^{\,\otimes \,2}, w\right\rangle_{ F_Y}\right| \\[5pt]
&\leq 
 2c_U E\left[\int_{\R^2} \left\vert \widetilde F_Y^{(1)}(y)-F_Y(y)\right\vert^\beta \Omega\left(\widehat F_Y^{(1)}(y), \widehat F_Y^{(2)}(y')\right)dydy' \right] \\[5pt]
& \quad + E\int_{\R^2} \left(g \circ  F_Y^{\,\otimes \,2}\right) \left|\left( \Omega \circ \widehat F_Y^{(1)}\otimes\widehat F_Y^{(2)}\right) - \left( \Omega \circ F_Y^{\,\otimes \,2}\right)\right| \\[5pt]
& \quad + O(\varepsilon_{n_2}) E\Big[\int_{\R^2} \left\vert \widetilde F_Y^{(1)}(y)-F_Y(y)\right\vert^\beta\Omega\left(\widehat F_Y^{(1)}(y), \widehat F_Y^{(2)}(y')\right) dydy' \\[5pt]
&\qquad\qquad\quad + \int_{\R^2} \left(g \circ F_Y^{\,\otimes \,2}\right)\left(\Omega \circ \widehat F_Y^{(1)}\otimes\widehat F_Y^{(2)}\right)(y,y')dydy'\Big]\\
& = J_1 + J_2 + J_3.
\end{align*} 
Lemma~\ref{lem:21} in Supplementary Appendix Section~\ref{sec:S2} establishes that $J_1,J_2,J_3$ are $o(1)$.

\medskip
\noindent {\bf Step 1.B (Vanishing variance): $E[I_2^2]=o(1)$.} 
Let
\begin{align*}
    T & := \left\langle \widetilde f_U^{(1)} \otimes \widetilde f_U^{(2)}, w\right\rangle_{\widehat F_{Y}^{(1)}\otimes\widehat F_{Y}^{(2)}} \\
    & =\int_{\R^2}\widetilde f_U^{(1)}\big(\widehat F_Y^{(1)}(y)\big)\widetilde f_U^{(2)}\big(\widehat F_Y^{(2)}(y')\big)\big[\widehat F_Y^{(1)}(y)\wedge \widehat F_Y^{(2)}(y')\big]\big[\bar{\widehat F}_Y^{(1)}(y)\wedge \bar{\widehat F}_Y^{(2)}(y')\big]dydy'
\end{align*}
denote the random part of $I_2$.
Note that 
\begin{align*}
E[I_2^2]&=V\left[T\right]\\
&=V[E[T|(Y_i)_{i=1}^{n_1}]]+E[V[T|(Y_i)_{i=1}^{n_1}]]\\
&=:T_{1}+T_{2}.
\end{align*}

\medskip
\noindent {\bf Step 1.B.1: $T_{1}=o(1)$.} By the same steps as above, we have
\[
E[T|(Y_i)_{i=1}^{n_1}] = (1+O(\eps_{n_2}))\int_{\R^2}f_U\left(\widetilde F_Y^{(1)}(y)\right)f_U\left(\widetilde F_Y^{(2)}(y')\right)w\left(\widetilde F_Y^{(1)}(y),\widetilde F_Y^{(2)}(y')\right)dydy'.
\]
It follows that 
\begin{align*}
    T_1 &= V\left[E[T|(Y_i)_{i=1}^{n_1}]\right] \\
    &\leq 2 V\left[\int_{\R^2}f_U\left(\widetilde F_Y^{(1)}(y)\right)f_U\left(\widetilde F_Y^{(2)}(y')\right)w\left(\widetilde F_Y^{(1)}(y),\widetilde F_Y^{(2)}(y')\right)dydy'\right]\\
    & \quad + O(\varepsilon_{n_2}^2) E\left[\left(\int_{\R^2}f_U\left(\widetilde F_Y^{(1)}(y)\right)f_U\left(\widetilde F_Y^{(2)}(y')\right) w\left(\widetilde F_Y^{(1)}(y),\widetilde F_Y^{(2)}(y')\right)dydy'\right)^2\right].
\end{align*}
The second term can be analyzed as follows. By the same steps as above, we have
\begin{align*}
    & \quad E\left[\left(\int_{\R^2}f_U\left(\widetilde F_Y^{(1)}(y)\right)f_U\left(\widetilde F_Y^{(2)}(y')\right) w\left(\widetilde F_Y^{(1)}(y),\widetilde F_Y^{(2)}(y')\right)dydy'\right)^2\right]\\
    & \leq C_U^22^{2(2-b_1-b_2)}E\left[\int_{\R^2}\left( \left(\widehat F_Y^{(1)}(y) \land \widehat F_Y^{(2)}(y')\right)^{1-2b_1} \left(\bar{\widehat{F}}_Y^{(1)}(y) \land \bar{\widehat{F}}_Y^{(2)}(y')\right)^{1-2b_2} dydy'\right)^2\right],
\end{align*}
which is finite by Lemma~\ref{lem:int_FY_FY'}, since $1-2b_j>d_j$ for $j \in \{1,2\}$ by Assumption~\ref{ass:smoothness1}\ref{ass:smoothness14}. This ensures that the second term is $o(1)$. 

We now show that the first term $2V\left[\int_{\R^2}f_U\left(\widetilde F_Y^{(1)}(y)\right)f_U\left(\widetilde F_Y^{(2)}(y')\right)\Omega\left(\widetilde F_Y^{(1)}(y),\widetilde F_Y^{(2)}(y')\right)dydy'\right]$ tends to zero as $N\to\infty$. 
Let $\left(\widetilde{\widetilde F}_Y^{(j)}\right)_{j\in\{1,2\}}$ be  independent copies of $\left(\widetilde F_Y^{(j)}\right)_{j\in\{1,2\}}$ computed on an independent copy $(Y'_{1}, \dots, Y'_{n_1})$ of $(Y_1,\dots, Y_{n_1})$.
We have
\begin{align*}
    &\quad V\left[\int_{\R^2}f_U\left(\widetilde F_Y^{(1)}(y)\right)f_U\left(\widetilde F_Y^{(2)}(y')\right)w\left(\widetilde F_Y^{(1)}(y),\widetilde F_Y^{(2)}(y')\right)dydy'\right] \\[5pt]
    & \leq E\left[\left(\int_{\R^2}f_U\!\left(\widetilde F_Y^{(1)}(y)\right)f_U\!\left(\widetilde F_Y^{(2)}(y')\right)w\left(\widetilde F_Y^{(1)}(y),\widetilde F_Y^{(2)}(y')\right) \right.\right.\\[5pt]
    & \hspace{2cm} \left.\left.- f_U\!\left(\widetilde{\widetilde F}_Y^{(1)}(y)\right)f_U\!\left(\widetilde{\widetilde F}_Y^{(2)}(y')\right) w\left(\widetilde{\widetilde F}_Y^{(1)}(y),\widetilde{\widetilde F}_Y^{(2)}(y')\right)dydy'\right)^2\right]\\[5pt]
    & = E\Bigg[\bigg(\int_{\R^2}g\left(\widetilde F_Y^{(1)}(y), \widetilde F_Y^{(2)}(y')\right)\Omega\left(\widetilde F_Y^{(1)}(y),\widetilde F_Y^{(2)}(y')\right) \\[5pt]
    & \qquad - g\left(\widetilde{\widetilde F}_Y^{(1)}(y), \widetilde{\widetilde F}_Y^{(2)}(y')\right)\Omega\left(\widetilde{\widetilde F}_Y^{(1)}(y),\widetilde{\widetilde F}_Y^{(2)}(y')\right)dydy'\bigg)^2\Bigg]\\[5pt]
    & \leq 2 E \left[\left(\int_{\R^2} \Big|\,g\!\left(\widetilde F_Y^{(1)}(y), \widetilde F_Y^{(2)}(y')\right) - g\!\left(\widetilde{\widetilde F}_Y^{(1)}(y), \widetilde{\widetilde F}_Y^{(2)}(y')\right)\Big| \,\Omega\left(\widetilde F_Y^{(1)}(y), \widetilde F_Y^{(2)}(y')\right)dydy'\right)^2\right]\\[5pt]
    & \quad + 2 E\left[\left(\int_{\R^2} g\!\left(\widetilde{\widetilde F}_Y^{(1)}(y), \widetilde{\widetilde F}_Y^{(2)}(y')\right) \Big|\Omega\left(\widetilde F_Y^{(1)}(y), \widetilde F_Y^{(2)}(y')\right) - \Omega\left(\widetilde{\widetilde F}_Y^{(1)}(y),\widetilde{\widetilde F}_Y^{(2)}(y')\right)\Big|dydy'\right)^2\right]\\[5pt]
    & \leq 2 E \Bigg[\bigg(\int_{\R^2} c_U\left(h_{\widehat F_Y^{(1)}(y)}^{\beta}+h_{\widehat{\widehat F}_Y^{(1)}(y)}^{\beta} + h_{\widehat F_Y^{(2)}(y')}^{\beta}+h_{\widehat{\widehat F}_Y^{(2)}(y')}^{\beta}+\Big| \widehat F_Y^{(1)}(y)-\widehat {\widehat F}_Y^{(1)}(y)\Big|^\beta+\Big| \widehat F_Y^{(2)}(y')-\widehat{\widehat F}_Y^{(2)}(y')\Big|^\beta\right) \\
    &\qquad \times \Omega\left(\widetilde F_Y^{(1)}(y), \widetilde F_Y^{(2)}(y')\right) dydy'\bigg)^2\Bigg]\\[5pt]
    & \quad + 2 C_U^2 E\left[\left(\int_{\R^2} \Big|\Omega\left(\widetilde F_Y^{(1)}(y), \widetilde F_Y^{(2)}(y')\right) - \Omega\left(\widetilde{\widetilde F}_Y^{(1)}(y),\widetilde{\widetilde F}_Y^{(2)}(y')\right)\Big|dydy'\right)^2\right] \\[5pt]
    & =: J_4+J_5
\end{align*}
Lemma~\ref{lem:21} in Supplementary Appendix Section~\ref{sec:S2} establishes that $J_4$ and $J_5$ are $o(1)$.

\medskip
\noindent {\bf Step 1.B.2: $T_2=o(1)$.} 
In this part of the analysis, we fix $\mathbf{Y} = (Y_1,\dots, Y_{n_1})$ and consider the term $V[T|\mathbf Y]$. Conditionally on $\mathbf{Y}$, we define the measure $\widehat \mu_{\mathbf{Y}}$ over $[0,1]^2$ by
\begin{align*}
    d\hspace{.5mm}\widehat \mu_{\mathbf{Y}}(s,t) = (s\land t) \, (\bar s \land \bar t) \, d(\widehat F_Y^{(1)})^{-1}(s) \, d(\widehat F_Y^{(2)})^{-1}(t).
\end{align*}
It follows that, conditionally on $\mathbf{Y}$, the random variable $T$ can be rewritten as
\begin{align*}
    T| \mathbf Y = \int_{[0,1]^2} \widetilde f_U^{(1)}(s) \, \widetilde f_U^{(2)}(t) \; d\widehat \mu_{\mathbf{Y}}(s,t).
\end{align*}
Writing $e_{\mathbf Y}(s) = E\big[\widetilde f_U^{(1)}(s) \, | \, \mathbf Y\big]=E\big[\widetilde f_U^{(1)}(s)\big]$, we obtain
\begin{align*}
    V[T|\mathbf Y] &= E\Big[\big(T - E\big[T| \mathbf Y\big]\big)^2 \, \big| \, \mathbf Y\Big]\\[5pt]
    & = E\Big[\Big(\int_{[0,1]^2} \widehat f_U^{\,(1)}(s) \, \widehat f_U^{\,(2)}(t) - E \big[\widehat f_U^{\,(1)}(s) \, \widehat f_U^{\,(2)}(t) \,\big|\, \mathbf Y\big] \; d\widehat \mu_{\mathbf{Y}}(s,t)\Big)^2 \, \big| \, \mathbf Y\Big]\\[5pt]
    & = E\Big[\Big(\int_{[0,1]^2} \widehat f_U^{\,(1)}(s) \, \widehat f_U^{\,(2)}(t) - e_{\mathbf Y}(s) \, e_{\mathbf Y}(t) \; d\widehat \mu_{\mathbf{Y}}(s,t)\Big)^2 \, \big| \, \mathbf Y\Big]\\[5pt]
    & = \int_{[0,1]^4} E\left[\Big(\widehat f_U^{\,(1)}(s) \, \widehat f_U^{\,(2)}(t) - e_{\mathbf Y}(s) \, e_{\mathbf Y}(t) \Big)\Big(\widehat f_U^{\,(1)}(s') \, \widehat f_U^{\,(2)}(t') - e_{\mathbf Y}(s') \, e_{\mathbf Y}(t') \Big) \,\big|\,\mathbf Y\right] d\widehat\mu_{\mathbf Y}^{\, \otimes 2}(s,t,s',t')\\[5pt]
    & = \int_{[0,1]^4} E\left[\widehat f_U^{\,(1)}(s)\widehat f_U^{\,(1)}(s')\,\big|\,\mathbf Y\right] E\left[\widehat f_U^{\,(1)}(t)\widehat f_U^{\,(1)}(t')\,\big|\,\mathbf Y\right] d\widehat\mu_{\mathbf Y}^{\, \otimes 2}(s,t,s',t')\\[5pt]
    & \quad - \left(\int_{[0,1]^2} e_{\mathbf Y}(s) e_{\mathbf Y}(t)d\widehat\mu_{\mathbf Y}(s,t)\right)^2\\[5pt]
    & = \int_{[0,1]^4} \Big\{E\left[\widehat f_U^{\,(1)}(s)\widehat f_U^{\,(1)}(s')\,\big|\,\mathbf Y\right] E\left[\widehat f_U^{\,(1)}(t)\widehat f_U^{\,(1)}(t')\,\big|\,\mathbf Y\right] \\
    &\qquad -E\left[\widehat f_U^{\,(1)}(s)\widehat f_U^{\,(1)}(t)\,\big|\,\mathbf Y\right] E\left[\widehat f_U^{\,(1)}(s')\widehat f_U^{\,(1)}(t')\,\big|\,\mathbf Y\right] \Big\}d\widehat\mu_{\mathbf Y}^{\, \otimes 2}(s,t,s',t')  \\[5pt]
    &\quad + \left(\int_{[0,1]^2} \Big(E\left[\widehat f_U^{\,(1)}(s)\widehat f_U^{\,(1)}(t)\,\big|\,\mathbf Y\right] - e_{\mathbf Y}(s) e_{\mathbf Y}(t)\Big) d\widehat\mu_{\mathbf Y}(s,t) \right)^2 \\[5pt]
    &  \quad + 2 \int_{[0,1]^2} \Big(E\left[\widehat f_U^{\,(1)}(s)\widehat f_U^{\,(1)}(t)\,\big|\,\mathbf Y\right] - e_{\mathbf Y}(s) e_{\mathbf Y}(t)\Big) d\widehat\mu_{\mathbf Y}(s,t) \cdot \int_{[0,1]^2} e_{\mathbf Y}(s') e_{\mathbf Y}(t') d\widehat\mu_{\mathbf Y}(s',t')\\[5pt]
    & = \int_{[0,1]^4} \Big\{E\left[\widehat f_U^{\,(1)}(s)\widehat f_U^{\,(1)}(s')\,\big|\,\mathbf Y\right] E\left[\widehat f_U^{\,(1)}(t)\widehat f_U^{\,(1)}(t')\,\big|\,\mathbf Y\right] \\[5pt]
    &\qquad -E\left[\widehat f_U^{\,(1)}(s)\widehat f_U^{\,(1)}(t)\,\big|\,\mathbf Y\right] E\left[\widehat f_U^{\,(1)}(s')\widehat f_U^{\,(1)}(t')\,\big|\,\mathbf Y\right] \Big\}d\widehat\mu_{\mathbf Y}^{\, \otimes 2}(s,t,s',t') \\[5pt]
    & \quad +\left(\int_{[0,1]^2} \text{cov}\big(\widehat f_U^{\,(1)}(s), \widehat f_U^{\,(1)}(t)\big) d\widehat\mu_{\mathbf Y}(s,t) \right)^2\\[5pt]
    & \quad + 2 \int_{[0,1]^2} \text{cov}\big(\widehat f_U^{\,(1)}(s), \widehat f_U^{\,(1)}(t)\big) d\widehat\mu_{\mathbf Y}(s,t) \cdot \int_{[0,1]^2} e_{\mathbf Y}(s') e_{\mathbf Y}(t') d\widehat\mu_{\mathbf Y}(s',t') \\[5pt]
    & =: J_6+J_7+J_8.
\end{align*}
To show that $T_2=o(1)$, it suffices to show that $E[\abs{J_6}], E[\abs{J_7}],E[\abs{J_8}]=o_P(1)$. This is established in Lemma~\ref{lem:22} in Supplementary Appendix Section~\ref{sec:S2}.

\medskip
\noindent{\bf Step 2: $\widetilde R_2= o_P(1)$.}
 Decompose $\widetilde R_2$ as follows
\begin{align*}
    \widetilde R_2 & = \int_0^1 \left(\left[\widehat F_Y^{-1}\right]^2 - [F_Y^{-1}]^2 \right) d\widehat F_U  + \int_0^1[F_Y^{-1}]^2 d\left[\widehat F_U  -  F_U \right] \\[5pt]
    & = \underbrace{\int_0^1 \left(\left[F_Y^{-1} \circ \Gn^{-1}\right]^2 - [F_Y^{-1}]^2 \right) d\widehat F_U}_{=:\widetilde R_{2,1}} + \underbrace{\int_0^1[F_Y^{-1}]^2 d\left[\widehat F_U  -  F_U \right]  }_{=:\widetilde R_{2,2}}.
\end{align*}
 By the law of large numbers, $\widetilde R_{2,2} = o_P(1)$. Turning to $\widetilde R_{2,1}$, we show the stronger result that $E[|\widetilde R_{2,1}|]\to 0$. We have
\begin{align}
  |\widetilde R_{2,1}| &  = \left \vert \int_0^1 \left(F_Y^{-1} \circ \Gn^{-1}  - F_Y^{-1}\right)\left( F_Y^{-1} \circ \Gn^{-1} + F_Y^{-1}\right) d\widehat F_U \right \vert \notag \\[5pt]
  & \leq \int_0^1 \left  \vert F_Y^{-1} \circ \Gn^{-1}  - F_Y^{-1}\right \vert \left \vert F_Y^{-1} \circ \Gn^{-1}+ F_Y^{-1}\right \vert d\widehat F_U \notag \\[5pt]
      &  \leq  \left[\max_{i=1,\ldots,n_1}|Y_i| + \max_{i=1,\ldots,n_2} |F_Y^{-1}(U_i)|\right] \int_0^1 \left  \vert F_Y^{-1} \circ \Gn^{-1}  - F_Y^{-1}\right \vert d\widehat F_U. \label{eq:ineq_var1}
\end{align}
Assumption~\ref{ass:smoothness1}\ref{ass:smoothness14} implies that there exists $\nu>0$ such that $b_i + (2+\nu)d_i<1$ for $i \in \{1,2\}$ and $(2+\nu)(d_1 \vee d_2)<1$. From this, it is easy to obtain $E[|F_Y^{-1}(U)|^{2+\nu}]<\infty$ and $E[|Y|^{2+\nu}]<\infty$, which in turn yields that $\Pr(|F_Y^{-1}(U)|>x) = o(x^{-(2+\nu)})$ and $\Pr(|Y|>x) = o(x^{-(2+\nu)})$. In particular,  we have by, e.g., Exercise 2.3.4 in \cite{vaartwellner1996}, $\max_{i=1,...,n_2}|F_Y^{-1}(U_i)| = o_P(n_2^{\frac{1}{2+\nu}})$ and $\max_{i=1,...,n_1}|Y_i| = o_P(n_1^{\frac{1}{2+\nu}})$.\footnote{Note that Lemma~\ref{lem:extrem_moments1} and Assumption~\ref{ass:smoothness1} already imply that $\max_{i=1,...,n_1}|Y_i|=O_P\left(n_1^{d_1\lor d_2}\right)=o_P\left(n_1^{1/2-\gamma}\right)$ for $\gamma$ small enough.} Hence, to show that $\widetilde R_{2,1}=o_P(1)$, it is sufficient to show that the integral in \eqref{eq:ineq_var1} is $O_P((n_1\vee n_2)^{-\frac{1}{2+\nu}})$. We show that it is $O_P((n_1 \vee n_2)^{-\frac{1}{2}})$. Consider the decomposition
\begin{align*}
    & \sqrt{n_1 \vee n_2} \int_0^1  \left \vert F_Y^{-1} \circ \Gn^{-1}  - F_Y^{-1}\right \vert d\widehat F_U 
     \\[5pt]
     & \qquad \lesssim  \sqrt{n_1} \int_0^1  \left \vert F_Y^{-1} \circ \Gn^{-1}  - F_Y^{-1}\right \vert d F_U + \abs{\sqrt{n_1} \int_0^1   F_Y^{-1} \circ \Gn^{-1} d[\widehat F_U-F_U]} \\[5pt]
     &\qquad\qquad  + \abs{\sqrt{n_1} \int_0^1 F_Y^{-1} d[\widehat F_U-F_U]}.
     \end{align*}
     Since $\int_0^1\abs{F_Y^{-1}}^2dF_U<\infty$, the central limit theorem implies that the third term is $O_P(1)$. The second term is bounded by $\sqrt{n_1}\abs{T_2}\lesssim \abs{\sqrt{N}T_2}$, where $\sqrt N T_2$ is defined and shown to be $O_P(1)$ in the proof of Theorem~\ref{thm:an}. Hence, the second term is $O_P(1)$. As for the first term, note that
    \begin{align*}
   &  \sqrt{n_1} \int_0^1  \left \vert F_Y^{-1} \circ \Gn^{-1}  - F_Y^{-1}\right \vert d F_U  \\[5pt]
   &\quad = \sqrt{n_1} \int_{1/n_1}^{1-1/n_1}  \left \vert F_Y^{-1} \circ \Gn^{-1}  - F_Y^{-1}\right \vert d F_U +o_P(1)  \\[5pt]
    &  \quad\leq  \sqrt{n_1} \int_{1/n_1}^{1-1/n_1}\abs{F_Y^{-1}{}'(\tilde t_{n_1})} \left \vert \Gn^{-1}(t)-t\right \vert d F_U(t),
  \end{align*}
for some $\tilde t_{n_1}(t) \in [\Gn^{-1}(t)\wedge t, \Gn^{-1}(t)\vee t]$.  The upper bound can be shown to be $o_P(1)$ by arguments similar to those developed in Step 3 of the proof of Theorem~\ref{thm:an}.

\paragraph{Second step: $\widehat\sigma^2-\widetilde \sigma^2=o_P(1)$.} 
Define
\begin{align*}
      R_3 & := \left\langle (\widehat f_U^{(1)} \otimes \widehat f_U^{(2)})-(\widetilde f_U^{(1)} \otimes \widetilde f_U^{(2)}), w\right\rangle_{\widehat F_{Y}^{(1)}\otimes\widehat F_{Y}^{(2)}},\\[5pt]
     R_4 & := \frac{1}{n_2} \sum_{i=1}^{n_2}\widehat F_Y^{-1}(\widehat U_i)^2-\frac{1}{n_2} \sum_{i=1}^{n_2}\widehat F_Y^{-1}(U_i)^2.
\end{align*}
It is sufficient to show that $R_{3}=o_P(1)$ and $R_4=o_p(1)$. In Step 1 below, we show that $R_3= o_P(1)$. In Step 2, we show that $R_4 = o_P(1)$.

To alleviate the notation, we will write $h_{t} := h_{n_2,t}$ and $n = n_2$ throughout the proof, and we define $ E_{\mathbf{Y}}[\cdot] = E\big[\hspace{.2mm} \cdot \hspace{.2mm}|\, (Y_i)_{i=1}^{n_1}\big]$.

\medskip
\noindent{\bf Step 1: $R_3= o_P(1)$.} We introduce the high-probability event 
\begin{align*}
    \mathcal{A}_n = \mathcal{A}_n^{(1)} \cap \mathcal{A}_n^{(2)}  
\end{align*}
where 
\begin{align*}
    \mathcal{A}_n^{(j)} = \left\{\|\widehat F_Z^{(j)} - F_Z\|_{\infty} \leq \frac{a_n}{\sqrt{n}}\right\}
\end{align*}
for $j=1,2$. Note that $\mathcal{A}_n$  satisfies $P(\mathcal{A}_n^c) = o(1)$ for any $a_n\to \infty$ by the Glivenko--Cantelli lemma. 
In what follows, we assume that $(a_n/\sqrt{n})_{n \geq 1}$ is a decreasing sequence converging to zero such that $a_n = O(\log(n))$ and $a_1$ is sufficiently small in a sense that will become clear later on. 
In the following, we show that $E [\vert  R_3\mathds 1\{\mathcal A_n\}\vert] = o(1)$, which implies that $R_3 = o_P(1)$ by the Markov inequality, as desired.

By the triangle inequality, we have
\begin{align*}
     |R_3|&\leq \int_{\R^2}\frac{w(\widehat F_Y^{(1)}(y),\widehat F_Y^{(2)}(y'))}{n^{2}h_{\widehat F_Y^{(1)}(y)}h_{\widehat F_Y^{(2)}(y')}} \\[5pt]
     &\qquad \times \sum_{i=1}^{n/2} \sum_{j=n/2+1}^{n}\Bigg\vert \mathds 1\left\{\abs{\widehat U_i^{(1)}-\widehat F_Y^{(1)}(y)}\leq h_{\widehat F_Y^{(1)}(y)}, \abs{\widehat U_j^{(2)}-\widehat F_Y^{(2)}(y')}\leq h_{\widehat F_Y^{(2)}(y')}\right\} \\[5pt]
     &\hspace{3cm} - \mathds 1\left\{\abs{ U_i-\widehat F_Y^{(1)}(y)}\leq h_{\widehat F_Y^{(1)}(y)}, \abs{ U_j-\widehat F_Y^{(2)}(y')}\leq h_{\widehat F_Y^{(2)}(y')}\right\}\Bigg\vert dydy'.
\end{align*}
We define $\widehat U^{(1)} = \widehat F_Z^{(1)}(X)$ and $\widehat U^{(2)} = \widehat F_Z^{(2)}(X')$ where $X, X'$ are i.i.d.~with cdf $F_X$. 
We note that the random variable $\widehat U^{(1)}$ (resp.~$\widehat U^{(2)}$) has the same distribution as the random variables $U_i^{(1)}$ for any $i = 1,\dots, n_2/2$ (resp.~$U_j^{(2)}$ for any $j = n_2/2 + 1, \dots n_2$) conditional on $(Z_1,\dots, Z_{n_3/2})$ (resp.~$(Z_{n_3/2+1},\dots, Z_{n_3})$). 
Now, taking the expectation conditional on $(Y_i)_{i=1}^{n_1}$, we obtain
\begin{align*}
   E_{\mathbf Y}\big[R_3\mathds 1\{\mathcal A_n\}\big]
   & \leq \int_{\R^2}\frac{w(\widehat F_Y^{(1)}(y),\widehat F_Y^{(2)}(y'))}{h_{\widehat F_Y^{(1)}(y)}h_{\widehat F_Y^{(2)}(y')}}  \\[5pt]
   &\qquad \times E_{\mathbf Y}\Bigg[\bigg\vert \mathds 1\left\{\abs{\widehat U^{(1)}-\widehat F_Y^{(1)}(y)}\leq h_{\widehat F_Y^{(1)}(y)}, \abs{\widehat U^{(2)}-\widehat F_Y^{(2)}(y')}\leq h_{\widehat F_Y^{(2)}(y')}\right\} \\[5pt]
     &\hspace{1cm} - \mathds 1\left\{\abs{ U_1-\widehat F_Y^{(1)}(y)}\leq h_{\widehat F_Y^{(1)}(y)}, \abs{ U_2-\widehat F_Y^{(2)}(y')}\leq h_{\widehat F_Y^{(2)}(y')}\right\}\bigg\vert\times\mathds 1\{\mathcal A_n\}\Bigg]dydy'.
\end{align*}
By the inequality $|ab - \hat a \hat b| \leq |a - \hat a| \,b + |b- \hat b|\,a + |a - \hat a||b-\hat b|$ that holds for any $a,b,\hat a, \hat b \in \R$, we have
\begin{align*}
     E_{\mathbf{Y}}\big[R_3\mathds 1\{\mathcal A_n\}\big] \leq \int_{\R^2}\frac{w(\widehat F_Y^{(1)}(y),\widehat F_Y^{(2)}(y'))}{h_{\widehat F_Y^{(1)}(y)}h_{\widehat F_Y^{(2)}(y')}} \big[I(y,y')+II(y,y')+III(y,y')\big]dydy',
\end{align*}
where 
\begin{align*}
    I(y,y')&=E_{\mathbf Y}\Big[\Big\vert \mathds 1\left\{\abs{\widehat U^{(1)}-\widehat F_Y^{(1)}(y)}\leq h_{\widehat F_Y^{(1)}(y)}\right\}-\mathds 1\left\{\abs{ U_1-\widehat F_Y^{(1)}(y)}\leq h_{\widehat F_Y^{(1)}(y)}\right\}\\[5pt]
    & \hspace{5cm} \times \mathds 1\left\{\abs{U_2-\widehat F_Y^{(2)}(y')}\leq h_{\widehat F_Y^{(2)}(y')}\right\} \Big\vert\times\mathds 1\{\mathcal A_n^{(1)}\}\Big], \\[5pt]
    II(y,y')&=E_{\mathbf Y}\Big[\Big\vert \mathds 1\left\{\abs{\widehat U^{(2)}-\widehat F_Y^{(2)}(y')}\leq h_{\widehat F_Y^{(2)}(y')}\right\}-\mathds 1\left\{\abs{ U_2-\widehat F_Y^{(2)}(y')}\leq h_{\widehat F_Y^{(2)}(y')}\right\}\\[5pt]
    & \hspace{5cm} \times\mathds 1\left\{\abs{U_1-\widehat F_Y^{(1)}(y)}\leq h_{\widehat F_Y^{(1)}(y)}\right\}\Big\vert\times\mathds 1\{\mathcal A_n^{(2)}\}\Big], \\[5pt]
    III(y,y')&=E_{\mathbf Y}\Big[\Big\vert \mathds 1\left\{\abs{\widehat U^{(1)}-\widehat F_Y^{(1)}(y)\,}\leq \, h_{\widehat F_Y^{(1)}(y)}\right\}-\mathds 1\left\{\abs{ U_1-\widehat F_Y^{(1)}(y)}\leq h_{\widehat F_Y^{(1)}(y)}\right\}\Big\vert \\[5pt]
    &\qquad \times\Big\vert\mathds 1\left\{\abs{\widehat U^{(2)}-\widehat F_Y^{(2)}(y')}\leq h_{\widehat F_Y^{(2)}(y')}\right\}-\mathds 1\left\{\abs{ U_2-\widehat F_Y^{(2)}(y')}\leq h_{\widehat F_Y^{(2)}(y')}\right\}\Big\vert\times\mathds 1\{\mathcal A_n\}\Big].
\end{align*}
Lemma~\ref{lem:25} in Supplementary Appendix Section~\ref{sec:S2} establishes that each of these integrals converges to zero.

\medskip
\noindent{\bf Step 2: $R_4= o_P(1)$.} Consider the decomposition:
\begin{align*}
    R_4&=\int_0^1\left([F_Y^{-1}\circ \Gn^{-1}\circ\Hn]^2-[F_Y^{-1}\circ\Gn^{-1}]^2\right)d\widehat F_U +\widetilde R_{2N,1}+\widetilde R_{2N,2} \\[5pt]
    &=\int_0^1\left([F_Y^{-1}\circ \Gn^{-1}\circ\Hn]^2-[F_Y^{-1}\circ\Gn^{-1}]^2\right)d\widehat F_U+o_P(1) \\
    &=:R_5+o_P(1).
\end{align*}
We show below that $R_5:=o_P(1)$. Note that
\begin{align*}
    \abs{R_5}&=\abs{\int_0^1\left([F_Y^{-1}\circ \Gn^{-1}\circ\Hn]^2-[F_Y^{-1}\circ\Gn^{-1}]^2\right)d\widehat F_U}\\[5pt]
    &=\abs{\int_0^1\left(F_Y^{-1}\circ \Gn^{-1}\circ\Hn-F_Y^{-1}\circ\Gn^{-1}\right)\left(F_Y^{-1}\circ \Gn^{-1}\circ\Hn+F_Y^{-1}\circ\Gn^{-1}\right)d\widehat F_U}\\[5pt]
    &\leq 2 \max_{i=1,\ldots,n_1}|Y_i| \int_0^1 \left  \vert F_Y^{-1} \circ \Gn^{-1}\circ\Hn  - F_Y^{-1}\circ\Gn^{-1}\right \vert d\widehat F_U \\[5pt]
    &= 2\max_{i=1,\ldots,n_1}|Y_i|T_3,\\[5pt]
    &=o_P\left(n_1^{1/2}\right)T_3 \\[5pt]
    &=o_P\left(n_1^{1/2}\right)O_P\left(n_1^{-1/2}\right),
\end{align*}
where the last line follows from Lemma~\ref{lem:extrem_moments1}, and $T_3$ is defined and shown to be $O_P(N^{-1/2})=O_P(n_1^{-1/2})$ in the proof of Theorem~1. The result follows. 
\hfill$\square$
\section{Technical lemmas}\label{sec:tech_lemmas}

In Theorems~\ref{thm:an}--\ref{thm:var_est}, we use the following lemma, which is established in Proposition~1 of \cite{FALKNER2012412}.

\begin{lemma}[\cite{FALKNER2012412}'s Proposition 1]\label{lem:stieltjes_subs}
Let $M:[a,b]\to\R$ be increasing and $f:[a,b]\to\R$ be a bounded Borel function. Let $N: [M(a), M(b)]\to\R$ be increasing and right-continuous. We have
$$\int_a^bf(x)dN(M(x))=\int_{M(a)}^{M(b)}f(M^{-1}(y))dN(y).$$
\end{lemma}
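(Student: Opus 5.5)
The plan is to prove the identity at the level of measures and then pass from indicators to general $f$. On the left, $x\mapsto N(M(x))$ is increasing, so $\int_a^b f\,d(N\circ M)$ is integration against a finite Lebesgue--Stieltjes measure $\mu$ on $[a,b]$. On the right, $\nu:=dN$ is a finite Borel measure on $[M(a),M(b)]$, since $N$ is increasing and right-continuous. Because $f$ is bounded and Borel and $M^{-1}$ (the left-continuous generalized inverse of the Notation paragraph) is increasing, hence Borel, both sides are well defined and finite. The claim then reduces to
\[
\mu=\nu\circ (M^{-1})^{-1}
\]
on $[a,b]$, that is, $\mu$ is the image of $\nu$ under $y\mapsto M^{-1}(y)$. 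Once this holds, the change-of-variables formula for image measures gives $\int f\,d\mu=\int f\circ M^{-1}\,d\nu$ for every bounded Borel $f$.

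To identify the two measures, I will compare them on the $\pi$-system of intervals $[a,x]$, $x\in[a,b]$, which generates the Borel $\sigma$-field of $[a,b]$. By Dynkin's $\pi$--$\lambda$ theorem this suffices, since the total masses then agree as well (take $x=b$). The key computation is the equivalence
\[
M^{-1}(y)\le x \iff y\le M(x+),
\]
valid for $y\in(M(a),M(b)]$ and $x<b$. It follows directly from the definition $M^{-1}(y)=\inf\{x':M(x')\ge y\}$ and the monotonicity of $M$. Hence $\nu(\{y:M^{-1}(y)\le x\})=N(M(x+))-N(M(a)-)$, up to the atom of $\nu$ at $M(a)$.

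This quantity must be matched with $\mu([a,x])$, which is the increment of $N\circ M$ with the Stieltjes convention in force. This is where I expect the main obstacle. At a jump of $M$, the inverse $M^{-1}$ is constant on the gap $(M(x-),M(x+)]$, so the whole $\nu$-mass of that gap is sent to the single point $x$. This matches the atom of $N\circ M$ at $x$. At a flat stretch of $M$, the inverse $M^{-1}$ jumps, and $\mu$ correspondingly puts no mass there. Both pictures are consistent provided $N\circ M$ is read through its right-continuous version, which is the usual Lebesgue--Stieltjes convention. The boundary atom at $y=M(a)$, where $M^{-1}(M(a))\le a$, is assigned to $x=a$ and is handled in the same way. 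In the paper's use, $M=\Hn^{-1}\circ\Gn$, which is a right-continuous step function, so these conventions reduce to matching jump sizes at the $\xi_{(i)}$.

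With the measures identified, I will extend the identity in three steps: first to indicators of Borel sets via the monotone class theorem, then to simple functions by linearity, and finally to bounded Borel $f$ by uniform approximation together with the finiteness of $\mu$ and $\nu$. This completes the argument, which is exactly Proposition 1 of \cite{FALKNER2012412}.
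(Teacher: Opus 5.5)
The paper gives no proof of this lemma; it only cites Proposition~1 of \cite{FALKNER2012412}. Your argument is correct and is the standard measure-theoretic proof of that result. The equivalence $M^{-1}(y)\le x\iff y\le M(x+)$ for $x<b$ identifies $d(N\circ M)$ with the image of $dN$ under $M^{-1}$ on the $\pi$-system $\{[a,x]\}$, and the monotone class theorem does the rest.

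Two points are worth tightening.

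First, your heuristic discussion of jumps and flat stretches of $M$ can be replaced by a one-line computation. As $x'\downarrow x$, $M(x')\downarrow M(x+)$, so right-continuity of $N$ gives $(N\circ M)(x+)=N(M(x+))$. Hence $\mu([a,x])=\nu([M(a),M(x+)])=\nu(\{y:M^{-1}(y)\le x\})$ directly, and this is exactly where right-continuity of $N$ is needed.

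Second, the endpoint convention is not cosmetic. $M^{-1}$ sends $(M(a),M(a+)]$ to the point $a$. So with half-open domains $(a,b]$ and $(M(a),M(b)]$, the right-hand side contains an extra term $f(a)\,[N(M(a+))-N(M(a))]$ that the left-hand side lacks, and the identity then requires $M(a+)=M(a)$. Your closed-interval convention avoids this.

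In the paper's application the right-hand integral is over $(\mathcal I_0\cup\mathcal I_1)^c=(M(a),M(b))$, with $M(a)=\Hn^{-1}(n_1^{-1})$ and $M(b)=\zeta_{(n_3)}$. This is harmless there: $M=\Hn^{-1}\circ\Gn$ is right-continuous, and $\widehat F_U$ almost surely has no atom at $\zeta_{(n_3)}$.
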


\begin{lemma}~\label{lem:extrem_moments1}
Suppose that Assumption~\ref{ass:smoothness1}\ref{ass:smoothness12} holds. Then,
\begin{align*}
E\left[\left\vert Y_{(1)}\right\vert\right]=O\left(n_1^{d_1}\right), \quad E\left[\left\vert Y_{(1)}\right\vert^2\right]=O\left(n_1^{2d_1}\right), \quad   E\left[\left\vert Y_{(n_1)}\right\vert\right]=O\left(n_1^{d_2}\right), \quad  E\left[\left\vert Y_{(n_1)}\right\vert^2\right]=O\left(n_1^{2d_2}\right).
\end{align*}
\end{lemma}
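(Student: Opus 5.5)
We need to prove Lemma~\ref{lem:extrem_moments1}: bounds on the first and second moments of the extreme order statistics $Y_{(1)}$ and $Y_{(n_1)}$ under condition \eqref{eq:cond_quant_Y}.

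The plan is to pass to uniform order statistics. Since $F_Y$ is continuous, $Y_{(1)} \overset{d}{=} F_Y^{-1}(\xi_{(1)})$ and $Y_{(n_1)} \overset{d}{=} F_Y^{-1}(\xi_{(n_1)})$. Here $\xi_{(1)}\sim{\rm Beta}(1,n_1)$ has density $n_1(1-t)^{n_1-1}$ on $(0,1)$, and $\xi_{(n_1)}\sim{\rm Beta}(n_1,1)$ has density $n_1t^{n_1-1}$. By \eqref{eq:cond_quant_Y}, for $p\in\{1,2\}$,
\begin{equation*}
E\big[|Y_{(1)}|^p\big]\le C_Y^p\, n_1\int_0^1 t^{-pd_1}(1-t)^{-pd_2}(1-t)^{n_1-1}\,dt = C_Y^p\, n_1\,{\rm B}\big(1-pd_1,\; n_1-pd_2\big).
\end{equation*}
The integral is finite as soon as $pd_1<1$ and $n_1>pd_2$. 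The first condition holds because Assumption~\ref{ass:smoothness1}\ref{ass:smoothness14} gives $d_1<1/2$, hence $2d_1<1$. The second holds for $n_1$ large enough, since $d_2<1/2$.

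Next we apply the standard asymptotic ${\rm B}(a,m)=\Gamma(a)\Gamma(m)/\Gamma(a+m)\sim\Gamma(a)\,m^{-a}$ as $m\to\infty$ with $a$ fixed. This gives
\begin{equation*}
n_1\,{\rm B}(1-pd_1,\,n_1-pd_2)\sim \Gamma(1-pd_1)\, n_1^{pd_1},
\end{equation*}
so $E[|Y_{(1)}|^p]=O(n_1^{pd_1})$. The case of $Y_{(n_1)}$ is symmetric. Substituting $t\mapsto 1-t$ turns the bound into $C_Y^p\,n_1\,{\rm B}(1-pd_2,\,n_1-pd_1)=O(n_1^{pd_2})$.

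If one prefers to avoid Gamma-function asymptotics, split the integral at $t=1/2$. On $[1/2,1)$ the integrand is at most $2^{pd_1}(1-t)^{n_1-1-pd_2}$, which contributes $O(2^{-n_1})$ after multiplying by $n_1$. On $(0,1/2)$ the factor $(1-t)^{-pd_2}$ is bounded, and $n_1\int_0^{1/2} t^{-pd_1}e^{-(n_1-1)t}\,dt\lesssim n_1^{pd_1}\Gamma(1-pd_1)$ after the substitution $s=n_1t$.

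The only step needing care is checking that the exponents keep the Beta integrals finite, which is where $d_k<1/2$ (and hence $2d_k<1$) is used. Everything else is routine.
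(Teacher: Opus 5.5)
Your proposal is correct and follows the paper's own route: write $Y_{(1)}=F_Y^{-1}(\xi_{(1)})$, bound it using \eqref{eq:cond_quant_Y}, recognize the integral as a Beta function, and use the Gamma-ratio asymptotic to get $n_1\cdot n_1^{-(1-pd_1)}=n_1^{pd_1}$. You also note that finiteness requires $pd_k<1$, which you take from Assumption~\ref{ass:smoothness1}\ref{ass:smoothness14}; the paper leaves this implicit, and it matters because the lemma as stated invokes only Assumption~\ref{ass:smoothness1}\ref{ass:smoothness12}, under which the second-moment bound can fail when $d_1\ge 1/2$.
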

\noindent\textbf{Proof:} We prove only the first result, as the second, third and fourth are analogous. We recall that the probability density function of the minimum of an i.i.d. $n_1$-sample of uniformly distributed random variables on $[0,1]$ is $f_{\xi_{(1)}}(u)=n_1(1-u)^{n_1-1}\mathbf{1}\{u\in[0,1]\}$. We have
\begin{align*}
    E\left[\left\vert Y_{(1)}\right\vert\right] & =  E\left[\left\vert F_Y^{-1}(\xi_{(1)}) \right\vert\right] \\[5pt]
    & \lesssim n_1\int_0^1u^{-d_1}(1-u)^{-d_2}(1-u)^{n_1-1}du \\[5pt]
    & = n_1\int_0^1v^{n_1-1}v^{-d_2}(1-v)^{-d_1}dv \\[5pt]
    &= n_1E\left[\left(\text{Beta}(1-d_2,1-d_1)\right)^{n_1-1}\right] \\[5pt]
    &=n_1 \frac{\Gamma(n_1-d_2)\Gamma(2-d_1-d_2)}{\Gamma(1-d_2)\Gamma(1-d_1+n_1-d_2)} \\[5pt]
    &= n_1O\left(n_1^{-(1-d_1)}\right) \\[5pt]
    &=O\left(n_1^{d_1}\right).
\end{align*}
\hfill $\square$

\begin{lemma}~\label{lem:extrem_moments2}
Suppose that Assumption~\ref{ass:smoothness1}\ref{ass:smoothness13} hold. Then,
\begin{align*}
E\left[F_U\big(\xi_{(1)}\big)^2\right]=O\left(n_1^{2(1-b_1)}\right) \quad \text{ and } \quad E\left[\left(1-F_U\big(\xi_{(n_1)}\big)\right)^2\right]=O\left(n_1^{2(1-b_2)}\right).
\end{align*}
\end{lemma}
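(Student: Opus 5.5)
As literally stated the bound is trivial: $0\le F_U\le 1$ and $n_1^{2(1-b_1)}\ge 1$ because $b_1<1/2$. I will therefore prove the sharper bound that the proof of Theorem~\ref{thm:an} actually needs (the exponent in the statement appears to be missing a minus sign):
\[
E\left[F_U\big(\xi_{(1)}\big)^2\right]=O\left(n_1^{-2(1-b_1)}\right),\qquad E\left[\left(1-F_U\big(\xi_{(n_1)}\big)\right)^2\right]=O\left(n_1^{-2(1-b_2)}\right).
\]
This sharper bound implies the displayed one. The plan has two steps: a deterministic tail bound on $F_U$ obtained from Assumption~\ref{ass:smoothness1}\ref{ass:smoothness13}, followed by an exact beta-moment computation for the extreme order statistics of uniforms, as in the proof of Lemma~\ref{lem:extrem_moments1}.

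\textbf{Step 1 (deterministic bound).} By \eqref{eq:controle_fU}, for $u\le 1/2$ we have
\[
F_U(u)=\int_0^u f_U(s)\,ds\le C_U 2^{b_2}\int_0^u s^{-b_1}ds=\frac{C_U2^{b_2}}{1-b_1}\,u^{1-b_1}.
\]
For $u>1/2$, $F_U(u)\le 1\le 2^{1-b_1}u^{1-b_1}$. Hence $F_U(u)\lesssim u^{1-b_1}$ for all $u\in[0,1]$. Symmetrically, $1-F_U(u)\lesssim (1-u)^{1-b_2}$ on $[0,1]$.

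\textbf{Step 2 (beta moments).} Since $\xi_{(1)}\sim{\rm Beta}(1,n_1)$, with density $n_1(1-u)^{n_1-1}$ on $[0,1]$, Step 1 gives
\[
E\left[F_U\big(\xi_{(1)}\big)^2\right]\lesssim n_1\int_0^1 u^{2(1-b_1)}(1-u)^{n_1-1}du = n_1{\rm B}\big(3-2b_1,n_1\big)=\frac{\Gamma(3-2b_1)\Gamma(n_1+1)}{\Gamma(n_1+3-2b_1)}.
\]
Stirling's formula (equivalently $\Gamma(n+a)/\Gamma(n+c)\sim n^{a-c}$) shows that the right-hand side is $O(n_1^{-2(1-b_1)})$. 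For the maximum, $1-\xi_{(n_1)}$ has the same ${\rm Beta}(1,n_1)$ law as $\xi_{(1)}$, so the same computation with $b_2$ in place of $b_1$ gives the second bound.

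No step is a real obstacle. The only care needed is to make the tail bound in Step 1 hold on all of $[0,1]$ rather than only near $0$, so that it can be integrated against the full beta density without splitting the integral.
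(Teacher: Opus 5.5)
Your proof is correct. You are also right that the exponent in the statement has a sign slip. The paper's own computation ends with $\bigl(\int_0^1(1-u)^{n_1/2}f_U(u)\,du\bigr)^2$, which is of order $n_1^{-2(1-b_1)}$, and its last line carries the same typo. The negative exponent is what is needed where the lemma is used: Cauchy--Schwarz against $E[|Y_{(1)}|^2]=O(n_1^{2d_1})$ from Lemma~\ref{lem:extrem_moments1} gives a bound of order $\sqrt{N}\,n_1^{d_1-(1-b_1)}\to 0$. The displayed positive exponent would be useless there.

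Your route differs somewhat from the paper's.
\begin{itemize}
\item The paper never bounds $F_U$ pointwise. It writes $F_U(u)^2$ as a double integral of $f_U(s)f_U(t)$ and applies Fubini to integrate $u$ out against the density of $\xi_{(1)}$, giving $\int\!\!\int(1-s\vee t)^{n_1}f_U(s)f_U(t)\,ds\,dt$. It then uses $(1-s\vee t)^{n_1}\le(1-s)^{n_1/2}(1-t)^{n_1/2}$ to factor this into the square of a one-dimensional integral, which it bounds by a beta moment via \eqref{eq:controle_fU}.
\item You first prove the deterministic bound $F_U(u)\lesssim u^{1-b_1}$ on all of $[0,1]$. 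You then compute one exact beta moment, $n_1{\rm B}(3-2b_1,n_1)$.
\end{itemize}

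The two arguments give the same rate under the same conditions. Both implicitly need $b_1,b_2<1$, which the lemma's stated hypothesis does not supply but Assumption~\ref{ass:smoothness1}\ref{ass:smoothness14} does. Yours is shorter and avoids both Fubini and the product inequality. The paper's works directly with the density bound and does not need the patching of the tail bound away from $0$, but it buys nothing extra here.
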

\noindent\textbf{Proof:} We prove only the first result, as the second is analogous.
\begin{align*}
    E\left[F_U\big(\xi_{(1)}\big)^2\right]& = n_1\int_0^1 F_U(u)^2(1-u)^{n_1-1}du \\[5pt]
    & =n_1\int_0^1\left(\int_0^uf_U(t)dt\right)^2(1-u)^{n_1-1}du \\[5pt]
    & = n_1\int_0^1\int_0^1\left(\int_{s\lor t}^1(1-u)^{n_1-1}du\right)f_U(s)f_U(t)dsdt
\end{align*}
where the last equality follows by Fubini--Tonelli’s theorem. Then,
\begin{align*}
    E\left[F_U\big(\xi_{(1)}\big)^2\right]& = n_1\int_0^1\int_0^1\left[\frac{-(1-u)^{n_1}}{n_1}\right]^{u=1}_{u={s\lor t}}f_U(s)f_U(t)dsdt \\[5pt]
    &= \int_0^1\int_0^1(1-s\lor t)^{n_1}f_U(s)f_U(t)dsdt \\[5pt]
    &\leq \int_0^1\int_0^1(1-s)^{n_1/2}(1-t)^{n_1/2}f_U(s)f_U(t)dsdt \\[5pt]
    &=\left(\int_0^1(1-u)^{n_1/2}f_U(u)du\right) \\[5pt]
    &=\left(E\left[\text{Beta}\big(1-b_2,1-b_1\big)^{n_1/2}\right]\right)^2 \\[5pt]
    &\lesssim n_1^{2(1-b_1)},
\end{align*}
where the first inequality follows from $(1-s\lor t)^{n_1}\leq(1-s)^{n_1/2}(1-t)^{n_1/2}$ for all $s,t\in(0,1)$.
\hfill $\square$

\begin{lemma} \label{lem:int_FY}
Suppose that Assumption~\ref{ass:smoothness1}\ref{ass:smoothness12} holds and that $a_1 > d_1$ and $a_2 > d_2$, then $\int_0^1 x^{a_1}(1-x)^{a_2} \, dF_Y^{-1}(x)<\infty$.
\end{lemma}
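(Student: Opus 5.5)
The plan is to split the integral at $1/2$, treat each half as a Stieltjes integral against a monotone function, and integrate by parts so that everything reduces to bounds on $|F_Y^{-1}|$. By symmetry it suffices to show $\int_0^{1/2} x^{a_1}\,dF_Y^{-1}(x)<\infty$ when $a_1>d_1$; the piece on $[1/2,1)$ is handled identically with $1-x$ and $d_2$. On $[1/2,1)$ the factor $x^{a_1}$ is bounded above and below, and symmetrically on $(0,1/2]$ for $(1-x)^{a_2}$, so these factors can be dropped up to constants.

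Since $F_Y^{-1}$ is nondecreasing, $dF_Y^{-1}$ is a nonnegative measure. For $0<\delta<1/2$, integration by parts (Lebesgue--Stieltjes, with $x^{a_1}$ continuous) gives
\begin{equation*}
\int_\delta^{1/2} x^{a_1}\,dF_Y^{-1}(x) = (1/2)^{a_1}F_Y^{-1}(1/2) - \delta^{a_1}F_Y^{-1}(\delta) - a_1\int_\delta^{1/2} x^{a_1-1}F_Y^{-1}(x)\,dx .
\end{equation*}
Endpoint conventions (left-continuity of $F_Y^{-1}$) only affect atoms at $\delta$ and $1/2$; they can be absorbed by taking limits along continuity points.

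The boundary term $\delta^{a_1}F_Y^{-1}(\delta)$ is handled with \eqref{eq:cond_quant_Y}. For $\delta\le 1/2$ it gives $|F_Y^{-1}(\delta)|\lesssim \delta^{-d_1}$, so $|\delta^{a_1}F_Y^{-1}(\delta)|\lesssim \delta^{a_1-d_1}\to 0$. For the integral term, $|x^{a_1-1}F_Y^{-1}(x)|\lesssim x^{a_1-d_1-1}$, which is integrable on $(0,1/2]$ because $a_1-d_1>0$. Therefore the right-hand side stays bounded as $\delta\downarrow 0$.

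Since the left-hand side is nondecreasing in $\delta\downarrow0$ (nonnegative integrand, nonnegative measure), monotone convergence gives finiteness of $\int_0^{1/2}x^{a_1}(1-x)^{a_2}\,dF_Y^{-1}(x)$. Combining this with the symmetric argument near $1$ yields the lemma.

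The only delicate point is the bookkeeping of the integration by parts at the endpoints, which can be avoided entirely by a Fubini alternative. Write $x^{a_1}=a_1\int_0^x s^{a_1-1}\,ds$ and swap the order of integration:
\begin{equation*}
\int_0^{1/2} x^{a_1}\,dF_Y^{-1}(x)=a_1\int_0^{1/2}s^{a_1-1}\bigl(F_Y^{-1}(1/2)-F_Y^{-1}(s)\bigr)\,ds,
\end{equation*}
up to an atom at $s$ of measure zero in $ds$. This is finite by the same bound $|F_Y^{-1}(s)|\lesssim s^{-d_1}$.

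I expect this route to be the cleanest, and I would use it as the main argument.
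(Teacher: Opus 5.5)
Your proof is correct, and it takes a different route from the paper's. The paper first rewrites the integral on the $y$-scale, $\int_0^1 x^{a_1}(1-x)^{a_2}\,dF_Y^{-1}(x)=\int_{\R}F_Y(u)^{a_1}(1-F_Y(u))^{a_2}\,du$. It then inverts \eqref{eq:cond_quant_Y} into a power-tail bound on the cdf, $F_Y(u)\lesssim |u|^{-1/d_1}$ as $u\to-\infty$ (and similarly for $1-F_Y$), so integrability follows from $a_1/d_1>1$.

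You instead stay on the probability scale. Integration by parts, or Tonelli with $x^{a_1}=a_1\int_0^x s^{a_1-1}ds$, turns the Stieltjes integral into $a_1\int_0^{1/2}s^{a_1-1}\bigl(F_Y^{-1}(1/2)-F_Y^{-1}(s)\bigr)ds$. The hypothesis is then used exactly as stated, $|F_Y^{-1}(s)|\lesssim s^{-d_1}$, and integrability comes from $a_1-d_1>0$. Of your two versions, the Tonelli one is the cleaner, since everything is nonnegative. The only convention issues (left- versus right-continuity of $F_Y^{-1}$, a possible atom at $1/2$) alter the integrand on a Lebesgue-null set or change the integral by a finite amount.

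Your route buys directness: you do not invert the quantile bound into a tail bound for $F_Y$, and you avoid the paper's case split in $u$. The paper's route buys the representation $\int_{\R}F_Y^{a_1}\bar F_Y^{a_2}\,du$ itself, which is the form in which the lemma is invoked later (e.g.\ in the proof of Lemma~\ref{lem:int_FY_FY'}). With your argument, that change-of-variables identity would have to be supplied separately, although it is standard.
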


\noindent\textbf{Proof:} First, we have
\begin{equation*}
    \int_0^1 x^{a_1}(1-x)^{a_2} \, dF^{-1}_Y(x)=\int_{\R} F_Y(u)^{a_1}(1-F_Y(u))^{a_2} \, du.
\end{equation*}
By Assumption~\ref{ass:smoothness1}\ref{ass:smoothness12}, for all $u\in \R$:
\begin{equation*}
    |u|\leq C_Y F_Y(u)^{-d_1}(1-F_Y(u))^{-d_2}.
\end{equation*}
Fix $\eps>0$. Then, for all $u\leq -1\wedge F_Y^{-1}(\eps)$, $F_Y(u) \leq C_Y^{1/d_1}(1-\eps)^{-d_2/d_1}|u|^{-1/d_1}$. Thus:
\begin{equation*}
\int_{-\infty}^{-1\wedge F_Y^{-1}(\eps)} F_Y(u)^{a_1}(1-F_Y(u))^{a_2}\, du \leq C_Y^{a_1/d_1}(1-\eps)^{-a_1d_2/d_1} \int_{-\infty}^{-1\wedge F_Y^{-1}(\eps)}|u|^{-a_1/d_1} \, du<\infty,
\end{equation*}
since $d_1<a_1$. A similar reasoning shows that $\int_{1\vee F_Y^{-1}(1-\eps)}^\infty F_Y(u)^{a_1}(1-F_Y(u))^{a_2} \, du<\infty$, using $d_2<a_2$. \hfill$\square$

\begin{lemma}\label{lem:useful_lem}
(Bounds on moments involving $F_U$) Suppose that Assumption \ref{ass:smoothness1} holds and  random variables $Q_n(x)\in\R$ and $B_n(x)\in\{0,1\}$ satisfy, for some $0<\delta<1/2$ and all $0<x< \delta$, $E[B_n(x)|Q_n(x) - x|] \lesssim x$ and $\Pr(Q_n(x)> 1/2,B_n(x)=1)\lesssim x^{1-b_1}$. Then, for such $x\in(0, \delta)$, $E[B_n(x)|F_U(Q_n(x))-F_U(x)|]\lesssim x^{1-b_1}$. The latter inequality holds with $(1-x)^{1-b_2}$ if we replace $x$ by $1-x$, using possibly another $\delta$.
\end{lemma}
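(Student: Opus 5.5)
The plan is to split the expectation according to whether $Q_n(x)$ stays in the left half of $[0,1]$ or not, and to control $F_U$ on $(0,1/2]$ using the bound $f_U(u)\le C_U u^{-b_1}(1-u)^{-b_2}$ from Assumption~\ref{ass:smoothness1}\ref{ass:smoothness13}. On $(0,1/2]$ the factor $(1-u)^{-b_2}$ is at most $2^{b_2}$, so $f_U(u)\lesssim u^{-b_1}$ there. Since $F_U$ is nondecreasing with values in $[0,1]$, we may assume $Q_n(x)\in[0,1]$: replacing $Q_n(x)$ by its projection onto $[0,1]$ changes neither $F_U(Q_n(x))$ nor increases $|Q_n(x)-x|$, because $x\in(0,1)$. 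We then write
\[
E[B_n(x)|F_U(Q_n(x))-F_U(x)|]\le E[B_n(x)|F_U(Q_n(x))-F_U(x)|\mathds 1\{Q_n(x)\le 1/2\}]+\Pr(Q_n(x)>1/2,B_n(x)=1).
\]
The second term is $\lesssim x^{1-b_1}$ by hypothesis, using $|F_U(Q_n)-F_U(x)|\le 1$.

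For the first term, on $\{Q_n(x)\le1/2\}$ both $x$ and $Q_n(x)$ lie in $[0,1/2]$. We have
\[
|F_U(Q_n)-F_U(x)|\lesssim \Big|\int_x^{Q_n} u^{-b_1}du\Big| = \frac{|Q_n^{1-b_1}-x^{1-b_1}|}{1-b_1}.
\]
The key elementary inequality is that for $a,x\ge0$ and $\gamma:=1-b_1\in(1/2,1)$, $|a^\gamma-x^\gamma|\le x^{\gamma-1}|a-x|$. This follows by concavity, since the mean value point lies between $a$ and $x$. Alternatively, one can split into two cases. If $a\ge x$, concavity gives $a^\gamma-x^\gamma\le \gamma x^{\gamma-1}(a-x)$. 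If $a<x$, then $x^\gamma-a^\gamma\le (x-a)^\gamma$ by subadditivity, and $(x-a)^\gamma = (x-a)(x-a)^{\gamma-1}$; this is bounded by $x^{\gamma-1}(x-a)$ only when $x-a$ is comparable to $x$. So it is safer to use the direct bound $x^\gamma-a^\gamma = \int_a^x\gamma u^{\gamma-1}du\le$ ... Here a cleaner route is to note that for $a<x$, $x^\gamma-a^\gamma\le x^{\gamma-1}(x-a)$ holds because $x^\gamma - a^\gamma = x^{\gamma}(1-(a/x)^\gamma)\le x^\gamma(1-a/x)$, using $r^\gamma\ge r$ for $r\in[0,1]$. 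In either case we get $|F_U(Q_n)-F_U(x)|\lesssim x^{-b_1}|Q_n(x)-x|$.

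Taking expectations with $B_n(x)$ gives
\[
E[B_n(x)|F_U(Q_n(x))-F_U(x)|\mathds 1\{Q_n(x)\le1/2\}]\lesssim x^{-b_1}E[B_n(x)|Q_n(x)-x|]\lesssim x^{1-b_1},
\]
by the first hypothesis. Summing the two pieces yields the claim, with constants depending only on $C_U,b_1,b_2$ and the implicit constants in the hypotheses, uniformly in $n$ and $x\in(0,\delta)$. The case near $1$ is symmetric: apply the argument to $1-Q_n$, $1-x$ and $1-U$, with $b_2$ in place of $b_1$, which may require another $\delta$. The only mildly delicate step is the elementary Hölder-type inequality for $a<x$, which the ratio argument settles; everything else is routine.
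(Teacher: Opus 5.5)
Your argument is correct and follows the paper's proof: you split off the event $\{Q_n(x)>1/2,\,B_n(x)=1\}$ and control it by the hypothesis, and on $[x,1/2]$ you use $f_U(u)\lesssim u^{-b_1}\le x^{-b_1}$ together with $E[B_n(x)|Q_n(x)-x|]\lesssim x$. The only difference is the case $Q_n(x)<x$, where the paper simply bounds $|F_U(x)-F_U(Q_n(x))|\le F_U(x)\lesssim x^{1-b_1}$; this makes your chord-slope inequality unnecessary there (its first, mean-value justification points the wrong way for $a<x$, but your ratio argument using $r^\gamma\ge r$ repairs it).
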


\noindent\textbf{Proof:} First, remark that for $x<1/2$, $F_U(x)\lesssim x^{1-b_1}$. Then,
\begin{align*}
& E[B_n(x)|F_U(x)-F_U(Q_n(x))|] \\ & \quad \leq  E[\ind{x>Q_n(x)}B_n(x)|F_U(x)-F_U(Q_n(x))|] + \Pr(Q_n(x)> 1/2,B_n(x)=1) \\[5pt]
&\qquad  + E\left[\ind{Q_n(x) \in [x,1/2]}B_n(x)|F_U(x)-F_U(Q_n(x))|\right]  \\[5pt]
 & \quad \lesssim F_U(x) + x^{1-b_1} + E\left[\ind{Q_n(x) \in [x,1/2]}B_n(x)|F_U(x)-F_U(Q_n(x))|\right] \\[5pt]
& \quad \lesssim x^{1-b_1} + E\left[\ind{Q_n(x) \in [x,1/2]} B_n(x)|F_U(x)-F_U(Q_n(x))|\right].
\end{align*}
Now, if $Q_n(x) \in [x,1/2]$, by the mean value theorem, there exists $X_n \in (x,1/2)$ such that
$$F_U(x)-F_U(Q_n(x))=f_U(X_n) (x-Q_n(x)).$$
Moreover, by Assumption \ref{ass:smoothness1} and $x<\delta$, $f_U(X_n) \lesssim x^{-b_1}$. Then, using $E[B_n(x)|Q_n(x) - x|] \lesssim x$, 
$$E\left[\ind{Q_n(x) \in [x,1/2]}  B_n(x)|F_U(x)-F_U(Q_n(x))|\right] \lesssim x^{1-b_1}.$$
The result follows. \hfill$\square$

\begin{lemma} (Properties of $\Hn^{-1} \circ \Gn$)
There exists $\delta\in (0,1/2)$ and $n_0\in\N$ such that for all  $0<x< \delta$ and all $n\ge n_0$,  
\begin{align}
E\left [\mathds 1\{\xi_{(1)}<  x<\xi_{(n_1)}\}\left|\Hn^{-1}\circ\Gn(x) - x \right| \right] & \lesssim  x.
\label{eq:cond1_qq}
\end{align}
Moreover, for any $\eta>0$, there exists $n_0'$ such that for all $n\geq n_0'$ and for all $0<x< \delta$, 
\begin{equation}
\Pr(\Hn^{-1}\circ\Gn(x)> 1/2, \;  \xi_{(1)}<  x<\xi_{(n_1)})\lesssim  x^{1-\eta}.
\label{eq:cond2_qq}
\end{equation}
Inequalities \eqref{eq:cond1_qq}--\eqref{eq:cond2_qq} hold if we replace $x$ by $1-x$, using possibly another $\delta$ and $n_0$.
\label{lem:HnGn}
\end{lemma}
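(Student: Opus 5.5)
We need to prove Lemma~\ref{lem:HnGn}. The plan is to condition on the $\xi$-sample, which is independent of the $\zeta$-sample by Assumption~\ref{ass:sampling_est}, and to use the exact distribution of order statistics of uniforms. On the event $\{\xi_{(1)}<x<\xi_{(n_1)}\}$ we have $\Gn(x)=K/n_1$ with $K:=\#\{i:\xi_i\le x\}\in\{1,\dots,n_1-1\}$, and $K\sim\text{Bin}(n_1,x)$. Moreover, $\Hn^{-1}(k/n_1)=\zeta_{(\lceil k n_3/n_1\rceil)}$. Conditionally on $K=k$, this is a $\text{Beta}(m_k,n_3+1-m_k)$ variable with $m_k:=\lceil kn_3/n_1\rceil$. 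Its mean is $m_k/(n_3+1)$, and $m_k/(n_3+1)\le k/n_1+1/n_3$.

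For \eqref{eq:cond1_qq} I split
\[
\left|\Hn^{-1}\circ\Gn(x)-x\right|\le\left|\zeta_{(m_K)}-\tfrac{m_K}{n_3+1}\right|+\left|\tfrac{m_K}{n_3+1}-\tfrac{K}{n_1}\right|+\left|\tfrac{K}{n_1}-x\right|.
\]
The third term has expectation at most $\sqrt{x/n_1}$. The second is bounded by $C/n_3\lesssim 1/n_1$, since $\lambda_1,\lambda_3>0$. For the first, I use the known mean-absolute-deviation bound for beta variables, $E|B-EB|\le\sqrt{\mathrm{Var}(B)}\lesssim\sqrt{m_k}/n_3$. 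Averaging over $K$ and using Jensen gives $\lesssim\sqrt{(n_3/n_1)E[K]+1}/n_3\lesssim\sqrt{x/n_1}+1/n_1$.

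So the whole expectation is $\lesssim\sqrt{x/n_1}+n_1^{-1}$ on the event. The main obstacle is that this is $\lesssim x$ only when $x\gtrsim 1/n_1$. For $x<1/n_1$ I instead use the indicator: $\Pr(\xi_{(1)}<x)\le n_1x$. On that event $K\ge1$, so I bound the expectation by $E[\mathds 1\{K\ge1\}(\Hn^{-1}(K/n_1)+x)]$. Since $E[\Hn^{-1}(k/n_1)]\lesssim k/n_1$, this is $\lesssim E[K]/n_1+x\lesssim x$. In fact this cruder bound, $E[\mathds 1\{K\ge1\}\,\Hn^{-1}(K/n_1)]\lesssim E[K/n_1]+\Pr(K\ge1)/n_3\lesssim x$, works uniformly in $x$, because $\Pr(K\ge 1)\le n_1x$. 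That may well be the cleanest route: bound $|Q-x|\le Q+x$ directly, with no variance computation. I would use it.

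For \eqref{eq:cond2_qq}, on the event and for $x<\delta\le 1/4$, the event $\Hn^{-1}(K/n_1)>1/2$ requires either $K/n_1>1/4$ or $\Hn^{-1}(k/n_1)>1/2$ with $k/n_1\le 1/4$. The first has probability at most $\exp(-cn_1)$ by Chernoff/Hoeffding, since $x<1/8$. The second is a beta tail deviation of constant size, also at most $\exp(-c n_3)$, multiplied by $\Pr(K\ge1)\le n_1x$. So the probability is $\lesssim n_1x\,e^{-cN}+e^{-cn_1}\mathds 1\{\ldots\}$. To turn $e^{-cn_1}$ into $x^{1-\eta}$, I keep the factor $\Pr(K\ge1)$ in the first piece as well: $\Pr(K/n_1>1/4)\le\min(n_1x,e^{-cn_1})\le (n_1x)^{1-\eta}e^{-c\eta n_1}\lesssim x^{1-\eta}$ for $n_1\ge n_0'$. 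The same interpolation handles the beta tail term.

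Finally, the statements with $1-x$ follow by the symmetry $\xi\mapsto1-\xi$, $\zeta\mapsto1-\zeta$, up to the left/right-continuity conventions of the inverses, which change indices by at most one and are absorbed into the constants.
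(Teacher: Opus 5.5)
Your proof is correct. For \eqref{eq:cond1_qq} it takes a genuinely simpler route than the paper; for \eqref{eq:cond2_qq} it is close to the paper's argument.

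For \eqref{eq:cond1_qq}, the paper centers at $\lceil n_3\Gn(x)\rceil/(n_3+1)$. It bounds the conditional mean absolute deviation of the Beta variable $\Hn^{-1}\circ\Gn(x)$ using its exact formula and a Stirling-type lower bound on $1/{\rm B}(\cdot,\cdot)$. It then controls the centering error by a separate binomial computation when $n_3x\le 1$, and by Cauchy--Schwarz when $n_3x>1$.

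Your final choice is $|Q-x|\le Q+x$ together with
\[
E[\mathds 1\{K\ge 1\}\,\zeta_{(m_K)}]\le \frac{E[K]}{n_1}+\frac{\Pr(K\ge 1)}{n_3+1}\le x+\frac{n_1x}{n_3+1}.
\]
This uses only the Beta mean, $\Pr(K\ge1)\le n_1x$ and $n_1\lesssim n_3$. It holds for every $x\in(0,1)$ and every $n$, so $\delta$ and $n_0$ are not even needed there. Nothing is lost: the paper itself ends up bounding the Beta deviation by $u/n_3$ rather than $\sqrt{u}/n_3$, so its centering is never exploited for the stated rate $x$. Your preliminary three-term split is therefore unnecessary. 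It would only pay off if one wanted the sharper bound $\sqrt{x/n_1}+1/n_1$ for $x\gtrsim 1/n_1$.

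For \eqref{eq:cond2_qq}, the paper rewrites the event as $\{\Gn(x)>\Hn(1/2)\}$. It applies Kiefer's inequality conditionally on $\Hn(1/2)$ and Hoeffding's inequality to $\Hn(1/2)$. It then splits according to whether $x\gtrless e^{A-Cn_1}$, using $\Pr(\Gn(x)\ge 1/n_1)\le n_1x$ and $-\ln x\lesssim x^{-\eta}$ for tiny $x$. You instead use the deterministic threshold $K/n_1>1/4$, and use independence of the two samples to factor the Beta tail against $\Pr(K\ge 1)$. The interpolation $\min(n_1x,e^{-cn_1})\le (n_1x)^{1-\eta}e^{-c\eta n_1}$ then replaces the case split in one line, with the same dependence of the constant on $\eta$.

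Your treatment of the $1-x$ versions, via $K'=n_1-K$ and $1-\zeta_{(m)}$ with index shifts of at most one, is at the same level of detail as the paper, which simply asserts it.
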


\noindent\textbf{Proof of Lemma \ref{lem:HnGn}} 
Observe that for a given $x\in [0,1]$, we have $E[\Gn(x)]=x$. Recall that $\mathds 1_{\mathcal A_{n_1}(x)}:=\mathds 1\{\xi_{(1)}<  x<\xi_{(n_1)}\}$ We now establish~\eqref{eq:cond1_qq}.  By the triangle inequality,
\begin{align}
   & E\left [\mathds 1_{\mathcal A_{n_1}(x)}\left|\Hn^{-1}\circ\Gn(x) - x\right|\right] \notag \\[5pt]
   & \leq  E\left [\mathds 1_{\mathcal A_{n_1}(x)}\left|\Hn^{-1}\circ\Gn(x) - \frac{\lceil n_3 \Gn(x)\rceil }{n_3+1} \right| \right] + E\left[\left | \frac{\lceil n_3 \Gn(x)\rceil }{n_3+1} - x\right|\right].\label{eq:mad_ineq}
\end{align}
Consider the second term first. Suppose first that $n_3 x\leq 1$. Let $\lambda=n_3/n_1$ and $B\sim$Binomial$(n_1,x)$. By the triangle inequality, we have
\begin{equation}
E\left[\left | \frac{\lceil n_3 \Gn(x)\rceil }{n_3+1} - x\right|\right] \leq \frac{E\left[\big|\lceil \lambda B\rceil - \lambda n_1 x\big|\right] + x}{n_3+1}.    
    \label{eq:1st_term_x_sm1}
\end{equation}
Since $\lambda n_1 x\leq 1$, $1\leq \lceil \lambda k\rceil \leq \lambda k +1$ for $k\in\{1,2,\ldots,\}$, and $\Pr(B>0)\leq n_1 x$, we get
\begin{align}
    & E[|\lceil \lambda B\rceil - \lambda n_1 x|] \notag \\[5pt]
    & = \lambda n_1 x \Pr(B=0) + \sum_{k=1}^{n_1} \left(\lceil \lambda k\rceil - \lambda n_1 x\right)\Pr(B=k) \notag \\[5pt]
    & \leq \lambda n_1 x(1-\Pr(B>0)) + \sum_{k=1}^{n_1} (\lambda k + 1 - \lambda n_1 x) \Pr(B=k) \notag \\[5pt]
    &= \lambda n_1 x(1-\Pr(B>0)) + \lambda n_1x +\Pr(B>0) -\lambda n_1x\Pr(B>0) \notag \\[5pt]
    & =  n_1 x(2\lambda-(\lambda+1)\Pr(B>0)) +  \Pr(B>0)\notag \\[5pt]
    & \leq (2\lambda +1) n_1 x. \label{eq:1st_term_x_sm2}
\end{align}
Combining \eqref{eq:1st_term_x_sm1}--\eqref{eq:1st_term_x_sm2} and $n_1\lesssim n_3$ yields
$$E\left[\left | \frac{\lceil n_3 \Gn(x)\rceil }{n_3+1} - x\right|\right] \lesssim x.$$
Now, suppose that $n_3 x>1$. Since $\left | \lceil a \rceil - a \right|\leq 1$ for all $a \in \mathbb R_+$, the triangle inequality implies
\begin{align*}
    E\left[\left | \frac{\lceil n_3 \Gn(x)\rceil }{n_3+1}- x\right|\right] & \leq \frac{1}{n_3+1} + \frac{n_3}{n_3+1} E\left[ \left| \Gn(x) - x \right|\right] \notag \\[5pt]
    & \leq \frac{1}{n_3+1} +  E\left[ \left| \Gn(x) - x \right|\right],
\end{align*}
Then, using $n_3+1 > 1/x$ and 
\eqref{eq:cauchy_schwarz}, which holds for all $x\in (0, \widetilde{\delta})$, we obtain 
    \begin{align}
    E\left[\left | \frac{\lceil n_3 \Gn(x)\rceil }{n_3+1}- x\right|\right] &  \lesssim x. \label{eq:ineq_mad_easy}
\end{align}

Next, let us bound the first term of \eqref{eq:mad_ineq}. Since  $\Hn^{-1}(x)\sim $Beta$(\lceil n_3 x \rceil,n_3+1-\lceil n_3 x \rceil)$ for all $x\in(0,1)$, we have 
$$E\left[\mathds 1_{\mathcal A_{n_1}(x)}\Hn^{-1} \circ \Gn(x) \,|(\xi_i)_i \right] = \mathds 1_{\mathcal A_{n_1}(x)}\frac{\lceil n_3 \Gn(x) \rceil}{n_3+1}.$$ 
Moreover, any $Z\sim$Beta$(a,b)$ satisfies $E[|Z-E(Z)|]=2a^ab^b/(B(a,b)(a+b)^{a+b+1})$. Thus,
\begin{align*}
& E\left[1_{\mathcal A_N(x)}\left|\Hn^{-1} \circ \Gn(x) - \frac{\lceil n_3 \Gn(x) \rceil}{n_3+1} \right| \bigg| (\xi_i)_i\right] \notag \\[5pt]
= & 1_{\mathcal A_N(x)}\frac{2\lceil n_3 \Gn(x) \rceil^{\lceil n_3 \Gn(x) \rceil} (n_3+1- \lceil n_3 \Gn(x) \rceil)^{n_3+1- \lceil n_3 \Gn(x) \rceil}}{{\rm B}(\lceil n_3 \Gn(x) \rceil,n_3+1- \lceil n_3 \Gn(x) \rceil)(n_3+1)^{n_3+2}}.
\end{align*}
Let $f(u)=2u^u (n_3+1-u)^{n_3+1-u} /[{\rm B}(u,n_3+1-u) (n_3+1)^{n_3+2}]$ for $u\in [0,n_3]$. It follows that
$$E\left[\mathds 1_{\mathcal A_{n_1}(x)}\left|\Hn^{-1} \circ \Gn(x) - \frac{\lceil n_3 \Gn(x) \rceil}{n_3+1} \right| \right] = E\left[\mathds 1_{\mathcal A_{n_1}(x)}f\left(\lceil n_3 \Gn(x) \rceil\right) \right].$$
Now, Stirling's formula gives the following bound for the beta function ${\rm B}(\cdot, \cdot)$ \cite[see, e.g., p.~263, Ex.~45,][]{whittaker_watson_1996}:
\begin{equation} \label{eq:bound_beta}
\frac{1}{{\rm B}(x,y)} < \frac{1}{\sqrt{2 \pi}} \frac{(x+y)^{x+y-1/2}}{x^{x-1/2}y^{y-1/2}}, \quad \forall x,y >0.
\end{equation}
Plugging \eqref{eq:bound_beta} for $x=u$ and $y=n_3 + 1 - u$ in the definition of $f(u)$, we have for all $1 \leq u \leq n_3$
\begin{align*}
    f(u) & \leq \frac{1}{\sqrt{2\pi}}\frac{2 u^u (n_3 + 1 - u)^{n_3 + 1 - u}(n_3 + 1)^{n_3 + 1/2}}{u^{u-1/2}(n_3 + 1 - u)^{n_3 + 1/2 - u}(n_3 + 1)^{n_3+2}} \\[5pt]
    & \lesssim \frac{u^{1/2}(n_3 + 1 - u)^{1/2}}{(n_3 + 1)^{3/2}} \\[5pt]
    & \leq \frac{u}{n_3},
\end{align*}
where the last inequality uses $u^{1/2} \leq u$ and $(n_3 + 1 - u) \leq n_3 + 1$ for all $1 \leq u \leq n_3$. Hence, 
\begin{align*}
E\left[\mathds 1_{\mathcal A_{n_1}(x)}\left|\Hn^{-1} \circ \Gn(x) - \frac{\lceil n_3 \Gn(x) \rceil}{n_3+1} \right| \right] 
& \lesssim \frac{1}{n_3} E\left[ \mathds 1_{\mathcal A_{n_1}(x)}\lceil n_3 \Gn(x) \rceil \right] \\[5pt]
& \leq \frac{1}{n_3} \left( n_3 E\left[\Gn(x)\right]+E\left[\mathds 1_{\mathcal A_{n_1}(x)}\right] \right) \\
& = \frac{1}{n_3} (n_3x+\Pr(\Gn(x) >0)) \\[5pt]
& \lesssim x.
\end{align*}
where we have used $\lceil a\rceil \leq a+1$ for all $a\in\{0,1,2,...\}$, $\mathds 1_{\mathcal A_{n_1}(x)}\leq \mathds 1\{\Gn(x) >0\}\leq 1$, $E\left[\Gn(x)\right]=x$, and  $\Pr(\Gn(x) >0)\leq n_1 x$.

\medskip
We now turn to Equation \eqref{eq:cond2_qq}.  Since $\Hn^{-1}(t)>u$ implies $t\geq \Hn(u)$ for all $(t,u)\in(0,1)$, we have
\begin{align*}
\Pr(\Hn^{-1}\circ \Gn(x)>1/2, \; \xi_{(1)}<  x<\xi_{(n_1)}) 
& \leq E\left[\Pr(\Gn(x) \geq \Hn(1/2)|\Hn(1/2))\right] \\[5pt]
& \leq E\left[\left(xe\right)^{n_1(\Hn(1/2) -x)^2}\right] \\[5pt]
& \leq xe + \Pr\left(\Hn(1/2)-x < 1/\sqrt{n_1}\right) \\[5pt]
& \leq xe + \exp\left(-2(\sqrt{n_3}(x-1/2)+\sqrt{n_3/n_1})^2\right) \\[5pt]
& = xe + \exp\left(-2n_3(x-1/2+1/\sqrt{n_1})^2\right),
\end{align*}
where the second and fourth inequalities follow from Kiefer's and Hoeffding's inequalities, respectively. Let $\bar{\delta}\in (0,e^{-1}]$ and fix $\delta=\bar{\delta}/2$ and $n_0\geq (2/\bar{\delta})^2$. Then, for all $n_1\geq n_0$ and 
any $0<x\leq \delta$, we have
\begin{align*}
    \left|x - 1/2 + \frac{1}{\sqrt{n_1}}\right| & = \frac{1}{2} - (x+ 1/\sqrt{n_1}) \\[5pt]
    & \geq \frac{1}{2} - \bar{\delta}.
\end{align*}
Note that there exists a constant $\underbar c>0$ such that for $N$ sufficiently large,  $\underbar c \leq n_3/n_1$. Let $C=2\underbar c(1/2-\bar{\delta})^2$ and suppose first that $x\geq \exp(A-Cn_1)$ for some $A$. Then some algebra shows that, for $N$ sufficiently large,
$$\Pr(\Hn^{-1}\circ \Gn(x)>1/2, \; \xi_{(1)}<x<\xi_{(n_1)}) \lesssim x.$$
Next, assume that $x< \exp(A-Cn_1)$. Then,
\begin{align*}
\Pr(\Hn^{-1}\circ \Gn(x)>1/2, \; \xi_{(1)}<x<\xi_{(n_1)}) \leq & \Pr(\Gn(x)\geq  1/n_1) \\[5pt]
=& 1 - (1-x)^{n_1} \\[5pt]
\leq & n_1 x \\[5pt]
\leq & \frac{A - \ln x}{C} x.
\end{align*}
For any $\eta>0$, we have $-\ln x \lesssim x^{-\eta}$. Thus, $\Pr(\Hn^{-1}\circ \Gn(x)>1/2, \; \xi_{(1)}<x<\xi_{(n_1)}) \lesssim x^{1-\eta}$.
\hfill $\square$

\begin{lemma}\label{lem:int_FY_FY'}
    Let Assumption~\ref{ass:smoothness1}\ref{ass:smoothness12} hold and let $a_1,a_2 \in(0,\infty)$ be such that $2 > a_1 > 2d_1$ and $2 > a_2 > 2d_2$. 
    Then there exists a constant $C>0$ depending only on $a_1,a_2, d_1,d_2$ such that
    \begin{align*}
    &\int_{\R^2}E\left(\left[\widehat F_Y^{(1)}(y) \land \widehat F_Y^{(2)}(y')\right]^{a_1}\left[\bar{\widehat F}_Y^{(1)}(y) \land \bar{\widehat F}^{(2)}_Y(y')\right]^{a_2}\right) dydy' \leq C,\\
    &E\left[\left(\int_{\R} \left(\left[\widehat F_Y^{(j)}(y) \right]^{a_1/2}\left[\bar{\widehat F}_Y^{(j)}(y) \right]^{a_2/2}\right) dy \right)^2\right] \leq C, \quad j\in\{1,2\}.
    \end{align*}
\end{lemma}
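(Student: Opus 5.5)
The plan is to reduce both bounds to one-dimensional tail integrals of the empirical cdf and then control these with the binomial structure of $\widehat F_Y^{(j)}$. Write $m=n_1/2$ for the size of each subsample, so $m\widehat F_Y^{(j)}(y)\sim\mathrm{Binomial}(m,F_Y(y))$, and the two subsamples are independent.

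\textbf{First bound.} Use the elementary inequality $(a\wedge b)^{a_1}(\bar a\wedge\bar b)^{a_2}\le \big(a^{a_1/2}\bar a^{a_2/2}\big)\big(b^{a_1/2}\bar b^{a_2/2}\big)$, valid for $a,b\in[0,1]$ because $a\wedge b\le \sqrt{ab}$ and likewise for the complements. Independence of the two subsamples then factorises the integrand, so the first bound reduces to
\[
\Big(\int_{\R} E\big[\widehat F_Y^{(1)}(y)^{a_1/2}\,\bar{\widehat F}_Y^{(1)}(y)^{a_2/2}\big]\,dy\Big)^2 .
\]
By Jensen's inequality (concavity of $x\mapsto x^{p}$ for $p=a_k/2<1$, using $a_k<2$), I will bound the expectation separately in each tail. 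For $y$ with $F_Y(y)\le 1/2$, drop the bounded factor $\bar{\widehat F}^{a_2/2}\le 1$ and use $E[\widehat F^{a_1/2}]\le F_Y(y)^{a_1/2}$. The right tail is symmetric. What remains is $\int F_Y(y)^{a_1/2}(1-F_Y(y))^{a_2/2}dy$, which Lemma~\ref{lem:int_FY} shows is finite because $a_k/2>d_k$. The constant depends only on $C_Y,d_k,a_k$ through the bounds in that lemma's proof.

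\textbf{Second bound.} Expand the square as a double integral over $(y,y')$ of $E[\phi(\widehat F(y))\phi(\widehat F(y'))]$, where $\phi(x)=x^{a_1/2}\bar x^{a_2/2}$, now using a single subsample. Cauchy--Schwarz gives
\[
E[\phi(\widehat F(y))\phi(\widehat F(y'))]\le E[\phi(\widehat F(y))^2]^{1/2}E[\phi(\widehat F(y'))^2]^{1/2},
\]
so it suffices that $\int_{\R}E[\phi(\widehat F(y))^2]^{1/2}dy<\infty$. In the left tail, $E[\widehat F(y)^{a_1}]^{1/2}$ needs care because $a_1$ may exceed $1$. For $a_1\le 1$, Jensen again gives $F_Y^{a_1/2}$. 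For $a_1\in(1,2)$, use $\widehat F^{a_1}\le \widehat F$ on $[0,1]$, which only gives $F_Y^{1/2}$; that is integrable only if $d_1<1/2$. I therefore expect a sharper moment bound to be needed: $E[\widehat F^{a_1}]\le E[\widehat F^2]^{a_1/2}=(F^2+F(1-F)/m)^{a_1/2}$. This yields $F^{a_1/2}+(F/m)^{a_1/4}$, and the second term is bad in the far tail.

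\textbf{Main obstacle.} The obstacle is exactly this far tail, where $\widehat F(y)$ is mostly zero and is of order $1/m$ when it is nonzero, while the bound must be uniform in $m$. I would treat it by splitting at $F_Y(y)=1/m$. For $F_Y(y)\le 1/m$, use
\[
E[\widehat F^{a_1}]\le P(\widehat F>0)\,E[\widehat F^{a_1}\mid \widehat F>0]\lesssim mF\cdot m^{-a_1}.
\]
Translating via $|y|\lesssim F_Y(y)^{-d_1}$, this region is $|y|\gtrsim m^{d_1}$, and integrating $(mF)^{1/2}m^{-a_1/2}$ over it, with the tail bound $F\lesssim|y|^{-1/d_1}$, gives something uniformly bounded because $a_1>2d_1$. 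For $F_Y(y)>1/m$, the Binomial moment satisfies $E[\widehat F^{a_1}]\lesssim F^{a_1}$, and the integral of $F^{a_1/2}$ is finite by Lemma~\ref{lem:int_FY}. The right tail is handled symmetrically. Combining both tails gives the constant $C$ depending only on $a_k,d_k$ (and $C_Y$).
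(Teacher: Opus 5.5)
Your first bound is correct and is the paper's argument. You use the same inequality $(a\wedge b)^{a_1}(\bar a\wedge\bar b)^{a_2}\le\phi(a)\phi(b)$, independence of the two subsamples, Jensen, and Lemma~\ref{lem:int_FY}. The paper applies concavity of $\phi(x)=x^{a_1/2}(1-x)^{a_2/2}$ directly where you split the tails; the two are equivalent.

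\textbf{The gap is in the second bound.} It fails exactly in the case your splitting was designed for.

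Take $a_1\in(1,2)$ and $F_Y(y)\le 1/m$. Your estimate $E[\widehat F_Y^{(j)}(y)^{a_1}]\lesssim m^{1-a_1}F_Y(y)$ is sharp, since it is at least $m^{-a_1}\Pr(\widehat F_Y^{(j)}(y)>0)$. So you must integrate $m^{(1-a_1)/2}F_Y(y)^{1/2}$ over $\{F_Y\le 1/m\}$. That is the same $F_Y^{1/2}$ you had just set aside as non-integrable when $d_1\ge 1/2$.

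The condition $a_1>2d_1$ only controls the power of $m$ once that integral converges, and convergence requires $d_1<1/2$. For a tail $F_Y(y)\asymp|y|^{-1/d_1}$ with $d_1\ge1/2$, the integral is infinite. Separately, $\{F_Y(y)\le 1/m\}$ need not lie in $\{|y|\gtrsim m^{d_1}\}$: the assumption only says that large $|y|$ forces $F_Y(y)$ to be small, not the converse.

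No sharper argument can rescue this, because the second claim is false for $d_1\in[1/2,1)$. Take $F_Y^{-1}(t)=-t^{-d_1}+(1-t)^{-d_2}$ and $a_1\in(2d_1,2)$. The integral $\int_{\R}\phi(\widehat F_Y^{(j)}(y))\,dy=\sum_{i=1}^{m-1}\phi(i/m)(Y_{(i+1)}-Y_{(i)})$ has nonnegative terms. Hence its second moment is at least $\phi(1/m)^2E[(Y_{(2)}-Y_{(1)})^2]$, and
\[
E\big[(Y_{(2)}-Y_{(1)})^2\big]=m(m-1)\int_0^1\!\!\int_0^t\big(F_Y^{-1}(t)-F_Y^{-1}(s)\big)^2(1-t)^{m-2}\,ds\,dt=\infty .
\]
This holds because, for $t\in[1/4,1/2]$, the integrand is $\gtrsim s^{-2d_1}$ as $s\to0$, and $2d_1\ge 1$. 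The first claim does hold in this example.

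\textbf{Fix.} The second claim needs the extra hypothesis $d_1\vee d_2<1/2$. This holds wherever the paper uses the lemma, by Assumption~\ref{ass:smoothness1}\ref{ass:smoothness14}. Under it, your Minkowski/Cauchy--Schwarz reduction is correct and closes without splitting at $1/m$:
\begin{itemize}
\item Choose $c_k\in(d_k,1/2]$ with $c_k\le a_k/2$.
\item Since $\widehat F_Y^{(j)}\in[0,1]$, Jensen gives $E[\widehat F_Y^{(j)}(y)^{a_1}]\le E[\widehat F_Y^{(j)}(y)^{2c_1}]\le F_Y(y)^{2c_1}$.
\item Hence $E[\phi(\widehat F_Y^{(j)}(y))^2]^{1/2}\le F_Y(y)^{c_1}$ on the left tail, and symmetrically on the right; this is integrable by Lemma~\ref{lem:int_FY}.
\end{itemize}

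\textbf{Comparison with the paper.} This route is sounder than the paper's own chain for the second claim. The paper rewrites the one-sample square $E[(\int\phi(\widehat F_Y^{(1)}))^2]$ as a double integral pairing $\widehat F_Y^{(1)}(y)$ with $\widehat F_Y^{(2)}(y')$, and then applies Jensen. For the genuine one-sample square, that step needs joint concavity of $(x,x')\mapsto\phi(x)\phi(x')$ on $[0,1]^2$. This holds for $a_1,a_2\le 1$ but fails as soon as $a_1>1$ or $a_2>1$.
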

\textbf{Proof:} Using the inequality $(x\land y)^{a_1} (\bar x \land \bar y)^{a_2} \leq x^{a_1/2} \, \bar x^{a_2/2} \, y^{a_1/2} \, \bar y^{a_2/2}$ for any $x,y \in [0,1]$, we obtain 
\begin{align*}
    &\hspace{6mm}\int_{\R^2}E\left(\left[\widehat F_Y^{(1)}(y) \land \widehat F_Y^{(2)}(y')\right]^{a_1}\left[\bar{\widehat F}_Y(y) \land \bar{\widehat F}_Y^{(2)}(y')\right]^{a_2}\right) dydy' \\[5pt]
    & \leq \left\{E\int_{\R} \left(\left[\widehat F_Y^{(1)}(y) \right]^{a_1/2}\left[\bar{\widehat F}_Y^{(1)}(y) \right]^{a_2/2}\right) dy  \right\}^2 \\[5pt]
    &\leq  
     E\left[\left(\int_{\R} \left(\left[\widehat F_Y^{(1)}(y) \right]^{a_1/2}\left[\bar{\widehat F}_Y^{(1)}(y) \right]^{a_2/2}\right) dy \right)^2\right]\\[5pt]
     & = \int_{\R^2} E\left(\left[\widehat F_Y^{(1)}(y) \right]^{a_1/2}\left[\bar{\widehat F}_Y^{(1)}(y) \right]^{a_2/2}\left[\widehat F_Y^{(2)}(y') \right]^{a_1/2}\left[\bar{\widehat F}_Y^{(2)}(y') \right]^{a_2/2} \right) dydy'\\[5pt]
     & \leq \int_{\R^2} \left[E\left(\widehat F_Y^{(1)}(y)\right)\right]^{a_1/2}\left[E\left(\bar{\widehat F}^{(1)}_Y(y)\right)\right]^{a_2/2} \left[E\left(\widehat F_Y^{(2)}(y')\right)\right]^{a_1/2}\left[E\left(\bar{\widehat F}_Y^{(2)}(y')\right)\right]^{a_2/2} dy dy'\\[5pt]
     & = \int_{\R^2} \Big[F_Y(y)\Big]^{a_1/2}\Big[\bar{ F}_Y(y)\Big]^{a_2/2} \Big[F_Y(y')\Big]^{a_1/2}\Big[\bar{ F}_Y(y')\Big]^{a_2/2} dy dy'.
    \end{align*}
    In the last inequality, we used Jensen's inequality since the function $(x,y) \to x^{a_1/2} \, (1-x)^{a_2/2} \, y^{a_1/2} \, (1-y)^{a_2/2}$ is concave over $[0,1]^2$. 
    The last integral is finite by Lemma~\ref{lem:int_FY} since $a_1/2>d_1$ and $a_2/2>d_2$ by assumption.
\hfill $\square$

\begin{lemma}\label{lem:useful_ineq}
    Let $\eps>0$ and $s,s',t,t'\in\R $ such that $ \abs{s-s'}\leq \eps (s\wedge s')$ and $ \abs{t-t'}\leq \eps(t\wedge t')$. Then, $ \abs{s\wedge t-s'\wedge t'}\leq \eps(s\wedge t)$.
\end{lemma}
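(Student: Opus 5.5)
We would argue directly by cases; no earlier result is needed, only elementary inequalities. First we note that the hypotheses force nonnegativity: since $\eps(s\wedge s')\ge \abs{s-s'}\ge 0$ and $\eps>0$, we have $s\wedge s'\ge 0$, so $s,s'\ge 0$, and likewise $t,t'\ge 0$. (In the paper's applications these are cdf values, so this is automatic anyway.)

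Both the hypotheses and the conclusion are symmetric under swapping the pair $(s,s')$ with $(t,t')$. So we may assume without loss of generality that $s\le t$, i.e. $s\wedge t=s$. The goal becomes $\abs{s-s'\wedge t'}\le \eps s$, which we split into two cases.

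\emph{Case $s'\wedge t'>s$.} Then $0<s'\wedge t'-s\le s'-s\le \eps(s\wedge s')\le \eps s$.

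\emph{Case $s'\wedge t'\le s$.} We distinguish which element attains the minimum.
\begin{itemize}
\item If $s'\wedge t'=s'$, then $0\le s-s'\le \eps(s\wedge s')\le\eps s$.
\item If $s'\wedge t'=t'$, then $t'\le s\le t$. Hence
\[
0\le s-t'\le t-t'\le \eps(t\wedge t')=\eps t'\le \eps s,
\]
where the last step uses $t'\le s$.
\end{itemize}

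Combining the two cases gives $\abs{s\wedge t-s'\wedge t'}\le \eps(s\wedge t)$.

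The only step needing care is the mixed subcase in which $s$ attains $s\wedge t$ while $t'$ attains $s'\wedge t'$. There the chain runs through the intermediate quantity $t-t'$, and the bound $\eps(t\wedge t')=\eps t'$ is converted into $\eps s$ using the case condition $t'\le s$.
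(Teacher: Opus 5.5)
Your proof is correct and follows essentially the same route as the paper: assume $s\le t$ by symmetry, then do a case analysis on which of $s',t'$ attains the minimum and how it compares with $s$, with the mixed case handled by the same chain $s-t'\le t-t'\le \eps t'\le \eps s$. Your organization differs only slightly: you split first on whether $s'\wedge t'>s$, which merges two of the paper's subcases into the single bound $s'\wedge t'-s\le s'-s$.
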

\noindent\textbf{Proof:} 
Without loss of generality, suppose that $s\leq t$. \\
{\em First case:} Suppose $s'\leq t'$. Then 
\[
\abs{s\wedge t-s'\wedge t'}=\abs{s-s'}\leq \eps(s\wedge s')\leq \eps s=\eps(s\wedge t).
\]
{\em Second case:} Suppose $s'>t'$. Then 
\[
\abs{s\wedge t-s'\wedge t'}=\abs{s-t'}.
\]
{\em First subcase:} Suppose $s<t'$ and $s'<t$. Then, $s<t'<s'<t$ and
\[
\abs{s-t'}\leq \abs{s-s'}\leq \eps(s\wedge s')=\eps s=\eps(s\wedge t).
\]
{\em Second subcase:} Suppose $s\geq t'$ or $s'\geq t$. If $s\geq t'$, then
\[
\abs{s-t'}\leq \abs{t-t'}\leq \eps(t\wedge t')=\eps t'\leq \eps s=\eps(s\wedge t).
\]
Else, $s<t'$ and $s'\geq t$ imply
\[
\abs{s-t'}\leq\abs{s-s'}\leq\eps(s\wedge s')=\eps s=\eps(s\wedge t).
\]
The result follows. \hfill$\square$

\begin{lemma}\label{lem_Xi_Gn+}
For any $t \in [0,1]$, let 
\begin{align*}
    \Xi(t) &= \sup\left\{y \in [0,1]: \Hninv \circ \Gn (y) \leq t\right\},\\
    \mathbb G_{+,n_1}^{-1}(t) &= \sup \left\{y \in [0,1]: \Gn(y) \leq t\right\}.
\end{align*}
    Then, it holds that $\Xi = \mathbb G_{+,n_1}^{-1} \circ \Hn$. 
\end{lemma}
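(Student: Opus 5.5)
The plan is to prove $\Xi(t) = \mathbb G_{+,n_1}^{-1}(\Hn(t))$ for each fixed $t\in[0,1]$ by showing that the two sets over which the suprema are taken coincide. The key fact is the Galois-type equivalence for the left-continuous generalized inverse: for a nondecreasing right-continuous step function $\Hn$ and $u\in[0,1]$, $t\in[0,1]$,
\[
\Hninv(u)\le t \iff u\le \Hn(t).
\]
This holds because $\Hninv(u)=\inf\{x:\Hn(x)\ge u\}$ and, $\Hn$ being right-continuous, the infimum is attained. If $\Hninv(u)\le t$, then $u\le \Hn(\Hninv(u))\le \Hn(t)$. Conversely, if $u\le\Hn(t)$, then $t$ belongs to the set defining the infimum, so $\Hninv(u)\le t$. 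Here I will take care of the endpoint conventions: $\Hninv(0)$ should be set to $0$ or to the value consistent with the paper's convention $\widehat F^{-1}(0)=$ the smallest order statistic, and $u=0$ satisfies both sides trivially when $\Hninv(0)\le t$. This boundary case is the main delicate point.

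Applying this equivalence with $u=\Gn(y)$ gives, for every $y\in[0,1]$,
\[
\Hninv\circ\Gn(y)\le t \iff \Gn(y)\le \Hn(t).
\]
Therefore
\[
\{y\in[0,1]:\Hninv\circ\Gn(y)\le t\}=\{y\in[0,1]:\Gn(y)\le \Hn(t)\}.
\]
The supremum of the left set is $\Xi(t)$ by definition, and the supremum of the right set is $\mathbb G_{+,n_1}^{-1}(\Hn(t))$ by the definition of the right-continuous inverse evaluated at the point $\Hn(t)$. Hence $\Xi=\mathbb G_{+,n_1}^{-1}\circ\Hn$.

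Two checks remain. First, both sets must be nonempty, so that the suprema are well defined. Since $\Gn(0)=0$ almost surely and $\Hn(t)\ge 0$, the point $y=0$ lies in the right set, and therefore also in the left one (the two sets are equal). Second, at $u=0$ the equivalence requires $\Hninv(0)\le t$ for all $t\ge 0$. I will adopt the convention $\Hninv(0)=0$, or else note that $\Hninv(0)=\zeta_{(1)}$ and handle $t<\zeta_{(1)}$ separately: in that case $\Hn(t)=0$, and $y$ lies in either set if and only if $\Gn(y)=0$, which is equivalent to $y<\xi_{(1)}$. Both suprema then equal $\xi_{(1)}$, so the identity still holds.
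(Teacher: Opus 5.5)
Your argument is correct and is essentially the paper's proof. Both reduce the claim to the pointwise equivalence $\Hninv\circ\Gn(y)\le t \iff \Gn(y)\le \Hn(t)$, derived from the monotonicity and right-continuity of $\Hn$, and then conclude that the two sets over which the suprema are taken coincide.

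One correction to your boundary discussion: the identity genuinely requires $\Hninv(0)=0$. That is exactly what the paper's definition $\Hninv(u)=\inf\{z\in[0,1]:\Hn(z)\ge u\}$ gives, so your primary convention is the right one. Under the alternative convention $\Hninv(0)=\zeta_{(1)}$, your fallback is false. For $t<\zeta_{(1)}$ one has $\Hninv(\Gn(y))\ge\zeta_{(1)}>t$ for every $y$, so the left set is empty, while the right set is $[0,\xi_{(1)})$. The two suprema therefore do not both equal $\xi_{(1)}$.
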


\noindent\textbf{Proof.}
    Let $t \in [0,1]$. We recall that, for any $y \in [0,1]$, $\Hninv \circ \Gn(y) = \inf \left\{z \in [0,1]: \Hn(z) \geq \Gn(y)\right\}$. We have
    \begin{align*}
        & \hspace{1cm}\, \Big[\Xi(t) = \mathbb G_{+,n_1}^{-1} \circ \Hn(t)\Big] \\[2pt]
        &\Longleftrightarrow ~~ \sup\left\{y\in [0,1]: \Hninv \circ \Gn (y) \leq t\right\} = \sup \left\{y \in [0,1]: \Gn(y) \leq \Hn(t)\right\}\\[5pt]
        & \Longleftarrow ~~\,\forall y \in [0,1]: \left[ \Hninv \circ \Gn (y) \leq t \Longleftrightarrow \Gn(y) \leq \Hn(t)\right]\\[5pt]
        & \Longleftrightarrow ~~\forall y \in [0,1]: \Big[ \inf\left\{z \in [0,1]: \Hn(z) \geq \Gn(y)\right\}\leq t \Longleftrightarrow \Gn(y) \leq \Hn(t)\Big].
    \end{align*}
    Let $y \in [0,1]$, and assume first that $\Gn(y) \leq \Hn(t)$. Then, we have $$\inf\left\{z \in [0,1]: \Hn(z) \geq \Gn(y)\right\}\leq t.$$ 

    Conversely, let $\ell = \inf\left\{z \in [0,1]: \Hn(z) \geq \Gn(y)\right\}$, and assume that $\ell \leq t$. 
    We can define a non-increasing sequence $(z_k)_k$ of elements of $\left\{z \in [0,1]: \Hn(z) \geq \Gn(y)\right\}$ that converges to $\ell$. For any $k \in \N$, by definition of $z_k$, it holds that
    \begin{align*}
        \Gn(y) \leq \Hn(z_k) &\to \Hn(\ell) \qquad \text{ since $\Hn$ is right-continuous}\\
        &\leq \Hn(t) \qquad \text{ since $\Hn$ is non-decreasing.} 
    \end{align*}
This concludes the proof.
\hfill$\square$

\begin{lemma}\label{lem_Gn+_equiv_Gn}
    Let $\mathcal I_{0}:=[0,\Hn^{-1}(n_1^{-1})]$ and  $\mathcal I_{1}:=[\zeta_{(n_3)}, 1]$. It holds that 
    \begin{align}
        \int_{(\mathcal I_0 \cup \mathcal{I}_1)^c} \Big(F_Y^{-1} \circ \mathbb G_{+,n_1}^{-1}  \circ \Hn - F_Y^{-1} \circ \Gn^{-1} \circ \Hn\Big) d\widehat F_U = o_{P}\left(\frac{1}{\sqrt{N}}\right) \label{eq:diff_1}
    \end{align}
    and
        \begin{align}
        \int_{(\mathcal I_0 \cup \mathcal{I}_1)^c} \Big(F_Y^{-1} \circ \mathbb G_{+,n_1}^{-1}  \circ \Hn - F_Y^{-1} \circ \left(\Hn^{-1} \circ \Gn\right)^{-1}\Big) d\widehat F_U = o_{P}\left(\frac{1}{\sqrt{N}}\right). \label{eq:diff_2}
    \end{align}
\end{lemma}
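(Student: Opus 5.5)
The plan is to reduce both displays to statements about how far apart the generalized inverses can be, and then use the fact that $\widehat F_U$ puts mass only at the $U_i$. Start with \eqref{eq:diff_1}. The functions $\mathbb G_{+,n_1}^{-1}$ and $\Gn^{-1}$ agree except at the jump levels $t\in\{1/n_1,\dots,(n_1-1)/n_1\}$ of $\Gn$. At such a level $t=k/n_1$, the left-continuous inverse gives $\xi_{(k)}$ and the right-continuous inverse gives $\xi_{(k+1)}$. So the integrand is nonzero only at atoms $U_i$ of $\widehat F_U$ with $\Hn(U_i)\in\{k/n_1\}$. Since $\Hn$ takes values in $\{j/n_3\}$, this means $U_i$ lies in a spacing $[\zeta_{(j)},\zeta_{(j+1)})$ with $j/n_3=k/n_1$.

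For each such $k$, the contribution is therefore bounded by
\[
\bigl|F_Y^{-1}(\xi_{(k+1)})-F_Y^{-1}(\xi_{(k)})\bigr| \times \widehat F_U\bigl([\zeta_{(j)},\zeta_{(j+1)})\bigr).
\]
The restriction to $(\mathcal I_0\cup\mathcal I_1)^c$ removes the extreme levels $k=0$ and $k=n_1$, where the unbounded tails of $F_Y^{-1}$ would be a problem.

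I will bound the expectation of this quantity using the independence of the three samples. For the mass term, $E[\widehat F_U([\zeta_{(j)},\zeta_{(j+1)}))\mid \zeta]=F_U(\zeta_{(j+1)})-F_U(\zeta_{(j)})$. The uniform spacings are Beta$(1,n_3)$, so with the bound $f_U\lesssim u^{-b_1}(1-u)^{-b_2}$ this gives $\lesssim n_3^{-1}(j/n_3)^{-b_1}(1-j/n_3)^{-b_2}$ (Jensen, as in the bounds on $A$ and $B$). For the $Y$ increment, $E|F_Y^{-1}(\xi_{(k+1)})-F_Y^{-1}(\xi_{(k)})|$ equals the integral of the Beta-type weights against $dF_Y^{-1}$. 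By Assumption~\ref{ass:smoothness1}\ref{ass:smoothness12} and a computation like the one in Lemma~\ref{lem:extrem_moments1}, this is $\lesssim (k/n_1)^{-d_1}(1-k/n_1)^{-d_2}$; this holds at least for monotone envelopes, and I would replace $F_Y^{-1}$ by a monotone envelope on each tail.

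Summing over $k$ and multiplying by $\sqrt N$, I get
\[
\sqrt N\,E|\cdot|\lesssim \sqrt N\, n_3^{-1}\sum_k (k/n_1)^{-b_1-d_1}(1-k/n_1)^{-b_2-d_2}.
\]
The number of levels $k$ is at most $n_1\wedge n_3$, but the sum is in fact limited to $O(\min(n_1,n_3))$ atoms. The Riemann sum is $\lesssim n_1$ because $b_j+d_j<1$, and this would give $O(\sqrt N)$, which is not good enough. So I must exploit the fact that the right- and left-inverses coincide unless a $U_i$ falls in the exact spacing matched to a level. This is the key step.

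Here I use the exact probability structure rather than a crude count. A coincidence $j/n_3=k/n_1$ happens only for $j$ in a sublattice. In the balanced case $n_1=n_3$, every $k$ matches, but the relevant mass is one spacing, of order $n_3^{-1}$, and the $Y$ increment is a single spacing in $Y$, which is $F_Y^{-1\prime}\cdot n_1^{-1}$ and not $F_Y^{-1}$ itself. Redoing the bound with the spacing increment $E|F_Y^{-1}(\xi_{(k+1)})-F_Y^{-1}(\xi_{(k)})|\lesssim n_1^{-1}(k/n_1)^{-1-d_1}(1-k/n_1)^{-1-d_2}$ gives
\[
\sqrt N\, n^{-2}\sum_{k=1}^{n-1} (k/n)^{-1-b_1-d_1}(1-k/n)^{-1-b_2-d_2}\lesssim \sqrt N\, n^{-2}\, n^{1+b_1+d_1}\vee n^{1+b_2+d_2},
\]
which is $n^{b+d-1/2}\to0$. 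The endpoint terms $k=1$ and $k=n_1-1$ are handled directly with Lemmas~\ref{lem:extrem_moments1}--\ref{lem:extrem_moments2}, as for $A$ and $B$.

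For \eqref{eq:diff_2}, Lemma~\ref{lem_Xi_Gn+} gives $\Xi=\mathbb G_{+,n_1}^{-1}\circ\Hn$, where $\Xi$ is the right-continuous inverse of $\Hn^{-1}\circ\Gn$. The two inverses $\Xi$ and $(\Hn^{-1}\circ\Gn)^{-1}$ differ only at jump values of $\Hn^{-1}\circ\Gn$. Those values are order statistics $\zeta_{(j)}$, and these are a.s. not atoms $U_i$ of $\widehat F_U$ because the samples are independent and continuous. So the integrand vanishes $\widehat F_U$-a.s., and \eqref{eq:diff_2} holds with the left side equal to $0$ almost surely. I expect the main obstacle to be the spacing-increment bound in the first part, especially near the tails, where the $\mathcal I_0,\mathcal I_1$ truncation must be used carefully.
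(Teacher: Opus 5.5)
Your reduction for \eqref{eq:diff_1} is the same as the paper's. Only atoms $U_i$ with $\Hn(U_i)=k/n_1$ contribute, each contributes $Y_{(k+1)}-Y_{(k)}$, and by independence the expectation factors into an expected $Y$-spacing times a mass $\lesssim n_1^{-1}(n_1/k)^{b_1}(n_1/(n_1-k))^{b_2}$.

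The gap is in your key step. The per-spacing bound $E|F_Y^{-1}(\xi_{(k+1)})-F_Y^{-1}(\xi_{(k)})|\lesssim n_1^{-1}(k/n_1)^{-1-d_1}(1-k/n_1)^{-1-d_2}$ requires $F_Y^{-1}$ to be differentiable with the derivative bound \eqref{eq:cond_quant_Y2}. That condition belongs to Assumption~\ref{ass:smoothness2}. This lemma is used in the proof of Theorem~\ref{thm:an}, where only Assumption~\ref{ass:smoothness1} holds, i.e.\ only the envelope \eqref{eq:cond_quant_Y} on $F_Y^{-1}$ itself. Under that envelope the bound is false in general. If the support of $Y$ has a gap, $F_Y^{-1}$ jumps by some $J>0$ at a point $t_0\in(0,1)$. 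Since $E[Y_{(k+1)}-Y_{(k)}]=\int_0^1\binom{n_1}{k}t^k(1-t)^{n_1-k}\,dF_Y^{-1}(t)\ge J\Pr(\mathrm{Bin}(n_1,t_0)=k)$, that spacing is of order $Jn_1^{-1/2}\gg n_1^{-1}$ for $k$ near $n_1t_0$. Your first, crude bound lost the needed factor not because of the lattice-matching issue. It lost it because it replaced $Y_{(k+1)}-Y_{(k)}$ by $|Y_{(k+1)}|+|Y_{(k)}|$, which discards the telescoping structure.

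The repair, which is what the paper does, is to keep that telescoping structure instead of bounding spacings one at a time. The spacings are nonnegative, and the mass factor is $\lesssim n_1^{b_1-1}$ uniformly over $1\le k\le n_1/2$ and $\lesssim n_1^{b_2-1}$ over $n_1/2<k<n_1$. Hence
\begin{align*}
\sum_{k=1}^{n_1-1}E\bigl[Y_{(k+1)}-Y_{(k)}\bigr]\Pr\bigl(\Hn(U_1)=k/n_1\bigr)
&\lesssim n_1^{b_1-1}E\bigl[Y_{(\lfloor n_1/2\rfloor+1)}-Y_{(1)}\bigr]+n_1^{b_2-1}E\bigl[Y_{(n_1)}-Y_{(\lfloor n_1/2\rfloor+1)}\bigr]\\
&\lesssim n_1^{b_1+d_1-1}+n_1^{b_2+d_2-1},
\end{align*}
using Lemma~\ref{lem:extrem_moments1} for the extremes and $E|Y_{(\lfloor n_1/2\rfloor+1)}|=O(1)$. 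This is $o(N^{-1/2})$ by Assumption~\ref{ass:smoothness1}\ref{ass:smoothness14} and $n_1\asymp N$. It needs no smoothness of $F_Y^{-1}$ and covers $n_1\neq n_3$ without any lattice discussion.

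Your argument for \eqref{eq:diff_2} is correct and in fact sharper than the paper's. The maps $\mathbb G_{+,n_1}^{-1}\circ\Hn$ and $(\Hn^{-1}\circ\Gn)^{-1}$ can differ only at the finitely many values $\Hn^{-1}(i/n_1)$, $0\le i\le n_1$, taken by $\Hn^{-1}\circ\Gn$. These depend on the $\zeta$'s alone and are a.s.\ not atoms of $\widehat F_U$, so the left-hand side of \eqref{eq:diff_2} vanishes a.s. The paper instead bounds its expectation by a positive sum of spacings times $\Pr(\Hn(U_i)=k/n_3)$.
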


\noindent\textbf{Proof.}
We note that $\mathbb G_{+,n_1}^{-1} \geq \Gn^{-1}$, so that the integrand 
\begin{align*}
    \Big(F_Y^{-1} \circ \mathbb G_{+,n_1}^{-1}  \circ \Hn - F_Y^{-1} \circ \Gn^{-1} \circ \Hn\Big) 
\end{align*}
is always non-negative. 
To prove the result, it therefore suffices to show that
\begin{align*}
        E\left[\int_{(\mathcal I_0 \cup \mathcal{I}_1)^c} \Big(F_Y^{-1} \circ \mathbb G_{+,n_1}^{-1}  \circ \Hn - F_Y^{-1} \circ \Gn^{-1} \circ \Hn\Big) d\widehat F_U\right] = o\left(\frac{1}{\sqrt{N}}\right).
    \end{align*}
Let $t \in {(\mathcal I_0 \cup \mathcal{I}_1)^c}$. Then
\begin{align*}
    \mathbb G_{n_1}^{-1}(y) \leq t \quad \Longleftrightarrow \quad  \mathbb G_{n_1}^{-1}(y) \leq  \frac{\lfloor tn_1 \rfloor}{n_1} \quad \Longleftrightarrow \quad y < Y_{\lfloor tn_1 \rfloor+1}.
\end{align*}
Hence for any $t \in {(\mathcal I_0 \cup \mathcal{I}_1)^c}$, we have
\begin{align*}
   \mathbb G_{+,n_1}^{-1}(t) = \sup \left\{y \in \R : \mathbb G_{n_1}(y) \leq t\right\} = Y_{\lfloor tn_1 \rfloor+1}.
\end{align*}
Therefore, we have
\begin{align*}
    & \hspace{.5cm}E\left[\int_{(\mathcal I_0 \cup \mathcal{I}_1)^c} \Big(F_Y^{-1} \circ \mathbb G_{+,n_1}^{-1}  \circ \Hn - F_Y^{-1} \circ \Gn^{-1} \circ \Hn\Big) d\widehat F_U\right]\\
    &\leq E\left[\int_{(\mathcal I_0 \cup \mathcal{I}_1)^c} \left(Y_{(\lceil \Hn(u) n_1 + 1\rceil)} - Y_{(\lceil \Hn(u) n_1 \rceil)}\right)\ind{\Hn(u) \in \{k/n_1: k = 1,\dots, n_1\}} d\widehat F_U(u)\right] \\
    & = E\left[\frac{1}{n_2} \sum_{i=1}^{n_2} \left(Y_{(\lceil \Hn(U_i) n_1 + 1\rceil)} - Y_{(\lceil \Hn(U_i) n_1 \rceil)}\right) \ind{\Hn(U_i) \in \{k/n_1: k = 1,\dots, n_1-1\}}\right] \\
    & = E\left[\frac{1}{n_2} \sum_{i=1}^{n_2} \sum_{k=1}^{n_1-1} \left(Y_{(\lceil \Hn(U_i) n_1 + 1\rceil)} - Y_{(\lceil \Hn(U_i) n_1 \rceil)}\right) \ind{\Hn(U_i) = k/n_1}\right]\\
    & = E\left[\frac{1}{n_2}  \sum_{k=1}^{n_1-1} \sum_{i=1}^{n_2} \left(Y_{(k + 1)} - Y_{(k)}\right) \ind{\Hn(U_i) = k/n_1}\right]\\
    & = \sum_{k=1}^{n_1-1} E\left(Y_{(k + 1)} - Y_{(k)}\right) \mathbb \Pr\left(\Hn(U_i) = k/n_1\right).
\end{align*}
Moreover, there exists an integer $k \in \{1,\dots, n_1-1\}$ such that $\Hn(U_i) = \frac{k}{n_1}$ if, and only if,
\begin{align*}
    & \qquad \frac{1}{n_3} \sum_{j=1}^{n_3} \ind{F_Z(Z_j) \leq U_i} = \frac{k}{n_1}\\
    & \text{i.e. }\quad  \sum_{j=1}^{n_3} \ind{\zeta_j \leq U_i} = \frac{kn_3}{n_1} \qquad \text{ where } \quad \zeta_j = F_Z(Z_j)\\
    & \text{i.e. }\quad \text{Card}\Big\{j \in \{1, \dots, n_3\}: \zeta_j \leq U_i\Big\} = \frac{kn_3}{n_1}\\
    & \text{i.e. }\quad U_i \in \left[\zeta_{(\lceil \frac{kn_3}{n_1}\rceil)}, \zeta_{(\lfloor \frac{kn_3}{n_1} + 1\rfloor)}\right).
\end{align*}
Note that, if $\frac{kn_3}{n_1}$ is not an integer, this condition is never satisfied, since the interval $\left[\zeta_{(\lceil \frac{kn_3}{n_1}\rceil)}, \zeta_{(\lfloor \frac{kn_3}{n_1} + 1\rfloor)}\right)$ is empty. 

We thus have
\begin{align*}
    & \hspace{.5cm}E\left[\int_{0}^1 \Big(F_Y^{-1} \circ \mathbb G_{+,n_1}^{-1}  \circ \Hn - F_Y^{-1} \circ \Gn^{-1} \circ \Hn\Big) d\widehat F_U\right]\\
    & \leq \sum_{k=1}^{n_1-1} E\left(Y_{(k + 1)} - Y_{(k)}\right) \mathbb \Pr\left(U_i \in \left[\zeta_{(\lceil \frac{kn_3}{n_1}\rceil)}, \zeta_{(\lfloor \frac{kn_3}{n_1} + 1\rfloor)}\right)\right).
\end{align*}
Moreover,
\begin{align*}
    \mathbb \Pr\left(U_i \in \left[\zeta_{(\lceil \frac{kn_3}{n_1}\rceil)}, \zeta_{(\lfloor \frac{kn_3}{n_1} + 1\rfloor)}\right)\right) &= E\left[E\Bigg[\int_{\zeta_{(\lceil \frac{kn_3}{n_1}\rceil)}}^{\zeta_{(\lfloor \frac{kn_3}{n_1} + 1\rfloor)}} f_U(u)du\, \bigg| \, \zeta_1, \dots, \zeta_{n_3}\Bigg]\right]\\
    & \leq E\left[E\Bigg[\int_{\zeta_{(\lceil \frac{kn_3}{n_1}\rceil)}}^{\zeta_{(\lfloor \frac{kn_3}{n_1} + 1\rfloor)}} C_U u^{-b_1} (1-u)^{-b_2} du\, \bigg| \, \zeta_1, \dots, \zeta_{n_3}\Bigg]\right]\\
    & \leq C_UE\left[\left(\zeta_{(\lfloor \frac{kn_3}{n_1} + 1\rfloor)} - \zeta_{(\lceil \frac{kn_3}{n_1}\rceil)}\right)\left(\zeta_{(\lceil \frac{kn_3}{n_1}\rceil)}\right)^{-b_1} \left(1-\zeta_{(\lfloor \frac{kn_3}{n_1} + 1\rfloor)}\right)^{-b_2} \right]\\
    & \lesssim \frac{1}{n_3} \left(\frac{n_3}{\lceil \frac{kn_3}{n_1}\rceil}\right)^{b_1} \left(\frac{n_3}{n_3-\lceil \frac{kn_3}{n_1}\rceil}\right)^{b_2} \quad \text{ by Lemma~\ref{lem_control_moments_unif}}\\
    & \asymp \frac{1}{n_1} \left(\frac{n_1}{k}\right)^{b_1} \left(\frac{n_1}{n_1-k}\right)^{b_2}\\
    & \asymp \frac{1}{n_1} \left[\left(\frac{n_1}{k}\right)^{b_1} + \left(\frac{n_1}{n_1-k}\right)^{b_2}\right].
\end{align*}
We can now conclude the proof of~\eqref{eq:diff_1} as follows.
\begin{align*}
    & \hspace{.5cm}E\left[\int_{(\mathcal I_0 \cup \mathcal{I}_1)^c} \Big(F_Y^{-1} \circ \mathbb G_{+,n_1}^{-1}  \circ \Hn - F_Y^{-1} \circ \Gn^{-1} \circ \Hn\Big) d\widehat F_U\right]\\
    & \leq \sum_{k=1}^{n_1-1} E\left(Y_{(k + 1)} - Y_{(k)}\right) \mathbb \Pr\left(U_i \in \left[\zeta_{(\lceil \frac{kn_3}{n_1}\rceil)}, \zeta_{(\lfloor \frac{kn_3}{n_1} + 1\rfloor)}\right)\right)\\
    & \lesssim \sum_{k=1}^{n_1-1} E\left(Y_{(k + 1)} - Y_{(k)}\right) \frac{1}{n_1} \left[\left(\frac{n_1}{k}\right)^{b_1} + \left(\frac{n_1}{n_1-k}\right)^{b_2}\right]\\
    & \lesssim \sum_{k=1}^{\lfloor n_1/2\rfloor} E\left(Y_{(k + 1)} - Y_{(k)}\right) \frac{1}{n_1} \left(\frac{n_1}{k}\right)^{b_1} + \sum_{k=\lfloor n_1/2\rfloor +1}^{n_1-1} E\left(Y_{(k + 1)} - Y_{(k)}\right) \frac{1}{n_1} \left(\frac{n_1}{n_1-k}\right)^{b_2}\\
    & \leq n^{b_1 - 1} E\left[Y_{(\lfloor n_1/2\rfloor)} - Y_{(1)}\right] + n^{b_2 - 1} E\left[Y_{(n_1)} - Y_{(\lfloor n_1/2\rfloor + 1)} \right]\\
    & \lesssim n^{-1+b_1 + d_1} + n^{-1 + b_2 + d_2} \quad \text{ by Lemma~\ref{lem:extrem_moments1}}\\
    & = o(1).
\end{align*}

As for \eqref{eq:diff_2}, we note that $\mathbb G_{+,n_1}^{-1}\circ \Hn \geq \left(\Hn^{-1}\circ\Gn\right)^{-1}$ since $\mathbb G_{+,n_1}^{-1}\circ \Hn$ is the right inverse of $\Hn^{-1}\circ\Gn$ by Lemma~\ref{lem_Xi_Gn+}, so that the integrand 
\begin{align*}
    \Big(F_Y^{-1} \circ \mathbb G_{+,n_1}^{-1}  \circ \Hn - F_Y^{-1} \circ \left(\Hn^{-1}\circ \Gn\right)^{-1}\Big) 
\end{align*}
is always non-negative. 
To prove the result, it therefore suffices to show that
\begin{align*}
        E\left[\int_{(\mathcal I_0 \cup \mathcal{I}_1)^c}\Big(F_Y^{-1} \circ \mathbb G_{+,n_1}^{-1}  \circ \Hn - F_Y^{-1} \circ \left(\Hn^{-1}\circ \Gn\right)^{-1}\Big) d\widehat F_U\right] = o\left(\frac{1}{\sqrt{N}}\right).
    \end{align*}
Since $\mathbb G_{+,n_1}^{-1}\circ \Hn$ is the right inverse of $\Hn^{-1}\circ\Gn$, we have
\begin{align*}
    & \hspace{.5cm}E\left[\int_{(\mathcal I_0 \cup \mathcal{I}_1)^c} \Big(F_Y^{-1} \circ \mathbb G_{+,n_1}^{-1}  \circ \Hn - F_Y^{-1} \circ \left(\Hn^{-1}\circ \Gn\right)^{-1}\Big) d\widehat F_U\right]\\
    &\leq  E\left[\frac{1}{n_2} \sum_{i=1}^{n_2} \sum_{k=1}^{n_3-1} \left(Y_{(\lceil \frac{k+1}{n_3} n_1\rceil)} - Y_{(\lceil \frac{k}{n_3}n_1\rceil)}\right) \ind{\Hn(U_i) = k/n_3}\right]\\
    & = \sum_{k=1}^{n_3-1} E\left(Y_{(\lceil \frac{k+1}{n_3} n_1\rceil)} - Y_{(\lceil \frac{k}{n_3}n_1\rceil)}\right) \mathbb \Pr\left(\Hn(U_i) = k/n_3\right).
\end{align*}
The result follows from arguments similar to those above.

\hfill$\square$

\begin{lemma}\label{lem_control_moments_unif}
    Let \(1\le k<n\), and let \(a,b>0\) with \(a<k\) and \(b<n-k\) and $U_1,\dots, U_n \overset{iid}{\sim} \text{Uniform}(0,1)$. Let $U_{(1)} \leq \dots \leq U_{(n)}$ denote the corresponding order statistics. Then it holds that
\[
\mathbb E\!\left[
(U_{(k+1)}-U_{(k)})U_{(k)}^{-a}(1-U_{(k+1)})^{-b}
\right]
\lesssim
\frac1n
\left(\frac nk\right)^a
\left(\frac n{n-k}\right)^b ,
\]
where the implicit constant depends only on \(a\) and \(b\).

\end{lemma}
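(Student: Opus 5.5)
The plan is to compute the expectation exactly through the joint law of a pair of consecutive uniform order statistics, and then reduce everything to ratios of Gamma functions. Write $V:=U_{(k)}$, $W:=1-U_{(k+1)}$ and $S:=U_{(k+1)}-U_{(k)}$, so that $V+S+W=1$. By the standard Dirichlet representation of uniform spacings, $(V,S,W)\sim\text{Dirichlet}(k,1,n-k)$. Its density on the simplex is
\[
\frac{\Gamma(n+1)}{\Gamma(k)\Gamma(1)\Gamma(n-k)}\,v^{k-1}w^{n-k-1}.
\]
Hence the target expectation is a Dirichlet moment,
\[
E\!\left[S\,V^{-a}W^{-b}\right]=\frac{\Gamma(n+1)}{\Gamma(k)\Gamma(n-k)}\cdot\frac{\Gamma(k-a)\,\Gamma(2)\,\Gamma(n-k-b)}{\Gamma(n+2-a-b)}.
\]
This is finite precisely because $a<k$ and $b<n-k$, which is where those hypotheses enter.

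Next I would bound each Gamma ratio using Gautschi/Wendel-type inequalities (or Stirling as in the bound used in Lemma~\ref{lem:HnGn}). The first such ratio is
\[
\Gamma(k-a)/\Gamma(k)\lesssim k^{-a},
\]
with a constant depending only on $a$. The analogous bounds are
\[
\Gamma(n-k-b)/\Gamma(n-k)\lesssim (n-k)^{-b}
\quad\text{and}\quad
\Gamma(n+1)/\Gamma(n+2-a-b)\lesssim n^{a+b-1}.
\]
Multiplying the three bounds gives $n^{-1}(n/k)^a(n/(n-k))^b$, which is the claim.

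The main obstacle is making the Gamma-ratio bounds uniform when the arguments are small, for instance $k-a$ close to $0$. In that regime $\Gamma(k-a)$ blows up, so a uniform bound is impossible in full generality. I would handle this by noting that for $k\ge 1$ and $a<k$ the quantity $x^{a}\Gamma(x-a)/\Gamma(x)$ is bounded on $x\ge x_0$ for any $x_0>a$. For the finitely many remaining small-$k$ cases, the constant stays controlled provided $k-a$ is bounded away from $0$. In the applications (Lemma~\ref{lem_Gn+_equiv_Gn}) we have $a=b_1<1/2\le k/2$, so $k-a\ge 1/2$. I would therefore state that the implicit constant depends on $a,b$ through a lower bound on $k-a$ and $n-k-b$, which holds with $a,b<1/2$. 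For large arguments, $\Gamma(x+s)/\Gamma(x)\asymp x^{s}$ handles the remaining factors, including $\Gamma(n+1)/\Gamma(n+2-a-b)\asymp n^{a+b-1}$ since $n\ge 2$.
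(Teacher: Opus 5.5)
Your proof is correct and takes the paper's route: your Dirichlet$(k,1,n-k)$ moment is exactly the closed form $\frac{\Gamma(n+1)}{\Gamma(n+2-a-b)}\frac{\Gamma(k-a)}{\Gamma(k)}\frac{\Gamma(n-k-b)}{\Gamma(n-k)}$ that the paper obtains by integrating the joint density of $(U_{(k)},U_{(k+1)})$, and you then apply the same Gamma-ratio bounds. Your small-argument obstacle does not arise, so you do not need to restrict to $a,b<1/2$ or weaken the dependence of the constant: since $k$ and $n-k$ are integers exceeding $a$ and $b$, you automatically have $k-a\ge \lfloor a\rfloor+1-a>0$ and $n-k-b\ge \lfloor b\rfloor+1-b>0$, so your own boundedness observation applied with $x_0=\lfloor a\rfloor+1$ (resp.\ $\lfloor b\rfloor+1$) already gives constants depending only on $a$ and $b$ (this integrality is also what makes the paper's estimate $\Gamma(m-c)/\Gamma(m)\le C_c m^{-c}$ valid for integer $m>c$).
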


\textbf{Proof.}

The joint density of \((U_{(k)},U_{(k+1)})\) is
\[
f_{U_{(k)},U_{(k+1)}}(x,y)
=
\frac{n!}{(k-1)!(n-k-1)!}
x^{k-1}(1-y)^{n-k-1},
\qquad 0<x<y<1.
\]
Therefore,
\begin{align*}
&\mathbb E\!\left[
(U_{(k+1)}-U_{(k)})U_{(k)}^{-a}(1-U_{(k+1)})^{-b}
\right] \\
&=
\frac{n!}{(k-1)!(n-k-1)!}
\int_0^1\int_0^y
(y-x)x^{k-1-a}(1-y)^{n-k-1-b}
\,dx\,dy.
\end{align*}
For \(a<k\),
\[
\int_0^y (y-x)x^{k-1-a}\,dx
=
\frac{y^{k+1-a}}{(k-a)(k+1-a)}.
\]
Hence
\begin{align*}
&\mathbb E\!\left[
(U_{(k+1)}-U_{(k)})U_{(k)}^{-a}(1-U_{(k+1)})^{-b}
\right] \\
&=
\frac{n!}{(k-1)!(n-k-1)!(k-a)(k+1-a)}
\int_0^1
y^{k+1-a}(1-y)^{n-k-1-b}
\,dy \\
&=
\frac{n!}{(k-1)!(n-k-1)!(k-a)(k+1-a)}
B(k+2-a,n-k-b).
\end{align*}
Using \(B(p,q)=\Gamma(p)\Gamma(q)/\Gamma(p+q)\), we get
\[
\mathbb E\!\left[
(U_{(k+1)}-U_{(k)})U_{(k)}^{-a}(1-U_{(k+1)})^{-b}
\right]
=
\frac{n!\Gamma(k+2-a)\Gamma(n-k-b)}
{(k-1)!(n-k-1)!(k-a)(k+1-a)\Gamma(n+2-a-b)}.
\]
Since
\[
\Gamma(k+2-a)=(k+1-a)(k-a)\Gamma(k-a),
\]
this becomes
\[
\mathbb E\!\left[
(U_{(k+1)}-U_{(k)})U_{(k)}^{-a}(1-U_{(k+1)})^{-b}
\right]
=
\frac{\Gamma(n+1)}{\Gamma(n+2-a-b)}
\frac{\Gamma(k-a)}{\Gamma(k)}
\frac{\Gamma(n-k-b)}{\Gamma(n-k)}.
\]

We now use the standard Gamma-ratio estimate: for each fixed \(c>0\), there exists a constant
\(C_c>0\) such that, for all \(m>c\),
\[
\frac{\Gamma(m-c)}{\Gamma(m)}
\le C_c m^{-c}.
\]
Similarly, for fixed \(a,b>0\), there exists \(C_{a,b}>0\) such that
\[
\frac{\Gamma(n+1)}{\Gamma(n+2-a-b)}
\le C_{a,b} n^{a+b-1}.
\]
Applying these bounds gives
\begin{align*}
\mathbb E\!\left[
(U_{(k+1)}-U_{(k)})U_{(k)}^{-a}(1-U_{(k+1)})^{-b}
\right]
&\le
C_{a,b}
n^{a+b-1} k^{-a}(n-k)^{-b} \\
&=
C_{a,b}
\frac1n
\left(\frac nk\right)^a
\left(\frac n{n-k}\right)^b .
\end{align*}
This proves the desired bound.

\hfill$\square$

\end{document}


\renewcommand{\thesection}{S.\arabic{section}}
\renewcommand{\thelemma}{\thesection.\arabic{lemma}}
\numberwithin{equation}{section}
\renewcommand{\theequation}{\thesection.\arabic{equation}}

\title{Supplementary Appendix for ``Asymptotic Properties of Empirical Quantile-Based Estimators''}

\author{Julien Chhor\thanks{Toulouse School of Economics, University of Toulouse Capitole, Toulouse, France, julien.chhor@tse-fr.eu.} \and Xavier D'Haultf\oe{}uille\thanks{CREST-ENSAE, xavier.dhaultfoeuille@ensae.fr.} \and J\'er\'emy L'Hour\thanks{CFM \& CREST-ENSAE, jeremy.l.hour@ensae.fr.} \and Martin Mugnier\thanks{Paris School of Economics, martin.mugnier@psemail.eu.}}

\maketitle

\begin{abstract}
This appendix contains additional lemmas used to prove Theorems 1 and 2. 
\end{abstract}

\section{Lemmas for Theorem 1}\label{sec:S1}

\begin{lemma}\label{lem:11}
   For all $j\in\{1,2,3,4\}$: $R_j=o_P(1)$ as $N\to\infty$.
\end{lemma}
{\bf Proof:}
First, we show that $R_1=o_P(1)$.  By the triangle inequality, we have
\begin{equation}\label{eq:T1n1_rem11}
\begin{aligned}
 \abs{R_1} & \leq  \sqrt{N}\abs{F_Y^{-1}\big(\xi_{(n_1)}\big)}\abs{F_U(1)-F_U\big(\xi_{(n_1)}\big)} 
  +\sqrt{N}\abs{F_Y^{-1}\big(\xi_{(1)}\big)}\abs{F_U\big(1/n_1\big)-F_U\big(\xi_{(1)}\big)} \\[5pt]
 &=\sqrt{N}\abs{Y_{(n_1)}}\abs{1-F_U(\xi_{(n_1)})} +\sqrt{N}\abs{Y_{(1)}}\abs{F_U\big(1/n_1\big)-F_U\big(\xi_{(1)}\big)}.
\end{aligned}
\end{equation}
Lemma~\ref{lem:extrem_moments1} and Markov's inequality imply
\[
    \sqrt{N}\abs{Y_{(1)}} =O_P\left(n_1^{\frac12 +d_1}\right) \quad \text{ and } \quad  \sqrt{N}\abs{Y_{(n_1)}} =O_P\left(n_1^{\frac12 +d_2}\right),
\]
and it suffices to show that 
\[
 \quad \abs{F_U\big(1/n_1\big)-F_U\big(\xi_{(1)}\big)} = o_P\left(n_1^{-\frac12-d_1}\right) \quad \text{ and } \quad  \abs{1-F_U(\xi_{(n_1)})}=o_P\left(n_1^{-\frac12-d_2}\right).
\]
We show the second result only (the first can de obtained analogously). By the triangle inequality and the mean value theorem, there exists $c_{n_1}\in((n_1-1)/n_1\land \eps_{(n_1)}, (n_1-1)/n_1\lor \eps_{(n_1)})$ such that
\begin{align*}
     \abs{1-F_U\big(\xi_{(n_1)}\big)}&\leq \abs{1- F_U\big((n_1-1)/n_1\big)} +\abs{F_U\big((n_1-1)/n_1\big) -F_U\big(\xi_{(n_1)}\big)} \\[5pt]
     &=\int_{\frac{n_1-1}{n_1}}^{1}f_U(u)du + f_U(c_{n_1})\left(1-\xi_{(n_1)} +\frac{1}{n_1}\right) \\[5pt]
     &\lesssim \int_{\frac{n_1-1}{n_1}}^{1}u^{-b_1}(1-u)^{-b_2}du + c_{n_1}^{-b_1}(1-c_{n_1})^{-b_2}\left(1-\xi_{(n_1)} +\frac{1}{n_1}\right)\\[5pt]
     &\lesssim \int_0^{\frac{1}{n_1}}v^{-b_2}dv +  \left(2^{b_1}+\xi_{(n_1)}^{-b_1}\right)\left(n_1^{b_2}+(1-\xi_{(n_1)})^{-b_2}\right)\left(1-\xi_{(n_1)} +\frac{1}{n_1}\right)\\[5pt]
     &\lesssim n_1^{b_2-1}+n_1^{b_2-1}\left(2^{b_1}+\xi_{(n_1)}^{-b_1}\right)\left(1+[n_1(1-\xi_{(n_1)})]^{-b_2}\right)\left(n_1\big(1-\xi_{(n_1)}\big) +1\right) \\
     &=n_1^{b_2-1}\left(1+O_P(1)\right),
\end{align*}
where the last line follows from $\xi_{(n_1)}\overset{p}{\to}1$, $n_1(1-\xi_{(n_1)})\overset{d}{\to}$Exponential(1), and the continuous mapping theorem. The result follows from $b_2+d_2<1/2$ by Assumption~\ref{ass:smoothness1}\ref{ass:smoothness14}.

\medskip
Second, we show that $R_2=o_P(1)$. By Assumption~\ref{ass:smoothness1}\ref{ass:smoothness12}--\ref{ass:smoothness13},
\begin{align*}
\sqrt{N}\left|\int_0^{\xi_{(1)}} F_Y^{-1}dF_U\right| &\lesssim \sqrt{n_1}\ind{\xi_{(1)}\geq 1/2} \left|\int_0^1 F_Y^{-1}dF_U\right| + \sqrt{n_1}\ind{\xi_{(1)}< 1/2} \int_0^{\xi_{(1)}} t^{-b_1-d_1}dt \\[5pt]
&\lesssim \sqrt{n_1}\ind{\xi_{(1)}\geq 1/2} + n_1^{b_1+d_1-1/2}\left[n_1\xi_{(1)}\right]^{1-b_1-d_1}.
\end{align*}
The first term is $o_P(1)$ since it converges to zero in $L_1$:
\begin{equation*}
    E\left[\sqrt{n_1}\ind{\xi_{(1)}\geq 1/2}\right]=\sqrt{n_1}\Pr\left(\xi_{(1)}\geq 1/2\right)=\frac{\sqrt{n_1}}{2^{n_1}}=o(1).
\end{equation*}
That the second term is $o_P(1)$ follows from $E[\xi_{(1)}]=1/(n_1+1)$ and Jensen's inequality since $b_1+d_1<1/2$. 
This shows that
\begin{equation}\label{eq:tail_up}
\sqrt{N}\left|\int_0^{\xi_{(1)}} F_Y^{-1}dF_U\right|=o_P(1).
\end{equation}
An analogous reasoning shows that
\begin{equation}\label{eq:tail_low}
\sqrt{N}\left|\int_{\xi_{(n_1)}}^1 F_Y^{-1}dF_U\right|=o_P(1).
\end{equation}
It follows that $R_2=o_P(1)$. 

\medskip
Third, we show that $R_3=o_P(1)$. By the mean value theorem, there exists $u_{n_1}(t)\in (\Gn(t), t)$ such that 
$$R_3 = \sqrt{N}\int_{\xi_{(1)}}^{\xi_{(n_1)}} \underbrace{\left[f_U - f_U \circ u_{n_1} \right]}_{=:\nu_{n_1}}\left[\Gn - \I\right]dF_Y^{-1}.
$$ 
By Assumption~\ref{ass:smoothness1}\ref{ass:smoothness14}, there exists $\delta>0$ such that $b_j+d_j<1/2-\delta$. Further, let $\delta_j>0$ be such that $\delta>2(\delta_1\vee\delta_2)$ and
\begin{equation*}
b_j+d_j<1/2-\delta - \delta_j.    
\end{equation*}
Then, let $q(t) = t^{1/2-\delta_1}(1-t)^{1/2-\delta_2}$. From what precedes, we have
\begin{equation}
 \left \vert R_3\right \vert  \leq  \sup_{t\in(0,1)} \left \vert \frac{\sqrt{n_1}(\Gn(t) - t )}{q(t)} \right\vert \int_{\xi_{(1)}}^{\xi_{(n_1)}}|\nu_{n_1}(t)|q(t)dF^{-1}_Y(t).    
\label{eq:ineg_Rn}
\end{equation}
We now show that the integral term in \eqref{eq:ineg_Rn} tends to 0 in $L_1$. Let $f_{n_1}(t):=\ind{\xi_{(1)}<t<\xi_{(n_1)}}|\nu_{n_1}(t)|$ and let $\mu$ denote the measure defined as the product of the probability measure on $\Omega$, the probability space on which all random variables are defined, with the measure $\rho$ on $(0,1)$ such that $d\rho/dF_Y^{-1}=q$. By Lemma~\ref{lem:int_FY}, $\mu(\Omega\times(0,1))=\int_0^1q(t)dF_Y^{-1}(t)<\infty$. Moreover, 
\begin{equation*}
    E\left[\int_{\xi_{(1)}}^{\xi_{(n_1)}}|\nu_{n_1}(t)|q(t)dF^{-1}_Y(t)\right] = \int_{\Omega\times(0,1)}f_{n_1}d\mu.
\end{equation*}
First, we show that $f_{n_1}$ tends to 0 $\mu$-almost surely. By almost-sure convergence of $\Gn(t)$ to $t$, we have, for all $t\in(0,1)$, $u_{n_1}(t)\convAS t$. Then, by continuity of $f_U$, $\nu_{n_1}(t)\convAS 0$ for all $t\in (0,1)$. This shows that $f_{n_1}\to0$ $\mu$-almost-surely. Second, by Corollary(i)--(ii) of Theorem 16.14 in \cite[][p.~218]{Billingsley1995}, it is sufficient to show that $f_{n_1}$ is uniformly integrable, i.e., that
\begin{equation*}
\lim_{M\to\infty}\limsup_{N\to\infty}\int\ind{f_{n_1}>M}f_{n_1}d\mu =0.
\end{equation*}
We will show the stronger result that, for all $M>0$,
\begin{equation}\label{eq:UI2}
    \limsup_{N\to\infty}\int\ind{f_{n_1}>M}f_{n_1}d\mu =0.
\end{equation}
Let $M>0$ and $\nu\in(1,\infty)$ such that 
\begin{equation}\label{eq:cond_nu}
    \nu<\frac{b_1+\delta-\delta_1}{b_1} \land \frac{b_2+\delta-\delta_2}{b_2}.
\end{equation}
By H\"older's inequality, 
\begin{equation}\label{eq:hold_change_measure}
     \int\ind{f_{n_1}>M}f_{n_1}d\mu 
     \leq \left(\int f_{n_1}^\nu d\mu\right)^{1/\nu}\left(\int \ind{f_{n_1}>M} d\mu\right)^{(\nu-1)/\nu}.
\end{equation}
First, we show that the first term in the right-hand side of \eqref{eq:hold_change_measure} is $O(1)$. 
\begin{align*}
    \int f_{n_1}^\nu d\mu & = \int_0^1E\left[\ind{\xi_{(1)}<t<\xi_{(n_1)}}\vert \nu_{n_1}(t)\vert ^\nu\right]q(t)dF_Y^{-1}(t).
\end{align*} 
Let $B(t):=C_Ut^{-b_1}(1-t)^{-b_2}$. By the triangle inequality, Assumption \ref{ass:smoothness1}, and since $B^\nu$ is convex and $u_{n_1}(t)$ lies between $\Gn(t)$ and $t$, for all $t\in (\xi_{(1)},  \xi_{(n_1)})$, almost surely,
\begin{align*}
 |\nu_{n_1}(t)|^\nu & \leq B(t)^\nu + B(u_{n_1}(t))^\nu \\[5pt]
 & \leq 2B(t)^\nu + B(\Gn(t))^\nu.
\end{align*}
Hence,
\begin{align*}
    \int f_{n_1}^\nu d\mu & \lesssim \int_0^1B(t)^\nu q(t)dF_Y^{-1}(t) +\int_0^1E\left[\ind{\xi_{(1)}<t<\xi_{(n_1)}}B(\Gn(t))^\nu\right]q(t)dF_Y^{-1}(t).
\end{align*}
By Lemma~\ref{lem:int_FY}, for $\nu$ sufficiently small, the first integral is finite and does not depend on $n_1$. Consider the second integral. By Lemma~\ref{lem:int_FY} again, to show that this integral is $O(1)$ it suffices to show that
\begin{equation}\label{eq:in1_bounded}
v_{n_1}(t):=E\left[\ind{\xi_{(1)}<t<\xi_{(n_1)}}B(\Gn(t))^\nu\right]q(t) \lesssim t^{1/2-b_1-\delta}(1-t)^{1/2-b_2-\delta},
\end{equation}
which we prove below. We have
\begin{align*}
v_{n_1}(t)&= E\left[\ind{1<n_1\Gn(t)<n_1}B\left(\Big(n_1\Gn(t)\Big)/n_1\right)^\nu\right]q(t) \\[5pt]
&=C_Ut^{1/2-\delta_1}(1-t)^{1/2-\delta_2}E\left[\ind{1<W<n_1}\left(\frac{W}{n_1}\right)^{-b_1\nu}\left(1-\frac{W}{n_1}\right)^{-b_2\nu}\right] \\[5pt]
& \lesssim t^{1/2-b_1-\delta}(1-t)^{1/2-b_2-\delta} w_{n_1}(t),
\end{align*}
where $W$$\sim$Binomial$(n_1,t)$ and
\begin{align*}
    w_{n_1}(t)&=t^{b_1+\delta-\delta_1}(1-t)^{b_2+\delta-\delta_2}E\left[\ind{1<W<n_1}\left(\frac{W}{n_1}\right)^{-b_1\nu}\left(1-\frac{W}{n_1}\right)^{-b_2\nu}\right]. 
\end{align*}
Hence, it suffices to show that $w_{n_1}(t)=O(1)$ uniformly in $t$. Since
\begin{align*}
   &  E\left[\ind{1<W<n_1}\left(\frac{W}{n_1}\right)^{-b_1\nu}\left(1-\frac{W}{n_1}\right)^{-b_2\nu}\right] \\[5pt]
   & \leq E^{1/2}\left[\ind{1<W<n_1}\left(\frac{W}{n_1}\right)^{-2b_1\nu}\right]E^{1/2}\left[\ind{1<W<n_1}\left(1-\frac{W}{n_1}\right)^{-2b_2\nu}\right],
\end{align*}
it suffices to show
\begin{align}
    E\left[\ind{1<W<n_1}\left(\frac{W}{n_1}\right)^{-2b_1\nu}\right]&\lesssim t^{-2b_1\nu}, \label{eq:to_Prove} \\
     E\left[\ind{1<W<n_1}\left(1-\frac{W}{n_1}\right)^{-2b_2\nu}\right]&\lesssim (1-t)^{-2b_2\nu}. \notag
\end{align}
We show the first result only (the second can be obtained analogously). We have
\begin{align*}
   & E\left[\ind{1<W<n_1}\left(\frac{W}{n_1}\right)^{-2b_1\nu}\right] \\[5pt]
   & = E\left[\ind{1<W<n_1}\ind{W\geq \frac{n_1t}{2}}\left(\frac{W}{n_1}\right)^{-2b_1\nu}\right]  \\[5pt]
   &\qquad + E\left[\ind{1<W<n_1}\ind{W< \frac{n_1t}{2}}\left(\frac{W}{n_1}\right)^{-2b_1\nu}\right] \\[5pt]
   &\lesssim  t^{-2b_1\nu} + E\left[\ind{W< \frac{n_1t}{2}}\left(\frac{1}{n_1}\right)^{-2b_1\nu}\right] \\[5pt]
   &= t^{-2b_1\nu} + n_1^{2b_1\nu} \Pr\left(W< \frac{n_1t}{2}\right).
\end{align*}
By Hoeffding's inequality, we have
\[
\Pr\left(W< \frac{n_1t}{2}\right) \leq \exp\left(-\frac{1}{8}n_1t\right).
\]
Since $x^\delta\exp(-x/8)$ is bounded on $[0,\infty)$ for any $\delta>0$, we have
\[
n_1^{2b_1\nu} \Pr\left(W< \frac{n_1t}{2}\right)\lesssim t^{-2b_1\nu}
\]
 and thus
 \[
  E\left[\ind{1<W<n_1}\left(\frac{W}{n_1}\right)^{-2b_1\nu}\right]\lesssim t^{-2b_1\nu},
 \]
 which proves \eqref{eq:to_Prove}, $w_{n_1}(t)=O(1)$, \eqref{eq:in1_bounded}, and in turn that the first term in~\eqref{eq:hold_change_measure} is $O(1)$.
 
\medskip
Next, we show that the second term in~\eqref{eq:hold_change_measure} converges to zero. We have
\begin{align}
   \int \ind{f_{n_1}>M} d\mu & =  \int_0^1\Pr\left(\ind{\xi_{(1)}<t<\xi_{(n_1)}}\vert \nu_{n_1}(t)\vert >M\right)q(t)dF_Y^{-1}(t) \notag \\[5pt]
   &\leq  \int_0^1\Pr\left(\vert \nu_{n_1}(t)\vert >M\right)q(t)dF_Y^{-1}(t). \label{eq:DCT_change_measure}
\end{align}
We have already shown that, for all $t\in(0,1)$, $\vert \nu_{n_1}(t)\vert \convAS 0$. It follows that, for all $t\in(0,1)$, $\abs{\nu_{n_1}}(t)=o_P(1)$. This implies that
\[
\lim_{N\to\infty}\Pr\left(\vert \nu_{n_1}(t)\vert >M\right)q(t)=0, \quad \forall t\in(0,1).
\]
Moreover, for all $t\in(0,1)$, $\Pr\left(\vert \nu_{n_1}(t)\vert >M\right)q(t)\leq q(t)$ with $\int_0^1q(t)dF_Y^{-1}(t)<\infty$ by Lemma~\ref{lem:int_FY}. By the dominated convergence theorem,
\[
\lim_{N\to\infty}\int_0^1\Pr\left(\vert \nu_{n_1}(t)\vert >M\right)q(t)dF_Y^{-1}(t)=0,
\]
and \eqref{eq:DCT_change_measure} implies that \eqref{eq:UI2} holds. This completes the proof that 
\begin{equation*}
    E\left[\int_{\xi_{(1)}}^{\xi_{(n_1)}}|\nu_{n_1}(t)|q(t)dF^{-1}_Y(t)\right]=o(1),
\end{equation*}
and thus 
\begin{equation}
    \int_{\xi_{(1)}}^{\xi_{(n_1)}}|\nu_{n_1}(t)|q(t)dF^{-1}_Y(t) =o_P(1).
    \label{eq:for_conv_Rn}
\end{equation}
To prove $R_3=o_P(1)$, it is thus sufficient to show that the first term in \eqref{eq:ineg_Rn} is bounded in probability. By Equation~(2) in Chapter 2, Section 7 (p.~141) in \cite{ShorackWellner1986}, there exists a Brownian bridge $\mathbb U$ such that
$$ \sup_{t \in (0,1)} \abs{\frac{\sqrt{n_1}(\Gn(t) -t) - \mathbb U(t) }{q(t)}} = o_P(1).$$
Since $||./q||$ is a norm, the triangular inequality yields
\begin{align*}
    \sup_{t\in(0,1)} \abs{\frac{\sqrt{n_1}(\Gn(t) - t )}{q(t)}} \leq \sup_{t\in(0,1)} |\mathbb U(t)/q(t)| + o_P(1) = O_P(1) + o_P(1) = O_P(1),
\end{align*}
by inequality $(17)$ p.~451 in \cite{ShorackWellner1986} with $a=0, b=1/2$, and by noticing that $\mathbb U/q$ has the same distribution on $[0, 1/2]$ and $[1/2, 1]$, and that the integral on the right-hand side of the inequality is finite for $q(t) = [t(1-t)]^a$ with $a < 1/2$. This result, together with \eqref{eq:for_conv_Rn} and \eqref{eq:ineg_Rn}, implies that $R_3=o_P(1)$.

\medskip
Fourth, we show that $R_4=o_P(1)$. We have
$$|R_4| \leq \sqrt{N} \int_0^{\xi_{(1)}}x d \Lambda(x) + \sqrt{N} \int_{\xi_{(n_1)}}^1(1-x) d \Lambda(x).$$
Consider the first term (the second term can be controlled analogously). Let $\delta\in (b_1+d_1,1/2)$. With probability tending to 1, $\xi_{(1)}\leq 1/2$. Moreover, on this event, we have, by Assumption \ref{ass:smoothness1}, 
\begin{align*}
\sqrt{N} \int_0^{\xi_{(1)}}x d \Lambda(x) &\lesssim \sqrt{N}\xi_{(1)}^{1-\delta}\int_0^{1/2} x^\delta d\Lambda(x) \\[5pt]
&\lesssim \sqrt{\frac{N}{n_1}}n_1^{\delta-1/2}\left[n_1\xi_{(1)}\right]^{1-\delta}\int_0^{1/2} x^{\delta-b_1}dF_Y^{-1}(x).
\end{align*}
By Lemma \ref{lem:int_FY}, $\int_0^{1/2} x^{\delta-b_1}dF_Y^{-1}(x)<\infty$. The result follows from $[n_1\xi_{(1)}]^{1-\delta}=O_P(1)$, $\sqrt{N/n_1}\leq 1$, and $\delta<1/2$. 

\hfill$\square$

\begin{lemma}\label{lem:12}
    $E[\abs{R_5}]\to0$ as $N\to\infty$.
\end{lemma}
{\bf Proof:}
 For that purpose, let $I_{n_1}(x)=(x,\Gn(x)]$ if $\Gn(x)>x$, $I_{n_1}(x)=[\Gn(x),x)$ if $\Gn(x)<x$, and $\emptyset$ otherwise. Finally, let $S_{n_1}(x)=\text{sgn}(\Gn(x)-x)$. Observe first that
\begin{equation}
\Vn\circ F_U\circ \Gn(x)-\Vn\circ F_U(x) = S_{n_1}(x)Z_N(x),    
    \label{eq:expr_diff_V}
\end{equation}
with 
$$Z_N(x) = \frac{1}{\sqrt{n_2}} \sum_{i=1}^{n_2} \left[\indic{U_i\in I_{n_1}(x)} - P_U(I_{n_1}(x)) \right],$$
where $P_U([a,b]) = P_U((a,b]) = P_U([a,b)) = P_U((a,b)) = F_U(a) - F_U(b)$ for all $(a,b) \in [0,1], a\leq b$.
Then, 
\begin{align}
E\left[|R_5| \; |(\xi_i)_i\right] & \leq E\left[\int_0^1 \left|\Vn \circ F_U \circ \Gn - \Vn\circ F_U\right| \, dF_Y^{-1} \bigg |(\xi_i)_i\right] \notag \\[5pt]
& = \int_0^1 E\left[\left|\Vn \circ F_U \circ \Gn - \Vn\circ F_U\right|\; |(\xi_i)_i\right]  dF_Y^{-1} \notag \\[5pt]
& \leq \int_0^1 E\left[Z_N(x)^2 |(\xi_i)_i\right]^{1/2} dF_Y^{-1}(x) \notag \\[5pt]
& = \int_0^1 V\left[\ind{U_1\in I_{n_1}(x)}|(\xi_i)_i\right]^{1/2} dF_Y^{-1}(x) \notag \\[5pt]
& \leq \int_0^1 |P_U(I_{n_1}(x))|^{1/2} dF_Y^{-1}(x).\label{eq:ineg1_Rn}
\end{align}
The first equality follows by Fubini--Tonelli's theorem, the second inequality uses \eqref{eq:expr_diff_V} and the Cauchy--Schwarz inequality, and the second equality holds since conditional on the $(\xi_i)_{i}$, the variables $\indic{U_i\in I_{n_1}(x)} - P_U(I_{n_1}(x))$ are i.i.d. with mean zero. As a result, 
\begin{align}
E\left[|R_5|\right] & \leq \int_0^1 E\left[|P_U(I_{n_1}(x)|^{1/2}\right] dF_Y^{-1}(x) \notag \\[5pt]
& \leq \int_0^1 E\left[|P_U(I_{n_1}(x)|\right]^{1/2} dF_Y^{-1}(x), \label{eq:ineg2_Rn}
\end{align}
where the first inequality follows by \eqref{eq:ineg1_Rn} and Fubini--Tonelli's theorem, whereas the second is due to Jensen's inequality. Now, by the law of large numbers and the continuous mapping theorem, $|F_U(\Gn(x))-F_U(x)|\convP 0$ for all $x\in[0,1]$. Moreover, $|F_U(\Gn(x))-F_U(x)|\leq 1$. Hence, for all $x\in[0,1]$,
$$E\left[|F_U(x) -F_U(\Gn(x))|\right] \to 0.$$
We now apply the dominated convergence theorem to prove that $E[|R_5|]\to 0$. Because $x \mapsto E[|P_U(I_{n_1}(x))|]^{1/2}$ is bounded by 1 for all $n_1$, it is actually enough to bound this function for $x$ close to 0 and close to 1. Also, by symmetry, we can focus without loss of generality on the neighborhood of 0. We prove that
\begin{equation}
E\left[|P_U(I_{n_1}(x))|\right] \lesssim x^{1-b_1}.
    \label{eq:for_DCT_Rn}
\end{equation}
Then the result follows by Lemma \ref{lem:int_FY} combined with Assumption \ref{ass:smoothness1}\ref{ass:smoothness14}. To prove \eqref{eq:for_DCT_Rn}, we apply Lemma \ref{lem:useful_lem} with $Q_n(x):=\Gn(x)$, $B_n(x)=1$,  and $\delta < \exp(-1)/2$. If $x\geq 1/n_1$, the Cauchy--Schwarz inequality yields 
\begin{equation}\label{eq:cauchy_schwarz}
    E\left[\abs{\Gn(x)-x} \right] \leq \left[\frac{x(1-x)}{n_1}\right]^{1/2} \leq 2 x,
\end{equation}
since $n_1^{1/2}\geq x^{-1/2}$. If $x<1/n_1$, \eqref{eq:cauchy_schwarz} holds as well by Theorem 1 in \cite{berend2013sharp}. Hence, \eqref{eq:cauchy_schwarz} holds for all $x\in (0,\bar \delta/2)$. Next, let $n_0\in \N$, $n_0\geq 4/(1-\bar \delta)^2$. By Kiefer's inequality \citep[see, e.g.][Corollary A.6.3]{vaartwellner1996}, we have, for all $x\in [0,\delta]$ and all $n_1\geq n_0$,
\begin{equation*}
\Pr(\Gn(x)>1/2) \leq (e x)^{n_1(1-\bar \delta)^2/4} \lesssim x.   
\end{equation*}
Thus, we can apply Lemma~\ref{lem:useful_lem}, which yields \eqref{eq:for_DCT_Rn}. 

\hfill $\square$

\begin{lemma}\label{lem:13}
    For all $j\in\{6,7,8,9\}$: $R_j=o_P(1)$ as $N\to\infty$.
\end{lemma}
{\bf Proof:} First, we show that $R_6=o_p(1)$. We split $R_6$ into three integrals: $R_{6,1}$ is obtained by integrating along the segment $[\xi_{(1)}, \zeta_{(1)}\vee \xi_{(1)}]$, $R_{6,2}$ is obtained by integrating along the segment $(\zeta_{(1)}\vee \xi_{(1)}, \zeta_{(n_3)}\wedge \xi_{(n_1)}]$, and $R_{6,3}$ is obtained by integrating on the segment $(\zeta_{(n_3)}\wedge \xi_{(n_1)}, \xi_{(n_1)}]$. Let us first focus on $R_{6,2}$. By the mean value theorem, for all $t\in [\zeta_{(1)}\vee \xi_{(1)},\zeta_{(n_3)}\wedge \xi_{(n_1)}]$, there exists $u_N(t)\in (\Gn(t), \Hn^{-1}\circ\Gn(t))$ such that 
\begin{align*}
    R_{6,2} & = \sqrt{\frac{N}{n_3}} \sqrt{n_3}\int_{\zeta_{(1)}\vee \xi_{(1)}}^{\zeta_{(n_3)}\wedge \xi_{(n_1)}} \underbrace{\left[f_U - f_U \circ u_N \right]}_{=:\nu_N} \left[\Hn^{-1}\circ \Gn - \Gn \right] dF_Y^{-1}.
\end{align*}
By Assumption \ref{ass:smoothness1}\ref{ass:smoothness14}, there exists $\delta>0$ such that $b_j+d_j<1/2-\delta$. Further, let $\delta_j>0$ be such that $\delta>2(\delta_1\vee\delta_2)$ and
\begin{equation}
b_j+d_j<1/2-\delta - \delta_j.    
    \label{eq:delta_securite}
\end{equation}
Then let $q(t) = t^{1/2-\delta_1}(1-t)^{1/2-\delta_2}$. From what precedes, we have
\begin{equation}
 \left \vert R_{6,2}\right \vert  \leq \sqrt{\frac{N}{n_3}} \sup_{t\in(1/n_1,1-1/n_1)} \left \vert \frac{\sqrt{n_3}(\Hn^{-1}(t) - t )}{q(t)} \right\vert \int_{\zeta_{(1)}\vee \xi_{(1)}}^{\zeta_{(n_3)}\wedge \xi_{(n_1)}} |\nu_N(t)|q(t)  dF^{-1}_Y(t).    
\label{eq:ineg_R1n}
\end{equation}
We now show that the last term in~\eqref{eq:ineg_R1n} tends to 0 in $L_1$. Let 
\[
f_{N}(t):=\ind{\zeta_{(1)}\vee \xi_{(1)}<t<\zeta_{(n_3)}\wedge \xi_{(n_1)}}|\nu_N(t)|,
\]
and let $\mu$ denote the measure defined as the product of the probability measure on $\Omega$, the probability space on which all random variables are defined, with the measure $\rho$ on $(0,1)$ such that $d\rho/dF_Y^{-1}=q$. By Lemma~\ref{lem:int_FY}, $\mu(\Omega\times(0,1))=\int_0^1q(t)dF_Y^{-1}(t)<\infty$. Moreoever, 
\begin{equation*}
    E\left[\int_{\zeta_{(1)}\vee \xi_{(1)}}^{\zeta_{(n_3)}\wedge \xi_{(n_1)}} |\nu_N(t)|q(t)dF^{-1}_Y(t)\right] = \int_{\Omega\times(0,1)}f_Nd\mu.
\end{equation*}
First, we show that $f_N$ tends to 0 $\mu$-almost surely. By convergence of $\Gn(t)$ and $\Hn^{-1}(t)$ to $t$, we have, for all $t\in(0,1)$, $u_N(t)\convAS t$. Then, by continuity of $f_U$, $\nu_N(t)\convAS 0$ for all $t\in (0,1)$. This shows that $f_N\to0$ $\mu$-almost-surely. Second, by Corollary(i)--(ii) of Theorem 16.14 in \cite[][p. 218]{Billingsley1995}, it is sufficient to show that $f_N$ is uniformly integrable, i.e., that
\begin{equation}\label{eq:UI1_bis}\lim_{M\to\infty}\limsup_{N\to\infty}\int\ind{f_N>M}f_Nd\mu =0.
\end{equation}
We will show the stronger result that, for all $M>0$,
\begin{equation}\label{eq:UI2_bis}
    \limsup_{N\to\infty}\int\ind{f_N>M}f_Nd\mu =0.
\end{equation}
Let $M>0$ and $\nu\in(1,\infty)$ such that \eqref{eq:cond_nu} holds. By H\"older's inequality, 
\begin{equation}\label{eq:hold_change_measure_bis}
     \int\ind{f_N>M}f_Nd\mu 
     \leq \left(\int f_N^\nu d\mu\right)^{1/\nu}\left(\int \ind{f_N>M} d\mu\right)^{(\nu-1)/\nu}.
\end{equation}
Consider the first term. 
\begin{align*}
    \int f_N^\nu d\mu & = \int_0^1E\left[\int_{\zeta_{(1)}\vee \xi_{(1)}}^{\zeta_{(n_3)}\wedge \xi_{(n_1)}}\vert \nu_N(t)\vert ^\nu q(t)\right]dF_Y^{-1}(t).
\end{align*} 
Let $B(t):=C_Ut^{-b_1}(1-t)^{-b_2}$. By Assumption \ref{ass:smoothness1} and since $B^\nu$ is convex and $u_N(t)$ lies between $\Gn(t)$ and $t$, for all $t\in (\xi_{(1)}, \xi_{(n_1)})$, 
\begin{align*}
 |\nu_N(t)|^\nu & \leq 2^{\nu-1}[B(t)^\nu + B(u_N(t))^\nu] \\[5pt]
 & \leq 2^{\nu-1}[2B(t)^\nu + B(\Gn(t))^\nu + B(\Hn^{-1}(\Gn(t)))^\nu], \quad \text{a.s.}.
\end{align*}
Hence,
\begin{align*}
    \int f_N^\nu d\mu & \lesssim \int_0^1B(t)^\nu q(t)dF_Y^{-1}(t) +\int_0^1E\left[\ind{\xi_{(1)}<t<\xi_{(n_1)}}B(\Gn(t))^\nu q(t)\right]dF_Y^{-1}(t) \\[5pt]
    & \quad +\int_0^1E\left[\ind{\zeta_{(1)}\vee \xi_{(1)}<t<\zeta_{(n_3)}\wedge \xi_{(n_1)}}B(\Hn^{-1}(\Gn(t)))^\nu q(t)\right]dF_Y^{-1}(t).
\end{align*}
The first two integrals are uniformly bounded as already shown in the proof of Lemma~\ref{lem:11}. Let us focus on the third integral. 
\begin{align*}
   &\int_0^1E\left[\ind{\zeta_{(1)}\vee \xi_{(1)}<t<\zeta_{(n_3)}\wedge \xi_{(n_1)}}B(\Hn^{-1}(\Gn(t)))^\nu q(t)\right]dF_Y^{-1}(t) \\[5pt]
   &\quad \leq \int_0^1E\left[\zeta_{(1)}^{\delta/2+(1-\nu) b_1-\delta_1}(1-\zeta_{(n_3)})^{\delta/2+(1-\nu)b_2-\delta_2}\right]t^{1/2-\delta-b_1}(1-t)^{1/2-\delta-b_2}dF_Y^{-1}(t)  \\[5pt]
   &\quad =n_3^{(b_1+b_2)(\nu-1)+\delta_1+\delta_2-\delta}E\left[\big(n_3\zeta_{(1)}\big)^{\delta/2+(1-\nu)b_1-\delta_1}\big(n_3(1-\zeta_{(n_3)})\big)^{\delta/2+(1-\nu)b_2-\delta_2}\right]\\[5pt]
   &\qquad \times \int_0^1t^{1/2-\delta-b_1}(1-t)^{1/2-\delta-b_2}dF_Y^{-1}(t).
\end{align*}
For $\nu$ sufficiently small, $n_3^{(b_1+b_2)(\nu-1)+\delta_1+\delta_2-\delta}\leq 1$ and $x\mapsto x^{\delta/2+(1-\nu)b_j-\delta_j}$ is concave. By the Cauchy--Schwarz and Jensen inequalities,
\begin{align*}
   & E\left[\big(n_3\zeta_{(1)}\big)^{\delta/2+(1-\nu)b_1-\delta_1}\big(n_3(1-\zeta_{(n_3)})\big)^{\delta/2+(1-\nu)b_2-\delta_2}\right] \\[5pt]
   &\quad \leq \left(E\left[\big(n_3\zeta_{(1)}\big)^{2(\delta/2+(1-\nu)b_1-\delta_1)}\right]\right)^{1/2}\left(E\left[\big(n_3(1-\zeta_{(n_3)})\big)^{2(\delta/2+(1-\nu)b_2-\delta_2)}\right]\right)^{1/2} \\[5pt]
   &\quad \leq E\left[n_3\zeta_{(1)}\right]^{\delta/2+(1-\nu)b_1-\delta_1}E\left[n_3(1-\zeta_{(n_3)})\right]^{\delta/2+(1-\nu)b_2-\delta_2} \\[5pt]
   &\quad = \left(\frac{n_3}{n_3+1}\right)^{\delta/2+(1-\nu)b_1-\delta_1}\left(\frac{n_3}{n_3+1}\right)^{\delta/2+(1-\nu)b_2-\delta_2} \\[5pt]
   &\quad \leq 1.
\end{align*}
Combining all pieces together yields
\begin{align*}
   &\int_0^1E\left[\ind{\zeta_{(1)}\vee \xi_{(1)}<t<\zeta_{(n_3)}\wedge \xi_{(n_1)}}B(\Hn^{-1}(\Gn(t)))^\nu q(t)\right]dF_Y^{-1}(t) \\[5pt]
   &\quad \leq \int_0^1t^{1/2-\delta-b_1}(1-t)^{1/2-\delta-b_2}dF_Y^{-1}(t),
\end{align*}
which does not depend on $N$ and is finite by Lemma~\ref{lem:int_FY}. Next, let us show that the second term in~\eqref{eq:hold_change_measure_bis} converges to zero.
\begin{align}
   \int \ind{f_N>M} d\mu & = \int_0^1\Pr\left(\ind{\zeta_{(1)}\vee\xi_{(1)}<t<\zeta_{(n_3)}\wedge \xi_{(n_1)}}\vert \nu_N(t)\vert >M\right)q(t)dF_Y^{-1}(t) \notag \\[5pt]
   &\leq  \int_0^1\Pr\left(\vert \nu_N(t)\vert >M\right)q(t)dF_Y^{-1}(t). \label{eq:DCT_change_measure_bis}
\end{align}
Since we have already shown that, for all $t\in(0,1)$, $\vert \nu_N(t)\vert \convAS 0$, it follows that, for all $t\in(0,1)$, $\vert \nu_N(t)\vert \convP 0$. This implies that
\[
\lim_{N\to\infty}\Pr\left(\vert \nu_N(t)\vert >M\right)q(t)=0, \quad \forall t\in(0,1).
\]
Moreover, for all $t\in(0,1)$, $\Pr\left(\vert \nu_N(t)\vert >M\right)q(t)\leq q(t)$ with $\int_0^1q(t)dF_Y^{-1}(t)<\infty$ by Lemma~\ref{lem:int_FY}. By the dominated convergence theorem,
\[
\lim_{N\to\infty}\int_0^1\Pr\left(\vert \nu_N(t)\vert >M\right)q(t)dF_Y^{-1}(t)=0,
\]
and \eqref{eq:DCT_change_measure_bis} implies that \eqref{eq:UI2_bis} holds. This completes the proof that 
\begin{equation*}
    E\left[\int_{\zeta_{(1)}\vee \xi_{(1)}}^{\zeta_{(n_3)}\wedge \xi_{(n_1)}} |\nu_N(t)|q(t)dF^{-1}_Y(t)\right]=o(1),
\end{equation*}
and thus 
\begin{equation}
   \int_{\zeta_{(1)}\vee \xi_{(1)}}^{\zeta_{(n_3)}\wedge \xi_{(n_1)}}|\nu_N(t)|q(t)dF^{-1}_Y(t) \convP 0.
    \label{eq:for_conv_R1n}
\end{equation}
Next, by Corollary 4.3.1 and Theorem 3.4 in \cite{csorgo1986}, 
$$\sup_{t\in(1/n_3,1-1/n_3)} \abs{\frac{\sqrt{n_3}(\Hn^{-1}(t) - t )}{q(t)}} = O_P(1).$$

Let $\eps>0$. For $N$ sufficiently large, $1/n_1>1/n_3-\eps$. It follows that
\begin{align*}
     \sup_{t\in(1/n_1,1-1/n_1)} \left \vert \frac{\sqrt{n_3}(\Hn^{-1}(t) - t )}{q(t)} \right\vert &\leq \sup_{t\in(1/n_3-\eps,1-1/n_3+\eps)} \left \vert \frac{\sqrt{n_3}(\Hn^{-1}(t) - t )}{q(t)} \right\vert \\[5pt]
     &= \sup_{t\in(1/n_3,1-1/n_3)} \left \vert \frac{\sqrt{n_3}(\Hn^{-1}(t) - t )}{q(t)} \right\vert+O_P(1) \\[5pt]
     & = O_P(1).
\end{align*}
This, together with \eqref{eq:ineg_R1n}, $\sqrt{\frac{N}{n_3}} =O_P(1)$, and \eqref{eq:for_conv_R1n}, implies that $R_{6,2}=o_P(1)$. Now, let us show that $R_{6,1}=o_P(1)$.
\begin{align*}
   \abs{ R_{6,1}}  & :=\abs{ \sqrt{\frac{N}{n_3}} \sqrt{n_3}  \int_{\xi_{(1)}}^{\zeta_{(1)}\vee\xi_{(1)}} A_N(t)\left[\Hninv \circ \Gn(t) - \Gn(t) \right] dF_Y^{-1}(t)},\\[5pt]
    &\leq  \sqrt{\frac{N}{n_3}} \abs{F_Y\big(\zeta_{(1)}\big)-F_Y\big(\xi_{(1)}\big)} \ind{\xi_{(1)}<\zeta_{(1)}} \\[5pt]
    &\quad \times \sqrt{n_3}(\zeta_{(j_N)}+n_1^{-1}) \sup_{\xi_{(1)}<t<\zeta_{(1)}}\left[2B(t)+B(\Gn(t))+B(\Hn^{-1}(\Gn(t)))\right] \\[5pt]
    &\leq O_P(1)o_P(1)\times n_1^{b_1-1/2} \left((n_1/n_3)^{1/2}\left[n_3\zeta_{(j_N)}\right]+(n_1/n_3)^{-1/2}\right)\left(\left[n_1\xi_{(1)}\right]^{-b_1}+1+(n_1/n_3)^{-b_1}\left[n_3\zeta_{(1)}\right]^{-b_1}\right)
\end{align*}
where $j_N= \lceil n_1/n_3 \rceil$. By assumption, there exists $\eps>0$ such that, for $N$ sufficiently large, $|\frac{n_1}{n_3} - \frac{\lambda_3}{\lambda_1}|<\eps$. By choosing $\eps$ sufficiently small, this ensures that for $N$ sufficiently large $N$, $j_N \leq \lceil \eps + \lambda_3/\lambda_1 \rceil \leq \lceil \lambda_3/\lambda_1 \rceil + 1 =: \bar j$. In particular, $n_3\zeta_{(j_N)}\leq n_3\zeta_{(\bar j)}=O_P(1)$. Since $[n_1\xi_{(1)}]^{-b_1}=O_P(1)$ and  $[n_3\zeta_{(1)}]^{-b_1}=O_P(1)$, and $b_1<1/2$ by Assumption~\ref{ass:smoothness1}\ref{ass:smoothness14},
\begin{align*}
    R_{6,1}  \leq O_P(1)o_P(1)o_P(1)O_P(1) = o_P(1).
\end{align*}
An analogous reasoning yields $R_{6,3}=o_P(1)$.

\medskip
Second, we show that $R_{7}$ is defined in \eqref{eq:def_R2n}. We actually prove the stronger result that $R_{7}$ converges to $0$ in $L^1$. Let $\Wn = \sqrt{n_3}(\Hn^{-1} - \I)$ and $B_{n_1}=\ind{\xi_{(1)}\leq x<\xi_{(n_1)}}$. We have, by Fubini--Tonelli's theorem,
\begin{align*}
E[|R_7|] & \leq \sqrt{\frac{N}{n_3}} \int_0^1 E[|\Wn \circ \Gn(x) - \Wn(x)|\times B_{n_1}]f_U(x)dF_Y^{-1}(x) \\[5pt]
& \leq \sqrt{\frac{N}{n_3}}\int_0^1 E[(\Wn \circ \Gn(x) - \Wn(x))^2 \times B_{n_1}]^{1/2} f_U(x)dF_Y^{-1}(x).    
\end{align*}
We apply the dominated convergence theorem to prove the result. First, note that for all $(x,y) \in (0,1]^2$,
$$|\Hninv(x) - \Hninv(y)| \sim \text{Beta}(|\lceil n_3x \rceil -\lceil n_3y \rceil|, n_3-|\lceil n_3x \rceil -\lceil n_3y \rceil|+1),$$

with the convention that the $\text{Beta}(0,n_3+1)$ is the Dirac distribution at 0. Hence, for any $k\in\{1,\ldots,n_1-1\}$,
\begin{align}
    & E\left[(\Wn \circ \Gn(x) - \Wn(x))^2\; |\Gn(x)=k/n_1\right] \notag \\[5pt]
    & = n_3 \left\{E\left[\left(\Hninv(k/n_1) -\Hninv(x) - (k/n_1-x)\right)^2\right]\right\} \notag \\[5pt]
    &= n_3 \left\{E\left[\left(\Hninv(k/n_1) -\Hninv(x) - \frac{1}{n_3+1}\left (\left \lceil (n_3k)/n_1 \right \rceil - \lceil n_3x \rceil  \right) \right. \right.\right. \notag \\[5pt]
    & \quad \left.\left. \left. + \frac{1}{n_3+1}\left (\left \lceil (n_3k)/n_1 \right \rceil - \lceil n_3x \rceil  \right) - (k/n_1-x) \right)^2\right]\right\} \notag \\
    &= n_3 \left\{V\left[\Hninv(k/n_1) -\Hninv(x) \right] +  \frac{1}{(n_3+1)^2}(\lceil (n_3k)/n_1 \rceil - (n_3k)/n_1 + n_3x - \lceil n_3x \rceil + x - k/n_1)^2\right\} \notag \\[5pt]
    & = \frac{n_3}{(n_3+1)^2(n_3+2)}\left|\lceil(n_3k)/n_1\rceil  - \left \lceil{n_3x}\right \rceil \right|\left(n_3+1-\left|\lceil(n_3k)/n_1\rceil - \left \lceil{n_3x}\right \rceil \right| \right) \\[5pt]
    & \quad +  \frac{n_3}{(n_3+1)^2}(\lceil (n_3k)/n_1 \rceil - (n_3k)/n_1 + n_3x - \lceil n_3x \rceil + x - k/n_1)^2 \notag \\[5pt]
    & \leq  \frac{1}{n_3}\left|\lceil (n_3k)/n_1 \rceil - \left \lceil{n_3x}\right \rceil \right| + \frac{2}{n_3}\left[2\left(\lceil (n_3k)/n_1 \rceil -(n_3k)/n_1 \right)^2 + 2\left(\lceil n_3x \rceil - n_3x \right)^2 + \left(\frac{k}{n_1} - x\right)^2 \right]  \notag \\[5pt]
    & \leq \left|\frac{k}{n_1} - x\right| + \frac{1}{n_3}\left[10 + 2 \left(\frac{k}{n_1} - x\right)^2 \right], \label{eq:inter_for_R2n} 
\end{align}
where the first inequality follows by convexity and the last by the triangle inequality and because by definition, $\left|a - \left \lceil{a}\right \rceil \right|\leq 1$ for all $a \in \R_+$. Now, remark that $B_{n_1}=1$ iff $n_1\Gn(x)\in\{1,\ldots,n_1-1\}$. Then, by what precedes,
\begin{align}
E\left[(\Wn \circ \Gn(x) - \Wn(x))^2\times B_{n_1}\right]& \leq E\left[\left|\Gn(x) - x\right|\right] + \frac{1}{n_3}\left[10 +  2 V(\Gn(x))\right] \label{eq:ineg_for_R2n}\\
&\to 0.   \notag 
\end{align}
To apply the dominated convergence theorem, we show that there exists $q(.)$ such that for all $n_1\geq n_0$ and all $x\in [0,1]$,
\begin{equation}
E[(\Wn \circ \Gn(x) - \Wn(x))^2 \times B_{n_1}]^{1/2}\leq q(x),    
    \label{eq:for_DCT_R2n}
\end{equation}
with $\int_0^1 q(x) f_U(x)dF_Y^{-1}(x)<\infty$. As above, we focus on a neighborhood of 0. If $x>1/n_3$, we have, by \eqref{eq:ineg_for_R2n} and \eqref{eq:cauchy_schwarz},
$$E\left[(\Wn \circ \Gn(x) - \Wn(x))^2\times B_{n_1}\right] \leq 14x.$$
Now suppose that $x<1/n_3$. Remark that $E(B_{n_1})\leq 1-(1-x)^{n_1} \leq n_1x$. Then, integrating \eqref{eq:inter_for_R2n}, we obtain 
\begin{align*}
E\left[(\Wn \circ \Gn(x) - \Wn(x))^2\times B_{n_1}\right]&  \leq
E\left[|\Gn(x) - x|\right] + \frac{1}{n_3}\left[10n_1x + 2V(\Gn(x))\right]\\[5pt]
& \leq (\lceil \lambda_3/\lambda_1 \rceil +1) 14x. 
\end{align*}
Then we can choose $q(x)=((\lceil \lambda_3/\lambda_1 \rceil +1) 14x)^{1/2}$ in \eqref{eq:for_DCT_R2n}. By Assumption \ref{ass:smoothness1} and Lemma \ref{lem:int_FY}, we have $\int_0^{1/2} q(x) f_U(x)dF_Y^{-1}(x)<\infty$. The same reasoning applies to the interval $[1/2,1]$. The result follows.

\bigskip
Third, we show that $R_{8}=o_P(1)$. Let 
$\Lambda$ denote the measure on $(0,1)$ such that 
$d\Lambda/dF_Y^{-1} =f_U$. Note that $\Hn^{-1}(x) \sim$Beta$(\lceil n_3x \rceil, n_3 + 1 -\lceil n_3x \rceil)$, thus $E[\Hn^{-1}(x)]=\lceil n_3x \rceil/(n_3+1)$. Then

$$E[|R_8|]\leq \sqrt{\frac{N}{n_3}} \int_0^1 [1-x^{n_1}-(1-x)^{n_1}] 
\left|\frac{\lceil n_3x \rceil - (n_3+1)x}{(n_3+1)n_3^{-1/2}}\right| d \Lambda(x).$$
Let $f_N(x)$ denote the integrand. We have $\lim_{N\to\infty} f_N(x)=0$. Moreover, using $1-x^{n_1} - (1-x)^{n_1} \leq n_1x$ and since $n_1 \lesssim n_3$, we obtain, when $x<1/n_3$, 
$$f_N(x) \leq 2 n_3^{1/2}x \leq x^{1/2} \lesssim [x(1-x)]^{1/2}.$$ 
When $x\in [1/n_3,1-1/n_3]$, 
$$f_N(x) \leq \frac{2}{n_3^{1/2}} \lesssim [x(1-x)]^{1/2}.$$ 
Finally, when $x>1-1/n_3$, then $x>1-1/n_1$, and thus using $1-x^{n_1}\leq n_1(1-x)$,
$$f_N(x) \leq n_1(1-x) \frac{2}{(n_3+1)n_3^{-1/2}} \leq 2(1-x)^{1/2} \lesssim [x(1-x)]^{1/2}.$$
Moreover, $\int_0^1 [x(1-x)]^{1/2}d\Lambda<\infty$ by Lemma \ref{lem:int_FY}. Thus, by the dominated convergence theorem, $R_8= o_P(1)$. 

\medskip
Fourth, we show that $R_{9}=o_P(1)$. We prove the stronger result that $R_{9}$ converges to $0$ in $L^1$. By Fubini--Tonelli's theorem combined with Assumption~\ref{ass:sampling_est}\ref{ass:sampling_est2} and Jensen's inequality, we have
\begin{align}
    E[|R_9|] & \leq \sqrt{\frac{N}{n_3}} \int_0^1\sqrt{n_3}E\left[\left|\indic{x \in [\xi_{(1)}, \xi_{(n_1)}]} - \indic{x \in [1/n_3, (n_3-1)/n_3]} \right| \right] \notag \\[5pt]
    & \qquad \times E\left[\left(\Hninv(x) - \frac{\lceil n_3x\rceil}{(n_3+1)}  \right)^2 \right]^{1/2} d \Lambda(x). \label{eq:bound_R5}
\end{align}
Since $\Hninv(x) \sim$Beta$(\lceil n_3x \rceil, n_3+1 - \lceil n_3x \rceil)$, we have
$$E\left[\left(\Hninv(x) - \frac{\lceil n_3x\rceil}{(n_3+1)}  \right)^2 \right]^{1/2} = \sqrt{\frac{\lceil n_3x\rceil(n_3+1 - \lceil n_3x\rceil)}{(n_3+1)^2(n_3+2)}} \lesssim \sqrt{\frac{x(1-x)}{n_3}}.$$
Let $q_N(x)$ denote the first expectation in the integrand in \eqref{eq:bound_R5}. By letting $p_{n_1}(x):=1-x^{n_1} - (1-x)^{n_1}$, we have
\begin{align*}
    q_N(x) & = \Pr(\xi_{(1)} \leq x \leq \xi_{(n_1)}, x<1/n_3) + \Pr(\xi_{(1)} \leq x \leq \xi_{(n_1)}, x>(n_3-1)/n_3)\\[5pt]
    &  + \Pr(\xi_{(1)} > x \cup x > \xi_{(n_1)}, 1/n_3 \leq x \leq (n_3-1)/n_3) \\[5pt]
    & = p_{n_1}(x)\left[\indic{x<1/n_3} + \indic{1-x<1/n_3}\right] + (1-p_{n_1}(x))\indic{ 1/{n_3} \leq x \leq (n_3-1)/n_3}.
\end{align*}
Let $f_N(x)$ denote the integrand in the right-hand side of \eqref{eq:bound_R5}. For all $x\in(0,1)$, $\lim_{N \to \infty} p_{n_1}(x)= 1$ so from what precedes,  $\lim_{N \to \infty}f_N(x)=0$ for all $x\in(0,1)$. Moreover, using $q_N(x)\leq 1$, we get
$$f_N(x) \lesssim [x(1-x)]^{1/2},$$ with $\int_0^1[x(1-x)]^{1/2} d \Lambda<\infty$ by Lemma~\ref{lem:int_FY}. The result follows by the dominated convergence theorem.

\hfill $\square$

\begin{lemma}\label{lem:14}
As $N\to\infty$:
    $$\sqrt{n_3}J_{2}\convD \mathcal{N}(0,\varsigma^2),$$
    where
    $$
    \varsigma^2 = \int_0^1\int_0^1 \left[s \wedge t - st\right]f_U(s)f_U(t)\, dF_Y^{-1}(s) \,dF_Y^{-1}(t).
    $$
\end{lemma}
{\bf Proof:} Let $I_{in_3}=[(i-1)/n_3, i/n_3)$. First, note that 
\begin{equation}
-\sqrt{n_3}J_2 =\sum_{i=1}^{n_3}a_{in_3}\left(\zeta_{(i)} - \frac{i}{(n_3+1)}\right),    
    \label{eq:J2_Lstat}
\end{equation}
where $a_{1n_3} = a_{n_3n_3} =0$, and, for all $i\in\{2, \dotsc, n_3-1\}$, $a_{in_3}  = \sqrt{n_3}\Lambda\left(I_{in_3}\right)$.  We now verify that the necessary and sufficient conditions given by \cite{hecker1976} for the asymptotic normality of the L-statistic in \eqref{eq:J2_Lstat} hold in our case. Let us define
$$\sigma_{n_3}^2 = \frac{1}{n_3+2}\sum_{j=1}^{n_3}\sum_{k=1}^{n_3}a_{jn_3}a_{kn_3} \left[ \left( \frac{j}{n_3+1} \wedge \frac{k}{n_3+1} \right) - \frac{jk}{(n_3+1)^2} \right].$$
We have to prove that
\begin{equation}
    \lim_{n_3\to\infty} \frac{\max_{1\leq i\leq n_3}\left | \sum_{j=i}^{n_3}a_{jn_3} \right|}{n_3\sigma_{n_3}}=0.
    \label{eq:cns_hecker}
\end{equation}
First, by Assumption~\ref{ass:smoothness1} and  Lemma~\ref{lem:int_FY}, there exists $\delta<1/2$ such that 
$$\int_0^1 t^{\delta-b_1}(1-t)^{\delta-b_2}  dF_Y^{-1}(t) <\infty.$$
Now, because $a_{in_3}\geq 0$, we have, for all $n_3\geq 2$,
\begin{align*}
    \max_{1\leq i\leq n_3}\left | \sum_{j=i}^{n_3}a_{jn_3} \right | & = \sqrt n_3 \sum_{j=2}^{n_3-1} \Lambda\left(I_{jn_3}\right) \\[5pt]
    & =  \sqrt n_3\int_{1/n_3}^{(n_3-1)/n_3} f_U(t)  dF_Y^{-1}(t) \\[5pt]
    & \leq C_U\sqrt{n_3} \int_{1/n_3}^{(n_3-1)/n_3} t^{-b_1}(1-t)^{-b_2}  dF_Y^{-1}(t) \\[5pt]
    & \leq C_U 2^\delta n_3^{1/2+\delta} \int_{1/n_3}^{(n_3-1)/n_3} t^{\delta-b_1}(1-t)^{\delta-b_2}  dF_Y^{-1}(t) \\[5pt]
    & \leq C_U 2^\delta n_3^{1/2+\delta} \int_0^1 t^{\delta-b_1}(1-t)^{\delta-b_2}  dF_Y^{-1}(t),
    \end{align*}
where the first inequality follows by Assumption~\ref{ass:smoothness1} and the second uses the fact that $[t(1-t)]^\delta \geq 1/(2n_3)^\delta$ for all $t\in[1/n_3,1-1/n_3]$. Therefore,
\begin{equation}
\max_{1\leq i\leq n}\left | \sum_{j=i}^{n_3}a_{jn_3} \right | = O(n_3^{1/2+\delta}).    
    \label{eq:max_an}
\end{equation}
Next, we have
\begin{align*}
    \sigma^2_{n_3} & = \frac{n_3}{n_3+2} \sum_{j=2}^{n_3-1}
    \sum_{k=2}^{n_3-1}
    \Lambda\left(I_{jn_3}\right)\Lambda\left(I_{kn_3}\right) \left( \frac{j}{n_3+1} \wedge \frac{k}{n_3+1} - \frac{jk}{(n_3+1)^2} \right) \\
    & = \frac{n}{n_3+2} \int_0^1\int_0^1 f_{n_3}(x,y)d\Lambda(x)d\Lambda(y),
\end{align*}
where $f_{n_3}(x,y)= \frac{j}{n_3+1} \wedge \frac{k}{n_3+1} - \frac{jk}{(n_3+1)^2}$ when $(x,y)\in I_{jn_3}\times I_{kn_3}$, $1<j\wedge k\leq j\vee k<n_3$, $f_{n_3}(x,y)=0$ otherwise. For any $(x,y)\in(0,1)^2$, $f_{n_3}(x,y)\to f(x,y):=x\wedge y - xy$. Moreover, for any $(x,y)\in I_{jn_3}\times I_{kn_3}$, $1<j\wedge k\leq j\vee k<n_3$,
\begin{align*}
\frac{j}{n_3+1}\wedge \frac{k}{n_3+1} & \leq 2( x\wedge y),\\
1 - \frac{j}{n_3+1}\vee \frac{k}{n_3+1} & \leq 2\left(1- x\vee y\right).
\end{align*}
Thus, $f_{n_3}(x,y)\leq 4 f(x,y)$ for all $(x,y)\in[1/n_3,1-1/n_3]^2$. This inequality also holds for $(x,y)\in[0,1]^2\backslash [1/n_3,1-1/n_3]^2$ since $f_{n_3}(x,y)=0$ for such $(x,y)$. Because $x\wedge y \leq (xy)^{1/2}$ and $1-x\vee y\leq [(1-x)(1-y)]^{1/2}$, we have $f(x,y)\leq [x(1-x)y(1-y)]^{1/2}$. Moreover, by Lemma \ref{lem:int_FY}, $\int_0^1 [\I(1-\I)]^{1/2}d\Lambda<\infty$. Thus, by the dominated convergence theorem,
\begin{equation}
\lim_{n_3\to\infty} \sigma^2_{n_3} = \sigma^2 :=\int_0^1\int_0^1 (x\wedge y - xy)d\Lambda(x)d\Lambda(y)>0.    
    \label{eq:conv_sigma}
\end{equation}
Combined with \eqref{eq:max_an}, this implies \eqref{eq:cns_hecker}. Thus, by Theorem 1 of \cite{hecker1976} and \eqref{eq:conv_sigma} again,
$$-\sqrt{n_3}J_{2}\convD \mathcal{N}(0,\varsigma^2).$$
\hfill$\square$

\section{Lemmas for Theorem 2}\label{sec:S2}

\begin{lemma}\label{lem:21}
    For all $k\in\{1,2,3,4,5\}$: $J_k=o(1)$ as $N\to\infty$.
\end{lemma}
{\bf Proof:} First, we show that $J_1=o(1)$. By sub-additivity of $u\mapsto \vert u\vert^\beta$ on $\R$ (since $\beta\in(0,1]$) and using that $\vert \widetilde F_Y^{(1)}(y)-\widehat F_Y^{(1)}(y)\vert\leq h_{n_2,\widehat F_Y^{(1)}(y)}$, we have
\begin{align}
   J_1&\leq 2c_UE\left[\int_{\R^2}\!\left(\left\vert \widehat F_Y^{(1)}(y)-F_Y(y)\right\vert^\beta \right)\left(\Omega \circ \widehat F_Y^{(1)}\otimes \widehat F_Y^{(2)}\right)\!(y,y') \, dydy'\right] \notag \\[5pt]
   & \quad +2c_UE\left[\int_{\R^2}\left(\Big\vert h_{n_2,\widehat F_Y^{(1)}\left(y\right)}\Big\vert^\beta\right)\left(\Omega \circ \widehat F_Y^{(1)}\otimes \widehat F_Y^{(2)}\right)\!(y,y') \, dydy'\right] \notag\\[5pt]
   & = 2c_UE\left[\int_{\R^2}\!\left(\left\vert \widehat F_Y^{(1)}(y)-F_Y(y)\right\vert^\beta \right)\left(\Omega \circ \widehat F_Y^{(1)}\otimes \widehat F_Y^{(2)}\right)\!(y,y') \, dydy'\right] \notag \\[5pt]
   & \quad +O(\varepsilon_{n_2})E\left[\int_{\R^2}\left(\Omega \circ \widehat F_Y^{(1)}\otimes \widehat F_Y^{(2)}\right) \right]. \label{eq:J1N_bias_var1}
\end{align}
The second term tends to zero since $\varepsilon_{n_2} \to 0$ and $E\left[\int_{\R^2}\left(\Omega \circ \widehat F_Y^{(1)}\otimes \widehat F_Y^{(2)}\right) \right] < \infty$ by Lemma~\ref{lem:int_FY_FY'}. 
We now focus on the first term in~\eqref{eq:J1N_bias_var1}.  Letting $\delta \in (0,1)$ be such that $\delta \leq 1/2 - (b_1 + d_1) \lor (b_2 + d_2)$ (using Assumption~\ref{ass:smoothness1}\ref{ass:smoothness14}), we have
\begin{align*}
    &\hspace{6mm} E\int_{\R^2}\!\left(\left\vert \widehat F_Y^{(1)}(y)-F_Y(y)\right\vert^\beta \right)\left(\Omega \circ \widehat F_Y^{(1)}\otimes \widehat F_Y^{(2)}\right)\!(y,y') \, dydy'\\[5pt]
    &= E\int_{\R^2}\!\left(\left\vert \widehat F_Y^{(1)}(y)-F_Y(y)\right\vert^\beta \right)\left(\widehat F_Y^{(1)}(y) \land \widehat F_Y^{(2)}(y')\right)^{1-2b_1}\left(\bar{\widehat{F}}_Y^{(1)}(y) \land \bar{\widehat{F}}_Y^{(2)}(y')\right)^{1-2b_2} \, dydy'\\[5pt]
    & \leq E\left[\int_{\R}\!\left\vert \widehat F_Y^{(1)}(y)-F_Y(y)\right\vert^\beta \Big(\widehat F_Y^{(1)}(y)\Big)^{d_1} \Big(\bar{\widehat{F}}_Y^{(1)}(y)\Big)^{d_2} \mathds 1_{\big\{\widehat F^{(1)}_Y(y) \bar{\widehat F}^{(1)
    }_Y(y) \neq 0\big\}}dy\right] \\
    & \quad \times E\left[ \int_{\R} \Big(\widehat F_Y^{(2)}(y)\Big)^{d_1 + \delta} \Big(\bar{\widehat{F}}_Y^{(2)}(y)\Big)^{d_2 + \delta} \, dy\right]\\[5pt]
    & \leq  \int_{\R}\!E^{1/2}\left[\left\vert \widehat F_Y^{(1)}(y)-F_Y(y)\right\vert^{2\beta} \mathds 1_{\big\{\widehat F^{(1)}_Y(y) \bar{\widehat F}^{(1)
    }_Y(y) \neq 0\big\}}\right] E^{1/2}\left[\Big(\widehat F_Y^{(1)}(y)\Big)^{2d_1} \Big(\bar{\widehat{F}}_Y^{(1)}(y)\Big)^{2d_2}  \right]dy \times C \quad  \text{by Lemma~\ref{lem:int_FY_FY'}}\\
    & \leq 2^{\beta/2}C \int_{\R}\! \left(\frac{F_Y(y) \bar F_Y(y)}{n_1}\right)^{\beta/2} E^{1/2}\left[\Big(\widehat F_Y^{(1)}(y)\Big)^{2d_1} \Big(\bar{\widehat{F}}_Y^{(1)}(y)\Big)^{2d_2}  \right]dy \quad \text{ by Lemma~\ref{lem_control_|F-hat_F|}}\\
    & \leq 2^{\beta/2}C \int_{\R}\! \left(\frac{F_Y(y) \bar F_Y(y)}{n_1}\right)^{\beta/2} \Big( F_Y(y)\Big)^{d_1} \Big(\bar{F}_Y(y)\Big)^{d_2}  dy \quad \text{ by concavity of $x \mapsto x^{2d_1} (1-x)^{2d_2}$}\\
    & = o(1) \qquad \text{ by Lemma~\ref{lem:int_FY}}.
    \end{align*}
    Note that we used Lemma~\ref{lem_control_|F-hat_F|} with $\tilde b_1 = (1 - \beta)/2$.
    This yields $J_1 \to 0$.

    \medskip
    Second, we show that $J_2=o(1)$. In the analysis of this term, we will write
\begin{align*}
    \alpha_1 &= 1-2b_1\\
    \alpha_2 &= 1-2b_2\\
    m(s,t) &= s \land t, \qquad \forall \alpha, s,t \in [0,1].
\end{align*}
It follows that $\Omega (s,t) = m(s,t)^{\alpha_1}\,m(\bar s, \bar t)^{\alpha_2}$. Then, by Assumption~\ref{ass:smoothness1}\ref{ass:smoothness23}, there exists a positive constant $C_U$ such that $g\leq C_U^2$. We have
\begin{align*}
    J_2 &= E\int_{\R^2} \left(g \circ  F_Y^{\,\otimes \,2}\right) \left|\left( \Omega \circ \widehat F_Y^{(1)}\otimes\widehat F_Y^{(2)}\right) - \left( \Omega \circ F_Y^{\,\otimes \,2}\right)\right|\\[10pt]
    & \lesssim E\int_{\R^2} \left|\left( \Omega \circ \widehat F_Y^{(1)}\otimes\widehat F_Y^{(2)}\right) - \left( \Omega \circ F_Y^{\,\otimes \,2}\right)\right|\hspace{2cm} \\[10pt]
    & = E \int_{\R^2} \left|\left(m\circ \widehat F_Y^{(1)}\otimes\widehat F_Y^{(2)}\right)^{\alpha_1} \left(m\circ \bar{\widehat F}_Y^{(1)}\otimes\bar{\widehat F}_Y^{(2)}\right)^{\alpha_2} - \left(m\circ F_Y^{\,\otimes \,2}\right)^{{\alpha_1}} \left(m\circ \bar{F}_Y^{\,\otimes \,2}\right)^{\alpha_2}\right|\\[10pt]
    & \leq E \int_{\R^2} \Big|\left(m\circ \widehat F_Y^{(1)}\otimes\widehat F_Y^{(2)}\right)^{\alpha_1} - \left(m\circ F_Y^{\,\otimes \,2}\right)^{\alpha_1} \Big|\left(m\circ \bar{\widehat F}_Y^{(1)}\otimes\bar{\widehat F}_Y^{(2)}\right)^{\alpha_2} \\[5pt]
    &
    \quad + E \int_{\R^2}\left(m\circ F_Y^{\,\otimes \,2}\right)^{\alpha_1} \left|\left(m\circ \bar{\widehat F}_Y^{(1)}\otimes\bar{\widehat F}_Y^{(2)}\right)^{\alpha_2} - \left(m\circ \bar{F}_Y^{\,\otimes \,2}\right)^{\alpha_2}\right| \\[10pt]
    & \leq E \int_{\R^2} \Big|\left(m\circ \widehat F_Y^{(1)}\otimes\widehat F_Y^{(2)}\right)^{\alpha_1}  - \left(m\circ F_Y^{\,\otimes \,2}\right)^{\alpha_1} \Big|\left(m\circ \bar{F}_Y^{\,\otimes \,2}\right)^{\alpha_2} 
    \\[5pt]
    & \quad + E \int_{\R^2}\left|\left(m\circ \bar{\widehat F}_Y^{(1)}\otimes\bar{\widehat F}_Y^{(2)}\right)^{\alpha_2} - \left(m\circ \bar{F}_Y^{\,\otimes \,2}\right)^{\alpha_2}\right| \left(m\circ F_Y^{\,\otimes \,2}\right)^{\alpha_1} \\[5pt]
    & \quad  + E \int_{\R^2}\Big|\left(m\circ \widehat F_Y^{(1)}\otimes\widehat F_Y^{(2)}\right)^{\alpha_1}  - \left(m\circ F_Y^{\,\otimes \,2}\right)^{\alpha_1} \Big| \, \Big|\left(m_{\alpha_2}\circ \bar{\widehat F}_Y^{(1)}\otimes\bar{\widehat F}_Y^{(2)}\right)^{\alpha_2} - \left(m\circ \bar{F}_Y^{\,\otimes \,2}\right)^{\alpha_2} \Big|\\[10pt]
    & =: A + A' + B.
\end{align*} 
To show that $J_2 = o(1)$, it suffices to show that $A, A', B = o(1)$. 
Since the terms $A$ and $A'$ can be analyzed in a similar fashion, we only show that $A, B = o(1)$. 

\medskip
\noindent \textit{Term $A$:} We have
$$\begin{aligned}
    A = \int_{\R^2} E \left(\Big|\big(\widehat F_Y^{(1)}(y) \land \widehat F_Y^{(2)}(y')\big)^{\alpha_1} - \big(F_Y(y) \land F_Y(y')\big)^{\alpha_1}\Big|\right) \left( \bar F_Y(y) \land \bar F_Y(y')\right)^{\alpha_2} dy dy'.
\end{aligned}$$
By Lemma~\ref{lem_assumptions_DCT_J2N}, the term $E \left(\big|\big(\widehat F_Y^{(1)}(y) \land \widehat F_Y^{(2)}(y')\big)^{\alpha_1} - \big(F_Y(y) \land F_Y(y')\big)^{\alpha_1}\big|\right)$ goes to $0$ as $N \to \infty$ for any fixed $y,y' \in \R$, and is dominated by the quantity $6 \big(F_Y(y) \land F_Y(y')\big)^{\alpha_1}$ independently of $N$. 
Moreover,
\begin{align*}
    & \quad \int_{\R^2} \big(F_Y(y) \land F_Y(y')\big)^{\alpha_1}\left( \bar F_Y(y) \land \bar F_Y(y')\right)^{\alpha_2} dydy'\\
    & \leq \int_{\R^2} \big(F_Y(y)  F_Y(y')\big)^{\alpha_1/2}\big(\bar F_Y(y)  \bar F_Y(y')\big)^{\alpha_2/2}dydy'\\
    & = \left(\int_{\R^2} F_Y(y) \, ^{\alpha_1/2}\bar F_Y(y)^{\alpha_2/2} \, dy \right)^2.
\end{align*}
The latter integral converges since $\alpha_1/2 > d_1$ and $\alpha_2/2 >d_2$ by Assumption~\ref{ass:smoothness1}\ref{ass:smoothness14}. By the dominated convergence theorem, we obtain that $A \to 0$ as $N \to \infty$.

\medskip
\noindent\textit{Term $B_N$:}
By the Cauchy--Schwarz inequality, we have
\begin{align}
    B &= E \int_{\R^2}\Big|\left(\widehat F_Y^{(1)}(y) \land \widehat F_Y^{(2)}(y')\right)^{\alpha_1}  - \left(F_Y(y) \land F_Y(y')\right)^{\alpha_1} \Big| dydy' \nonumber \\
    &\hspace{10.7mm} \times \Big|\left(\bar{\widehat F}_Y^{(1)}(y) \land \bar{\widehat F}_Y^{(2)}(y')\right)^{\alpha_2} - \left(\bar{F}_Y(y) \land \bar{F}_Y(y')\right)^{\alpha_2}  \Big|dydy' \nonumber \\[5pt]
    & \leq \int_{\R^2}E^{1/2} \left[\Big|\left(\widehat F_Y^{(1)}(y) \land \widehat F_Y^{(2)}(y')\right)^{\alpha_1}  - \left(F_Y(y) \land F_Y(y')\right)^{\alpha_1} \Big|^2\right] \nonumber\\
    &\hspace{10.7mm} \times E^{1/2}\left[\Big|\left(\bar{\widehat F}_Y^{(1)}(y) \land \bar{\widehat F}_Y^{(2)}(y')\right)^{\alpha_2} - \left(\bar{F}_Y(y) \land \bar{F}_Y(y')\right)^{\alpha_2}  \Big|^2 \right]dydy'. \label{eq_B_N}
\end{align}
Now, by Lemma~\ref{lem_decomposition_Omega-hat_Omega} and Jensens's inequality, we have
$$\begin{aligned}
    &\quad E^{1/2} \left[\Big|\left(\widehat F_Y^{(1)}(y) \land \widehat F_Y^{(2)}(y')\right)^{\alpha_1}  - \left(F_Y(y) \land F_Y(y')\right)^{\alpha_1} \Big|^2\right] \\
    & \leq E^{\alpha_1/2} \left[\Big|\widehat F_Y^{(1)}(y)  - F_Y(y) \Big|^2 \land F_Y(y')^{2}\right] + E^{\alpha_1/2} \left[\Big|\widehat F_Y^{(2)}(y')  - F_Y(y')\Big|^2 \land F_Y(y)^{2}\right] \\
    & \quad + E^{\alpha_1/2} \left[\Big|\widehat F_Y^{(1)}(y)  - F_Y(y) \Big|^2 \land \Big|\widehat F_Y^{(2)}(y')-F_Y(y')\Big|^2\right]\\
    &\leq \left(\frac{2F_Y(y) \bar F_Y(y)}{n_1}\right)^{\alpha_1/2} \land F_Y(y')^{\alpha_1} + \left(\frac{2F_Y(y') \bar F_Y(y')}{n_1}\right)^{\alpha_1/2} \land F_Y(y)^{\alpha_1} \\
    & \quad + \left(\frac{2F_Y(y) \bar F_Y(y)}{n_1}\right)^{\alpha_1/2} \land \left(\frac{2F_Y(y') \bar F_Y(y')}{n_1}\right)^{\alpha_1/2},
\end{aligned}$$
where, in the last inequality, we used that, for any two random variables $U,V$, it holds that $E(U\land V) \leq E(U) \land E(V)$. The second factor can be controlled analogously. Therefore, by~\eqref{eq_B_N}, we have
$$\begin{aligned}
    & \quad B \lesssim \int_{\R^2} \Bigg\{\left(\frac{F_Y(y) \bar F_Y(y)}{n_1}\right)^{\alpha_1/2} \land F_Y(y')^{\alpha_1} + \left(\frac{F_Y(y') \bar F_Y(y')}{n_1}\right)^{\alpha_1/2} \land F_Y(y)^{\alpha_1} \\
    & \hspace{1.5cm} + \left(\frac{F_Y(y) \bar F_Y(y)}{n_1}\right)^{\alpha_1/2} \land \left(\frac{F_Y(y') \bar F_Y(y')}{n_1}\right)^{\alpha_1/2}\Bigg\}\\
    & \qquad \times \Bigg\{\left(\frac{F_Y(y) \bar F_Y(y)}{n_1}\right)^{\alpha_2/2} \land \bar F_Y(y')^{\alpha_2} + \left(\frac{F_Y(y') \bar F_Y(y')}{n_1}\right)^{\alpha_2/2} \land \bar F_Y(y)^{\alpha_2} \\
    & \hspace{1.5cm} + \left(\frac{F_Y(y) \bar F_Y(y)}{n_1}\right)^{\alpha_2/2} \land \left(\frac{F_Y(y') \bar F_Y(y')}{n_1}\right)^{\alpha_2/2}\Bigg\}dydy'\\
    &\quad\quad=: 6\int_{\R^2} \bigg\{M_1(y,y') + M_2(y,y') + M_3(y,y')\bigg\} \bigg\{N_1(y,y') + N_2(y,y') +  N_3(y,y')\bigg\} dydy'.
\end{aligned}$$
We now prove that $\displaystyle\int_{\R^2}\! M_i(y,y') N_j(y,y') \, dydy' = o(1)$ for any $i,j \in \{1,2,3\}$. 
Using that $x\land y\leq(xy)^{1/2}$ for all $x,y\in[0,1]$, we have
\begin{align*}
    &\int_{\R^2} M_1(y,y') N_1(y,y') \, dy\,dy' \\
    & = \int_{\R^2} M_2(y,y') N_2(y,y') \, dy\,dy' \\
    & = \int_{\R^2} \left[\left(\frac{F_Y(y) \bar F_Y(y)}{n_1}\right)^{\alpha_1/2} \land F_Y(y')^{\alpha_1} \right] \times \left[\left(\frac{F_Y(y) \bar F_Y(y)}{n_1}\right)^{\alpha_2/2} \land \bar F_Y(y')^{\alpha_2}\right] dydy'\\
    & \leq \int_{\R^2} \left(\frac{F_Y(y) \bar F_Y(y)}{n_1}\right)^{(\alpha_1+\alpha_2)/2}  F_Y(y')^{\alpha_1/2}  \bar F_Y(y')^{\alpha_2/2} dydy'\\
& = o(1),
\end{align*}
where the convergence follows from Assumption~\ref{ass:smoothness2}\ref{ass:smoothness24} and Lemma~\ref{lem:int_FY}. Next, 
\begin{align*}
    \int_{\R^2} M_1(y,y') N_3(y,y') dydy' &= \int_{\R^2} M_2(y,y') N_3(y,y') dydy'\\
    &  \leq \int_{\R^2}\left(\frac{F_Y(y')\bar F_Y(y')}{n_1}\right)^{\alpha_1/2} \left(\frac{F_Y(y) \bar F_Y(y)}{n_1}\right)^{\alpha_2/2}dydy' \\
    & = o(1),
\end{align*}
where, again, the convergence follows from Assumption~\ref{ass:smoothness2}\ref{ass:smoothness24} and Lemma~\ref{lem:int_FY}. The terms $\displaystyle\int_{\R^2} M_2(y,y') N_1(y,y') \, dy\,dy' = \int_{\R^2} M_1(y,y') N_2(y,y') \, dy\,dy'$, $\displaystyle\int_{\R^2} M_3(y,y') N_1(y,y') \, dy\,dy' = \int_{\R^2} M_3(y,y') N_2(y,y') \, dy\,dy'$ and $\displaystyle\int_{\R^2} M_3(y,y') N_3(y,y') \, dy\,dy'$ can be controlled analogously, which concludes the proof of the claim $B = o(1)$. 

\medskip
Third, we show that $J_3=o(1)$. This follows from $\left\vert \widetilde F_Y^{(1)}(y)-F_Y(y)\right\vert^\beta\leq 1$, $g\leq C_U^2$, and Lemma~\ref{lem:int_FY_FY'}. 

\medskip
Fourth, we show that $J_4=o(1)$. Notice that 
\[
 \sup_{y\in\R}\left(h_{\widehat F_Y^{(1)}(y)}^{\beta}+h_{\widehat{\widehat F}_Y^{(1)}(y)}^{\beta} + h_{\widehat F_Y^{(2)}(y')}^{\beta}+h_{\widehat{\widehat F}_Y^{(2)}(y')}^{\beta}\right)\leq 4\eps_{n_2}=o(1).
 \]
 Hence,
 \begin{align*}
     J_4 & \leq 64c_U^2\eps_{n_2}^2 E \left[\left(\int_{\R^2} \,\Omega\left(\widetilde F_Y^{(1)}(y), \widetilde F_Y^{(2)}(y')\right) dydy'\right)^2\right] \\
     &\quad + 4E \left[\left(\int_{\R^2} \left(\Big| \widehat F_Y^{(1)}(y)-\widehat {\widehat F}_Y^{(1)}(y)\Big|^\beta+\Big| \widehat F_Y^{(2)}(y')-\widehat{\widehat F}_Y^{(2)}(y')\Big|^\beta\right)\,\Omega\left(\widetilde F_Y^{(1)}(y), \widetilde F_Y^{(2)}(y')\right) dydy'\right)^2\right] \\
     & = o(1) + 4E \left[\left(\int_{\R^2} \left(\Big| \widehat F_Y^{(1)}(y)-\widehat {\widehat F}_Y^{(1)}(y)\Big|^\beta+\Big| \widehat F_Y^{(2)}(y')-\widehat{\widehat F}_Y^{(2)}(y')\Big|^\beta\right)\,\Omega\left(\widetilde F_Y^{(1)}(y), \widetilde F_Y^{(2)}(y')\right) dydy'\right)^2\right],
 \end{align*}
 where the last equality follows from
 \[
 E \left[\left(\int_{\R^2} \,\Omega\left(\widetilde F_Y^{(1)}(y), \widetilde F_Y^{(2)}(y')\right) dydy'\right)^2\right] = \left(1+O(\eps_{n_2})\right)^2E \left[\left(\int_{\R^2} \,\Omega\left(\widehat F_Y^{(1)}(y), \widehat F_Y^{(2)}(y')\right) dydy'\right)^2\right]
 \]
 which is finite by Lemma~\ref{lem:int_FY_FY'} since $1-2b_j > 2 d_j$ for $j \in \{1,2\}$ by Assumption~\ref{ass:smoothness1}\ref{ass:smoothness14}. The remaining term can be shown to converge to zero by adding and subtracting $F_Y(y)$ and $F_Y(y')$ inside each absolute value, respectively, by using the sub-additivity of $x\mapsto\vert x\vert^\beta$, and by following a reasoning similar to Step 1.A above for showing that $J_1=o(1)$ as $N\to\infty$.

 \medskip
 Fifth, that $J_5=o(1)$ follows by adding and subtracting $\Omega(F_Y(y),F_Y(y'))$ inside the absolute value and by following a reasoning similar to the reasoning above for showing that $J_2=o(1)$ as $N\to\infty$.
\hfill $\square$

\begin{lemma}\label{lem:22}
    For $k\in\{6,7,8\}$: $E[\abs{J_k}]=o(1)$ as $N\to\infty$.
\end{lemma}
{\bf Proof:}
First, we show that $E[\abs{J_6}]=o(1)$. Using symmetry, we have the decomposition
\begin{align*}
  &  E[\abs{J_6}] \\
  \leq & E\left[\int_{[0,1]^4} {\rm cov}\big(\widehat f_U^{\,(1)}(s), \widehat f_U^{\,(1)}(s')\big) {\rm cov}\big(\widehat f_U^{\,(1)}(t), \widehat f_U^{\,(1)}(t')\big) d\widehat\mu_{\mathbf Y}(s,t,s',t')\right] \\[5pt]
  &\quad + E\left[\left(\int_{[0,1]^2} {\rm cov}\big(\widehat f_U^{\,(1)}(s), \widehat f_U^{\,(1)}(t)\big) d\widehat\mu_{\mathbf Y}(s,t)\right)^2\right] \\[5pt]
  & \quad + 2E\left[\int_{[0,1]^4} E\big(\widehat f_U^{\,(1)}(s)\big)E\big(\widehat f_U^{\,(1)}(s')\big) {\rm cov}\big(\widehat f_U^{\,(1)}(t), \widehat f_U^{\,(1)}(t')\big) d\widehat\mu_{\mathbf Y}(s,t,s',t')\right]\\[5pt]
  &\quad + 2E\left[\int_{[0,1]^4}E\big(\widehat f_U^{\,(1)}(s)\big)E\big(\widehat f_U^{\,(1)}(t)\big) {\rm cov}\big(\widehat f_U^{\,(1)}(s'), \widehat f_U^{\,(1)}(t')\big) d\widehat\mu_{\mathbf Y}(s,t,s',t')\right] \\[5pt]
  =& o(1),
\end{align*}
where the convergence follows by Lemma~\ref{lem_cov_hat_fU}.

\medskip
Second, we show that $E\left[\abs{J_7}+\abs{J_8}\right]=o(1)$. By the Cauchy--Schwarz inequality, we have 
\begin{align*}
   &\qquad  E\left[\abs{J_7}+\abs{J_8}\right] \\[5pt]
   & \quad \leq E\left[\left(\int_{[0,1]^2} \text{cov}\big(\widehat f_U^{\,(1)}(s), \widehat f_U^{\,(1)}(t)\big) d\widehat\mu_{\mathbf Y}(s,t)\right)^2\right]\\[5pt]
    & \quad + E^{1/2}\left[\left(\int_{[0,1]^2} \text{cov}\big(\widehat f_U^{\,(1)}(s), \widehat f_U^{\,(1)}(t)\big) d\widehat\mu_{\mathbf Y}(s,t)\right)^2\right]
    E^{1/2} \left[\left(\int_{[0,1]^2} e_{\mathbf Y}(s) e_{\mathbf Y}(t) d\widehat\mu_{\mathbf Y}(s,t)\right)^2\right].
\end{align*}
By Lemma~\ref{lem_cov_hat_fU}\ref{lem_cov_hat_fU1} below, we have $E\left[\left(\int_{[0,1]^2} {\rm cov}\big(\widehat f_U^{\,(1)}(s), \widehat f_U^{\,(1)}(t)\big) d\widehat\mu_{\mathbf Y}(s,t)\right)^2\right] = o(1)$. To show that $E\left[\abs{J_7}+\abs{J_8}\right]=o(1)$,  
it now remains to prove that $E \left[\left(\int_{[0,1]^2} e_{\mathbf Y}(s) e_{\mathbf Y}(t) d\widehat\mu_{\mathbf Y}(s,t)\right)^2\right] = O(1)$. 
We recall the definition of $\widetilde f_U^{(1)}$ from~\eqref{eq_def_hat_f_U} and will slightly abuse notation by writing $h_s$ instead of $h_{n_2,s}$.
We have, for any $s \in [0,1]$,
$$
\begin{aligned}
    e_{\mathbf Y}(s) &= E\left[\widetilde f_U^{(1)}(s)\,\big|\,\mathbf Y\right] = \frac{1}{2 h_s} \Pr\left(|U - s|\leq h_s\right) \\[5pt]
    & = \frac{F_U(s+h_s) - F_U(s-h_s)}{2h_s}\\[5pt]
    & = f_U(\tilde s)\\
    & \leq \frac{\widetilde C_U}{s^{\,b_1} \bar s^{\,b_2}},
\end{aligned}
$$
for some absolute constant $\widetilde C_U>0$ where we used the mean value theorem for some $\tilde s \in [s-h_s, s+h_s]$ in the third line, and Lemma~\ref{lem_fU_tilde_s} in the last line. We therefore obtain
\begin{align*}
   & E \left[\left(\int_{[0,1]^2} e_{\mathbf Y}(s) \, e_{\mathbf Y}(t) \, d\widehat\mu_{\mathbf Y}(s,t)\right)^2\right] \\[5pt]
   & \leq E \left[\left(\int_{[0,1]^2} \frac{\widetilde C_U^2}{(st)^{b_1} (\bar s \bar t)^{b_2}} (s\land t) (\bar s \land \bar t) \, d (\widehat F_Y^{(1)})^{-1}(s) \, d (\widehat F_Y^{(2)})^{-1}(t)\right)^2 \right]\\[5pt]
    & \leq E \left[\left(\int_{[0,1]^2} \frac{\widetilde C_U^2}{(s\land t)^{2b_1} (\bar s \land \bar t)^{2b_2}} (s\land t) (\bar s \land \bar t) \, d (\widehat F_Y^{(1)})^{-1}(s) \, d (\widehat F_Y^{(2)})^{-1}(t)\right)^2 \right]\\[5pt]
    & = \widetilde C_U^4 E \left[\left(\int_{[0,1]^2} (s\land t)^{1-2b_1} (\bar s \land \bar t)^{1-2b_2} \, d (\widehat F_Y^{(1)})^{-1}(s) \, d (\widehat F_Y^{(2)})^{-1}(t)\right)^2 \right],
\end{align*}
which is finite by Lemma~\ref{lem:int_FY_FY'} since $1-2b_j > 2 d_j$ for $j \in \{1,2\}$ by Assumption~\ref{ass:smoothness1}\ref{ass:smoothness14}.
\hfill$\square$

\begin{lemma}\label{lem_cov_hat_fU}
    It holds that
    \begin{enumerate}[label=(\roman*)]
        \item \label{lem_cov_hat_fU1} $E\left[\left(\int_{[0,1]^2} {\rm cov}\big(\widehat f_U^{\,(1)}(s), \widehat f_U^{\,(1)}(t)\big) d\widehat\mu_{\mathbf Y}(s,t)\right)^2\right] = o(1)$.
        \item \label{lem_cov_hat_fU2} $E\left[\int_{[0,1]^4} {\rm cov}\big(\widehat f_U^{\,(1)}(s), \widehat f_U^{\,(1)}(s')\big) {\rm cov}\big(\widehat f_U^{\,(1)}(t), \widehat f_U^{\,(1)}(t')\big) d\widehat\mu_{\mathbf Y}(s,t,s',t')\right] = o(1)$.
        \item \label{lem_cov_hat_fU3} $E\left[\int_{[0,1]^4} E\big(\widehat f_U^{\,(1)}(s)\big)E\big(\widehat f_U^{\,(1)}(s')\big) {\rm cov}\big(\widehat f_U^{\,(1)}(t), \widehat f_U^{\,(1)}(t')\big) d\widehat\mu_{\mathbf Y}(s,t,s',t')\right] = o(1)$.
        \item \label{lem_cov_hat_fU4} $E\left[\int_{[0,1]^4}E\big(\widehat f_U^{\,(1)}(s)\big)E\big(\widehat f_U^{\,(1)}(t)\big) {\rm cov}\big(\widehat f_U^{\,(1)}(s'), \widehat f_U^{\,(1)}(t')\big) d\widehat\mu_{\mathbf Y}(s,t,s',t')\right] = o(1)$.
    \end{enumerate}
    
\end{lemma}
\textbf{Proof:}
For ease of notation, we will write $n$ and $ \widehat F^{(j)}$ instead of $n_2$ and $\widehat F_Y^{(j)}$ respectively throughout the proof.  
We have, for any $s,t \in (0,1)$, 
$$
\begin{aligned}
    & \text{cov}\big(\widehat f_U^{\,(1)}(s), \widehat f_U^{\,(1)}(t)\big)\\[5pt]
    &\quad = E\left[\frac{1}{n^2 h_s h_t}\left(\sum_{i=1}^{n/2} \mathds{1}_{|U_i - s|\leq h_s} \right) \left(\sum_{j=1}^{n/2} \mathds{1}_{|U_j - t|\leq h_t}\right)\right] 
    - \frac{1}{n^2 h_s h_t}E\left[\sum_{i=1}^{n/2} \mathds{1}_{|U_i - s|\leq h_s}  \right]E\left[\sum_{j=1}^{n/2} \mathds{1}_{|U_j - t|\leq h_t}\right] \\[5pt]
    & \quad = \frac{1}{n^2 h_s h_t} E\left[\sum_{i \neq j} \mathds{1}_{|U_i - s|\leq h_s} \mathds{1}_{|U_j - t|\leq h_t} + \sum_{i=1}^{n}\mathds{1}_{|U_i - s|\leq h_s} \mathds{1}_{|U_i - t|\leq h_t} \right]\\[5pt]
    & \quad \quad - \frac{1}{h_s h_t} \Pr\big(|U-s| \leq h_s\big) \Pr\big(|U-t| \leq h_t\big)\\[5pt]
    & \quad \leq \frac{1}{h_s h_t} \Pr\big(|U-s| \leq h_s\big) \Pr\big(|U-t| \leq h_t\big)  + \frac{1}{n h_s h_t} \Pr\big(|U-s|\leq h_s \text{ and } |U-t| \leq h_t\big)\\[5pt]
    & \quad \quad - \frac{1}{h_s h_t} \Pr\big(|U-s| \leq h_s\big)\Pr\big(|U-t| \leq h_t\big)\\[5pt]
    & \quad = \frac{1}{n h_s h_t}\Pr\big(|U-s|\leq h_s \text{ and } |U-t| \leq h_t\big). 
\end{aligned}
$$
Moreover, by Lemma~\ref{lem_control_s_bar_s_t_bar_t} below, we have that $|s-t| \leq 7h_s$ and 
$h_t \geq h_s/6$ whenever $|s-t| \leq h_s + h_t$. 
Therefore, 
$$
\begin{aligned}
    \frac{1}{n h_s h_t}\Pr\big(|U-s|\leq h_s \text{ and } |U-t| \leq h_t\big) &\leq \frac{1}{n h_s h_t} \Pr\big(|U-s|\leq h_s \text{ and } |s-t| \leq h_s+h_t\big) \\[5pt]
    & = \frac{1}{n h_s h_t} \Pr\big(|U-s|\leq h_s \big) \mathbf 
    1\left\{|s-t| \leq h_s+h_t\right\}\\[5pt]
    & \leq \frac{1}{n h_s \cdot (h_s/6)} \cdot 2h_s f_U(\tilde s) \mathds 1\{|s-t| \leq 7h_s\}\\[5pt]
    & \leq \frac{12}{n h_s} \frac{\widetilde C_U}{s^{b_1} \bar s^{\, b_2}}\mathds 1\{|s-t| \leq 7h_s\}\\[5pt]
    & =: \frac{C}{n h_s} \frac{1}{s^{b_1} \bar s^{\, b_2}}\mathds 1\{|s-t| \leq 7h_s\}.
\end{aligned}$$
Therefore,
$$
\begin{aligned}
    \text{cov}\big(\widehat f_U^{\,(1)}(s), \, \widehat f_U^{\,(1)}(t)\big) \leq \frac{C}{n h_s} \frac{1}{s^{b_1} \bar s^{\, b_2}}\mathds 1\{|s-t| \leq 7h_s\}. 
\end{aligned}$$
%

Whenever $|s-t| \leq 7h_s$, we have $t \leq s+ 7h_s \leq 5 s$ since $h_s \leq \frac{1}{2} s\bar s$, and similarly, $\bar t \leq 5 \bar s$. 
Letting $n_1'=n_1/2$, we have
\begin{align*}
 & E\left[\left(\int_{[0,1]^2} \text{cov}\big(\widehat f_U^{\,(1)}(s), \widehat f_U^{\,(1)}(t)\big) d\widehat\mu_{\mathbf Y}(s,t)\right)^2\right] \\[5pt]
 \leq & E\left[\left(\int_{[0,1]^2} \frac{C}{n h_s} \frac{1}{s^{b_1} \bar s^{\, b_2}}\mathds 1\{|s-t| \leq 7h_s\} 
 d\widehat\mu_{\mathbf Y}(s,t)\right)^2\right]\\
    = &E\left[\left(\int_{[0,1]^2} \frac{C}{n h_s} \frac{1}{s^{b_1} \bar s^{\, b_2}}\mathds 1\big\{|s-t| \leq 7h_s\big\}
    (s \land t) (\bar s \land \bar t) \, d(\widehat F^{(1)})^{-1}(s) \, d(\widehat F^{(2)})^{-1}(t)\right)^2\right]\\[5pt]
   \leq &E\left[\left(\int_{[0,1]^2} \frac{C}{n h_s} \frac{1}{s^{b_1} \bar s^{\, b_2}} \mathds 1 \big\{|s-t| \leq 7h_s \big\} \cdot 5 s \cdot 5\bar s \cdot \, d(\widehat F^{(1)})^{-1}(s) \, d(\widehat F^{(2)})^{-1}(t)\right)^2\right]\\[5pt]
    =& (25C)^2 E\left[\left(\frac{1}{n}\int_{[0,1]} s^{1 - b_1} \bar s^{1 - b_2} \left(\frac{1}{h_s}\int_{[0,1]}\mathds 1 \big\{|s-t| \leq 7h_s \big\} d(\widehat F^{(2)})^{-1}(t) \right) d(\widehat F^{(1)})^{-1}(s)\right)^2  \right] \\[5pt]
   \lesssim& E\left[\left(\frac{1}{n}\int_{[0,1]} s^{1 - b_1} \bar s^{1 - b_2} \frac{(\widehat F^{(2)})^{-1} \left(s + 7h_s\right) - (\widehat F^{(2)})^{-1} \left(s - 7 h_s\right)}{h_s} d(\widehat F^{(1)})^{-1}(s)\right)^2  \right] \\[5pt]
    =& E\left[\left(\frac{1}{n}\int_{[0,1]} s^{1 - b_1} \bar s^{1 - b_2} \frac{Y_{(\lceil n_1's + 7n_1'h_{s}\rceil)}^{(2)} - Y_{(\lceil n_1's - 7n_1'h_{s}\rceil)}^{(2)}}{h_s} d(\widehat F^{(1)})^{-1}(s)\right)^2  \right] \\[5pt]
    =&E\left[\left(\frac{1}{n}\sum_{k=1}^{n_1'-1}\left(\frac{k}{n_1'}\right)^{1 - b_1} \left(\frac{n_1'-k}{n_1'}\right)^{1 - b_2} \frac{Y_{(\lceil k + 7n_1'h_{k/n_1'}\rceil)}^{(2)} - Y_{(\lceil k - 7n_1'h_{k/n_1'}\rceil)}^{(2)}}{h_{k/n_1'}}  \left(Y_{(k+1)}^{(1)} - Y_{(k)}^{(1)}\right)\right)^2 \right] \\[5pt]
    =& \frac{1}{n^2\varepsilon_n^2} \sum_{k,\ell = 1}^{n_1'-1} \left( \frac{k\ell}{(n_1')^2}\right)^{-b_1} \left( \frac{(n_1'-k)(n_1'-\ell)}{(n_1')^2}\right)^{-b_2}  E \Big[\left(Y_{(\lceil k + 7n_1'h_{k/n_1'}\rceil)}^{(2)} - Y_{(\lceil k - 7n_1'h_{k/n_1'}\rceil)}^{(2)}\right) \left(Y_{(k+1)}^{(1)} - Y_{(k)}^{(1)}\right)\\[5pt]
    & \hspace{7.2cm} \times
    \left(Y_{(\lceil \ell + 7n_1'h_{\ell/n_1'}\rceil)}^{(2)} - Y_{(\lceil \ell - 7n_1'h_{\ell/n_1'}\rceil)}^{(2)}\right) \left(Y_{(\ell+1)}^{(1)} - Y_{(\ell)}^{(1)}\right)\Big]\\[5pt]
    \leq & \frac{1}{n^2\varepsilon_n^2} \sum_{k,\ell = 1}^{n_1'-1} \left( \frac{k\ell}{(n_1')^2}\right)^{-b_1} \left(\frac{(n_1'-k)(n_1'-\ell)}{(n_1')^2}\right)^{-b_2}  E \Big[\left(Y_{(\lceil k + 7n_1'h_{k/n_1'}\rceil)}^{(2)} - Y_{(\lceil k - 7n_1'h_{k/n_1'}\rceil)}^{(2)}\right) ^2\\[5pt]
    & \hspace{7.2cm} \times 
    \left(Y_{(\lceil \ell + 7n_1'h_{\ell/n_1'}\rceil)}^{(1)} - Y_{(\lceil \ell - 7n_1'h_{\ell/n_1'}\rceil)}^{(1)}\right)^2 \Big]\\[5pt]
    =& \left(\frac{1}{n\varepsilon_n} \sum_{k=1}^{n_1'-1} \left( \frac{k}{n_1'}\right)^{-b_1} \left( \frac{(n_1'-k)}{n_1'}\right)^{-b_2}  E \Bigr[\left(Y_{(\lceil k + 7n_1'h_{k/n_1'}\rceil)}^{(1)} - Y_{(\lceil k - 7n_1'h_{k/n_1'}\rceil)}^{(1)}\right)^2 \Bigr]\right)^2\\[5pt]
    \lesssim&\bigg(\frac{1}{n\varepsilon_n} \sum_{k=1}^{n_1'-1} \left( \frac{k}{n_1'}\right)^{-b_1} \left( \frac{n_1'-k}{n_1'}\right)^{-b_2}  \left(\frac{h_{k/n_1'} + \frac{1}{n_1'}}{\left(\frac{k}{n_1'}\right)^{1+d_1}\left(\frac{n_1'-k}{n_1'}\right)^{1+d_2}}\right)^2\bigg)^2,
    \end{align*}
where we used the independence between $(Y_i^{(1)})_i$ and $(Y_j^{(2)})_j$ to obtain the second to last line, and Lemma~\ref{lem_product_spacings} below in the last inequality.

Next, since $n_1\asymp n_2$, we have
\begin{align*}
    &\frac{1}{n\varepsilon_n} \sum_{k = 1}^{n_1'-1} \left( \frac{k}{n_1'}\right)^{-b_1} \left( \frac{n_1'-k}{n_1'}\right)^{-b_2}  \left(\frac{h_{k/n_1'} + \frac{1}{n_1'}}{\left(\frac{k}{n_1'}\right)^{1+d_1}\left(\frac{n_1'-k}{n_1'}\right)^{1+d_2}}\right)^2\\[5pt]
    \leq & \frac{2}{n\varepsilon_n} \sum_{k = 1}^{n_1'-1} \left( \frac{k}{n_1'}\right)^{-b_1-2(1+d_1)} \left( \frac{n_1'-k}{n_1'}\right)^{-b_2-2(1+d_2)} \left(h_{\frac{k}{n_1'}}^2 + \frac{1}{(n_1')^2}\right) \\[5pt]
    \asymp & \frac{2\varepsilon_n}{n} \sum_{k = 1}^{n_1'-1} \left( \frac{k}{n_1'}\right)^{-b_1-2d_1} \left( \frac{n_1'-k}{n_1'}\right)^{-b_2-2d_2}  + \frac{2}{n^3\varepsilon_n} \sum_{k = 1}^{n_1'-1} \left( \frac{k}{n_1'}\right)^{-b_1-2(1+d_1)} \left( \frac{n_1'-k}{n_1'}\right)^{-b_2-2(1+d_2)} \\[5pt]
    \lesssim & \frac{2\varepsilon_n}{n} \sum_{k = 1}^{n_1'-1} \left[\left( \frac{k}{n_1'}\right)^{-b_1-2d_1} + \left( \frac{n_1'-k}{n_1'}\right)^{-b_2-2d_2} \right] + \frac{2}{n^3\varepsilon_n} \sum_{k = 1}^{n_1'-1} \left[\left( \frac{k}{n_1'}\right)^{-b_1-2(1+d_1)} + \left( \frac{n_1'-k}{n_1'}\right)^{-b_2-2(1+d_2)} \right].
\end{align*}
In the last step, we used the relation $x^{u} (1-x)^v \lesssim x^u + (1-x)^v$ for $u,v \in \R$ and any $x \in (0,1)$. 
Fix $j \in \{1,2\}$. Under Assumption~\ref{ass:smoothness2}\ref{ass:smoothness24}, we have
\begin{align*}
    \frac{2\varepsilon_n}{n} \sum_{k=1}^{n_1'-1} \left(\frac{k}{n_1'}\right)^{-b_j-2d_j} \lesssim  \varepsilon_n = o(1),
\end{align*}
and 
\begin{align*}
    \frac{2}{n^3\varepsilon_n} \sum_{k=1}^{n_1'-1} \left( \frac{k}{n_1'}\right)^{-b_j-2(1+d_j)} \lesssim \frac{n^{b_j+2(1+d_j)}}{n^3\eps_n} =o(1), 
\end{align*}
which implies that 
\begin{align*}
   \frac{1}{n\varepsilon_n} \sum_{k = 1}^{n_1'-1} \left( \frac{k}{n_1'}\right)^{-b_1} \left( \frac{n_1'-k}{n_1'}\right)^{-b_2}  \left(\frac{h_{k/n_1'} + \frac{1}{n_1'}}{\left(\frac{k}{n_1'}\right)^{1+d_1}\left(\frac{n_1'-k}{n_1'}\right)^{1+d_2}}\right)^2 = o(1).
\end{align*}
This completes the proof of Point 1.

As for Point 2, an analogous reasoning yields
\begin{align*}
& E\left[\int_{[0,1]^4} {\rm cov}\big(\widehat f_U^{\,(1)}(s), \widehat f_U^{\,(1)}(s')\big) {\rm cov}\big(\widehat f_U^{\,(1)}(t), \widehat f_U^{\,(1)}(t')\big) d\widehat\mu_{\mathbf Y}(s,t,s',t')\right] \\[5pt]
\leq& E\left[\int_{[0,1]^4} \frac{C^2}{n^2 h_s} \frac{s\bar s}{s^{b_1} \bar s^{\, b_2}}\mathds 1\{|s-s'| \leq 7h_s\}\frac{1}{h_{t'}} \frac{t'\bar{t'}}{{t'}^{b_1} \bar{t'}^{\, b_2}}\mathds 1\{|t'-t| \leq 7h_{t'}\} 
 d\widehat\mu_{\mathbf Y}(s,t,s',t')\right]\\[5pt]
 =& C^2 E\Bigg[\left(\frac{1}{n}\int_{[0,1]} s^{1 - b_1} \bar s^{1 - b_2} \frac{Y_{(\lceil n_1's + 7n_1'h_{s}\rceil)}^{(1)} - Y_{(\lceil n_1's - 7n_1'h_{s}\rceil)}^{(1)}}{h_s} d(\widehat F^{(1)})^{-1}(s)\right)\\[5pt]
 &\qquad \times \left(\frac{1}{n}\int_{[0,1]}{t'}^{1-b_1}\bar{t'}^{1-b_2}\frac{Y_{(\lceil n_1't' + 7n_1'h_{t'}\rceil)}^{(2)} - Y_{(\lceil n_1't' - 7n_1'h_{t'}\rceil)}^{(2)}}{h_{t'}} d(\widehat F^{(2)})^{-1}(t')\right)  \Bigg] \\
 = & C^2 E\left[\frac{1}{n}\int_{[0,1]} s^{1 - b_1} \bar s^{1 - b_2} \frac{Y_{(\lceil n_1's + 7n_1'h_{s}\rceil)}^{(1)} - Y_{(\lceil n_1's - 7n_1'h_{s}\rceil)}^{(1)}}{h_s} d(\widehat F^{(1)})^{-1}(s)\right]\\[5pt]
 &\qquad \times  C^2 E\left[\frac{1}{n}\int_{[0,1]}{t'}^{1-b_1}\bar{t'}^{1-b_2}\frac{Y_{(\lceil n_1't' + 7n_1'h_{t'}\rceil)}^{(2)} - Y_{(\lceil n_1't' - 7n_1'h_{t'}\rceil)}^{(2)}}{h_{t'}} d(\widehat F^{(2)})^{-1}(t')\right],
\end{align*}
where the last equality follows from independence between the samples $(Y_i^{(1)})_i$ and $(Y_i^{(2)})_i$. That each expectation in the upper bound is $o(1)$ follows from the Cauchy--Schwarz inequality together with the same arguments used in the proof of Point 1. 

As for Point 3, we have
\begin{align*}
& E\left[\int_{[0,1]^4} E\big(\widehat f_U^{\,(1)}(s)\big)E\big(\widehat f_U^{\,(1)}(s')\big) {\rm cov}\big(\widehat f_U^{\,(1)}(t), \widehat f_U^{\,(1)}(t')\big) d\widehat\mu_{\mathbf Y}(s,t,s',t')\right] \\[5pt]
\leq & C\widetilde C_U^2E\Bigg[\left(\int_{[0,1]}s^{1-b_1}\bar s^{1-b_2}d(\widehat F^{(1)})^{-1}(s)\right)\left(\int_{[0,1]}{s'}^{1/2-b_1}\bar{s'}^{1/2-b_2}d(\widehat F^{(1)})^{-1}(s')\right) \\[5pt]
& \qquad \times \left(\frac{1}{n}\int_{[0,1]}{t'}^{1/2}\bar{t'}^{1/2}\frac{Y_{(\lceil n_1't' + 7n_1'h_{t'}\rceil)}^{(2)} - Y_{(\lceil n_1't' - 7n_1'h_{t'}\rceil)}^{(2)}}{h_{t'}} d(\widehat F^{(2)})^{-1}(t')\right)\Bigg]\\[5pt]
\leq &  C\widetilde C_U^2E^{1/2}\left[\left(\int_{[0,1]}s^{1-b_1}\bar s^{1-b_2}d(\widehat F^{(1)})^{-1}(s)\right)^2\right] E^{1/2}\left[\left(\int_{[0,1]}{s'}^{1/2-b_1}\bar{s'}^{1/2-b_2}d(\widehat F^{(1)})^{-1}(s')\right)^2\right] \\[5pt]
& \qquad \times E\left[\left(\frac{1}{n}\int_{[0,1]}{t'}^{1/2}\bar{t'}^{1/2}\frac{Y_{(\lceil n_1't' + 7n_1'h_{t'}\rceil)}^{(2)} - Y_{(\lceil n_1't' - 7n_1'h_{t'}\rceil)}^{(2)}}{h_{t'}} d(\widehat F^{(2)})^{-1}(t')\right)\right],
\end{align*}
where the last inequality follows from the Cauchy--Schwarz inequality and independence between the samples $(Y_i^{(1)})_i$ and $(Y_i^{(2)})_i$. That the  upper bound is $o(1)$ follows from the first two expectations being bounded (by Lemma~\ref{lem:int_FY_FY'}) and the last expectation being $o(1)$ by the Cauchy--Schwarz inequality and the same arguments used in the proof of Point 1.

Finally, as for Point 4, we have
\begin{align*}
    & E\left[\int_{[0,1]^4}E\big(\widehat f_U^{\,(1)}(s)\big)E\big(\widehat f_U^{\,(1)}(t)\big) {\rm cov}\big(\widehat f_U^{\,(1)}(s'), \widehat f_U^{\,(1)}(t')\big) d\widehat\mu_{\mathbf Y}(s,t,s',t')\right] \\[5pt]
    \leq & C\widetilde C_U^2E\Bigg[\left(\int_{[0,1]}s^{1/2-b_1}\bar s^{1/2-b_2}d(\widehat F^{(1)})^{-1}(s)\right)\left(\int_{[0,1]}t^{1/2-b_1}\bar t^{1/2-b_2}d(\widehat F^{(2)})^{-1}(t)\right) \\[5pt]
    &\qquad \times \left(\frac1n \int_{[0,1]}{s'}^{1-b_1}\bar{s'}^{1-b_2}\frac{Y_{(\lceil n_1's' + 7n_1'h_{s'}\rceil)}^{(2)} - Y_{(\lceil n_1'{s'} - 7n_1'h_{s'}\rceil)}^{(2)}}{h_{s'}} d(\widehat F^{(1)})^{-1}(s') \right)\Bigg]\\[5pt]
\leq &  C\widetilde C_U^2E^{1/2}\left[\left(\int_{[0,1]}s^{1/2-b_1}\bar s^{1/2-b_2}d(\widehat F^{(1)})^{-1}(s)\right)^2\right] E^{1/2}\left[\left(\int_{[0,1]}{t}^{1/2-b_1}\bar{t}^{1/2-b_2}d(\widehat F^{(2)})^{-1}(t)\right)^2\right] \\[5pt]
& \qquad \times E^{1/2}\left[\left(\frac{1}{n}\int_{[0,1]}{s'}^{1/2}\bar{s'}^{1/2}\frac{Y_{(\lceil n_1's' + 7n_1'h_{s'}\rceil)}^{(2)} - Y_{(\lceil n_1's' - 7n_1'h_{s'}\rceil)}^{(2)}}{h_{s'}} d(\widehat F^{(1)})^{-1}(s')\right)\right],
\end{align*}
where the last inequality follows from the Cauchy--Schwarz inequality and independence between the samples $(Y_i^{(1)})_i$ and $(Y_i^{(2)})_i$. That the  upper bound is $o(1)$ follows from the first two expectations being bounded (by Lemma~\ref{lem:int_FY_FY'}) and the last expectation being $o(1)$ by the Cauchy--Schwarz inequality and the same arguments used in the proof of Point 1.

\hfill$\square$

\begin{lemma}\label{lem_fU_tilde_s}
    There exists some absolute constant $\widetilde C_U>0$ such that, for all $s,\tilde s \in (0,1)$ such that $\tilde s \in [s-h_s, s+h_s]$,
    \begin{align*}
        f_U(\tilde s) \leq \frac{\widetilde C_U}{s^{\, b_1} \bar s^{\, b_2}}.
    \end{align*}
\end{lemma}
\textbf{Proof:} Without loss of generality, assume that $s \in (0,1/2]$ (the symmetric case $s \in [1/2,1)$ can be treated analogously). By Assumption~\ref{ass:smoothness1}\ref{ass:smoothness13}, we have
\begin{align*}
    f_U(\tilde s) &\leq \dfrac{C_U}{\widetilde s^{b_1} \bar{\widetilde s}^{b_2}}.
\end{align*}
If $s \in [1/4,1/2]$ then $\tilde s \in [1/8,3/4]$ since we have $h_s \leq \frac{1}{2} s \bar s$ by definition. 
We then obtain
\begin{align*}
    f_U(\tilde s) &\leq C \leq \dfrac{C'}{s^{b_1} \bar{s}^{b_2}}
\end{align*}
for some absolute constant $C'>0$. 
Otherwise, $s \in (0,1/4]$ so that $\tilde s \leq 1/2$ and $\tilde s \geq s/2$. 
We therefore obtain
\begin{align*}
    f_U(\tilde s) &\leq \dfrac{C_U}{\widetilde s^{b_1} \bar{\widetilde s}^{b_2}} \leq \frac{C_U}{(s /2)^{b_1}(1/2)^{b_2}} \leq \frac{2^{b_1+b_2}C_U}{s^{\,b_1} \bar s^{\,b_2}} \leq \frac{2C_U}{s^{\,b_1} \bar s^{\,b_2}}.
\end{align*}
\hfill $\square$

\begin{lemma}\label{lem:25}
~
\begin{align}
      E \left[\displaystyle \int_{\R^2}\frac{w(\widehat F_Y^{(1)}(y),\widehat F_Y^{(2)}(y'))}{h_{\widehat F_Y^{(1)}(y)}h_{\widehat F_Y^{(2)}(y')}} I(y,y')dydy'\right]&=o(1), \label{eq:1} \\
      E \left[\displaystyle \int_{\R^2}\frac{w(\widehat F_Y^{(1)}(y),\widehat F_Y^{(2)}(y'))}{h_{\widehat F_Y^{(1)}(y)}h_{\widehat F_Y^{(2)}(y')}} II(y,y')dydy'\right] &= o(1), \label{eq:2} \\
      E\left[\displaystyle \int_{\R^2}\frac{w(\widehat F_Y^{(1)}(y),\widehat F_Y^{(2)}(y'))}{h_{\widehat F_Y^{(1)}(y)}h_{\widehat F_Y^{(2)}(y')}} III(y,y')dydy'\right] &= o(1). \label{eq:3}
  \end{align}  
\end{lemma}
{\bf Proof:} First, we show \eqref{eq:1}. By independence of $U_2$ from $(Z_1,\ldots,Z_{n_3},Y_1,\ldots,Y_{n_1},X_1)$, we have
\begin{align*}
    &\quad \int_{\R^2}\frac{w(\widehat F_Y^{(1)}(y),\widehat F_Y^{(2)}(y'))}{h_{\widehat F_Y^{(1)}(y)}h_{\widehat F_Y^{(2)}(y')}} I(y,y')dydy' \\
    &=\int_{\R^2}\frac{w(\widehat F_Y^{(1)}(y),\widehat F_Y^{(2)}(y'))}{h_{\widehat F_Y^{(1)}(y)}h_{\widehat F_Y^{(2)}(y')}} \\[5pt]
    &\qquad \times E_{\mathbf Y}\Big[\Big\vert \mathds 1\left\{\abs{\widehat U^{(1)}-\widehat F_Y^{(1)}(y)}\leq h_{\widehat F_Y^{(1)}(y)}\right\}-\mathds 1\left\{\abs{ U_1-\widehat F_Y^{(1)}(y)}\leq h_{\widehat F_Y^{(1)}(y)}\right\}\Big\vert \times \mathds 1\{\mathcal A_n^{(1)}\}\Big] \\[5pt]
    & \qquad \times{\Pr}_{\mathbf Y}\left(\abs{U_2-\widehat F_Y^{(2)}(y')}\leq h_{\widehat F_Y^{(2)}(y')}\right)dydy' \\[5pt]
     &=\int_{\R^2}\frac{w(\widehat F_Y^{(1)}(y),\widehat F_Y^{(2)}(y'))}{h_{\widehat F_Y^{(1)}(y)}} \\[5pt]
     &\qquad \times E_{\mathbf Y}\Big[\Big\vert \mathds 1\left\{\abs{\widehat U^{(1)}-\widehat F_Y^{(1)}(y)}\leq h_{\widehat F_Y^{(1)}(y)}\right\}-\mathds 1\left\{\abs{ U_1-\widehat F_Y^{(1)}(y)}\leq h_{\widehat F_Y^{(1)}(y)}\right\}\Big\vert\times \mathds 1\{\mathcal A_n^{(1)}\}\Big] \\[5pt]
     & \qquad \times f_U(\widetilde F_Y(y'))dydy',
\end{align*}
by the mean-value theorem for some $\widetilde F_Y(y')\in(\widehat F_Y^{(2)}(y')\pm h_{\widehat F_Y^{(2)}(y')})$. Fix $\delta>0$ such that $\delta<1-2[(b_1+d_1)\lor (b_2+d_2)]$ which exists by Assumption~\ref{ass:smoothness1}\ref{ass:smoothness14}. We have for all $y,y'\in\R$
\[
w(\widehat F_Y^{(1)}(y),\widehat F_Y^{(2)}(y'))\leq\widehat F_Y^{(1)}(y)^{1-b_1-d_1-\delta}\widehat F_Y^{(2)}(y')^{b_1+d_1+\delta}\bar{\widehat F}_Y^{(1)}(y)^{1-b_2-d_2-\delta}\bar{\widehat F}_Y^{(2)}(y')^{b_2+d_2+\delta}.
\]
Hence,
\begin{align*}
    &\quad \int_{\R^2}\frac{w(\widehat F_Y^{(1)}(y),\widehat F_Y^{(2)}(y'))}{h_{\widehat F_Y^{(1)}(y)}h_{\widehat F_Y^{(2)}(y')}} I(y,y')dydy' \\[5pt]
     &\lesssim\int_{\R^2}\frac{\widehat F_Y^{(1)}(y)^{1-b_1-d_1-\delta}\widehat F_Y^{(2)}(y')^{b_1+d_1+\delta}\bar{\widehat F}_Y^{(1)}(y)^{1-b_2-d_2-\delta}\bar{\widehat F}_Y^{(2)}(y')^{b_2+d_2+\delta}}{h_{\widehat F_Y^{(1)}(y)}} \\[5pt]
     & \qquad \times E_{\mathbf Y}\Big[\Big\vert \mathds 1\left\{\abs{\widehat U_1-\widehat F_Y^{(1)}(y)}\leq h_{\widehat F_Y^{(1)}(y)}\right\}-\mathds 1\left\{\abs{ U_1-\widehat F_Y^{(1)}(y)}\leq h_{\widehat F_Y^{(1)}(y)}\right\}\Big\vert\times \mathds 1\{\mathcal A_n^{(1)}\}\Big]\\[5pt]
     & \qquad \times (\widetilde F_Y^{(2)}(y'))^{-b_1}(\bar{\widetilde F}_Y^{(2)}(y'))^{-b_2}dydy' \\[5pt]
     &\lesssim\frac{1}{\eps_n}\int_{\R^2}\widehat F_Y^{(1)}(y)^{-b_1-d_1-\delta}\widehat F_Y^{(2)}(y')^{d_1+\delta}\bar{\widehat F}_Y^{(1)}(y)^{-b_2-d_2-\delta}\bar{\widehat F}_Y^{(2)}(y')^{d_2+\delta} \\[5pt]
     & \qquad \times E_{\mathbf Y}\Big[\Big\vert \mathds 1\left\{\abs{\widehat U_1-\widehat F_Y^{(1)}(y)}\leq h_{\widehat F_Y^{(1)}(y)}\right\}-\mathds 1\left\{\abs{ U_1-\widehat F_Y^{(1)}(y)}\leq h_{\widehat F_Y^{(1)}(y)}\right\}\Big\vert\times\mathds 1\{\mathcal A_n^{(1)}\} \Big]dydy', 
\end{align*}
by Lemma~\ref{lem:useful_lem}. Next, we take the expectation with respect to $(Y_i)_{i=1}^{n_1}$ and split the integration domain $\R^2$ as follows
\begin{align}
    \R^2 = \big\{\mathcal{D}_1 \times \R \big\} \bigcup \big\{\mathcal{D}_2 \times \R \big\} \bigcup \big\{\mathcal{D}_3 \times \R \big\}\label{eq_domain_splitting}
\end{align}
where 
\begin{align*}
    & \mathcal{D}_1 = \Big(-\infty,(\widehat F_Y^{(1)})^{-1}\Big(\frac{2a_n}{\sqrt{n}}\Big)\Big],\\[5pt]
    & \mathcal{D}_2 = \Big((\widehat {F}_Y^{(1)})^{-1}\Big(\frac{2a_n}{\sqrt{n}}\Big),(\widehat F_Y^{(1)})^{-1}\Big(1-\frac{2a_n}{\sqrt{n}}\Big)\Big),\\[5pt]
    & \mathcal{D}_3 = \Big[(\widehat F_Y^{(1)})^{-1}\Big(1-\frac{2a_n}{\sqrt{n}}\Big),\infty\Big).
\end{align*}
Consider the second domain $\mathcal{D}_2$. By Lemma~\ref{lem:useful_lem_indicators},
\begin{align*}
    &\frac{1}{\eps_n}\int_{\mathcal{D}_2}\widehat F_Y^{(1)}(y)^{-b_1-d_1-\delta}\widehat F_Y^{(2)}(y')^{d_1+\delta}\bar{\widehat F}_Y^{(1)}(y)^{-b_2-d_2-\delta}\bar{\widehat F}_Y^{(2)}(y')^{d_2+\delta} \\[5pt]
    &\quad \times \left(\widehat F_Y^{(1)}(y)\right)^{-b_1} \left(\bar{\widehat F}_Y^{(1)}(y)\right)^{-b_2}\frac{a_n}{\sqrt{n}}dydy'\\[5pt]
    & = \frac{2a_n}{\eps_n\sqrt{n}}\int_{(\widehat F_Y^{(1)})^{-1}(\frac{2a_n}{\sqrt{n}})}^{(\widehat F_Y^{(1)})^{-1}(1-\frac{2a_n}{\sqrt{n}})}\widehat F_Y^{(1)}(y)^{-2b_1-d_1-\delta}\bar{\widehat F}_Y^{(1)}(y)^{-2b_2-d_2-\delta} dy\int_\R \widehat F_Y^{(2)}(y')^{d_1+\delta} \bar{\widehat F}_Y^{(2)}(y')^{d_2+\delta} dy'.
\end{align*}
Taking the expectation with respect to $(Y_i)_i$ and using the Cauchy--Schwarz inequality, the latter quantity is upper bounded as
\begin{align*}
    &\quad \frac{2a_n}{\eps_n\sqrt{n}} E^{1/2}\left[\left(\int_{(\widehat F_Y^{(1)})^{-1}(\frac{2a_n}{\sqrt{n}})}^{(\widehat F_Y^{(1)})^{-1}(1-\frac{2a_n}{\sqrt{n}})}\widehat F_Y^{(1)}(y)^{-2b_1-d_1-\delta}\bar{\widehat F}_Y^{(1)}(y)^{-2b_2-d_2-\delta} dy\right)^2\right] \\[5pt]
    &\qquad \times E^{1/2}\left[\left(\int_\R \widehat F_Y^{(2)}(y')^{d_1+\delta} \bar{\widehat F}_Y^{(2)}(y')^{d_2+\delta} dy'\right)^2\right]\\[5pt]
    & \lesssim \frac{a_n}{\eps_n \sqrt{n}} E^{1/2}\left[\left(\int_{(\widehat F_Y^{(1)})^{-1}(\frac{2a_n}{\sqrt{n}})}^{(\widehat F_Y^{(1)})^{-1}(1-\frac{2a_n}{\sqrt{n}})}\widehat F_Y^{(1)}(y)^{-2b_1-d_1-\delta}\bar{\widehat F}_Y^{(1)}(y)^{-2b_2-d_2-\delta} dy\right)^2\right]  \text{ by Lemma~\ref{lem:int_FY_FY'}}\\[5pt]
    & = o(1)
\end{align*}
by Lemma~\ref{lem:int_D2_vanishes} applied with $\gamma_1 = -2b_1 -d_1 - \delta\geq \frac{1}{2} - b_1$ and $\gamma_2 = -2b_2 -d_2 - \delta\geq \frac{1}{2} - b_2$.

We now consider the first domain $\mathcal{D}_1$ (the third one $\mathcal{D}_3$ can be analyzed in a similar way). 
We have
\begin{align*}
    &\frac{1}{\eps_n}\int_{-\infty}^{(\widehat F_Y^{(1)})^{-1}(\frac{2a_n}{\sqrt{n}})}\int_\R\widehat F_Y^{(1)}(y)^{-b_1-d_1-\delta}\widehat F_Y^{(2)}(y')^{d_1+\delta}\bar{\widehat F}_Y^{(1)}(y)^{-b_2-d_2-\delta}\bar{\widehat F}_Y^{(2)}(y')^{d_2+\delta}(\widehat F_Y^{(1)}(y))^{1-b_1}dydy'\\[5pt]
    &\lesssim \frac{1}{\eps_n}\int_{-\infty}^{(\widehat F_Y^{(1)})^{-1}(\frac{2a_n}{\sqrt{n}})}\widehat F_Y^{(1)}(y)^{1-2b_1-d_1-\delta} dy\int_\R \widehat F_Y^{(2)}(y')^{d_1+\delta} \bar{\widehat F}_Y^{(2)}(y')^{d_2+\delta} dy'.
\end{align*}
Moreover, by the Cauchy--Schwarz inequality, we have
\begin{align*}
    & \quad E\left[\frac{1}{\eps_n}\int_{-\infty}^{(\widehat F_Y^{(1)})^{-1}(\frac{2a_n}{\sqrt{n}})}\widehat F_Y^{(1)}(y)^{1-2b_1-d_1-\delta} dy\int_\R \widehat F_Y^{(2)}(y')^{d_1+\delta} \bar{\widehat F}_Y^{(2)}(y')^{d_2+\delta} dy'\right] \\[5pt]
    &\leq E^{1/2}\left[\left(\frac{1}{\eps_n}\int_{-\infty}^{(\widehat F_Y^{(1)})^{-1}(\frac{2a_n}{\sqrt{n}})}\widehat F_Y^{(1)}(y)^{1-2b_1-d_1-\delta} dy\right)^2 \right]E^{1/2}\left[\left(\int_\R \widehat F_Y^{(2)}(y')^{d_1+\delta} \bar{\widehat F}_Y^{(2)}(y')^{d_2+\delta} dy'\right)^2\right] \\[5pt]
    & = o(1)
\end{align*}
by Lemmas~\ref{lem:int_FY_FY'} and~\ref{lem:int_D1_vanishes}.

\medskip
Second, we show \eqref{eq:2}. This follows from the observation that $I(y,y')=II_N(y',y)$ and the proof is analogous to the steps above.

\medskip
Third, we show \eqref{eq:3}. By independence of $\widehat U^{(1)}$ and $\widehat U^{(2)}$, we have
\begin{align*}
    &\quad \int_{\R^2}\frac{w(\widehat F_Y^{(1)}(y),\widehat F_Y^{(2)}(y'))}{h_{\widehat F_Y^{(1)}(y)}h_{\widehat F_Y^{(2)}(y')}} III(y,y')dydy' \\[5pt]
    & = \int_{\R^2}\frac{w\big(\widehat F_Y^{(1)}(y),\widehat F_Y^{(2)}(y')\big)}{h_{\widehat F_Y^{(1)}(y)}h_{\widehat F_Y^{(2)}(y')}} E_{\mathbf Y}\bigg[\Big\vert \mathds 1\Big\{\big|\widehat U^{(1)}-\widehat F_Y^{(1)}(y)\,\big|\leq \, h_{\widehat F_Y^{(1)}(y)}\Big\}-\mathds 1\Big\{\big| U_1-\widehat F_Y^{(1)}(y)\big|\leq h_{\widehat F_Y^{(1)}(y)}\Big\}\Big\vert \\[5pt]
    &\qquad \times\Big\vert\mathds 1\left\{\abs{\widehat U^{(2)}-\widehat F_Y^{(2)}(y')}\leq h_{\widehat F_Y^{(2)}(y')}\right\}-\mathds 1\left\{\abs{ U_2-\widehat F_Y^{(2)}(y')}\leq h_{\widehat F_Y^{(2)}(y')}\right\}\Big\vert\times\mathds 1\{\mathcal A_n\}\bigg]dydy'\\[5pt]
    & \leq \int_{\R^2}\frac{(\widehat F_Y^{(1)}(y)\widehat F_Y^{(2)}(y'))^{1/2}(\bar{\widehat F}_Y^{(1)}(y),\bar{\widehat F}_Y^{(2)}(y'))^{1/2}}{h_{\widehat F_Y^{(1)}(y)}h_{\widehat F_Y^{(2)}(y')}} \\
    & \qquad \times E_{\mathbf Y}\Big[\Big\vert \mathds 1\left\{\abs{\widehat U^{(1)}-\widehat F_Y^{(1)}(y)\,}\leq \, h_{\widehat F_Y^{(1)}(y)}\right\}-\mathds 1\left\{\abs{ U_1-\widehat F_Y^{(1)}(y)}\leq h_{\widehat F_Y^{(1)}(y)}\right\}\Big\vert \times \mathds{1}\{\mathcal{A}_n^{(1)}\}\Big] \\[5pt]
    &\qquad \times E_{\mathbf Y}\Big[\Big\vert\mathds 1\left\{\abs{\widehat U^{(2)}-\widehat F_Y^{(2)}(y')}\leq h_{\widehat F_Y^{(2)}(y')}\right\}-\mathds 1\left\{\abs{ U_2-\widehat F_Y^{(2)}(y')}\leq h_{\widehat F_Y^{(2)}(y')}\right\}\Big\vert\times\mathds 1\{\mathcal A_n^{(2)}\}\Big]dydy'\\[5pt]
    & = \bigg(\int_{\R}\frac{(\widehat F_Y^{(1)}(y) \bar{\widehat F}_Y^{(1)}(y))^{1/2}}{h_{\widehat F_Y^{(1)}(y)}} E_{\mathbf Y}\Big[\Big\vert \mathds 1\left\{\abs{\widehat U^{(1)}-\widehat F_Y^{(1)}(y)\,}\leq \, h_{\widehat F_Y^{(1)}(y)}\right\} \\[5pt]
    & \hspace{5cm}  -\mathds 1\left\{\abs{ U_1-\widehat F_Y^{(1)}(y)}\leq h_{\widehat F_Y^{(1)}(y)}\right\}\Big\vert \times \mathds{1}\{\mathcal{A}_n^{(1)}\}\Big] dy\bigg)^2\\[5pt]
    & = \bigg(\int_{\R}\frac{(\widehat F_Y^{(1)}(y)\bar{\widehat F}_Y^{(1)}(y))^{-1/2}}{\eps_n} E_{\mathbf Y}\Big[\Big\vert \mathds 1\left\{\abs{\widehat U^{(1)}-\widehat F_Y^{(1)}(y)\,}\leq \, h_{\widehat F_Y^{(1)}(y)}\right\} \\
    & \hspace{5cm} -\mathds 1\left\{\abs{ U_1-\widehat F_Y^{(1)}(y)}\leq h_{\widehat F_Y^{(1)}(y)}\right\}\Big\vert  \times \mathds{1}\{\mathcal{A}_n^{(1)}\}\Big] dy\bigg)^2\\[5pt]
    & = \bigg(\int_{\mathcal{D}_1 \cup \mathcal{D}_2 \cup  \mathcal{D}_3}\frac{1}{\eps_n} (\widehat F_Y^{(1)}(y) \bar{\widehat F}_Y^{(1)}(y))^{-1/2} \\[5pt]
    & \hspace{2cm} \times E_{\mathbf Y}\Big[\Big\vert \mathds 1\left\{\abs{\widehat U^{(1)}-\widehat F_Y^{(1)}(y)\,}\leq \, h_{\widehat F_Y^{(1)}(y)}\right\}-\mathds 1\left\{\abs{ U_1-\widehat F_Y^{(1)}(y)}\leq h_{\widehat F_Y^{(1)}(y)}\right\}\Big\vert  \times \mathds{1}\{\mathcal{A}_n^{(1)}\}\Big] dy\bigg)^2
\end{align*}
where the domains $\mathcal{D}_i$ are defined in \eqref{eq_domain_splitting}. 
 We now show that the latter quantity tends to zero as $N \to \infty$ by successively placing ourselves on the domains $\mathcal{D}_1, \mathcal{D}_2$ and $\mathcal{D}_3$.  
Consider first the domain $\mathcal{D}_2$. By Lemma~\ref{lem:useful_lem_indicators} and the Cauchy-Schwarz inequality, we have
\begin{align*}
    &E\Big[\int_{\mathcal{D}_2}\frac{1}{\eps_n}\Big(\widehat F_Y^{(1)}(y)\bar{\widehat F}_Y^{(1)}(y)\Big)^{-1/2}\\[5pt]
    & \qquad \times E_{\mathbf Y}\Big[\Big\vert \mathds 1\left\{\abs{\widehat U^{(1)}-\widehat F_Y^{(1)}(y)\,}\leq \, h_{\widehat F_Y^{(1)}(y)}\right\}-\mathds 1\left\{\abs{ U_1-\widehat F_Y^{(1)}(y)}\leq h_{\widehat F_Y^{(1)}(y)}\right\}\Big\vert  \times \mathds{1}\{\mathcal{A}_n^{(1)}\}\Big] dy\Big]\\[5pt]
    & \lesssim  E^{1/2}\left[\left(\int_{\mathcal{D}_2}\frac{1}{\eps_n} (\widehat F_Y^{(1)}(y)\bar{\widehat F}_Y^{(1)}(y))^{-1/2}\dfrac{a_n}{\sqrt{n}}\left(\widehat F_Y^{(1)}(y)\right)^{-b_1} \left(\bar{\widehat F}_Y^{(1)}(y)\right)^{-b_2} dy\right)^2\right]\\[5pt]
    &= o(1)
\end{align*}
by Lemma~\ref{lem:int_D2_vanishes}. 
We now consider the domain $\mathcal{D}_1$ (the integral over the domain $\mathcal{D}_3$ can be controlled in a similar way). 
We have, by Lemma~\ref{lem:useful_lem_indicators}
\begin{align*}
    &E\Bigg[\int_{\mathcal{D}_1}\frac{1}{\eps_n} \Big(\widehat F_Y^{(1)}(y) \bar{\widehat F}_Y^{(1)}(y)\Big)^{-1/2}E_{\mathbf Y}\Big[\Big\vert \mathds 1\left\{\abs{\widehat U^{(1)}-\widehat F_Y^{(1)}(y)\,}\leq \, h_{\widehat F_Y^{(1)}(y)}\right\}\\[5pt]
    &\hspace{6cm} -\mathds 1\left\{\abs{ U_1-\widehat F_Y^{(1)}(y)}\leq h_{\widehat F_Y^{(1)}(y)}\right\}\Big\vert\Big] dy\Bigg]\\[5pt]
    & \lesssim E\left[\int_{\mathcal{D}_1}\frac{1}{\eps_n} \Big(\widehat F_Y^{(1)}(y)\bar{\widehat F}_Y^{(1)}(y)\Big)^{-1/2}\widehat F_Y^{(1)}(y)^{1-b_1} dy\right]\\[5pt]
    & \lesssim E\left[\int_{\mathcal{D}_1}\frac{1}{\eps_n} \widehat F_Y^{(1)}(y)^{\frac{1}{2}-b_1} dy\right]\\[5pt]
    & = o(1)
\end{align*}
by Lemma~\ref{lem:int_D1_vanishes}. This concludes the proof.

\begin{lemma}\label{lem:useful_lem_indicators}
    For any $y \in \R$, we have
\begin{align*}
    & \quad E_{\mathbf Y}\Big[\Big\vert \mathds 1\left\{\abs{\widehat U_1^{(1)}-\widehat F_Y^{(1)}(y)}\leq h_{\widehat F_Y^{(1)}(y)}\right\}-\mathds 1\left\{\abs{ U_1-\widehat F_Y^{(1)}(y)}\leq h_{\widehat F_Y^{(1)}(y)}\right\}\Big\vert\times\mathds 1\{\mathcal A_n^{(1)}\}\Big] \\[5pt]
    & \lesssim    \begin{cases}
        \dfrac{a_n}{\sqrt{n}}\left(\widehat F_Y^{(1)}(y)\right)^{-b_1} \left(\bar{\widehat F}_Y^{(1)}(y)\right)^{-b_2} & \text{ if } \widehat F_Y^{(1)}(y) \land \bar{\widehat F}_Y^{(1)}(y) \geq \frac{2a_n}{\sqrt{n}}\\[5pt]       
        \Big(\widehat F_Y^{(1)}(y)\Big)^{1-b_1} & \text{ if } \widehat F_Y^{(1)}(y) < \frac{2a_n}{\sqrt{n}}\\[10pt]
        \Big(\bar{\widehat F}_Y^{(1)}(y)\Big)^{1-b_2} & \text{ if } \bar{\widehat F}_Y^{(1)}(y) < \frac{2a_n}{\sqrt{n}}.
    \end{cases}
\end{align*}
\end{lemma}
{\bf Proof:}
We have
\begin{align*}
   &  E_{\mathbf Y}\Big[\Big\vert \mathds 1\left\{\abs{\widehat U_1^{(1)}-\widehat F_Y^{(1)}(y)}\leq h_{\widehat F_Y^{(1)}(y)}\right\}-\mathds 1\left\{\abs{ U_1-\widehat F_Y^{(1)}(y)}\leq h_{\widehat F_Y^{(1)}(y)}\right\}\Big\vert\times \mathds 1\{\mathcal A_n^{(1)}\}\Big]\\[5pt]
   &\quad = 
       {\Pr}_{\,\mathbf{Y}}\left(\abs{\widehat U_1^{(1)}-\widehat F_Y^{(1)}(y)}\leq h_{\widehat F_Y^{(1)}(y)} \;\text{ and } \; \abs{U_1-\widehat F_Y^{(1)}(y)}> h_{\widehat F_Y^{(1)}(y)} \;\text{ and } \; \mathcal A_n^{(1)}\right) \\[5pt]
       & \qquad + 
       {\Pr}_{\mathbf{Y}}\left(\abs{\widehat U_1^{(1)}-\widehat F_Y^{(1)}(y)}> h_{\widehat F_Y^{(1)}(y)} \;\text{ and } \; \abs{U_1-\widehat F_Y^{(1)}(y)}\leq h_{\widehat F_Y^{(1)}(y)} \;\text{ and } \; \mathcal A_n^{(1)}\right).
\end{align*}
Assume first that $\widehat F_Y^{(1)}(y) \land \bar{\hat F}_{Y}(y)\geq 2a_n/\sqrt{n}$.
We have
\begin{align*}
    &\quad {\Pr}_{\mathbf{Y}}\left(\abs{\widehat U_1^{(1)}-\widehat F_Y^{(1)}(y)}\leq h_{\widehat F_Y^{(1)}(y)} \;\text{ and } \; \abs{U_1-\widehat F_Y^{(1)}(y)}> h_{\widehat F_Y^{(1)}(y)} \;\text{ and } \; \mathcal A_n^{(1)}\right) \\[5pt]
    & \leq {\Pr}_{\mathbf{Y}}\left(\abs{U_1-\widehat F_Y^{(1)}(y)}\leq h_{\widehat F_Y^{(1)}(y)} + |\widehat U_1^{(1)} - U_1|\;, \; \abs{U_1-\widehat F_Y^{(1)}(y)}> h_{\widehat F_Y^{(1)}(y)} \;,\; |\widehat U_1^{(1)} - U_1| \leq \frac{a_n}{\sqrt{n}}\right) \\[5pt]
    & \leq {\Pr}_{\mathbf{Y}}\left(\abs{U_1-\widehat F_Y^{(1)}(y)} \in \Big[ h_{\widehat F_Y^{(1)}(y)}, h_{\widehat F_Y^{(1)}(y)} + \frac{a_n}{\sqrt{n}} \Big]\right) \\[5pt]
    & \leq {\Pr}_{\mathbf{Y}}\left(\abs{U_1-\widehat F_Y^{(1)}(y)} \in \Big[ h_{\widehat F_Y^{(1)}(y)} - \frac{a_n}{\sqrt{n}} , h_{\widehat F_Y^{(1)}(y)} + \frac{a_n}{\sqrt{n}} \Big]\right).
    \end{align*}
    Similarly, we have
    \begin{align*}
    &\quad {\Pr}_{\mathbf{Y}}\left(\abs{\widehat U_1^{(1)}-\widehat F_Y^{(1)}(y)}> h_{\widehat F_Y^{(1)}(y)} \;\text{ and } \; \abs{U_1-\widehat F_Y^{(1)}(y)} \leq h_{\widehat F_Y^{(1)}(y)} \;\text{ and } \; \mathcal A_n^{(1)}\right) \\
    & \leq {\Pr}_{\mathbf{Y}}\left(\abs{U_1-\widehat F_Y^{(1)}(y)}\leq h_{\widehat F_Y^{(1)}(y)} \;, \; \abs{U_1-\widehat F_Y^{(1)}(y)}> h_{\widehat F_Y^{(1)}(y)} - |\widehat U_1^{(1)} - U_1| \;, \; |\widehat U_1^{(1)} - U_1| \leq \frac{a_n}{\sqrt{n}}\right) \\[5pt]
    & \leq {\Pr}_{\mathbf{Y}}\left(\abs{U_1-\widehat F_Y^{(1)}(y)} \in \Big[ h_{\widehat F_Y^{(1)}(y)} -  \frac{a_n}{\sqrt{n}} , h_{\widehat F_Y^{(1)}(y)} \Big]\right) \\[5pt]
    & \leq {\Pr}_{\mathbf{Y}}\left(\abs{U_1-\widehat F_Y^{(1)}(y)} \in \Big[ h_{\widehat F_Y^{(1)}(y)} - \frac{a_n}{\sqrt{n}} , h_{\widehat F_Y^{(1)}(y)} + \frac{a_n}{\sqrt{n}} \Big]\right). 
    \end{align*}
    Now, we have
    \begin{align*}
    &{\Pr}_{\mathbf{Y}}\left(\abs{U_1-\widehat F_Y^{(1)}(y)} \in \Big[ h_{\widehat F_Y^{(1)}(y)} - \frac{a_n}{\sqrt{n}} , h_{\widehat F_Y^{(1)}(y)} + \frac{a_n}{\sqrt{n}} \Big]\right) \\[5pt]
    & = F_U\left(\big[\widehat F_Y^{(1)}(y) - h_{\widehat F_Y^{(1)}(y)} - \frac{a_n}{\sqrt{n}}, \widehat F_Y^{(1)}(y) - h_{\widehat F_Y^{(1)}(y) } + \frac{a_n}{\sqrt{n}}\big] \right) \\[5pt]
    & \qquad + F_U\left(\big[\widehat F_Y^{(1)}(y) + h_{\widehat F_Y^{(1)}(y)} - \frac{a_n}{\sqrt{n}}, \widehat F_Y^{(1)}(y) + h_{\widehat F_Y^{(1)}(y) } + \frac{a_n}{\sqrt{n}}\big]\right)\\[5pt]
    & = \Big[f_U\left(\widetilde F_Y^{(1)}(y)\right) + f_U\left(\widetilde F_Y^{(2)}(y)\right)\Big] \frac{2a_n}{\sqrt{n}},
\end{align*}
where we used the mean-value theorem for some
\begin{align*}
    &\widetilde F_Y^{(1)}(y) \in \big[\widehat F_Y^{(1)}(y) - h_{\widehat F_Y^{(1)}(y)} - \frac{a_n}{\sqrt{n}}, \widehat F_Y^{(1)}(y) - h_{\widehat F_Y^{(1)}(y)}+\frac{a_n}{\sqrt{n}}\big] \subseteq
    \left[\frac{1}{6} \widehat F_{Y}(y), 1 - \frac{1}{6} \bar{\widehat F}_Y^{(1)}(y)\right]\\[5pt]
    & \widetilde F_Y^{(2)}(y) \in \big[\widehat F_Y^{(1)}(y) + h_{\widehat F_Y^{(1)}(y)} , \widehat F_Y^{(1)}(y) + h_{\widehat F_Y^{(1)}(y) } + \frac{a_n}{\sqrt{n}}\big] \subseteq \left[{\widehat F}_Y(y), 1 - \frac{1}{6} \bar{\widehat F}_Y^{(1)}(y)\right],
\end{align*}
 for $N$ sufficiently large. By Assumption~\ref{ass:smoothness1}\ref{ass:smoothness13}, the latter quantity is further upper-bounded as
\begin{align*}
    & \quad 2\Big[f_U\left(\widetilde F_Y^{(1)}(y)\right) + f_U\left(\widetilde F_Y^{(2)}(y)\right)\Big] \frac{a_n}{\sqrt{n}} \\[5pt]
    & \leq 2C_U \left\{\left(\widetilde F_Y^{(1)}(y)\right)^{-b_1} \left(1 -\widetilde F_Y^{(1)}(y)\right)^{-b_2} + \left(\widetilde F_Y^{(2)}(y)\right)^{-b_1} \left(1 -\widetilde F_Y^{(2)}(y)\right)^{-b_2} \right\} \frac{a_n}{\sqrt{n}}\\[5pt]
    & \leq 4 C_U \left(\frac{1}{6}\widehat F_Y^{(1)}(y)\right)^{-b_1} \left(\frac{1}{6}\bar{\widehat F}_Y^{(1)}(y)\right)^{-b_2}  \frac{a_n}{\sqrt{n}}\\[5pt]
    & \leq C \left(\widehat F_Y^{(1)}(y)\right)^{-b_1} \left(\bar{\widehat F}_Y^{(1)}(y)\right)^{-b_2}\frac{a_n}{\sqrt{n}}.
\end{align*}
This proves the desired result in the case where $\widehat F_Y^{(1)}(y) \land \bar{\widehat F}_Y^{(1)}(y) \geq \frac{2a_n}{\sqrt{n}}$. 

Assume now that $\widehat F_Y^{(1)}(y) \lor \bar{\widehat F}_Y^{(1)}(y) < 2 \frac{a_n}{\sqrt{n}}$. 
By symmetry, assume that $\widehat F_Y^{(1)}(y) < 2 a_n/\sqrt{n}$. 
We have
\begin{align}
    &\quad {\Pr}_{\mathbf{Y}}\left(\abs{\widehat U_1^{(1)}-\widehat F_Y^{(1)}(y)}> h_{\widehat F_Y^{(1)}(y)} \;\text{ and } \; \abs{U_1-\widehat F_Y^{(1)}(y)} \leq h_{\widehat F_Y^{(1)}(y)} \;\text{ and } \; \mathcal A_n^{(1)}\right)\nonumber\\[5pt]
    & \leq {\Pr}_{\mathbf{Y}}\left( \abs{U_1-\widehat F_Y^{(1)}(y)} \leq h_{\widehat F_Y^{(1)}(y)}\right)\nonumber\\[5pt]
    & \leq {\Pr}_{\mathbf{Y}}\left( U_1\leq 2\widehat F_Y^{(1)}(y)\right) \nonumber\\[5pt]
    & \leq \int_{0}^{2\widehat F_Y^{(1)}(y)} C_U t^{-b_1} (1-t)^{-b_2} dt\nonumber\\[5pt]
    & \lesssim \Big(\widehat F_Y^{(1)}(y)\Big)^{1-b_1}.\label{eq_first_proba_edges}
\end{align}
Moreover,
\begin{align}
    & \quad{\Pr}_{\mathbf{Y}}\left(\abs{\widehat U_1^{(1)}-\widehat F_Y^{(1)}(y)}\leq h_{\widehat F_Y^{(1)}(y)} \;\text{ and } \; \abs{U_1-\widehat F_Y^{(1)}(y)}> h_{\widehat F_Y^{(1)}(y)} \;\text{ and } \; \mathcal A_n^{(1)}\right) \nonumber\\[5pt]
    & \leq {\Pr}_{\mathbf{Y}}\left(\abs{\widehat U_1^{(1)}-\widehat F_Y^{(1)}(y)}\leq h_{\widehat F_Y^{(1)}(y)} \right) \nonumber\\[5pt]
    & \leq {\Pr}_{\mathbf{Y}}\left(\widehat U_1^{(1)} \leq 2\widehat F_Y^{(1)}(y)\right).\label{eq_second_proba_edges}
\end{align}
Defining $n_3' = n_3/2$, we have, for all $t\in[0,1]$, 
\begin{align*}
    {\Pr}_{\mathbf Y}(\widehat U_1^{(1)} \leq t)& = E_{\mathbf Y}\left[{\Pr}_{\mathbf Y}\left(\widehat F_Z^{(1)}(X_1)\leq t \, \big| \, X_1\right)\right]\\[5pt]
    &=E_{\mathbf Y}\Big[{\Pr}_{\mathbf Y}\big({\rm Binomial}(n_3',F_Z(X_1))\leq n_3't \, \big| \,X_1\big)\Big].
\end{align*}
We place ourselves conditionally on $X_1$. Suppose first that $t>F_Z(X_1)$. Then we have
\begin{align*}
    {\Pr}_{\mathbf Y}\big({\rm Binomial}(n_3',F_Z(X_1))\leq n_3' t \, \big| \,X_1\big)&\leq 1.
\end{align*}
Otherwise, by Bennett's inequality \citep[e.g., Theorem~2.9 in][]{boucheron2013concentration}, for all $t\leq F_Z(X_1)$,
\begin{align*}
    &{\Pr}_{\mathbf Y}\big({\rm Binomial}(n_3',F_Z(X_1))\leq n_3't \, \big| \,X_1\big)\leq \exp\left(-n_3'F_Z(X_1) \,h\!\left(\frac{n_3'(t-F_Z(X_1))}{n_3' F_Z(X_1)}\right)\right) \\[5pt]
    \text{i.e. } &{\Pr}_{\mathbf Y}\big({\rm Binomial}(n_3',U_1)\leq n_3't \, \big| \,X_1\big)= \exp\left(-n_3'U_1 \,h\!\left(\frac{t-U_1}{ U_1}\right)\right),
\end{align*}
where $h(x)=(1+x)\log(1+x)-x$. Taking expectation on both sides, Assumption~\ref{ass:smoothness1}\ref{ass:smoothness13} yields for $t<1/4$
\begin{align*}
    {\Pr}_{\mathbf{Y}}(\widehat U_1 \leq t) &\leq \int_0^{2t} f_U(u) du + \int_{2t}^1 \exp\left(-n_3' u \,h\!\left(\frac{t-u}{u}\right)\right) f_U(u) du\\[5pt]
    & \lesssim \int_{0}^{2t}  u^{-b_1}  du + \int_{2t}^{1/2} \exp\left(-n_3'u \,h\!\left(\frac{t-u}{u}\right)\right) u^{-b_1} du \\[5pt]
    & \hspace{2.5cm} +  \int_{1/2}^{1} \exp\left(-n_3'u \,h\!\left(\frac{t-u}{u}\right)\right) (1-u)^{-b_2} du\\[5pt]
    & \lesssim t^{1-b _1} + \int_{2t}^{1/2} \exp\left(-cn_3'u \right) u^{-b_1} du \\[5pt]
    & \hspace{1.5cm} +  \int_{1/2}^{1} \exp\left(-cn_3'u \right) (1-u)^{-b_2} du \\[5pt]
    &\lesssim t^{1-b_1} + n_3'^{b_1-1} \int_{2tcn_3'}^{\infty}\exp(-z)z^{-b_1}dz+ n_3'^{b_2-1}\exp(-cn_3')\int_{cn_3'/2}^\infty\exp(-z)z^{-b_2}dz\\[5pt]
    &\lesssim t^{1-b_1}+ {n_3'}^{b_1-1}\exp(-2tcn_3')+{n_3'}^{b_2-1}\exp\left(-\frac{3}{2}cn_3'\right).
\end{align*}
For all $t\geq1/n_3'$,
$${n_3'}^{b_1-1}\exp(-2tcn_3')\leq t^{1-b_1}.$$
Moreover, for $N$ larger than a constant depending on $b_1$ and $b_2$,
$$
  {n_3'}^{b_2-1}\exp\left(-\frac{3}{2}cn_3'\right)\leq t^{1-b_1}.  
$$
It follows that, for all $t\geq 1/n_3'$, 
\[
 {\Pr}_{\mathbf{Y}}(\widehat U_1 \leq t) \lesssim t^{1-b_1}.
\]
By taking $t=\widehat F_Y^{(1)}(y)$ and combining \eqref{eq_first_proba_edges} and~\eqref{eq_second_proba_edges}, the result follows.
\hfill$\square$

\begin{lemma}\label{lem:int_D2_vanishes}
    Let $\gamma_1 \geq -\frac{1}{2} - b_1$ and $\gamma_2 \geq -\frac{1}{2} - b_2$. Then it holds that
    \begin{align*}
        \frac{a_n}{\eps_n \sqrt{n}} E^{1/2}\left[\left(\int_{(\widehat F_Y^{(1)})^{-1}(\frac{2a_n}{\sqrt{n}})}^{(\widehat F_Y^{(1)})^{-1}(1-\frac{2a_n}{\sqrt{n}})}\widehat F_Y^{(1)}(y)^{\gamma_1}\bar{\widehat F}_Y^{(1)}(y)^{\gamma_2} dy\right)^2\right]=o(1).
    \end{align*}
\end{lemma}
{\bf Proof:}

We have

\begin{align*}
    & \qquad \frac{a_n}{\eps_n \sqrt{n}} E^{1/2}\left[\left(\int_{(\widehat F_Y^{(1)})^{-1}(\frac{2a_n}{\sqrt{n}})}^{(\widehat F_Y^{(1)})^{-1}(1-\frac{2a_n}{\sqrt{n}})}\widehat F_Y^{(1)}(y)^{\gamma_1}\bar{\widehat F}_Y^{(1)}(y)^{\gamma_2} dy\right)^2\right]  \\[5pt]
    & \lesssim \frac{a_n}{\eps_n \sqrt{n}} \Big\{E^{1/2}\left[\left(\int_{(\widehat F_Y^{(1)})^{-1}(\frac{2a_n}{\sqrt{n}})}^{(\widehat F_Y^{(1)})^{-1}(1/2)}\widehat F_Y^{(1)}(y)^{\gamma_1}\bar{\widehat F}_Y^{(1)}(y)^{\gamma_2} dy\right)^2\right] \\[5pt]
    & \qquad + E^{1/2}\left[\left(\int_{(\widehat F_Y^{(1)})^{-1}(1/2)}^{(\widehat F_Y^{(1)})^{-1}(1-\frac{2a_n}{\sqrt{n}})}\widehat F_Y^{(1)}(y)^{\gamma_1}\bar{\widehat F}_Y^{(1)}(y)^{\gamma_2} dy\right)^2\right]  \Big\}\\[5pt]
    & \lesssim \frac{a_n}{\eps_n \sqrt{n}} \Big\{E^{1/2}\left[\left(\int_{(\widehat F_Y^{(1)})^{-1}(\frac{2a_n}{\sqrt{n}})}^{(\widehat F_Y^{(1)})^{-1}(1/2)}\widehat F_Y^{(1)}(y)^{\gamma_1} dy\right)^2\right] + E^{1/2}\left[\left(\int_{(\widehat F_Y^{(1)})^{-1}(1/2)}^{(\widehat F_Y^{(1)})^{-1}(1-\frac{2a_n}{\sqrt{n}})}\bar{\widehat F}_Y^{(1)}(y)^{\gamma_2} dy\right)^2\right]  \Big\}\\[5pt]
    & \leq \frac{a_n}{\eps_n \sqrt{n}} \Big\{E^{1/2}\left[\left(\left(\frac{2a_n}{\sqrt{n}}\right)^{\gamma_1} \times \bigg((\widehat F_Y^{(1)})^{-1}\Big(\frac{1}{2}\Big) - (\widehat F_Y^{(1)})^{-1}\Big(\frac{2a_n}{\sqrt{n}}\Big)\bigg)\right)^2\right]\\[5pt]
    & \hspace{1.5cm}+ E^{1/2}\left[\left(\left(\frac{2a_n}{\sqrt{n}}\right)^{\gamma_2} \times \bigg((\widehat F_Y^{(1)})^{-1}\Big(1 - \frac{2a_n}{\sqrt{n}}\Big)\bigg) - (\widehat F_Y^{(1)})^{-1}\Big(\frac{1}{2}\Big)\right)^2\right]  \Big\}.
\end{align*}
We handle the first expectation (the second one can be analyzed in a similar way). 
We have
\begin{equation}\label{eq:middleterm}
    \frac{1}{\eps_n} \left(\frac{a_n}{\sqrt{n}}\right)^{1+ \gamma_1} E^{1/2}\left[\bigg((\widehat F_Y^{(1)})^{-1}\Big(\frac{1}{2}\Big) - (\widehat F_Y^{(1)})^{-1}\Big(\frac{2a_n}{\sqrt{n}}\Big)\bigg)^2\right].
\end{equation}
Letting $n_1' = n_1/2$, we recall that the density of $\xi_{(r)}$ is \citep[see, e.g., Equation~(2.1.6) in][]{david2004order}:
\[
f_{\xi_{(r)}}(x)=\frac{1}{{\rm B}(r,n_1'-r+1)}x^r\bar{x}^{n_1'-r}.
\]
Combining this with Assumption~\ref{ass:smoothness1}\ref{ass:smoothness12} and letting $r=\lceil 2a_n\sqrt{n}\rceil$, we have
\begin{align*}
    E\left[\Big((\widehat F_Y^{(1)})^{-1}\Big(\frac{2a_n}{\sqrt{n}}\Big)\Big)^2\right] & = E\left[( F_Y^{-1}(\xi_{(r)}))^2\right] \\[5pt]
    &\lesssim\int_0^1x^{-2d_1}\bar{x}^{-2d_2}\frac{1}{{\rm B}(r,n_1-r+1)}x^{r-1}\bar{x}^{n_1'-r}dx\\[5pt]
    &=\frac{{\rm B}(\lceil 2a_n\sqrt{n}\rceil-2d_1,n_1'-\lceil 2a_n\sqrt{n}\rceil-2d_2+1)}{{\rm B}(\lceil 2a_n\sqrt{n}\rceil,n_1'-\lceil 2a_n\sqrt{n}\rceil+1)}\\[5pt]
    &=\frac{\Gamma(\lceil 2a_n\sqrt{n}\rceil-2d_1)\Gamma(n_1'-\lceil 2a_n\sqrt{n}\rceil-2d_2+1)}{\Gamma(n_1'+1-2(d_1+d_2))} \\[5pt]
    &\quad \times \frac{\Gamma(n_1'+1)}{\Gamma(\lceil 2a_n\sqrt{n}\rceil)\Gamma(n_1'-\lceil 2a_n\sqrt{n}\rceil+1)} \\[5pt]
    &\lesssim (2a_n\sqrt{n})^{-2d_1}{n_1'}^{-2d_2}{n_1'}^{2(d_1+d_2)} \quad \text{ by Lemma~\ref{lem_comparison_Gamma}}\\[5pt]
    &\lesssim \left(\frac{a_n}{\sqrt{n}}\right)^{-2d_1}.
\end{align*}
By a similar reasoning,  there exists a constant $C>0$ independent of $N$ such that 
\[
E\left[\Big((\widehat F_Y^{(1)})^{-1}\Big(\frac{1}{2}\Big)\Big)^2\right]\leq C.
\]
Plugging this result into \eqref{eq:middleterm} yields
\begin{align*}
   & \quad \frac{1}{\eps_n} \left(\frac{a_n}{\sqrt{n}}\right)^{1+\gamma_1} E^{1/2}\left[\bigg((\widehat F_Y^{(1)})^{-1}\Big(\frac{1}{2}\Big) - (\widehat F_Y^{(1)})^{-1}\Big(\frac{2a_n}{\sqrt{n}}\Big)\bigg)^2\right]\\[5pt]
   & \lesssim  \frac{1}{\eps_n} \left(\frac{a_n}{\sqrt{n}}\right)^{1+\gamma_1} \left(\frac{a_n}{\sqrt{n}}\right)^{-d_1} \\[5pt]
   &= \frac{1}{\eps_n} \left(\frac{a_n}{\sqrt{n}}\right)^{1+\gamma_1-d_1} \\[5pt]
   &=o(1)
\end{align*}
since $1+\gamma_1 - d_1 \geq \frac{1}{2} - b_1 - d_1 >0$ and $a_n = O(\log(n))$.
Similarly,
\begin{equation*}
     \frac{1}{\eps_n} \left(\frac{a_n}{\sqrt{n}}\right)^{1+\gamma_2} E^{1/2}\left[\left((\widehat F_Y^{(1)})^{-1}\Big(1 - \frac{2a_n}{\sqrt{n}}\Big) - (\widehat F_Y^{(1)})^{-1}\Big(\frac{1}{2}\Big)\right)^2\right] =o(1).
\end{equation*}

\hfill$\square$

\begin{lemma}\label{lem:int_D1_vanishes}
    For any $\alpha \geq \frac{1}{2} - b_1$, it holds that 
    $$E\left[\left(\int_{-\infty}^{(\widehat F_Y^{(1)})^{-1}(\frac{2a_n}{\sqrt{n}})}\widehat F_Y^{(1)}(y)^{\alpha} dy\right)^2\right]=o(\eps_n^2).$$
\end{lemma}
{\bf Proof:} Let $n_1'=n_1/2$. We have
\begin{align*}
    & E\left[\left(\int_{-\infty}^{(\widehat F_Y^{(1)})^{-1}(\frac{2a_n}{\sqrt{n}})}\widehat F_Y^{(1)}(y)^{\alpha} dy\right)^2\right] \\[5pt]
    &=E\left[\left(\sum_{i=1}^{\lceil 2\sqrt{n}a_n\rceil-1}\left(\frac{i}{n_1'}\right)^{\alpha}\left(Y_{(i+1)}^{(1)}-Y_{(i)}^{(1)}\right)\right)^2\right] \\[5pt]
    &=\sum_{i,j=1}^{\lceil 2\sqrt{n}a_n\rceil-1}\left(\frac{ij}{(n_1')^2}\right)^{\alpha}E\left[\left(Y_{(i+1)}^{(1)}-Y_{(i)}^{(1)}\right)\left(Y_{(j+1)}^{(1)}-Y_{(j)}^{(1)}\right)\right]\\[5pt]
    &\leq \sum_{i,j=1}^{\lceil 2\sqrt{n}a_n\rceil-1}\left(\frac{ij}{(n_1')^2}\right)^{\alpha}E^{1/2}\left[\left(Y_{(i+1)}^{(1)}-Y_{(i)}^{(1)}\right)^2\right]E^{1/2}\left[\left(Y_{(j+1)}^{(1)}-Y_{(j)}^{(1)}\right)^2\right]\\[5pt]
    &= \left(\sum_{i=1}^{\lceil 2\sqrt{n}a_n\rceil-1}\left(\frac{i}{n_1'}\right)^{\alpha}E^{1/2}\left[\left(Y_{(i+1)}^{(1)}-Y_{(i)}^{(1)}\right)^2 \right]\right)^2 \\[5pt]
    &\lesssim \left(\frac{1}{n_1'}\sum_{i=1}^{\lceil 2\sqrt{n}a_n\rceil-1}\left(\frac{i}{n_1'}\right)^{\alpha-(1+d_1)}\right)^2 \\[5pt]
    &=\left(\frac{1}{(n_1')^{\alpha-d_1}}\sum_{i=1}^{\lceil 2\sqrt{n}a_n\rceil-1}i^{(\alpha-d_1)-1}\right)^2 \\[5pt]
    &\asymp \left(\frac{a_n\sqrt{n}}{n}\right)^{2(\alpha-d_1)}\\[5pt]
    &=o(\eps_n^2),
\end{align*}
where the first inequality follows from the Cauchy--Schwarz inequality, the asymptotic inequality follows from Lemma~\ref{lem:spacing_4_mom}, and the asymptotic equivalence from $\alpha-d_1>0$.

\hfill$\square$

\section{Other technical lemmas}

\begin{lemma}\label{lem_product_spacings}
Let $n_1' = n_1/2$. For all $k\in\{1,\ldots,n_1'-1\}$, we have
$$
\begin{aligned}
    E\left[\left(Y_{(\lceil k + 7n_1'h_{k/{n_1'}}\rceil)}^{(1)} - Y_{(\lceil k - 7n_1'h_{{k}/{n_1'}}\rceil)}^{(1)} \right) ^2\right] \lesssim \left(\frac{h_{k/n_1'} + \frac{1}{n_1'}}{\left(\frac{k}{n_1'}\right)^{1+d_1}\left(\frac{n_1'-k}{n_1'}\right)^{1+d_2}}\right)^2.
\end{aligned}
$$
\end{lemma}
\textbf{Proof:}  Let $a_k = \lceil k - 7n_1'h_{k/n_1'}\rceil$ and $b_k = \lceil k + 7n_1'h_{k/n_1'}\rceil$. We note that we always have $b_k \geq a_k +1$ for any $k \in \{1,\dots,n_1'-1\}$.  Suppose that $a_k \geq 10$ and $b_k \leq n_1'-10$. 
Then, we have $a_k\asymp b_k\asymp k$ and $b_k-a_k\lesssim n_1'h_{k/n_1'}$. The result follows from Lemma~\ref{lem:spacing_4_mom}.1.

Now, suppose that $a_k \leq 10$ or $n_1'-b_k \leq 10$. By symmetry, suppose $a_k \leq 10$. By the definition of $h_{k/n_1'}$, we have $7n_1'h_{k/n_1'} < k/12$ for $n_1'$ large enough, so that
\begin{align*}
    \lceil k - 7n_1'h_{k/n_1'}\rceil \leq 10 \implies \lceil 11k/12 \rceil \leq 10 \implies k \leq 10.
\end{align*}
In particular, we have $7n_1'h_{k/n_1'} < k/12 <1$, which implies $b_{k} = a_k + 1$. The result follows from Lemma~\ref{lem:spacing_4_mom}.2.
\hfill$\square$

\begin{lemma}\label{lem_control_|F-hat_F|}
    For any $y \in \R$ and $0\leq b_1,b_2\leq1/2$, we have for all $j\in\{1,2\}$:
    \begin{align*}
        &E^{1/2}\left[\big|\widehat F_Y^{(j)}(y) - F_Y(y)\big|^{2(1-2b_1)}\mathds 1_{\big\{\widehat F^{(j)}_Y(y) \bar{\widehat F}^{(j)
    }_Y(y) \neq 0\big\}}\right] \leq \frac{2^{1/2-b_1}\big(F_Y(y)\bar F_Y(y)\big)^{1/2}}{n_1^{1/2-2b_1}}\land \left(\frac{2F_Y(y)\bar F_Y(y)}{n_1}\right)^{1/2-b_1}\\
        & E^{1/2}\left[\big|\bar{\widehat F}^{(j)}_Y(y) - \bar F_Y(y)\big|^{2(1-2b_2)}\mathds 1_{\big\{\widehat F^{(j)}_Y(y) \bar{\widehat F}^{(j)}_Y(y) \neq 0\big\}}\right] \leq \frac{2^{1/2-b_2}\big(F_Y(y) \bar F_Y(y)\big)^{1/2}}{n_1^{1/2-2b_2}}\land \left(\frac{2F_Y(y)\bar F_Y(y)}{n_1}\right)^{1/2-b_2}.
    \end{align*}
\end{lemma}
\textbf{Proof:} To ease notation, we drop the superscript $(j)$ and simply write $\widehat F_Y$. We define the weights $u = 1/(1-2b_1)$ and $v = 1/(2b_1)$ that satisfy $\frac{1}{u} + \frac{1}{v} = 1$ and $u,v \geq 1$, using the convention that $1/0 = \infty$ if $b_1 = 1/2$. 
By the Hölder inequality, we have
$$\begin{aligned}
    & \hspace{5.5mm} E^{1/2}\left[\big|\widehat F_Y(y) - F_Y(y)\big|^{2(1-2b_1)} \mathds 1_{\big\{\widehat F_Y(y) \bar{\widehat F}_Y(y) \neq 0\big\}}\right] \\[5pt]
    & \leq E^{1/(2u)}\left[\big|\widehat F_Y(y) - F_Y(y)\big|^{2}\right] E^{1/(2v)}\left[\Big( \mathds 1_{\big\{\widehat F_Y(y) \bar{\widehat F}_Y(y) \neq 0\big\}}\Big)^v\right] \\[5pt]
    & \leq \left(\frac{2F_Y(y)\bar F_Y(y)}{n_1}\right)^{\frac{1-2b_1}{2}} \left(\Pr\left(\widehat F_Y(y) \neq 0\right) \land \Pr\big(\bar{\widehat F}_Y(y) \neq 0\big)\right)^{b_1}\\[5pt]
    & \leq \left(\frac{2F_Y(y)\bar F_Y(y)}{n_1}\right)^{\frac{1-2b_1}{2}}\left\{\left(1 \land \left(\frac{n_1  F_Y(y)}{2}\right)\right)^{b_1} \land \left(1 \land \left(\frac{n_1 \bar F_Y(y)}{2}\right)\right)^{b_1}\right\}\\[5pt]
    & = \left(\frac{2F_Y(y)\bar F_Y(y)}{n_1}\right)^{\frac{1-2b_1}{2}}n_1^{b_1}\left\{\frac{F_Y(y) \land \bar F_Y(y)}{2} \land \frac{1}{n_1}\right\}^{b_1} \\[5pt]
    & \leq \left(\frac{2F_Y(y)\bar F_Y(y)}{n_1}\right)^{\frac{1-2b_1}{2}}n_1^{b_1}\left\{F_Y(y) \bar F_Y(y) \land \frac{1}{n_1}\right\}^{b_1} \qquad \text{ using  $u \land \bar u \leq 2u\bar u, \, \forall u \in [0,1]$}\\[5pt]
    & = \frac{2^{1/2-b_1}\big(F_Y(y) \bar F_Y(y)\big)^{1/2}}{n_1^{1/2-2b_1}} \land \left(\frac{2F_Y(y)\bar F_Y(y)}{n_1}\right)^{1/2-b_1}, 
\end{aligned}$$
which proves the first claim.
The second claim can be proved in a similar fashion,  following the same steps as outlined above.
\hfill $\square$

\begin{lemma}\label{lem_decomposition_Omega-hat_Omega}
    For any $x,y \in \R$ and $\alpha \in (0,1]$, it holds that
    \begin{align*}
        \Big|\big(\widehat F_Y^{(1)}(x)\land \widehat F_Y^{(2)}(y)\big)^\alpha - \big(F_Y(x) \land F_Y(y)\big)^\alpha\Big| & \leq \hspace{3.5mm} \Big|\widehat F_Y^{(1)}(x) - F_Y(x) \Big|^{\alpha} \land F_Y(y)^{\alpha} \\
        & \quad + \Big|\widehat F_Y^{(2)}(y) - F_Y(y) \Big|^{\alpha} \land F_Y(x)^{\alpha}\\[5pt]
        & \quad + \Big|\widehat F_Y^{(1)}(x) - F_Y(x) \Big|^{\alpha} \land \Big|\widehat F_Y^{(2)}(y) - F_Y(y)\Big|^{\alpha},
    \end{align*}
    and the same inequality holds true by replacing $\widehat F_Y^{(1)}$, $\widehat F_Y^{(2)}$, and $F_Y$ with $\bar{\widehat F}_Y^{(1)}$, $\bar{\widehat F}_Y^{(2)}$, and $\bar F_Y$, respectively.
\end{lemma}
\textbf{Proof:}
 Below, we will use the classical inequalities that hold true for any $a,b,c\geq 0$ and $\alpha \in (0,1]$ 
\begin{align*}
    &|a \land b - a \land c| \leq a \land |b-c|\\
    &|a^\alpha - b^\alpha| \leq |a-b|^\alpha.
\end{align*}
By the triangle inequality, we have
\begin{align*}
    \big|\big(\widehat a\land \widehat b\big)^\alpha - \big(a \land b\big)^\alpha\big| & \leq \left(\big|\big(\widehat a \land \widehat b\big)^\alpha - \big(a \land \widehat b\big)^\alpha\big|\right) + \left(\big|\big(a \land \widehat b\big)^\alpha - \big(a \land b\big)^\alpha\big|\right)\\[5pt]
    & \leq \left(\big|\widehat a^\alpha - a^\alpha \big|\land \widehat b^{\alpha}\right) + \left( a^{\alpha} \land \big|\widehat b^\alpha - b^\alpha \big|\right)\\[5pt]
    & \leq \left(\big|\widehat a - a \big|^{\alpha}\land \widehat b^{\alpha}\right) +  \left(a^{\alpha} \land \big|\widehat b - b \big|^{\alpha}\right)\\[5pt]
    & \leq \left(\big|\widehat a - a \big|^{\alpha}\land b^{\alpha}\right) + \left(a^{\alpha} \land \big|\widehat b - b \big|^{\alpha}\right)\\[5pt]
    & \quad + \big|\widehat a - a \big|^{\alpha}\land \big|\widehat b - b\big|^{\alpha}.
\end{align*}
Applying this with $a = F_Y(x)$, $\hat a = \widehat F_Y^{(1)}(x)$, $b = F_Y(y)$, and $\hat b = \widehat F_Y^{(2)}(y)$ yields the result.
\hfill $\square$

\begin{lemma}\label{lem_assumptions_DCT_J2N}
    For any $x,y \in \R$ and $\alpha \in (0,1]$, it holds that 
    \begin{align*}
        E \left[\Big|\big(\widehat F_Y^{(1)}(x)\land \widehat F_Y^{(2)}(y)\big)^\alpha - \big(F_Y(x) \land F_Y(y)\big)^\alpha\Big|\right] \leq 6 \big(F_Y(x) \land F_Y(y)\big)^{\alpha}
    \end{align*}
    and $E^{1/2} \left[\Big|\big(\widehat F_Y^{(1)}(x)\land \widehat F_Y^{(2)}(y)\big)^\alpha - \big(F_Y(x) \land F_Y(y)\big)^\alpha\Big|\right] \to 0$ as  $N \to \infty $.
\end{lemma}
\textbf{Proof:}
By Lemma~\ref{lem_decomposition_Omega-hat_Omega}, we have
\begin{align*}
    E\left[\Big|\big(\widehat F_Y^{(1)}(x)\land \widehat F_Y^{(2)}(y)\big)^\alpha - \big(F_Y(x) \land F_Y(y)\big)^\alpha\Big|\right] & \leq ~~E\left[\Big|\widehat F_Y^{(1)}(x) - F_Y(x) \Big|^{\alpha}\right] \land E\left[\Big|\widehat F_Y^{(2)}(y) - F_Y(y)\Big|^{\alpha}\right]\\[5pt]
        & \quad + E\left[ \Big|\widehat F_Y^{(1)}(x) - F_Y(x) \Big|^{\alpha}\right] \land F_Y(y)^{\alpha} \\[5pt]
        & \quad + E \left[ \Big|\widehat F_Y^{(2)}(y) - F_Y(y) \Big|^{\alpha}\right] \land F_Y(x)^{\alpha}.\numberthis\label{eq_lemma_m_alpha}
\end{align*}
We now control the quantity $E\left[\big|\widehat F_Y^{(1)}(x) - F_Y(x) \big|^{\alpha}\right]$ for any fixed $x \in \R$. 
We define the weights $v = 2/\alpha \geq 1$ and $u = \frac{v}{v-1}$. 
These quantities satisfy $u \leq 2/\alpha$ since $v-1 \geq 1$, and $\frac1u + \frac1v = 1$.
By H\"older's inequality, we therefore have
$$\begin{aligned}
    E \left[\Big|\widehat F_Y^{(1)}(x) - F_Y(x) \Big|^{\alpha}\right] &= E \left[\big|\widehat F_Y^{(1)}(x) - F_Y(x) \big|^{\alpha} \mathds{1}_{\widehat F_Y^{(1)}(x) \neq 0} + F_Y(x)^{\alpha} \, \mathds 1_{\widehat F_Y^{(1)}(x) = 0}\right] \\[5pt]
    & \leq E^{1/u}\!\left[\Big|\widehat F_Y^{(1)}(x) - F_Y(x) \Big|^{\alpha u}\right] \Pr\!\left(\mathds{1}_{\widehat F_Y^{(1)}(x) \neq 0}\right)^{1/v} + F_Y(x)^{\alpha}\Pr\!\left(\mathds{1}_{\widehat F_Y^{(1)}(x) = 0}\right)\\[5pt]
    & \leq E^{\alpha/2}\!\left[\big|\widehat F_Y^{(1)}(x) - F_Y(x) \big|^{2}\right] \left(1 - \bar F_Y(x)^{n_1/2}\right)^{1/v} + F_Y(x)^{\alpha} \bar F_Y(x)^{n_1/2}\\
    & \leq \left(\frac{2F_Y(x)\bar F_Y(x)}{n_1}\right)^{\alpha/2}\left(1 \land \big(\frac{n_1}{2}F_Y(x)\big)\right)^{1/v} + F_Y(x)^{\alpha} \bar F_Y(x)^{n_1/2}\\[5pt]
    & \leq \left(\frac{2F_Y(x)}{n_1}\right)^{\alpha/2}\land F_Y(x)^{\alpha}+ F_Y(x)^{\alpha} \bar F_Y(x)^{n_1/2},
\end{aligned}$$
where we used Jensen's inequality in the third line (which can be applied here since $\alpha u/2\leq 1$) and the inequality $1-(1-a)^n \leq 1 \land (na)$ for any $a \in [0,1]$ and $n \in \N$ in the fourth line. 
It follows from the display above that $E \left[\Big|\widehat F_Y^{(1)}(x) - F_Y(x) \Big|^{\alpha}\right] \leq 2F_Y(x)^\alpha$ and that $E \left[\Big|\widehat F_Y^{(1)}(x) - F_Y(x) \Big|^{\alpha}\right] \to 0$ as $n_1 \to \infty$ for any fixed $x$. 
By swapping the roles of $x$ and $y$, we similarly obtain that $E \left[\Big|\widehat F_Y^{(2)}(y) - F_Y(y) \Big|^{\alpha}\right] \leq 2F_Y(y)^\alpha$ and that $E \left[\Big|\widehat F_Y^{(2)}(y) - F_Y(y) \Big|^{\alpha}\right] \to 0$ as $n_1 \to \infty$ for any fixed $y$.
Combining with~\eqref{eq_lemma_m_alpha} yields that
\begin{align*}
    E \left[\Big|\big(\widehat F_Y^{(1)}(x) \land \widehat F_Y^{(2)}(y)\big)^\alpha - \big(F_Y(x) \land F_Y(y)\big)^\alpha \Big|\right] \leq 6 \big(F_Y(x) \land F_Y(y)\big)^\alpha
\end{align*}
and that $E \left[\big|\big(\widehat F_Y^{(1)}(x)\land\widehat F_Y^{(2)}(y)\big)^\alpha - \big(F_Y(x) \land F_Y(y)\big)^\alpha\big|\right] \to 0$ as  $N \to \infty$ for any fixed $x,y \in \R$, which concludes the proof.
\hfill $\square$

\begin{lemma}\label{lem:spacing_4_mom}
~
\begin{enumerate}
    \item For all $a,b\in\{10,\ldots,n_1/2-10\}$ such that $a<b$,
    \begin{align*}
    E\left[\left(Y_{(b)}^{(1)}-Y_{(a)}^{(1)}\right)^2\right] \lesssim \left[\left(\frac{a}{n_1}\right)^{-(1+d_1)} \left(\frac{n_1-b}{n_1}\right)^{-(1+d_2)}\left(\frac{b-a}{n_1}\right)\right]^2.
    \end{align*}
    \item For all $k\in\{1,\ldots,10\}\cup\{n_1/2-11,\ldots,n_1/2-1\}$:
    \[
    E\left[\left(Y_{(k+1)}^{(1)}-Y_{(k)}^{(1)}\right)^2\right] \lesssim n_1^{2(\mathds1\{k\leq 10\}d_1+\mathds1\{k\geq n_1/2-11\}d_2)}.
    \]
    \end{enumerate}
\end{lemma}
\textbf{Proof:} For ease of notation we drop the superscript $(1)$ and write $Y_{(a)},Y_{(b)}$ instead of $Y_{(a)}^{(1)}, Y_{(b)}^{(1)}$.

\medskip
\noindent 1. Suppose that $a \geq 10$ and $b \leq n_1/2-10$ such that $a<b$. 
    \begin{align*}
       E\left[\left(Y_{(b)} - Y_{(a)}\right) ^2\right] &= E\left[\Big(F_Y^{-1}\left(\xi_{(b)}\right) - F_Y^{-1}\left(\xi_{(a)}\right)\Big)^2\right]\\[5pt]
       &=  E\left[\left((F_Y^{-1})'(\tilde \xi)\right)^2  \left(\xi_{(b)} - \xi_{(a)}\right)^2\right]\\[5pt]
       & \leq E\left[\tilde \xi^{-2(1+d_1)}(1-\tilde \xi)^{-2(1+d_2)}  \left(\xi_{(b)} - \xi_{(a)}\right)^2 \left(\ind{\tilde \xi \geq 1/2} + \ind{\tilde \xi \leq 1/2}\right)\right],
   \end{align*}
   for some $\tilde\xi\in(\xi_{(a)},\xi_{(b)})$.
   We bound $E\left[\tilde \xi^{-2(1+d_1)}(1-\tilde \xi)^{-2(1+d_2)}   \left(\xi_{(b)} - \xi_{(a)}\right)^2 \ind{\tilde \xi \leq 1/2}\right]$, and the second term in the sum can be analyzed in a similar way. 
We have
\begin{align}
    &E\left[\tilde \xi^{-2(1+d_1)}(1-\tilde \xi)^{-2(1+d_2)}  \left(\xi_{(b)} - \xi_{(a)}\right)^2 \ind{\tilde \xi \leq 1/2}\right] \nonumber\\[5pt]
    & \leq 2^{2(1+d_2)} E\left[\left( \xi_{(a)}\right)^{-2(1+d_1)}\big(\xi_{(b)} - \xi_{(a)}\big)^2 \right] \nonumber\\[5pt]
    & \leq 2^{2d}E^{1/2}\left[( \xi_{(a)})^{-4(1+d_1)}\right]E^{1/2}\left[\big(\xi_{(b)} - \xi_{(a)}\big)^4\right]. \label{eq_Cauchy_Schwarz_order_stats}
\end{align}
We compute the two expectations in the right-hand side separately. 
We recall that the density of $\xi_{(a)}$ at point $u\in(0,1)$ is given by
\begin{align*}
    f_{\xi_{(a)}}(u) = \frac{n_1!}{(a-1)! (n_1-a)!}u^{a-1}(1-u)^{n_1-a}.
\end{align*}
Therefore, since $a \geq 10>4(1+d_1)$, we obtain
$$
\begin{aligned}
    E\left[\xi_{(a)}^{-4(1+d_1)}\right] &= \int_{(0,1)} \frac{n_1!}{(a-1)! (n_1-a)!}u^{a-1-4(1+d_1)}(1-u)^{n_1-a}du\\[5pt]
    & = \frac{n_1!}{(a-1)! (n_1-a)!} {\rm B}(a-4(1+d_1), n_1-a+1)\\
    & = \frac{\Gamma(n_1+1)}{\Gamma(a) \Gamma(n_1-a + 1)} \frac{\Gamma(a-4d) \Gamma(n_1-a+1)}{\Gamma(n_1-4(1+d_1)+1)}\\[5pt]
    & = \frac{\Gamma(n_1+1)}{\Gamma(n_1-4(1+d_1)+1)} \frac{\Gamma(a-4(1+d_1))}{\Gamma(a)}\\[5pt]
    & \lesssim n_1^{4(1+d_1)}a^{-4(1+d_1)} \\[5pt]
    &= \left(\frac{a}{n_1}\right)^{-4(1+d_1)}, 
\end{aligned}
$$
where we used Lemma~\ref{lem_comparison_Gamma} to obtain the inequality. 
Moreover, 
\begin{align*}
    E\left[\big(\xi_{(b)} - \xi_{(a)}\big)^4\right]
    & =E\left[{\rm Beta}\big(b-a,n_1-b+a+1\big)^4\right]\\[5pt]
    & =\prod_{r=0}^3\frac{b-a+r}{n_1+d+r}\\[5pt]
    &\asymp \left(\frac{b-a}{n_1}\right)^4.
\end{align*}
Combining with~\eqref{eq_Cauchy_Schwarz_order_stats}, we obtain
\begin{align*}
    E\left[\tilde \xi^{-2(1+d_1)}(1-\tilde \xi)^{-2(1+d_2)}  \left(\xi_{(b)} - \xi_{(a)}\right)^2 \ind{\tilde \xi \leq 1/2}\right] \lesssim \left(\frac{a}{n_1}\right)^{-2(1+d_1)} \left(\frac{b - a}{n_1}\right)^2,
\end{align*}
and similarly,
\begin{align*}
   E\left[\tilde \xi^{-2(1+d_1)}(1-\tilde \xi)^{-2(1+d_2)}  \left(\xi_{(b)} - \xi_{(a)}\right)^2 \ind{\tilde \xi \geq 1/2}\right] \lesssim \left(\frac{n_1-b}{n_1}\right)^{-2(1+d_2)} \left(\frac{b - a}{n_1}\right)^2.
\end{align*}
Therefore, we obtain
\begin{align*}
    E\left[\left(Y_{(b)} - Y_{(a)}\right)^2\right] &\lesssim \left[\left(\frac{a}{n_1}\right)^{-2(1+d_1)} + \left(\frac{n_1-b}{n_1}\right)^{-2(1+d_2)}\right]\left(\frac{b-a}{n_1}\right)^2\\
    & \asymp \left[\left(\frac{a}{n_1}\right)^{-(1+d_1)}\left(\frac{n_1-b}{n_1}\right)^{-(1+d_2)}\left(\frac{b-a}{n_1}\right)\right]^2.
\end{align*}
This completes the case $a \geq 10$ and $b \leq n_1/2-10$. 

\medskip
\noindent 2. Fix $k\leq10$. We recall that, for any $u,v \in \R$, the joint density of  $(Y_{(k)}, Y_{(k+1)})$ is 
\begin{align*}
    f_{Y_{(k)}, Y_{(k+1)}}(u,v) = f_Y(u) f_Y(v) F_Y(u)^{k-1} \bar F_Y(v)^{n_1-k-1} \frac{n_1!}{(k-1)!(n_1-k-1)!} \ind{u<v}
\end{align*}
\citep[see, e.g.][Theorem 5.4.6]{CaseBerg:01}. 
We use the change of variables $s = F_Y(u)$ and $t = F_Y(v)$ and the notation $d \vec x = dx_1 dx_2$. We obtain
\begin{align*}
   &\quad  E\left[\big(Y_{(k+1)} - Y_{(k)}\big)^2\right] \\[5pt]
   &= \int_{\R^2} (v-u)^2 f_Y(u) f_Y(v) F_Y(u)^{k-1} \bar F_Y(v)^{n_1-k-1} \frac{n_1! \, \ind{u<v}}{(k-1)!(n_1-k-1)!}  dudv\\
    & = \int_{(0,1)^2} \left(F_Y^{-1}(t) - F_Y^{-1}(s)\right)^2 s^{k-1} (1-t)^{n_1-k-1} \frac{n_1! \, \ind{s<t}}{(k-1)!(n_1-k-1)!} dsdt\\[5pt]
    & = \int_{(0,1)^2} \left(\int_{s}^t (F_Y^{-1})'(x) dx\right)^2 s^{k-1} (1-t)^{n_1-k-1} \frac{n_1! \, \ind{s<t}}{(k-1)!(n_1-k-1)!} dsdt\\[5pt]
    & \leq \int_{(0,1)^2} \left(\int_{s}^t \big(x^{-(1+d_1)}(1-x)^{-(1+d_2)}\big) dx\right)^2 s^{k-1} (1-t)^{n_1-k-1} \frac{n_1! \, \ind{s<t}}{(k-1)!(n_1-k-1)!} dsdt\\[5pt]
    & = \int_{(0,1)^2} \left(\prod_{i=1}^2\int_{s}^t  \big(x_i^{-(1+d_1)}(1-x_i)^{-(1+d_2)}\big) dx_i \right) s^{k-1} (1-t)^{n_1-k-1} \frac{n_1! \, \ind{s<t}}{(k-1)!(n_1-k-1)!} dsdt\\
    & = \int_{(0,1)^4} s^{k-1} \ind{s<\min_{j} x_j} (1-t)^{n_1-k-1} \ind{t<\max_j x_j}\\[5pt]
    &\hspace{2cm} \times \prod_{i=1}^2 \big(x_i^{-(1+d_1)}(1-x_i)^{-(1+d_2)}\big)
     \frac{n_1!}{(k-1)!(n_1-k-1)!} ds \, dt \, d \vec x\\[5pt]
    & = \int_{(0,1)^2} \frac{\min_j x_j^k}{k} \cdot \frac{(1-\max_j x_j)^{n_1-k}}{n_1-k}\prod_{i=1}^2 \big(x_i^{-(1+d_1)}(1-x_i)^{-(1+d_2)}\big) 
     \frac{n_1!}{(k-1)!(n_1-k-1)!}  d\vec{x}\\[5pt]
   & \leq \int_{(0,1)^2}\prod_{i=1}^2 x_i^{k/2 - (1+d_1)} (1-x_i)^{(n_1-k)/2-(1+d_2)}   {n_1 \choose k} d\vec{x}\\[5pt]
     & =  {n_1 \choose k} \left(\int_{0}^1 x^{k/2 - (1+d_1)} (1-x)^{(n_1-k)/2-(1+d_2)}  dx\right)^2\\[5pt]
     & = {n_1 \choose k}\left[{\rm B}\left(\frac{k}{2}-d_1, \frac{n_1-k}{2}-d_2\right)\right]^2\\[5pt]
     & = \frac{\Gamma(n_1+1)}{\Gamma(k+1) \Gamma(n_1-k+1)} \left[\frac{\Gamma(\frac{k}{2}-d_1) \Gamma(\frac{n_1-k}{2} -d_2)}{ \Gamma(\frac{n_1}{2} - d_1-d_2)}\right]^2.
\end{align*}
Now, note that $\Gamma(\frac{k}{2}-d_1)$ is a constant, and 
\begin{align*}
    \frac{\Gamma(n_1+1)}{\Gamma(k+1) \Gamma(n_1-k+1)} \left[\frac{\Gamma(\frac{k}{2}-d_1) \Gamma(\frac{n_1-k}{2}-d_2)}{ \Gamma(\frac{n_1}{2}-d_1-d_2)}\right]^2 &\asymp n_1^k\left(n_1^{d_1 - k/2}\right)^2 = n_1^{2d_1}.
\end{align*} 
This concludes the proof.
\hfill$\square$

\begin{lemma}\label{lem_comparison_Gamma}
    Let $\eps>0$. For all $a\in(-10,10)$ and $x\in(0,\infty)$ such that $x+a>\eps$, it holds that
    \begin{align*}
        &\frac{\Gamma(x+a)}{\Gamma(x)} \leq (x+\lfloor a \rfloor)^a \quad \text{ if } a>0,\\
        & \frac{\Gamma(x+a)}{\Gamma(x)} \leq \frac{(x-\lfloor\abs{a}\rfloor)^a}{\left(1+\frac{1}{1+\eps}\right)^{\lfloor\abs{a}\rfloor+1}} \quad \text{ if } a<0.
    \end{align*}
\end{lemma}
\textbf{Proof:}
First, suppose that $0<a<1$. Then, for all $x>0$,
\[
   \frac{\Gamma(x+a)}{\Gamma(x)} \leq x^a\leq  (x+\lfloor a \rfloor)^a,
\]
where the first inequality follows from the right part of \cite{Wendel1948}'s double inequality.  \\
Second, suppose that $a\geq 1$. Let $u\in(0,1)$ and define $b=a/u$. We have
\begin{align*}
     \frac{\Gamma(x+a)}{\Gamma(x)}&=\frac{\Gamma(x+u)}{\Gamma(x)}\cdot \frac{\Gamma(x+2u)}{\Gamma(x+u)}\cdots\frac{\Gamma(x+\lfloor b\rfloor u)}{\Gamma(x+(\lfloor b\rfloor-1)u)}\cdot\frac{\Gamma(x+bu)}{\Gamma(x+\lfloor b\rfloor u)}.
\end{align*}
By applying \cite{Wendel1948}'s inequality to each term in the product, we obtain
\begin{align*}
     \frac{\Gamma(x+a)}{\Gamma(x)}&\leq x^u(x+u)^u\cdots(x+(\lfloor b\rfloor-1)u)^u\cdot(x+\lfloor b\rfloor u)^{u(b-\lfloor b\rfloor)} \\[5pt]
     &\leq (x+(\lfloor b\rfloor-1)u)^{u\lfloor b\rfloor}(x+\lfloor b\rfloor u)^{u(b-\lfloor b\rfloor)} \\[5pt]
    &\leq (x+\lfloor b\rfloor u)^{u\lfloor b\rfloor}(x+\lfloor b\rfloor u)^{u(b-\lfloor b\rfloor)} \\[5pt]
     &= (x+\lfloor b\rfloor u)^{a}.
\end{align*}
Since $b\mapsto \lfloor b\rfloor$ is right-continuous, by taking the limit as $u\to 1$ we obtain
\begin{align*}
     \frac{\Gamma(x+a)}{\Gamma(x)}&\leq (x+\lfloor a \rfloor)^{a}.
\end{align*}
Third, suppose that $-1<a<0$ and let $\tilde x=x+a$. Then,
\begin{align}
    \frac{\Gamma(x+a)}{\Gamma(x)} = \frac{\Gamma(\tilde x)}{\Gamma(\tilde x+(-a))}&\leq\tilde x^{a}\left(\frac{\tilde x+(-a)}{\tilde x}\right)^{1-(-a)} \notag \\[5pt]
    &=x^{a}\cdot\frac{x}{x+a} \notag \\[5pt]
    &\leq \frac{x^a}{1+\frac{a}{\eps-a}} \notag \\[5pt]
    &\leq \frac{x^a}{1-\frac{1}{1+\eps}}, \label{eq:neg_ineq}
\end{align}
where the first inequality follows from the left part of \cite{Wendel1948}'s double inequality since $0<(-a)<1$ and $\tilde x>0$, the second inequality follows from $x+a>\eps$, and the third inequality follows from $a>-1$.\\
Fourth, suppose that $a\leq-1$ and let $u\in(-1,0)$ such that $b=a/u$. We have
\begin{align*}
     \frac{\Gamma(x+a)}{\Gamma(x)}&=\frac{\Gamma(x+u)}{\Gamma(x)}\cdot \frac{\Gamma(x+2u)}{\Gamma(x+u)}\cdots\frac{\Gamma(x+\lfloor b\rfloor u)}{\Gamma(x+(\lfloor b\rfloor-1)u)}\cdot\frac{\Gamma(x+bu)}{\Gamma(x+\lfloor b\rfloor u)}.
\end{align*}
By applying inequality~\eqref{eq:neg_ineq} to each term in the product, we obtain
\begin{align*}
     \frac{\Gamma(x+a)}{\Gamma(x)}&\leq \frac{1}{\left(1-\frac{1}{1+\eps}\right)^{\lfloor b\rfloor+1}}x^u(x+u)^u\cdots(x+(\lfloor b\rfloor -1)u)^u(x+\lfloor b\rfloor u)^{u(b-\lfloor b\rfloor)} \\[5pt]
     &\leq \frac{1}{\left(1-\frac{1}{1+\eps}\right)^{\lfloor b\rfloor+1}}(x+\lfloor b\rfloor u)^{a}.
\end{align*}
Since $b\mapsto \lfloor b\rfloor$ is right-continuous, by taking the limit as $u\to -1$ we obtain
\begin{align*}
     \frac{\Gamma(x+a)}{\Gamma(x)}&\leq\frac{1}{\left(1-\frac{1}{\eps+1}\right)^{\lfloor\abs{a}\rfloor+1}}(x-\lfloor \abs{a}\rfloor)^{a}.
\end{align*}
\hfill $\square$

\begin{lemma}\label{lem_control_s_bar_s_t_bar_t}
    Let $s,t \in [0,1]$ and assume that $|s-t| \leq \varepsilon (s\bar s + t\bar t)$ for some $\varepsilon \leq 1/2$. 
    Then $t \hspace{.4mm} \bar t \leq 6 \, s\bar s$, and $s\bar s 
 \leq 6 \,  t \hspace{.4mm} \bar t$.
\end{lemma}
\textbf{Proof:} The second claim follows from the first one by symmetry of the roles of $s$ and $t$. 
We therefore focus on proving the first claim. 

Without loss of generality, assume $s \leq 1/2$ (otherwise, it suffices to repeat the following argument with $\bar s$ and $\bar t$ instead of $s$ and $t$, respectively). 
If $t \leq s$, then $t \hspace{.5mm} \bar t\leq s \bar s$ and the result follows.
Assume from now on that $t>s$. 
By assumption, we have
$$
\begin{aligned}
    t - s \leq \varepsilon (s\bar s + t \hspace{.5mm} \bar t) \leq \varepsilon (s+t)  \qquad  \text{ i.e. } \qquad  t \leq \frac{1+\varepsilon}{1-\varepsilon}s,
\end{aligned}
$$
by rearranging the terms.
Hence 
$$\begin{aligned}
    t \hspace{.5mm} \bar t \leq t \leq \frac{1+\varepsilon}{1-\varepsilon}s \leq \frac{1+\varepsilon}{1-\varepsilon}s \cdot 2 \bar s \leq 6 s \bar s.
\end{aligned}$$
This completes the proof.
\hfill$\square$

\newpage
\bibliography{biblio}